\keywords{\lam-calculus, \lam-theories, observational equivalences, B\"ohm trees, $\omega$-rule, Sall\'e's conjecture, B\"ohm-out technique.}
\theoremstyle{plain} 
\newcommand{\seq}{\vec}
\newcommand{\st}{:}
\newcommand{\nat}{\mathbb{N}}
\newcommand{\con}[2]{{#1}.{#2}}
\newcommand{\concat}[2]{{#1}\star{#2}}
\newcommand{\subt}{\restr}
\newcommand{\restr}[2]{#1\hspace{-4pt}\upharpoonright_{\!#2}}
\newcommand{\Trees}[1][\infty]{\mathbb{T}^{#1}_{\textrm{\sf rec}}}
\newcommand{\emptyseq}{\varepsilon}
\newcommand{\seqof}[1]{\langle{#1}\rangle}
\newcommand{\BTle}{\le_\bot}
\newcommand{\Var}{\mathrm{Var}}
\newcommand{\Lam}{\ensuremath{\Lambda}}
\newcommand{\Lamo}{\ensuremath{\Lambda^{\!o}}}
\newcommand{\lam}{\ensuremath{\lambda}}
\newcommand{\One}{\mathtt{1}}
\newcommand{\bI}{\mathtt{I}}
\newcommand{\bP}{\mathtt{P}}
\newcommand{\bB}{\mathtt{B}}
\newcommand{\bY}{\mathtt{Y}}
\newcommand{\bK}{\mathtt{K}}
\newcommand{\bF}{\mathtt{F}}
\newcommand{\bE}{\mathtt{E}}
\newcommand{\etamax}[1][\iota]{\mathtt{\Xi}_{#1}}
\newcommand{\bJ}{\mathtt{J}}
\newcommand{\ifz}{\mathsf{ifz}}
\newcommand{\Church}[1]{\mathrm{c}_{#1}}
\newcommand{\nak}[1]{\lfloor#1\rfloor}
\newcommand{\Eq}{\mathsf{Eq}}
\newcommand{\succc}{\mathsf{succ}}
\newcommand{\pred}{\mathsf{pred}}
\newcommand{\BT}[1]{\mathrm{BT}(#1)}
\newcommand{\FV}[1]{\mathsf{FV}(#1)}
\newcommand{\subst}[2]{[#2/#1]}
\newcommand*{\redto}[1][]{\rightarrow_{#1}}
\newcommand*{\msto}[1][\beta]{\twoheadrightarrow_{#1}}
\newcommand*{\invmsto}[1][\beta]{\ {}_{#1}\!\!\twoheadleftarrow}
\newcommand{\NF}[1][]{\mathtt{NF}_{#1}}
\newcommand{\HNF}{\mathtt{HNF}}
\newcommand{\hole}[1]{[#1]}
\newcommand{\nlem}{\not\obsle[]}
\newcommand{\bU}{\mathtt{U}}
\newcommand{\BTset}{\mathbb{BT}}
\newcommand{\dom}{\mathrm{dom}}
\newcommand{\Stream}[1]{\langle\!\!\,\langle #1\rangle\!\!\,\rangle}
\newcommand{\ID}{[\bI]_{n\in\nat}}
\newcommand{\IDX}{\Stream{\bI}}
\newcommand{\ONE}{\Stream{\One}}
\newcommand{\JAY}{\Stream{\bJ}}
\newcommand{\IDo}{\Stream{\bI}^\Omega}
\newcommand{\ETAge}{\Stream{\mathbf{\One}^*}}
\newcommand{\ETA}{[\mathbf{\eta}_n]_{n\in\nat}}
\newcommand{\ETAo}{\Stream{\mathbf{\eta}}^\Omega}
\newcommand{\ETAset}[1][]{\cI^\eta_{#1}}
\newcommand{\Tuple}[1]{\langle#1\rangle}
\newcommand{\Tupler}{\bP}
\newcommand{\code}[1]{\lceil#1\rceil}
\newcommand{\num}[1]{\#(#1)}
\newcommand{\phnf}[1]{\mathrm{phnf}(#1)}
\newcommand{\fresh}[1]{#1^{\#}}
\renewcommand{\fresh}[1]{\underline #1}
\newcommand{\mes}[1]{|#1|}
\newcommand{\nf}{\mathtt{nf}}
\newcommand{\blam}{\boldsymbol{\lambda}}
\newcommand{\BTth}{\mathcal{B}}
\newcommand{\BTe}{{\mathcal{B}\eta}}
\newcommand{\BTo}{{\mathcal{B}\omega}}
\newcommand{\Hpl}{\mathcal{H^+}}
\newcommand{\Hst}{\mathcal{H^*}}
\newcommand\sqle{\sqsubseteq}
\newcommand{\leeta}[1][]{\le^\eta_{#1}}
\newcommand{\geeta}[1][]{\ge^\eta_{#1}}
\newcommand{\leetainf}{\le^{\eta}_{\omega}}
\newcommand{\geetainf}{\ge^{\eta}_{\omega}}
\newcommand{\obseq}[1][\mathtt{nf}]{\equiv^{#1}}
\newcommand{\obsle}[1][\mathtt{nf}]{\sqle^{#1}}
\newcommand{\bnf}{\,::=\,}
\newcommand{\cB}{\mathcal{B}}
\newcommand{\cD}{\mathcal{D}}
\newcommand{\cH}{\mathcal{H}}
\newcommand{\cI}{\mathcal{I}}
\newcommand{\cO}{\mathcal{O}}
\newcommand{\cT}{\mathcal{T}}
\begin{document}

\title[Degrees of extensionality in the theory of {B}\"ohm trees]{Degrees of extensionality in the theory of {B}\"ohm trees and
  {S}all\'e's conjecture.}

\author[B.~Intrigila]{Benedetto Intrigila}	
\address{Dipartimento di Ingegneria dell'Impresa, Universit\`a di Roma Tor Vergata, Rome, Italy}	
\email{intrigil@mat.uniroma2.it}  

\author[G.~Manzonetto]{Giulio Manzonetto}	
\address{LIPN, UMR 7030, Universit\'e Paris 13, Sorbonne Paris Cit\'e, F-93430, Villetaneuse, France}	
\email{giulio.manzonetto@lipn.univ-paris3.fr}  
\author[A.~Polonsky]{Andrew Polonsky}	
\address{Department of Computer Science, Appalachian State University, Boone, NC 28608, USA}	
\email{andrew.polonsky@gmail.com}  





\begin{abstract}
  \noindent 
The main observational equivalences of the untyped \lam-calculus have been characterized in terms of extensional equalities between B\"ohm~trees.
It is well known that the \lam-theory $\Hst$, arising by taking as observables the head normal forms, equates two \lam-terms whenever their B\"ohm trees are equal up to countably many possibly infinite $\eta$-expansions.
Similarly, two \lam-terms are equal in Morris's original observational theory~$\Hpl$, generated by considering as observable the $\beta$-normal forms,  whenever their B\"ohm trees are equal up to countably many finite $\eta$-expansions.

The \lam-calculus also possesses a strong notion of extensionality called \emph{the $\omega$-rule}, which has been the subject of many investigations.
It is a longstanding open problem whether the equivalence $\BTo$ obtained by closing the theory of B\"ohm trees under the $\omega$-rule is strictly included in $\Hpl$, as conjectured by Sall\'e in the seventies.
In this paper we demonstrate that the two aforementioned theories actually coincide, thus disproving Sall\'e's conjecture.

The inclusion $\BTo\subseteq \Hpl$ is a consequence of the fact that the \lam-theory $\Hpl$ satisfies the $\omega$-rule.
This result follows from a weak form of separability enjoyed by \lam-terms whose B\"ohm trees only differ because of some infinite $\eta$-expansions, a property which is proved through a refined version of the famous B\"ohm-out technique.

The inclusion $\Hpl\subseteq\BTo$ follows from the fact that whenever two \lam-terms are observationally equivalent in $\Hpl$ their B\"ohm trees have a common ``$\eta$-supremum'' that can be \lam-defined starting from a stream (infinite sequence) of $\eta$-expansions of the identity.
It turns out that in $\BTo$ such a stream is equal to the stream containing infinitely many copies of the identity, a peculiar property that actually makes the two theories collapse.

The proof technique we develop for proving the latter inclusion is general enough to provide as a byproduct a new characterization, based on bounded $\eta$-expansions, of the least extensional equality between B\"ohm~trees.
Together, these results provide a taxonomy of the different degrees of extensionality in the theory of B\"ohm trees.

\end{abstract}

\maketitle

\section*{Introduction}
The problem of determining when two programs are equivalent is central in computer science.
For instance, it is necessary to verify that the optimizations performed by a compiler actually preserve the meaning of the program. 
For \lam-calculi, it has become standard to consider two \lam-terms $M$ and $N$ as equivalent when they are \emph{contextually equivalent} with respect to some fixed set $\cO$ of observables~\cite{Morristh}. 
This means that it is possible to plug either $M$ or $N$ into any context $C[]$ without noticing any difference in the global behaviour: $C[M]$ produces a result belonging to $\cO$ exactly when $C[N]$ does.
The problem of working with this definition is that the quantification over all possible contexts is difficult to handle. Therefore, many researchers undertook a quest for characterizing observational equivalences both semantically, by defining fully abstract denotational models, and syntactically, by comparing possibly infinite trees representing possible program executions.

The most famous observational equivalence for the untyped \lam-calculus is obtained by considering as observables the head normal forms, which are \lam-terms representing stable amounts of information coming out of the computation.
Introduced by Hyland~\cite{Hyland76} and Wadsworth~\cite{Wadsworth76}, it has been ubiquitously studied in the literature~\cite{Bare,GouyTh,GianantonioFH99,RonchiP04,Manzonetto09,Breuvart14}, since it enjoys many interesting properties.
By definition, it corresponds to the extensional \lam-theory~$\Hst$ which is the greatest consistent sensible \lam-theory~\cite[Thm.~16.2.6]{Bare}.
Semantically, it arises as the \lam-theory of Scott's pioneering model $\cD_\infty$~\cite{Scott72}, a result which first appeared in~\cite{Hyland76} and~\cite{Wadsworth76}, independently.
More recently, Breuvart provided in~\cite{Breuvart14} a characterization of all $K$-models that are fully abstract for $\Hst$.
As shown in \cite[Thm.~16.2.7]{Bare}, two \lam-terms are equivalent in $\Hst$ exactly when their B\"ohm trees are equal up to countably many possibly infinite $\eta$-expansions.

However, the head normal forms are not the only reasonable choice of observables.
For instance, the original extensional contextual equivalence defined by Morris in~\cite{Morristh} arises by considering as observables the $\beta$-normal forms, which represent completely defined results.
We denote by $\Hpl$ the \lam-theory corresponding to Morris's observational equivalence\footnote{The notation $\Hpl$ has been introduced in~\cite{ManzonettoR14}, while the same theory is denoted $\mathscr{T}_{\mathrm{NF}}$ in~\cite{Bare} and $\bold{N}$ in~\cite{RonchiP04}.}.
The \lam-theory $\Hpl$ is sensible and distinct from $\Hst$, so we have $\Hpl\subsetneq\Hst$.
Despite the fact that the equality in $\Hpl$ has been the subject of fewer investigations, it has been characterized both semantically and syntactically.
In~\cite{CoppoDZ87}, Coppo \emph{et al.} proved that $\Hpl$ corresponds to the \lam-theory induced by a suitable filter model.
More recently, Manzonetto and Ruoppolo introduced a simpler model of $\Hpl$ living in the relational semantics~\cite{ManzonettoR14} and Breuvart \emph{et al.} provided necessary and sufficient conditions for a relational model to be fully abstract for~$\Hpl$~\cite{BreuvartMPR16}.
From a syntactic perspective, Hyland proved in \cite{Hyland75} that two \lam-terms are equivalent in $\Hpl$ exactly when their B\"ohm trees are equal up to countably many $\eta$-expansions of finite depth (see also~\cite[\S11.2]{RonchiP04} and \cite{Levy78}).

We have seen that both observational equivalences correspond to some extensional equalities between B\"ohm trees.
A natural question is whether $\Hpl$ can be generated just by adding $\eta$-conversion to the \lam-theory $\BTth$ equating all \lam-terms having the same B\"ohm tree.
The \lam-theory $\BTe$ so defined  has been little studied in the literature, probably because it does not arise as an observational equivalence nor is induced by some known denotational model. 
In~\cite[Lemma~16.4.3]{Bare}, Barendregt shows that one $\eta$-expansion in a \lam-term $M$ can generate infinitely many finite $\eta$-expansions on its B\"ohm tree~$\BT{M}$.
In~\cite[Lemma~16.4.4]{Bare}, he exhibits two \lam-terms that are equal in $\Hpl$ but distinct in $\BTe$, thus proving that~$\BTe\subsetneq\Hpl$.

However, the \lam-calculus also possesses another notion of extensionality, known as the \emph{$\omega$-rule}, which is strictly stronger than $\eta$-conversion.
Such a rule has been studied by many researchers in connection with several \lam-theories~\cite{IntrigilaS09,BarendregtTh,Plotkin74,BarendregtBKV78,IntrigilaS04}.
Formally, the $\omega$-rule  states that for all \lam-terms $M$ and $N$, $M=N$ whenever $MP = NP$ holds for all closed \lam-terms~$P$.
A \lam-theory $\cT$ \emph{satisfies the $\omega$-rule} whenever it is closed under such a rule.
Since this is such an impredicative rule, we can meaningfully wonder how the \lam-theory $\BTo$, obtained as the closure of $\BTth$ under the $\omega$-rule, compares with the other \lam-theories.
As shown by Barendregt in~\cite[Lemma~16.4.4]{Bare}, $\BTe$ does not satisfy the $\omega$-rule, while $\Hst$ does~\cite[Thm.~17.2.8(i)]{Bare}.

Therefore, the three possible scenarios are the following:\\[5pt]
\begin{tikzpicture}
\node (root) at (-4,0) {~};
\node (root) at (0,0) {~};
\node (BTe) at ($(root)+(-4.5,0)$) {$\qquad\BTe\subsetneq\Hpl\subseteq\BTo\subsetneq\Hst\quad$ or};
\node (BTe1) at ($(root)+(1,0)$) {$\qquad\BTe\subsetneq\BTo\subseteq\Hpl\subsetneq\Hst\quad$ or};
\node (BTe2) at ($(root)+(4.75,0)$) {$\BTe$};
\node (BTo2) at ($(BTe2)+(1,.3)$) {$\BTo$};
\node (Hpl2) at ($(BTo2)+(0,-.6)$) {$\Hpl$};
\node (Hst2) at ($(Hpl2)+(1,.3)$) {$\Hst.$};
\node[rotate=25] at (148pt,6pt) {$\subsetneq$};
\node[rotate=-25] at (149pt,-7pt) {$\subsetneq$};
\node[rotate=-25] at (178pt,6pt) {$\subsetneq$};
\node[rotate=25] at (178pt,-8pt) {$\subsetneq$};
\end{tikzpicture}\\
In the seventies, Sall\'e was working with Coppo and Dezani on type systems for studying termination properties of \lam-terms~\cite{Salle1978,CoppoDS79}.
In 1979, at the conference on \lam-calculus that took place in Swansea, he conjectured that a strict inclusion $\BTo\subsetneq\Hpl$ holds.
Such a conjecture was reported in the proof of \cite[Thm.~17.4.16]{Bare}, but for almost fourty years no progress has been made in that direction.
In this paper we demonstrate that the \lam-theories $\BTo$ and $\Hpl$ actually coincide, thus disproving Sall\'e's conjecture.
We now give an outline of the proof, discuss the results we need, the techniques we develop and the underlying ideas.
\smallskip

\noindent{$\BTo\subseteq\Hpl$.} 
The fact that the \lam-theory $\BTo$ is included in $\Hpl$ follows immediately if one can prove that $\Hpl$ satisfies the $\omega$-rule.
We notice that, on \emph{closed} \lam-terms, observational equivalences $\cT$ are equivalently defined by applicative contexts $C[]$ of shape $[]P_1\cdots P_k$, where the $P_i$'s are closed as well.
Moreover, if $\cT$ satisfies the $\omega$-rule, two closed $M,N$ are equated if they have the same observable behaviour in every \emph{non-empty} applicative context.
Whence, the key point in proving that $\cT$ is closed under the $\omega$-rule is being able to complete any applicative context $[]\vec P$ distinguishing $M$ from $N$, to ensure that it is non-empty.
For $\Hst$, this follows from Wadsworth's characterization of \lam-terms having a head normal form in terms of solvability: it is possible to find $\vec P$ such that, say, $M\vec P$ is equal to the identity~$\bI$, while $N\vec P$ is unsolvable and $\vec P$ can be chosen of any length by adding copies of $\bI$ at the end.

To prove that $\Hpl$ satisfies the $\omega$-rule, we need to show something similar, namely that when $M$ has a $\beta$-normal form while $N$ does not, we can find a non-empty context $[]\vec P$  preserving this property.
Interestingly, it is sufficient to prove this for \lam-terms $M,N$ that are equated in the \lam-theory $\Hst$; in other words we need to perform a detailed analysis of the equations in $\Hst - \Hpl$.
We show that when two closed \lam-terms $M,N$ are equal in $\Hst$, but different in $\Hpl$, their B\"ohm trees are similar but there exists a (possibly virtual) position $\sigma$ where they differ because of an infinite $\eta$-expansion of a variable $x$,  and such an $\eta$-expansion follows the structure of some computable infinite tree~$T$.
Thanks to a refined B\"ohm-out technique, we prove that it is always possible to extract such a difference by defining a suitable applicative context $[]\vec P$ that sends $M$ into the identity and $N$ into some infinite $\eta$-expansion of the identity (Theorem~\ref{thm:newsep}).
This provides a separability theorem in the spirit of~\cite{Hyland75,CoppoDR78,DezaniG01} but the notion of separability that we consider is weaker since it arises from Morris's observability. 
We then prove that applying an infinite $\eta$-expansion of the identity $\bI$ to $\bI$ itself, one still gets a (possibly different) infinite $\eta$-expansion of $\bI$.
From this closure property we obtain that also in this case the length of the discriminating context $[]\vec P$ can be chosen arbitrarily by adding copies of $\bI$ at the end.
Once this property has been established, the fact that $\Hpl$ satisfies the $\omega$-rule follows (Theorem~\ref{thm:Hplomega}).

\noindent{$\Hpl\subseteq\BTo$.} To prove this result we need to show that, whenever two \lam-terms $M$ and $N$ are equal in $\Hpl$, they are also equal in~$\BTo$.
From~\cite{Hyland75}, we know that in this case there is a B\"ohm tree~$U$ such that $\BT{M}\leeta U\geeta \BT{N}$, where $V\leeta U$ means that the B\"ohm tree $U$ can be obtained from $V$  by performing countably many finite $\eta$-expansions.
Thus, the B\"ohm trees of $M,N$ are compatible and have a common  ``$\eta$-supremum''~$U$.

Our proof can be divided into several steps:

\begin{enumerate}
\item\label{step1}
	We show that the aforementioned $\eta$-supremum $U$ is \emph{\lam-definable}: there exists a \lam-term $P$ such that $\BT{P} = U$ (Proposition~\ref{prop:Hplchar}).
\item\label{step3} We define a \lam-term $\etamax[]$ (Definition~\ref{def:etamax}) taking as arguments (the \emph{codes} $\code\cdot$ of) two \lam-terms $M_1,M_2$ and a \emph{stream} (infinite sequence)~$S$ of \lam-terms.  
	Assuming the B\"ohm tree of $M_2$ is more $\eta$-expanded than the one of $M_1$, i.e. $\BT{M_1}\leeta\BT{M_2}$, we show that:\smallskip
	\begin{enumerate}[label={(\roman*)}]
	\item\label{biribim} $\etamax[]\code{M_1}\code{M_2}$ is able to reconstruct the B\"ohm tree of $M_2$ when taking as input the stream $S = \ETA$ listing all finite $\eta$-expansions of the identity (Lemma~\ref{lem:Nwins}), \smallskip
	\item\label{biribam} $\etamax[]\code{M_1}\code{M_2}$ reconstructs the B\"ohm tree of $M_1$ from the input stream $S = \ID$ containing infinitely many copies of the identity (Lemma~\ref{lem:Mwins}).
	\end{enumerate}
\item\label{step2} By a tricky use of the $\omega$-rule, we show that the two streams $\ETA$ and $\ID$ are actually equated in $\BTo$ (Corollary~\ref{cor:IDeqETA}). From the discussion in \eqref{step3}, we realize that countably many finite $\eta$-expansions are collapsing already in the \lam-theory $\BTo$. 
	
	\item\label{step4} Summing up, if $M,N$ are equal in $\Hpl$, then by (1) there is a \lam-term $P$ such that $\BT{M}\leeta\BT{P}\geeta\BT{N}$.
	By (2), $\etamax[]\code M\code PS$ has the same B\"ohm tree as $M$ when the stream $S$ is $\ID$, and as $P$ when $S$ is $\ETA$.
	Symmetrically,  $\etamax[]\code N\code P\ID$ has the same B\"ohm tree as $N$, while $\etamax[]\code N\code P\ETA$ has the same B\"ohm tree as $P$.
	Since $\ID$ and $\ETA$ are equal in $\BTo$ by (3), we conclude $M =_{\BTo} P =_\BTo N$ (Theorem~\ref{thm:main}).\\[-1ex]
\end{enumerate}
The intuition behind $\etamax[]\code M\code N S$ is that, working on their codes, the \lam-term $\etamax[]$ computes the B\"ohm trees of $M$ and $N$, compares them, and at every position applies to the ``smaller'' (the less $\eta$-expanded one) an element extracted from the stream $S$ in the attempt of matching the structure of the ``larger'' (the more $\eta$-expanded one).
If the stream $S$ contains all possible $\eta$-expansions then each attempt succeeds, so $\etamax[]\code M\code N \ETA$ computes the $\eta$-supremum of $\BT{M}$ and $\BT{N}$.
If $S$ only contains infinitely many copies of the identity, each non-trivial attempt fails, and $\etamax[]\code M\code N \ID$ computes their $\eta$-infimum.

\subsection*{A characterization of $\BTe$}
The technique that we develop for $\eta$-expanding B\"ohm trees in a controlled way is powerful enough to open the way for a characterization of $\BTe$ as well.
More precisely, we prove that two \lam-terms $M$ and $N$ are equal in $\BTe$ exactly when their B\"ohm trees are equal up to countably many $\eta$-expansions of \emph{bounded size} (Theorem~\ref{thm:main2}).
Indeed, in this case $M$ and $N$ admit an $\eta$-supremum $U$ obtained from their B\"ohm trees by performing at every position $\sigma$ at most $n$ $\eta$-expansions, each having size bounded by $n$.
(In this context, the \emph{size} of an $\eta$-expansion is not the actual size of its tree but rather the maximum between its height and its maximal number of branchings.)
It turns out that, when exploiting our \lam-term $\etamax[]\code M \code N \ETA$ to compute the $\eta$-supremum, it only relies on a \emph{finite} portion of the input stream $\ETA$. 
Since in $\BTe$ any finite sequence $[\eta_1,\dots,\eta_k]$ is equal to the sequence $[\bI,\dots,\bI]$ and $\etamax[]\code M \code N [\bI,\dots,\bI]$ actually computes the $\eta$-infimum, we obtain once again that the $\eta$-supremum and the $\eta$-infimum collapse.
We can therefore proceed as in the proof sketched above for $\BTo$ and conclude that $M$ and $N$ are equal in $\BTe$.

\subsection*{Discussion}
We build on the characterizations of $\Hpl$ and $\Hst$ given by Hyland\linebreak and Wadsworth~\cite{Hyland75,Hyland76,Wadsworth76} and subsequently improved by L\'evy~\cite{Levy78}. 
In Section~\ref{sec:extensional_BT} we give a uniform presentation of these preliminary results using the formulation given in~\cite[\S19.2]{Bare} for $\Hst$, that exploits the notion of B\"ohm-like trees, namely labelled trees that ``look like'' B\"ohm trees but might not be \lam-definable. 
B\"ohm-like trees were introduced in~\cite{Bare} since at the time researchers were less familiar with the notion of coinduction, but they actually correspond to infinitary terms coinductively generated by the grammar of $\beta$-normal forms possibly containing the constant~$\bot$.
It is worth mentioning that such characterizations of $\Hpl$ and $\Hst$ have been recently rewritten by Severi and de~Vries using the modern approach of infinitary rewriting~\cite{Severi2002,SeveridV17}, and that we could have used their formulation instead.

A key ingredient in our proof of $\BTo\subseteq\Hpl$ is the fact that \lam-terms can be encoded as natural numbers, and therefore as Church numerals, in an effective way.
This is related to the theory of self-interpreters in \lam-calculi, which is an ongoing subject of study~\cite{Mogensen92,Given-WilsonJ11,Polonsky11,BrownP16}, and we believe that the present paper provides a nice illustration of the usefulness of such interpreters.
As a presentation choice, we decided to use the encoding described in Barendregt's book \cite[Def.~6.5.6]{Bare}, even if it works for closed \lam-terms only, because it is the most standard. 
However, our construction could be recast using any (effective) encoding, like the one proposed by Mogensen in~\cite{Mogensen92} that works more generally for open terms.

\subsection*{Disclaimer.}
The present paper is a long version of the extended abstract~\cite{IntrigilaMP17} published in the Proceedings of the Second International Conference on Formal Structures for Computation and Deduction (FSCD) 2017.
The primary goal of this article is to describe the mathematical context where Sall\'e's conjecture has arisen, and provide a self-contained treatment of its refutation.
Besides giving more detailed proofs and examples, we provide some original results only announced in~\cite{IntrigilaMP17}, like the characterization of $\BTe$ in terms of an extensional equality between B\"ohm trees up to bounded $\eta$-expansions.

Notice that a proof-sketch of the fact that $\Hpl$ satisfies the $\omega$-rule previously appeared in a conference paper written by the second and third authors in collaboration with Breuvart and Ruoppolo~\cite{BreuvartMPR16}.
Since the topic of that paper is mainly semantical we decided --- in agreement with them --- to exploit the present article to provide the missing details concerning Morris's separability theorem. 
The semantic results contained in~\cite{BreuvartMPR16} are the subject of a different article~\cite{BreuvartMR18} presenting more broadly the class of relational graph models and their properties.

\subsection*{Outline.} The structure of the present paper is the following.
	Section~\ref{sec:prelim} contains the preliminaries, mainly concerning the untyped \lam-calculus --- we present its syntax and recall several well-established properties.
	In Section~\ref{sec:extensional_BT}, we review the main notions of extensional equalities on B\"ohm trees and provide a few paradigmatic examples.
	The key results concerning the $\omega$-rule in connection with several \lam-theories are presented in Section~\ref{sec:omega}; we conclude the section by stating Sall\'e's conjecture.
	Section~\ref{sec:Boehmout} is devoted to studying the structural properties of the set of $\eta$-expansions of the identity, introduce our version of the B\"ohm-out technique and present the weak separability theorem for Morris's observability.
As a consequence, we get that $\Hpl$ satisfies the $\omega$-rule.
	In Section~\ref{sec:streams} we show how to build B\"ohm trees, and their $\eta$-supremum and $\eta$-infimum, starting from the codes of \lam-terms and streams of $\eta$-expansions of the identity.
	Section~\ref{sec:Salleiswrong} is devoted to the actual proof of the refutation of Sall\'e's conjecture.
	In Section~\ref{sec:BTeta} we provide a characterization of $\BTe$ in terms of a notion of equality on B\"ohm trees up to \emph{bounded} $\eta$-expansions.



\newpage
\section{Preliminaries}\label{sec:prelim}
We review some basic notions and introduce some notations that will be used in the rest of the paper.

\subsection{Coinduction} Throughout this paper, we often consider
possibly infinite trees as coinductive objects and perform coinductive reasoning.
Here we recall some basic facts and introduce some terminology, but we mainly assume that the reader is familiar with these concepts. 
If that is not the case, we suggest the following tutorials on the subject~\cite{JacobsR97,KozenS17}.

A \emph{coinductive structure}, also known as \emph{coinductive datatype}, is the greatest fixed point over a grammar, or equivalently the final coalgebra over the corresponding signature. 
We also consider \emph{coinductive relations}, that are the greatest relations over such coinductive structures that respect the structural constraints. 
A coinductive proof that two elements of the structures stand in
relation to one another is given by an infinite derivation tree, which is a coinductive structure itself.

Since structural coinduction has been around for decades and many efforts have been made within the community to explain why it should be used as innocently as structural induction, in our proofs we will not reassert the coinduction principle every time it is used. 
Borrowing the terminology from~\cite{KozenS17}, we say that we apply the ``coinductive hypothesis'' whenever the coinduction principle is applied.
The idea is that one can appeal to the coinductive hypothesis as long
as there has been progress in producing the nodes of a tree and there is no further analysis of the subtrees.
We believe that this style of mathematical writing greatly improves the readability of our proofs without compromising their correctness; the suspicious reader can study~\cite{KozenS17} where it is explained how this informal terminology actually corresponds to a formal application of the coinduction principle. 

\subsection{Sequences, Trees and Encodings}\label{ssec:sequences}
We let $\nat$ be the set of all natural numbers and $\nat^*$ be the set of all finite sequences over $\nat$.
Given a sequence $\sigma = \seqof{n_1,\dots,n_k}$ and $n\in\nat$ we write $\con{\sigma}{n}$ for the sequence $\seqof{n_1,\dots,n_k,n}$.
Given two sequences $\sigma,\tau\in\nat^*$ we write $\concat{\sigma}{\tau}$ for their concatenation.
We will denote the empty sequence by $\emptyseq$.

We consider fixed an effective (bijective) encoding of all finite sequences of natural numbers $\# : \nat^*\to\nat$. In particular, we assume that for all $\sigma\in\nat^*$ and $n\in\nat$ the code $\#(\sigma.n)$ is computable from $\#\sigma$ and~$n$.

\begin{defi}\label{def:tree} 
An \emph{(unlabelled) tree} is a partial function $T : \nat^*\to\nat$ such that $\dom(T)$ is closed under prefixes and, for all $\sigma\in\dom(T),n\in\nat$ we have $\con{\sigma}n \in\dom(T) \iff n < T(\sigma)$. 
	The \emph{subtree of $T$ at $\sigma$} is the tree $\subt T\sigma$ defined by setting $\subt T\sigma\!\!(\tau) = T(\concat\sigma\tau)$ for all $\tau\in\nat^*$.
\end{defi}

The elements of $\dom(T)$ are called \emph{positions}.
Notice that, in our definition of a tree $T$, $T(\sigma)$ provides the number of children of the node at position $\sigma$; therefore we have $T(\sigma) = 0$ whenever the position $\sigma$ corresponds to a leaf.
\begin{defi}
A tree $T$ is called: \emph{recursive} if the function $T$ is partial recursive (after coding $\nat^*$ using $\#(-)$); \emph{finite} if $\dom(T)$ is finite;  \emph{infinite} if it is not finite.
\end{defi}
We denote by $\Trees[]$ (resp.\ $\Trees$) the set of all (infinite) recursive trees.

\subsection{The Lambda Calculus}
We generally use the notation of Barendregt's first book~\cite{Bare} for the untyped \lam-calculus.
Let us fix a denumerable set $\Var$ of variables.

The set $\Lambda$ of \emph{\lam-terms} (over $\Var$) is defined inductively by the following grammar:
$$
\Lambda:\qquad\quad	M,N,P,Q\bnf\ x\ \mid\ \lam x.M\ \mid\ MN \qquad\textrm{ where  $x\in\Var$}.
$$
We suppose that application associates to the left and has higher precedence than \lam-abstraction. 
For instance, we write $\lam x.\lam y.\lam z.xyz$ for the \lam-term $\lam x.(\lam y.(\lam z.((xy)z)))$.
We write $\lam\seq x.M$ as an abbreviation for $\lam x_1.\dots\lam x_n.M$,  
$M\seq N$ for $MN_1\cdots N_n$,
$MN^{\sim n}$ for $MN\cdots N$ 
and finally $M^n(N)$ for $M(M(\cdots (MN)))$ ($n$ times).

The set $\FV{M}$ of \emph{free variables of $M$} and {$\alpha$-conversion} are defined as in~\cite[Ch.~1\S2]{Bare}.
Given $M,N\in\Lam$ and $x\in\Var$ we denote by $M\subst{x}{N}$ the capture-free substitution of $N$ for all free occurrences of $x$ in $M$. 
From now on, \lam-terms are considered up to \emph{$\alpha$-conversion}.

\begin{defi}
We say that a \lam-term $M$ is \emph{closed} whenever $\FV{M}= \emptyset$. 
In this case, $M$ is also called \emph{a combinator}.
We denote by~$\Lamo$ the set of all combinators.
\end{defi}

The \lam-calculus is a higher order term rewriting system and several \emph{notions of reduction} can be considered. 
Let us consider an arbitrary notion of reduction $\to_\mathtt{R}$.
The \emph{multistep $\mathtt{R}$-reduction $\msto[\mathtt{R}]$} is obtained by taking its reflexive and transitive closure.
The \emph{$\mathtt{R}$-conversion $=_\mathtt{R}$} is defined as the reflexive, transitive and symmetric closure of $\to_\mathtt{R}$.
We denote by $\nf_{\mathtt R}(M)$ the $\mathtt{R}$-normal form of $M$, if it exists,
and by $\NF[\mathtt{R}]$ the set of all \lam-terms in $\mathtt{R}$-normal form.

In this paper we are mainly concerned with the study of $\beta$- and $\eta$- reductions.

\begin{defi}\
\begin{itemize}
\item
	The \emph{$\beta$-reduction} $\to_\beta$ is the contextual closure of the rule: 
	$$
	(\beta)\qquad	(\lam x.M)N \redto M\subst{x}{N}.\hspace{50pt}
	$$ 
\item 
	The \emph{$\eta$-reduction} $\redto[\eta]$ is the contextual closure of the rule: 
	$$
	(\eta)\qquad	\lam x.Mx \redto M\textrm{ where } x\notin\FV{M}.
	$$
\end{itemize}
\end{defi}
We denote by $\redto[\beta\eta]$ the notion of reduction obtained from the union $\to_\beta\cup\to_\eta$, and by~$\msto[\beta\eta]$ (resp.\ $=_{\beta\eta}$) the corresponding multistep reduction (resp.\ conversion).
A \lam-term $M$ in $\beta$-normal form has the shape $M = \lam\seq x.yM_1\cdots M_k$ where each $M_i$ is in $\beta$-normal form.

Concerning specific combinators, we use the following notations (for $n\in\nat$):\label{pageref:one}
$$  
	\begin{array}{c}
	\bI = \One^0 =  \lam x.x,
	\hspace{40pt}
	\One^{n+1}= \lam xz.x(\One^nz),
	\hspace{40pt}
	\bB = \lam fgx.f(gx),\\
	\hspace{19pt}	
	\bK = \lam xy.x,	\hspace{41pt}
	\hspace{35pt}
	\bF = \lam xy.y,	\hspace{64pt}
	\Omega = (\lam x.xx)(\lam x.xx),
	\\
	\hspace{12pt}
	\bY = \lam f.(\lam x.f(xx))(\lam x.f(xx)),
	\hspace{103pt}
	\bJ = \bY(\lam jxz.x(jz)),\\
	\end{array}
$$
where $=$ denotes syntactic equality (up to $\alpha$-conversion).
The \lam-term $\bI$ represents the identity, $\One^n$ is a $\beta\eta$-expansion of the identity, $\bB$ is the composition combinator $M\circ N = \bB MN$, $\bK$ and $\bF$ are the first and second projections, $\Omega$ is the paradigmatic looping \lam-term, $\bY$ is Curry's fixed point combinator and $\bJ$ is the combinator defined by Wadsworth in~\cite{Wadsworth76}.

We denote by $\Church{n} = \lam fz.f^n(z)$ the $n$-th \emph{Church numeral}~\cite[Def.~6.4.4]{Bare}, by $\succc$ and $\pred$ the successor and predecessor, and by $\ifz(\Church{n},M,N)$  the {\lam-term} $\beta$-convertible to $M$ if $n = 0$ and $\beta$-convertible to $N$ otherwise.
For $\sigma\in\nat^*$, we denote by $\code{\sigma}$ the numeral~$\Church{\#\sigma}$.

We encode the \emph{$n$-tuple} $(M_1,\dots,M_n)\in\Lam^n$ in \lam-calculus by setting $[M_1,\dots,M_n] = \lam y.yM_1\cdots M_n$ with $y\notin\FV{M_i}$. 
Thus, $[M_1,M_2]$ represents the \emph{pair} $(M_1,M_2)$~\cite[Def.~6.2.4]{Bare}.
\begin{defi}
An enumeration of closed \lam-terms $e = (M_0,M_1,M_2,\dots)$ is called \emph{effective} (or \emph{uniform} in~\cite[\S8.2]{Bare}) if there exists $F\in\Lamo$ such that $F\Church n =_\beta M_n$.
\end{defi}
Given an effective enumeration $e$ as above, it is possible\footnote{One needs to use the fixed point combinator $\bY$, as shown in~\cite[Def.~8.2.3]{Bare}.} to define the \emph{stream} $[M_n]_{n\in\nat}$ as a single \lam-term satisfying $[M_n]_{n\in\nat} =_\beta [M_0,[M_{n+1}]_{n\in\nat}]$. We often use the notation:
$$
	[M_n]_{n\in\nat} = [M_0,[M_1,[M_2,\dots]]]. 
$$
The \emph{$i$-th projection} associated with a stream is $\pi_i = \lam y.y\bF^{\sim i}\bK$ since $\pi_i [M_n]_{n\in\nat} =_\beta M_i$.\label{proj_stream}

\subsection{Solvability}


The \lam-terms are classified into solvable and unsolvable,~dep\-ending on their capability of interaction with the environment.

\begin{defi} A closed \lam-term $N$ is  {\em solvable} if $NP_1\cdots P_n =_\beta \bI$ for some $P_1,\dots,P_n\in\Lam$.
A \lam-term $M$ is \emph{solvable} if its closure $\lam\seq x.M$ is solvable.
Otherwise $M$ is called \emph{unsolvable}. 
\end{defi}

It was first noticed by Wadsworth in~\cite{Wadsworth76} that solvable terms can be characterised in terms of ``head normalizability''.
The \emph{head reduction} $\redto[h]$ is the notion of reduction obtained by contracting the \emph{head redex} $(\lam y.P)Q$ in a \lam-term having shape $\lam\seq x.(\lam y.P)QM_1\cdots M_k$.
A \lam-term $M$ \emph{is in head normal form} (\emph{hnf}, for short) if it has the form $\lambda x_{1}\ldots x_{n}.yM_{1}\cdots M_{k}$.
The \emph{principal hnf of $M$} is the hnf (if it exists) obtained from $M$ by head reduction, i.e.\ $M\msto[h] \phnf{M}$; it is unique since $\redto[h]$ is deterministic.
A \lam-term $M$ has a head normal form if and only if  $\phnf{M}$ exists.
We denote by $\HNF$ the set of all head normal forms.

\begin{thm} A \lam-term $M$ is solvable if and only if $M$ has a head normal form.
\end{thm}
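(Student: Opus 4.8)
The plan is to prove the two implications separately, handling the nontrivial direction by contraposition, and to reduce everything to the already-stated fact that ``$M$ has a head normal form'' is equivalent to termination of the deterministic head reduction $\redto[h]$ (so that having a hnf is invariant under $\beta$-conversion).

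First I would settle the implication ``hnf $\Rightarrow$ solvable'' by a direct construction. Since a \lam-term is solvable exactly when its closure is, I may assume $M$ closed. If $M$ has a hnf then $\phnf{M}$ exists and, being closed, has the shape $\lam y_1\ldots y_n.\,y_i M_1\cdots M_k$ with its head among the bound variables. I then apply $M$ to the arguments $P_i = \lam u_1\ldots u_k.\bI$ in the $i$-th slot and, say, $\bI$ in every other slot. The head variable $y_i$ is replaced by a term absorbing its $k$ arguments and returning $\bI$, whence $M P_1\cdots P_n =_\beta \bI$, so $M$ is solvable.

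For the converse I argue contrapositively: assuming $M$ has no hnf, I show $M$ is unsolvable, i.e.\ no instance $(\lam\seq x.M)P_1\cdots P_n$ of its closure can be $=_\beta\bI$. As $\bI$ is itself a hnf, it suffices to prove that $(\lam\seq x.M)P_1\cdots P_n$ has no hnf. The heart of the argument is a trio of stability lemmas stating that the property ``has no hnf'' is preserved by the three operations used to build this term: abstraction, substitution, and application. The guiding observation is that when $M$ lacks a hnf, its infinite head-reduction sequence $M = M_0\redto[h] M_1\redto[h]\cdots$ never exposes a variable in head position, so each $M_i$ has the form $\lam\seq y.(\lam z.R)S\,\seq T$ with a genuine head redex. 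Abstraction is immediate, since $\lam x.M$ and $M$ head-reduce in lockstep. For substitution I would simulate the sequence: because the operator at the head of each $M_i$ is an abstraction and \emph{not} the variable $x$ being replaced, the head redex survives $\subst{x}{N}$, and by the substitution lemma for $\beta$-reduction one gets $M_i\subst{x}{N}\redto[h]M_{i+1}\subst{x}{N}$; hence $M\subst{x}{N}$ also admits an infinite head reduction. The application lemma is analogous, but must additionally account for the case where a leading abstraction of some $M_i$ absorbs the extra argument, turning that step into a substitution of the kind just treated; either way the head reduction of $MN$ is infinite.

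Combining the three lemmas, head-reducing $(\lam x_1\ldots x_m.M)P_1\cdots P_n$ first consumes the leading abstractions against the arguments, leaving a term of the form $M\sigma$ (possibly still under abstractions, or applied to the remaining $P_j$) where $\sigma$ substitutes the $P$'s; by stability this term inherits the absence of a hnf, so it cannot be $\bI$. Thus the closure of $M$ is unsolvable, and $M$ is unsolvable, which completes the contrapositive. I expect the main obstacle to be the rigorous bookkeeping in the application lemma, where head reduction alternates between contracting internal redexes and absorbing arguments into freshly created leading abstractions; making the lockstep simulation precise, and confirming that absorbing an argument never conjures a variable head out of nothing, is the delicate point, and it is exactly where the invariant ``the head is always a redex'' earns its keep. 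The secondary reliance on standardization to equate ``has a hnf'' with ``head reduction terminates'' is the other place requiring care, although it may be quoted as a known property.
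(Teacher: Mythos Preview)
Your proof sketch is correct and follows the classical argument (essentially the one in Barendregt's book, \S8.3). However, the paper itself does \emph{not} prove this theorem: it is stated in the preliminaries as a known result, attributed to Wadsworth~\cite{Wadsworth76}, and left without proof. So there is nothing to compare against on the paper's side; you have simply supplied the standard demonstration where the paper chose to quote the literature.
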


The typical example of an unsolvable is $\Omega$.
The following result appears as Lemma~17.4.4 in~\cite{Bare} and shows that any $M\in\Lamo$ can be turned into an unsolvable by applying enough~$\Omega$'s.

\begin{lem}\label{lemma:Omegak}
 For all $M\in\Lamo$ there exists $k\in\nat$ such that $M\Omega^{\sim k}$ is unsolvable.
\end{lem}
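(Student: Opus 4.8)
The plan is to split on whether $M$ itself is solvable. If $M$ is unsolvable, then $k=0$ works immediately, since $M\Omega^{\sim 0}=M$. So the only interesting case is when $M$ is solvable, and here the strategy is to use enough copies of $\Omega$ to overwrite the \emph{head variable} of the principal head normal form of $M$, thereby destroying solvability.

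By the characterization of solvability via head normal forms, a solvable $M$ has a principal head normal form $\phnf{M} = \lam x_1\cdots x_n.\,y\,M_1\cdots M_m$. Since $M\in\Lamo$ is closed, the head variable $y$ cannot be free, so $y = x_i$ for some $1\le i\le n$. I would then take $k=n$ and perform head reduction: head reduction of $M\Omega^{\sim n}$ first mimics the head reduction $M\msto[h]\phnf{M}$ and then consumes the $n$ arguments, so
\[
 M\Omega^{\sim n}\ \msto[h]\ (\lam x_1\cdots x_n.\,x_i\,M_1\cdots M_m)\,\Omega^{\sim n}\ \msto[h]\ \Omega\,M_1'\cdots M_m',
\]
where $M_j' = M_j\subst{x_1}{\Omega}\cdots\subst{x_n}{\Omega}$ and, crucially, the head variable $x_i$ has been replaced by $\Omega$ (and no trailing $\Omega$'s remain, as $k-n=0$).

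It then remains to observe that any term of the form $\Omega\,N_1\cdots N_p$ is unsolvable. This is the heart of the matter: since $\Omega=(\lam x.xx)(\lam x.xx)$, the head redex of $\Omega\,N_1\cdots N_p$ lies inside $\Omega$, and contracting it returns $\Omega\,N_1\cdots N_p$ unchanged; hence head reduction loops forever and never produces a head normal form, so by the solvability theorem the term is unsolvable. Applying this with $p=m$ shows that $M\Omega^{\sim n}$ is unsolvable, which establishes the claim with $k=n$.

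The only genuinely delicate point is the appeal to closedness of $M$ to guarantee that the head variable lies among the abstracted $x_i$; without this, feeding $\Omega$'s would never reach the head variable and the argument would collapse. Everything else is routine head-reduction bookkeeping combined with the standard fact that an $\Omega$ in head position annihilates solvability.
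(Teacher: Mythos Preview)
Your argument is correct, and there is nothing in the paper to compare it against: the paper does not supply its own proof of this lemma but merely states it and cites Barendregt's book, where it appears as Lemma~17.4.4. Your case split on solvability, use of closedness to ensure the head variable is among the abstracted $x_i$, and observation that $\Omega\,N_1\cdots N_p$ head-reduces to itself are exactly the standard ingredients.

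One small inaccuracy: the sentence ``head reduction of $M\Omega^{\sim n}$ first mimics the head reduction $M\msto[h]\phnf{M}$ and then consumes the $n$ arguments'' is not literally true. If $M$ begins with an abstraction, the head redex of $M\Omega^{\sim n}$ is the outermost application, so head reduction interleaves consuming the $\Omega$'s with contracting internal redexes; in particular the displayed step $M\Omega^{\sim n}\msto[h](\lam x_1\cdots x_n.\,x_i\,M_1\cdots M_m)\,\Omega^{\sim n}$ need not hold as a head-reduction. This does not damage the proof: simply replace $\msto[h]$ by $\msto[\beta]$ (or $=_\beta$) in that step, note that solvability is invariant under $\beta$-conversion, and your final observation that $\Omega\,M_1'\cdots M_m'$ has no head normal form finishes the job.
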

\vspace{-.3cm}

\subsection{B\"ohm(-like) trees}
The B\"ohm trees were introduced by Barendregt~\cite{Barendregt77}, who named them after B\"ohm since their structure arises from the proof of the homonymous theorem~\cite{bohm68}.\linebreak
They are coinductively defined labelled trees representing the execution of a \lam-term~\cite{Lassen99}.

\begin{defi}
The \emph{B\"{o}hm tree of a \lam-term $M$} is the coinductive
structure defined by: 
\begin{figure}[t!]
\begin{tikzpicture}
\node (myspot) at (-5.9,0) {~};
\node (BTxyo) at (-6.5,3.5) {$\BT{\lam x.y\Omega}$};
\node[below of = BTxyo, node distance=8pt] (eq) {$\shortparallel$};
\node[below of = eq, node distance=10pt] (l1) {$\lam x.y$};
\node[below of = l1, node distance=18pt,xshift=6pt] (l2) {$\bot$};
\draw ($(l1.south)+(6pt,0.05)$) -- ($(l2.north)-(0,0.05)$);
\node (BTJ) at (-4.5,3.5) {$\BT{\One^3}$};
\node[below of = BTJ, node distance=8pt] (eq) {$\shortparallel$};
\node[below of = eq, node distance=10pt] (l1) {$\lam xz_0.x~~$};
\node[below of = l1, node distance=18pt] (l2) {$\lam z_1.z_0$};
\node[below of = l2, node distance=18pt] (l3) {$\lam z_2.z_1$};
\node[below of = l3, node distance=18pt] (l4) {$\phantom{\lam z_3.}z_2$};
\draw ($(l1.south)+(8pt,0.05)$) -- ($(l2.north)+(8pt,-0.15)$);
\draw ($(l2.south)+(8pt,0.05)$) -- ($(l3.north)+(8pt,-0.15)$);
\draw ($(l3.south)+(8pt,0.05)$) -- ($(l4.north)+(8pt,-0.15)$);
\node (BTJ) at (-2.75,3.5) {$\BT{\bJ}$};
\node[below of = BTJ, node distance=8pt] (eq) {$\shortparallel$};
\node[below of = eq, node distance=10pt] (l1) {$\lam xz_0.x~~$};
\node[below of = l1, node distance=18pt] (l2) {$\lam z_1.z_0$};
\node[below of = l2, node distance=18pt] (l3) {$\lam z_2.z_1$};
\node[below of = l3, node distance=18pt] (l4) {$\lam z_3.z_2$};
\node[below of = l4, xshift=8pt, node distance=10pt] (l5) {$\vdots$};
\draw ($(l1.south)+(8pt,0.05)$) -- ($(l2.north)+(8pt,-0.15)$);
\draw ($(l2.south)+(8pt,0.05)$) -- ($(l3.north)+(8pt,-0.15)$);
\draw ($(l3.south)+(8pt,0.05)$) -- ($(l4.north)+(8pt,-0.15)$);
\node (BTOf) at (-1,3.5) {$\BT{\bY}$};
\node[below of = BTOf, node distance=8pt] (eq) {$\shortparallel$};
\node[below of = eq, node distance=10pt] (l1) {$\lam f.f$};
\node[below of = l1, xshift=6pt, node distance=18pt] (l2) {$f$};
\node[below of = l2, node distance=18pt] (l3) {$f$};
\node[below of = l3, node distance=18pt] (l4) {$f$};
\node[below of = l4, node distance=10pt] (l5) {$\vdots$};
\draw ($(l1.south)+(6pt,0.05)$) -- ($(l2.north)+(0pt,-0.05)$);
\draw ($(l2.south)+(0,0.05)$) -- ($(l3.north)+(0pt,-0.05)$);
\draw ($(l3.south)+(0,0.05)$) -- ($(l4.north)+(0pt,-0.05)$);
\node (BTP) at (1.6,3.5) {$\BT{[M]_{n\in\nat}}$};
\node[below of = BTP, node distance=10pt] (eq) {$\shortparallel$};
\node (root) at ($(BTP)+(0,-20pt)$) {};
\node (BTM) at (root) {$\lam y.y$};
\node (M0) at ($(root)+(-10pt,-22pt)$) {$\BT{M_0}$};
\node (x1) at ($(root)+(50pt,-22pt)$) {$\lam y.y$};
\node (M1) at ($(x1)+(-13pt,-22pt)$) {$\BT{M_1}$};
\node (x2) at ($(x1)+(50pt,-22pt)$) {$\lam y.y$};
\node (M2) at ($(x2)+(-13pt,-22pt)$) {$\BT{M_2}$};
\draw ($(BTM.south east)+(-4pt,3pt)$) -- ($(x1.north)+(0pt,-2pt)$);
\draw ($(BTM.south east)+(-10pt,3pt)$) -- (M0);
\draw ($(x1.south east)+(-2pt,2pt)$) -- ($(x2.north)+(0pt,-2pt)$);
\draw ($(x1.south east)+(-10pt,2pt)$) -- (M1);
\draw ($(x2.south east)+(-10pt,2pt)$) -- (M2);
\draw[densely dotted] ($(x2.south east)+(-2pt,2pt)$) -- ($(x2.south east)+(30pt,-7pt)$);
\node (BTM) at (6,3.5) {$\BT{\Omega} = \bot$};

\end{tikzpicture}
\caption{Some examples of B\"ohm trees.}\label{fig:Boehm}
\end{figure}
\begin{itemize}
\item	
	if $M$ is has a hnf and $\phnf M = \lambda x_{1}\ldots x_{n}.yM_{1}\cdots M_{k}$ then: 
\begin{center}
\begin{tikzpicture}
\node (BT) at (0,0) {$\BT{M}=$};
\node[right of = BT, node distance =2cm] (head) {$\lam x_{1} \ldots x_{n}.y$};
\node (BTN1) at ($(head)+(-.25,-.75)$) {$\BT{M_1}$};
\draw ($(head.south east)+(-0.5,0.1)$) -- ($(BTN1.north)-(0.,0)$);
\node (BTNk) at ($(head)+(2.,-.75)$) {$\BT{M_k}$};
\draw ($(head.south east)+(-0.1,0.1)$) -- ($(BTNk.north)-(0.,0)$);
\node at ($(head)+(.9,-.75)$) {$\cdots$};
\end{tikzpicture}
\end{center}
\item
	otherwise $M$ is unsolvable and $\BT{M} = \bot$.
\end{itemize}
\end{defi}

In Figure~\ref{fig:Boehm}, we provide some examples of B\"ohm trees of notable \lam-terms.
Comparing the B\"ohm trees of $\One^3$ and $\bJ$ we note that both look like $\eta$-expansions of the identity, but the former actually is, while the latter gives rise to infinite computations.
Terms like $\bJ$ are called ``infinite $\eta$-expansions'' of the identity.
A simple inspection of the B\"ohm tree of $[M_n]_{n\in\nat}$ should
convince the reader that the following lemma holds.

\begin{lem}[{cf.~\cite[10.1.5(v)]{Bare}}]\label{lem:eqimpBTeq}
Let $(M_i)_{i\in\nat}$ and $(N_i)_{i\in\nat}$ be two effective enumerations.
$$
	\forall i\in\nat\ .\ \BT{M_i} = \BT{N_i} \iff \BT{[M_n]_{n\in\nat}} = \BT{[N_n]_{n\in\nat}}.
$$
\end{lem}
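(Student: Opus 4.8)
The plan is to read off the structure of $\BT{[M_n]_{n\in\nat}}$ directly from the defining equation of the stream and then treat the two implications separately. First I would observe that, by definition, $[M_n]_{n\in\nat} =_\beta \lam y.yM_0([M_{n+1}]_{n\in\nat})$, which is already in head normal form with head variable $y$. Hence this is its principal head normal form, and so the root of $\BT{[M_n]_{n\in\nat}}$ is labelled $\lam y.y$ with precisely two children: the subtree at position $\seqof{0}$ is $\BT{M_0}$, and the subtree at position $\seqof{1}$ is $\BT{[M_{n+1}]_{n\in\nat}}$.

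Iterating this one-step unfolding along the ``rightmost spine'', I would establish the key identity: for every $i\in\nat$, writing $\sigma_i$ for the position consisting of $i$ copies of $1$ followed by a single $0$, we have $\subt{\BT{[M_n]_{n\in\nat}}}{\sigma_i} = \BT{M_i}$, and symmetrically for $N$. This follows by a straightforward induction on $i$ from the unfolding above, together with the composition property of subtrees $\subt{(\subt T\sigma)}\tau = \subt{T}{\concat\sigma\tau}$ that is immediate from Definition~\ref{def:tree}.

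For the implication from left to right I would proceed by the coinductive hypothesis. Both $\BT{[M_n]_{n\in\nat}}$ and $\BT{[N_n]_{n\in\nat}}$ have root $\lam y.y$ with two children; their first children are $\BT{M_0}$ and $\BT{N_0}$, equal by hypothesis, while their second children are $\BT{[M_{n+1}]_{n\in\nat}}$ and $\BT{[N_{n+1}]_{n\in\nat}}$. Since the shifted enumerations $(M_{i+1})_{i\in\nat}$ and $(N_{i+1})_{i\in\nat}$ are again effective and satisfy the very same hypothesis, the coinductive hypothesis yields equality of these second children, which closes the argument. For the converse I would simply extract subtrees: equal trees have equal subtrees, so the key identity gives $\BT{M_i} = \subt{\BT{[M_n]_{n\in\nat}}}{\sigma_i} = \subt{\BT{[N_n]_{n\in\nat}}}{\sigma_i} = \BT{N_i}$ for every $i\in\nat$.

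There is no genuinely hard step here; the whole argument is a routine unfolding of the coinductive definition of B\"ohm trees. The only points requiring mild care are the computation of the principal head normal form of the stream --- checking that the head variable is $y$ and that there are exactly two arguments --- and the observation that shifting an effective enumeration preserves effectiveness, which is what keeps the coinductive argument for the forward direction well-founded.
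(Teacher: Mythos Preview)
Your proposal is correct and is precisely the argument the paper has in mind: the paper does not give a proof at all but simply says that ``a simple inspection of the B\"ohm tree of $[M_n]_{n\in\nat}$ should convince the reader,'' referring to the picture in Figure~\ref{fig:Boehm}. Your write-up makes that inspection explicit via the one-step unfolding and the coinductive/subtree-extraction argument, which is exactly the intended reasoning.
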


Many results from the literature are expressed by exploiting, directly as in~\cite[Ch.~10]{Bare} or indirectly as in~\cite[Ch.~11]{RonchiP04}, the more general notion of ``B\"ohm-like'' trees.

\begin{defi}
The set $\BTset$ of \emph{B\"ohm-like trees} is coinductively generated by:
$$
	\BTset: \qquad U,V\ ::=_{\textrm{coind}}\ \bot\mid \lam x_1\dots x_n.yU_1\cdots U_k\qquad\textrm{ for }k,n\ge 0.
$$
\end{defi}
In~\cite[Def.~10.1.12]{Bare}, B\"ohm-like trees are labelled trees defined as partial functions mapping positions $\sigma\in\nat^*$ to pairs $(\lam \seq x.y,k)$ where $k$ is the number of children of the node labelled $\lam\seq x.y$ at $\sigma$. 
In a personal communication, Barendregt told us that the reason is twofold: firstly coinduction was not as well understood at the time he wrote that book as it is now; secondly speaking of functions makes it easier to define ``partial computable'' trees.
Indeed, not all B\"ohm-like trees $U$ arise as a B\"ohm tree of a \lam-term $M$: by~\cite[Thm.~10.1.23]{Bare}, this is the case precisely when $U$ is partial computable and $\FV{U}$ is finite.

\begin{nota}
Given $U\in\BTset$ we denote by $\nak{U}$ its \emph{underlying naked tree}, namely the (unlabelled) tree $T$ having the same structure as $U$.
\end{nota}


\subsection{The Lattice of Lambda Theories}\label{subsec:lamtheories}

 Inequational theories and \lam-theories become the main object of
study when one considers the computational equivalence between
\lam-terms as being more important than the process of computation itself.

A relation $\mathtt{R}$ between $\lam$-terms is \emph{compatible} if it is compatible with lambda abstraction and application.
We say that $\mathtt{R}$ is a \emph{congruence} if it is a compatible equivalence relation.

\begin{defi} 
An \emph{inequational theory} is any compatible preorder on $\Lam$ containing the $\beta$-conversion. 
A {\em \lam-theory} is any congruence on $\Lam$ containing the $\beta$-conversion. 
\end{defi}
Given an inequational theory $\cT$, we write $\cT\vdash M \sqle N$ or $M\sqle_\cT N$ when $M$ is smaller than or equal to $N$ in~$\cT$.
For a \lam-theory~$\cT$, $\cT\vdash M = N$ and $M =_\cT N$ stand for $(M,N)\in\cT$.

The set of all \lam-theories, ordered by set-theoretical inclusion, forms a complete lattice having a quite rich mathematical structure, as shown by Lusin and Salibra in~\cite{SalibraL04}. 

\begin{defi}
A \lam-theory (resp. inequational theory) $\cT$ is called:
\begin{itemize}
\item
{\em consistent} if it does not equate all \lam-terms, \emph{inconsistent} otherwise;
\item
{\em extensional} if it contains $\eta$-conversion; 
\item
{\em sensible} if it equates all unsolvable \lam-terms.
\end{itemize}
\end{defi}

\begin{nota}
We denote by 
\begin{itemize}
\item
	$\blam$ the least \lam-theory, 
\item	
	 $\blam\eta$ the least extensional \lam-theory,
\item 
	 $\cH$ the least sensible \lam-theory,  
\item 
	 $\BTth$ the \lam-theory equating all \lam-terms having the same B\"ohm tree.
\end{itemize}
\end{nota}

The \lam-theory $\BTth = \{(M,N)\st\BT M=\BT N\}$ is sensible, but not extensional.

\begin{defi} We let $\cT\eta$ be the least extensional \lam-theory containing the \lam-theory~$\cT$.
\end{defi}

\begin{rem}\label{rem:closed_terms}
It is well known that two \lam-theories $\cT,\cT'$ that coincide on closed terms must be equal. Hence we often focus on equalities between closed \lam-terms.
\end{rem}

Several interesting \lam-theories are obtained through observational preorders that are defined with respect to some set $\cO$ of \emph{observables}.

\begin{defi}
A \emph{context} $C[]$ is a \lam-term containing some occurrences of a ``\emph{hole}'', namely an algebraic variable denoted by~$[]$.
A context is called a \emph{head context} if it has the shape $(\lam x_1\dots x_n.[]) P_1\cdots P_k$ for some $k,n\ge 0$ and $\vec P\in\Lam$.
A head context is called \emph{applicative} if it has the shape $[]P_1\cdots P_k$ and \emph{closed} if all the $P_i$'s are closed.
\end{defi}

Given a context $C[]$ and a \lam-term $M$, we write $C\hole M$ for the \lam-term obtained from $C[]$ by replacing without renaming $M$ for the hole, possibly with capture of free variables in $M$.

\begin{defi}\label{def:obseq}
Let $\cO\subseteq\Lam$.
\begin{itemize}
\item We write $M\in_{\beta}\cO$ if there exists $M'\in\Lam$ such that $M =_\beta M'\in\cO$.
\item 
	The \emph{$\cO$-observational preorder} $\obsle[\cO]$ is defined as follows:
$$
	M\obsle[\cO] N \iff \forall C[]\ .\ \big(C\hole M \in_\beta\cO\ \Rightarrow\ C\hole N\in_\beta\cO\big).
$$
\item
	The \emph{$\cO$-observational equivalence} $\obseq[\cO]$ is defined by setting $M\obseq[\cO]N$ if and only if both $M\obsle[\cO]N$ and $N\obsle[\cO]M$ hold.
\end{itemize}
\end{defi}

It is easy to check that $M\obseq[\cO] N$ if and only if $\forall C[]\ .\ (C\hole M \in_\beta\cO\iff C\hole N\in_\beta\cO)$.

In the present paper we focus on the following observational preorders and equivalences.

\begin{defi}\ 
\begin{itemize}
\item $\Hst:$ Hyland/Wadsworth's observational preorder $\sqle_\Hst$ and equivalence $=_\Hst$ are generated by taking as $\cO$ the set $\HNF$ of head normal forms \cite{Hyland76,Wadsworth76}.
\item $\Hpl:$ Morris's observational preorder $\sqle_\Hpl$ and equivalence $=_\Hpl$ are generated by taking as $\cO$ the set $\NF[\beta]$ of $\beta$-normal forms \cite{Morristh}.
\end{itemize}
\end{defi}

These observational preorders and equivalences are easily proved to be inequational and \lam-theories, respectively.
(For the general case, one needs some hypotheses on the set $\cO$ of observables, as discussed in~\cite{Paolini08}.)
The \lam-theory $\Hst$ can be characterized as the (unique) maximal consistent sensible \lam-theory~\cite[Thm. 16.2.6]{Bare}.
The proof extends easily to $\sqle_\Hst$ which is maximal consistent among sensible inequational theories~\cite[Lemma~2.3]{BreuvartMR18}.
It follows that $\Hst$ is extensional and $\Hpl\subseteq\Hst$. 
As $\eta$-reduction is strongly normalizing, a \lam-term has a $\beta$-normal form if and only if it has a $\beta\eta$-normal form, therefore $\Hpl$ is extensional as well.

For both $\Hst$ and $\Hpl$ appropriate Context Lemmas describe the kind of contexts it is sufficient to consider for separating two \lam-terms
(cf. \cite[Lemma~6.1]{Wadsworth76} and \cite[Lemma~1.2]{DezaniG01}).

\begin{lem}[Context Lemma]\label{lemma:ctx}
For $\cO =\NF[\beta]$ or $\cO=\HNF$, the following are equivalent:
\begin{enumerate}
\item $M\not\obsle[\cO] N$,
\item there exists a closed head context $C[]$ such that $C[M]\in_\beta\cO$, while $C[N]\notin_\beta\cO$.
\end{enumerate}
In particular, if $M$ and $N$ are closed then $C[]$ can be chosen closed and applicative.
\end{lem}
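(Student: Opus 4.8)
The plan is to treat $(2)\Rightarrow(1)$ as trivial and to concentrate on $(1)\Rightarrow(2)$. For the easy direction, a closed head context is in particular a context, so any $C[]$ witnessing $(2)$ witnesses $M\not\obsle[\cO]N$ directly from Definition~\ref{def:obseq}. For the converse I would argue contrapositively: assuming that \emph{no} closed head context separates $M$ from $N$, I prove $M\obsle[\cO]N$.

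The first reduction removes free variables. Letting $\vec x$ enumerate $\FV{M}\cup\FV{N}$, a closed head context $(\lam\vec x.[])\vec P$ applied to $M$ contracts its outer redexes to $M\subst{\vec x}{\vec P}$ applied to the leftover arguments; since membership in $\cO$ is invariant under $=_\beta$, separating $M,N$ by closed head contexts is the same as separating the closures $\lam\vec x.M,\lam\vec x.N\in\Lamo$ by \emph{closed applicative} contexts $[]\vec P$ with each $P_i$ closed. When $M,N$ are already closed the outer $\lambda$'s are absorbed and the head context reduces to a closed applicative one, which is exactly the ``in particular'' clause; the passage back to arbitrary $M,N$ is the standard reduction to combinators of Remark~\ref{rem:closed_terms}. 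Writing $M'\sqle^{\mathrm{app}}_{\cO}N'$ for ``$M'\vec P\in_\beta\cO\Rightarrow N'\vec P\in_\beta\cO$ for all closed $\vec P$'', the problem becomes showing $\sqle^{\mathrm{app}}_{\cO}\,\subseteq\,\obsle[\cO]$ on combinators, the reverse inclusion being immediate.

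The heart of the argument, and the step I expect to fight with, is that $\sqle^{\mathrm{app}}_{\cO}$ is a \emph{precongruence}: once it is preserved by substitution into an arbitrary term context, filling any $C[]$ with $M'$ versus $N'$ preserves membership in $\cO$, i.e.\ $M'\obsle[\cO]N'$. I would obtain this by Howe's method: form the compatible refinement (``precongruence candidate'') $\sqle^{H}$ of $\sqle^{\mathrm{app}}_{\cO}$, which is compatible with abstraction and application and contains $\sqle^{\mathrm{app}}_{\cO}$ by construction, and prove the \emph{simulation lemma}: for closed $L,L'$, if $L\sqle^{H}L'$ and $L\msto[h]\lam\vec y.z R_1\cdots R_k$ then $L'$ reduces to a head normal form $\lam\vec y.z S_1\cdots S_k$ with $R_i\sqle^{H}S_i$. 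This yields $\sqle^{H}\subseteq\sqle^{\mathrm{app}}_{\cO}$ on combinators, hence $\sqle^{\mathrm{app}}_{\cO}=\sqle^{H}$ is a precongruence. The observables enter exactly here: for $\cO=\HNF$ it suffices, by Wadsworth's characterisation of solvability as head normalisability, to propagate mere existence of $\phnf{L}$; for $\cO=\NF[\beta]$ one must additionally recurse into every argument $S_i$, since a $\beta$-normal form requires all subterms to normalise, and the compatibility of $\sqle^{H}$ is what supplies the matching $R_i\sqle^{H}S_i$ to continue the (finitary) induction. The delicate point is that a $\beta$-redex fired during head reduction may copy a related pair $R\sqle^{H}S$ into several positions simultaneously, which is precisely why a naive induction on the context breaks and the substitutivity of the Howe closure is required; alternatively, the same simulation lemma can be proved by induction on head-reduction length using standardisation, or via B\"ohm-tree approximants.
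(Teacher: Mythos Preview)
The paper does not supply its own proof of this lemma; it is imported from the literature with pointers to \cite[Lemma~6.1]{Wadsworth76} for $\HNF$ and \cite[Lemma~1.2]{DezaniG01} for $\NF[\beta]$, so there is no in-paper argument to compare against. Your overall plan---reduce to combinators and then show that the applicative preorder $\sqle^{\mathrm{app}}_{\cO}$ is a precongruence via Howe's method---is a legitimate modern route, different from the approximation/B\"ohm-tree arguments in the cited sources.

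There is, however, a genuine gap in your key step. The simulation lemma you state is false as written: you require that if $L\sqle^{H}L'$ and $L\msto[h]\lam\vec y.z\,R_1\cdots R_k$ then $L'$ head-reduces to a hnf of the \emph{same shape} $\lam\vec y.z\,S_1\cdots S_k$. Take $L=\bI$ and $L'=\lam xy.xy$. One checks directly that for every closed $\vec P$ and for either choice of~$\cO$, $\bI\vec P\in_\beta\cO\iff(\lam xy.xy)\vec P\in_\beta\cO$; hence $\bI\sqle^{\mathrm{app}}_{\cO}\lam xy.xy$ and therefore $\bI\sqle^{H}\lam xy.xy$, yet the hnf's $\lam x.x$ and $\lam xy.xy$ disagree in both the number of abstractions and the number of arguments. (For $\cO=\HNF$ one even has $\bI\sqle^{\mathrm{app}}_{\HNF}\bJ$.) In fact the applicative preorders at stake here only relate terms whose hnf's are \emph{similar} up to possibly infinite, resp.\ finite, $\eta$-expansion, cf.\ Theorems~\ref{thm:char_Hst} and~\ref{thm:CharacterizationHpl}. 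As a consequence, your ``recurse into every argument $S_i$'' step for $\cO=\NF[\beta]$ cannot be carried out as described, since there need not be a bijection between the $R_i$ and the $S_i$. What Howe's method actually requires is the weaker statement that $L\sqle^{H}L'$ (closed) and $L\in_\beta\cO$ imply $L'\in_\beta\cO$; proving this for $\NF[\beta]$ forces you to track the $\eta$-mismatch explicitly, and that is precisely the work your sketch is missing.
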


The quantification over all possible (applicative) contexts makes these definitions difficult to handle in practice.
That is the reason why researchers have looked for characterizations of $\Hst$ and $\Hpl$ in terms of extensional equalities between B\"ohm trees.


\section{B\"ohm Trees and Extensionality}\label{sec:extensional_BT}
We review three different notions of extensional equality between B\"ohm trees corresponding to the equality in $\BTe,\Hpl$ and $\Hst$.
\begin{figure*}[t!]
\begin{tikzpicture}
\node (root) at (0,0) {~};
\node (BTM) at (root) {$\BT{\IDX}$};
\node at ($(BTM)+(45pt,0)$) {$=_{\BTe}$};
\node at ($(BTM)+(0,-10pt)$) {$\shortparallel$};
\node (M) at ($(BTM)+(0,-20pt)$) {$\lam xy.y$};
\node (M1) at ($(M)+(-2pt,-21.5pt)$) {$x$};
\node (M2) at ($(M)+(20pt,-20pt)$) {$\lam y.y$};
\node (M21) at ($(M2)+(-2pt,-21pt)$) {$x$};
\node (M22) at ($(M2)+(18pt,-20pt)$) {$\lam y.y$};
\node (M221) at ($(M22)+(-2pt,-21pt)$) {$x$};
\node (M222) at ($(M22)+(18pt,-20pt)$) {$\lam y.y$};
\node (M2221) at ($(M222)+(-2pt,-20pt)$) {~};
\node (M2222) at ($(M222)+(20pt,-20pt)$) {~};
\draw[-] ($(M.south)+(7pt,2pt)$) -- ($(M1.north)+(0pt,-0pt)$);
\draw[-] ($(M.south east)+(-3pt,2pt)$) -- ($(M2.north)+(2pt,-3pt)$);
\draw[-] ($(M2.south)+(5pt,2pt)$) -- ($(M21.north)+(0pt,-0pt)$);
\draw[-] ($(M2.south east)+(-2pt,2pt)$) -- ($(M22.north)+(2pt,-3pt)$);
\draw[-] ($(M22.south)+(5pt,2pt)$) -- ($(M221.north)+(0pt,-0pt)$);
\draw[-] ($(M22.south east)+(-2pt,2pt)$) -- ($(M222.north)+(2pt,-3pt)$);
\draw[-,densely dotted] ($(M222.south)+(5pt,2pt)$) -- (M2221.south west);
\draw[-,densely dotted] ($(M222.south east)+(-2pt,2pt)$) -- (M2222.south);
\node (BTM) at ($(root)+(90pt,0)$) {$\BT{\ONE}$};
\node at ($(BTM)+(50pt,0)$) {$=_{\Hpl}$};
\node at ($(BTM)+(0,-10pt)$) {$\shortparallel$};
\node (M) at ($(BTM)+(0,-20pt)$) {$\lam xy.y$};
\node (M1) at ($(M)+(-12pt,-20pt)$) {$\lam z.x$};
\node (M2) at ($(M)+(30pt,-20pt)$) {$\lam y.y$};
\node (w) at ($(M1)+(7pt,-21.5pt)$) {$z$};
\node (M21) at ($(M2)+(-14pt,-20pt)$) {$\lam z.x$};
\node (M22) at ($(M2)+(26pt,-20pt)$) {$\lam y.y$};
\node (w1) at ($(M21)+(7pt,-21.5pt)$) {$z$};
\node (M221) at ($(M22)+(-10pt,-20pt)$) {$\lam z.x$};
\node (M222) at ($(M22)+(24pt,-20pt)$) {$\lam y.y$};
\node (w2) at ($(M221)+(7pt,-21.5pt)$) {$z$};
\node (M2221) at ($(M222)+(-2pt,-20pt)$) {~};
\node (M2222) at ($(M222)+(20pt,-20pt)$) {~};
\draw[-] ($(M.south)+(7pt,2pt)$) -- ($(M1.north)+(4pt,-3pt)$);
\draw[-] ($(M.south east)+(-3pt,2pt)$) -- ($(M2.north)+(2pt,-3pt)$);
\draw[-] ($(M1.south)+(8pt,2pt)$) -- ($(w.north)+(1pt,-0pt)$);
\draw[-] ($(M2.south)+(4pt,2pt)$) -- ($(M21.north)+(4pt,-3pt)$);
\draw[-] ($(M2.south east)+(-3pt,2pt)$) -- ($(M22.north)+(2pt,-3pt)$);
\draw[-] ($(M21.south)+(8pt,2pt)$) -- ($(w1.north)+(1pt,-0pt)$);
\draw[-] ($(M22.south)+(5pt,2pt)$) -- ($(M221.north)+(4pt,-3pt)$);
\draw[-] ($(M22.south east)+(-3pt,2pt)$) -- ($(M222.north)+(2pt,-3pt)$);
\draw[-] ($(M221.south)+(8pt,2pt)$) -- ($(w2.north)+(1pt,-0pt)$);
\draw[-,densely dotted] ($(M222.south)+(4pt,2pt)$) -- (M2221.south west);
\draw[-,densely dotted] ($(M222.south east)+(-3pt,2pt)$) -- (M2222.south);
\node (BTM) at ($(root)+(190pt,0)$) {$\BT{\ETAge}$};
\node at ($(BTM)+(57pt,0)$) {$=_{\Hst}$};
\node at ($(BTM)+(0,-10pt)$) {$\shortparallel$};
\node (M) at ($(BTM)+(0,-20pt)$) {$\lam xy.y$};
\node (M1) at ($(M)+(-6pt,-20pt)$) {$\eta^1(x)$};
\node (M2) at ($(M)+(23pt,-20pt)$) {$\lam y.y$};
\node (M21) at ($(M2)+(-8pt,-21pt)$) {$\eta^2(x)$};
\node (M22) at ($(M2)+(22pt,-20pt)$) {$\lam y.y$};
\node (M221) at ($(M22)+(-8pt,-21pt)$) {$\eta^3(x)$};
\node (M222) at ($(M22)+(22pt,-20pt)$) {$\lam y.y$};
\node (M2221) at ($(M222)+(-2pt,-20pt)$) {~};
\node (M2222) at ($(M222)+(20pt,-20pt)$) {~};
\draw[-] ($(M.south)+(7pt,2pt)$) -- ($(M1.north)+(5pt,-4pt)$);
\draw[-] ($(M.south east)+(-3pt,2pt)$) -- ($(M2.north)+(2pt,-3pt)$);
\draw[-] ($(M2.south)+(5pt,2pt)$) -- ($(M21.north)+(5pt,-4pt)$);
\draw[-] ($(M2.south east)+(-2pt,2pt)$) -- ($(M22.north)+(2pt,-3pt)$);
\draw[-] ($(M22.south)+(5pt,2pt)$) -- ($(M221.north)+(5pt,-4pt)$);
\draw[-] ($(M22.south east)+(-2pt,2pt)$) -- ($(M222.north)+(2pt,-3pt)$);
\draw[-,densely dotted] ($(M222.south)+(5pt,2pt)$) -- (M2221.south west);
\draw[-,densely dotted] ($(M222.south east)+(-3pt,2pt)$) -- (M2222.south);
\node (BTM) at ($(root)+(305pt,0)$) {$\BT{\JAY}$};
\node at ($(BTM)+(0,-10pt)$) {$\shortparallel$};
\node (M) at ($(BTM)+(0,-20pt)$) {$\lam xy.y$};
\node (M1) at ($(M)+(-16pt,-20pt)$) {$\lam z_0.x$};
\node (M2) at ($(M)+(34pt,-20pt)$) {$\lam y.y$};
\node (w1) at ($(M1)+(3pt,-20pt)$) {$\lam z_1.z_0$};
\node (M21) at ($(M2)+(-8pt,-21pt)$) {$\eta^{\infty}(x)~$};
\node (M22) at ($(M2)+(22pt,-20pt)$) {$\lam y.y$};
\node (w2) at ($(w1)+(0pt,-20pt)$) {$\lam z_2.z_1$};
\node (M221) at ($(M22)+(-8pt,-21pt)$) {$\eta^\infty(x)~$};
\node (M222) at ($(M22)+(22pt,-20pt)$) {$\lam y.y$};
\node (w3) at ($(w2)+(0pt,-20pt)$) {$~$};
\node (M2221) at ($(M222)+(-2pt,-20pt)$) {~};
\node (M2222) at ($(M222)+(20pt,-20pt)$) {~};
\draw[-] ($(M.south)+(7pt,2pt)$) -- ($(M1.north)+(5pt,-4pt)$);
\draw[-] ($(M.south east)+(-3pt,2pt)$) -- ($(M2.north)+(2pt,-3pt)$);
\draw[-] ($(M1.south)+(11pt,2pt)$) -- ($(w1.north)+(8pt,-4pt)$);
\draw[-] ($(M2.south)+(5pt,2pt)$) -- ($(M21.north)+(5pt,-4pt)$);
\draw[-] ($(M2.south east)+(-2pt,2pt)$) -- ($(M22.north)+(2pt,-3pt)$);
\draw[-] ($(w1.south)+(8pt,2pt)$) -- ($(w2.north)+(8pt,-4pt)$);
\draw[-] ($(M22.south)+(5pt,2pt)$) -- ($(M221.north)+(5pt,-4pt)$);
\draw[-] ($(M22.south east)+(-2pt,2pt)$) -- ($(M222.north)+(2pt,-3pt)$);
\draw[-,densely dotted] ($(w2.south)+(8pt,2pt)$) -- ($(w3.north)+(8pt,-4pt)$);
\draw[-,densely dotted] ($(M222.south)+(5pt,2pt)$) -- (M2221.south west);
\draw[-,densely dotted] ($(M222.south east)+(-3pt,2pt)$) -- (M2222.south);
\end{tikzpicture}
\caption{The B\"ohm trees of $\IDX,\ONE,\ETAge$ and $\JAY$.}\label{fig:BTseq}
\end{figure*}

The following streams will be used as running examples in the rest of the section:
$$
	\begin{array}{lcl}
	\hspace{5pt}\IDX  x= [x,[x,[x,\dots]]],
	&\qquad&
	\ONE x= [\One x,[\One x,[\One x,\dots]]],\\[1ex]
	\ETAge x= [\One^1x,[\One^2x,[\One^3x,\dots]]],
	&&
	\JAY x= [\bJ x,[\bJ x,[\bJ x,\dots]]],\\
	\end{array}
$$
where the combinators $\One$, $\One^n$ and $\bJ$ are given on Page~\pageref{pageref:one}. 
For example, one can define $\ETAge = \bY(\lam mnx.[Fnx,m(\succc\, n)x])\Church{0}$ where $F = [\lam zxy.x(zy),\One]$.
The B\"ohm trees of these streams are depicted in Figure~\ref{fig:BTseq}, using the notations $\eta^n(x) = \BT{\One^nx}$ and $\eta^\infty(x) = \BT{\bJ x}$.

\subsection{$\BTe$: Countably Many $\eta$-Expansions of Bounded Size}\label{subsec:BTeta}

Recall that $\BTe$ is the least extensional \lam-theory including~$\BTth$.
Except for some lemmas in \cite[\S 16.4]{Bare}, the \lam-theory $\BTe$ has been mostly neglected in the literature, probably because it does not arise as the theory of any known denotational model.
Perhaps, one might be led to think that $M =_\BTe N$ entails that the B\"ohm trees of $M$ and $N$ differ because of finitely many $\eta$-expansions. 
In reality, one $\eta$-expansion of $M$ can generate countably many $\eta$-expansions in its B\"ohm tree.

A typical example of this situation is the following:
$$
\ONE =_\BTth \bY(\lam mx.[\lam z.xz,mx]) \redto[\eta]  \bY(\lam mx.[x,mx])  =_\BTth  \IDX  
$$
thus $\IDX $ and $\ONE$ are equated in $\BTe$ despite the fact that their B\"ohm trees differ by infinitely many $\eta$-expansions.
More precisely, $M\redto[\eta] N$ entails that $\BT{M}$ can be obtained from $\BT{N}$ by performing at most \emph{one} $\eta$-expansion at every position. 
\begin{lemC}[{\cite[Lemma~16.4.3]{Bare}}]\label{lemma:Bareta}
If $M\redto[\eta] N$, then $\BT{M}$ is obtained from $\BT{N}$ by replacing in the latter some 
\begin{center}
\begin{tikzpicture}
\node (M1) {$\lam x_1\dots x_n.y$};
\node (T) at ($(M1)+(24pt,-20pt)$) {$U_1\quad\cdots\quad U_k$};
\draw ($(M1.south east)+(-3pt,3pt)$) -- ($(T.east)+(-12pt,6pt)$);
\draw ($(M1.south east)+(-13pt,3pt)$) -- ($(T.north)+(-26pt,-2pt)$);
\node at ($(M1)+(100pt,-10pt)$) (by) {by};
\node (N1) at ($(M1)+(170pt,0)$) {$\lam x_1\dots x_nz.y$};
\node (T1) at ($(N1)+(26pt,-20pt)$) {$~\quad U_1\quad\cdots\quad U_k\ z$};
\draw ($(N1.south east)+(-3pt,3pt)$) -- ($(T1.east)+(-6pt,4pt)$);
\draw ($(N1.south east)+(-3pt,3pt)$) -- ($(T1.east)+(-21pt,6pt)$);
\draw ($(N1.south east)+(-13pt,3pt)$) -- ($(T1.north)+(-25pt,-2pt)$);
\draw (M1);
\end{tikzpicture}
\end{center}
possibly infinitely often (but simultaneously, thus without $\eta$-expanding the new variable~$z$).
\end{lemC}

In particular, no finite amount of $\eta$-expansions in $\IDX $ can turn its B\"ohm tree into $\BT{\ETAge}$, which has infinitely many $\eta$-expansions of increasing size.

\begin{cor}
$\BTe\vdash \IDX  = \ONE$, while $\BTe\vdash \IDX \neq \ETAge$.
\end{cor}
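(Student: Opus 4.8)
The first equality is immediate from the chain displayed just above the statement: we have $\ONE =_\BTth \bY(\lam mx.[\lam z.xz,mx])$, this term $\eta$-reduces to $\bY(\lam mx.[x,mx])$, and the latter satisfies $\bY(\lam mx.[x,mx]) =_\BTth \IDX$. Since $\BTe$ contains both $\BTth$ and $\eta$-conversion, chaining these three facts yields $\BTe\vdash\IDX=\ONE$. So the only work is the separation $\BTe\vdash\IDX\neq\ETAge$.

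For the separation I would argue by contradiction. Suppose $\BTe\vdash\IDX=\ETAge$. Since $\BTe$ is the congruence generated by $\BTth$ together with $\eta$, and since $\beta$-steps leave B\"ohm trees unchanged (hence are absorbed by $=_\BTth$), there is a finite chain $\IDX = P_0,P_1,\dots,P_k=\ETAge$ in which each consecutive pair is related either by $=_\BTth$ or by a single step $P_i\redto[\eta]P_{i+1}$ (in either direction, inside some context). Passing to B\"ohm trees gives a sequence $\BT{P_0},\dots,\BT{P_k}$ in which consecutive trees are \emph{equal} (for the $\BTth$-links) or differ by the operation of Lemma~\ref{lemma:Bareta}: at some set of nodes, simultaneously, a trailing child $z$ is appended and one abstraction is prepended to the label, the fresh $z$ itself being left unexpanded.

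The key point is that this operation is local and bounded. Because the $z$'s are appended only at the end of the children lists and are never themselves expanded, the index of every pre-existing subtree is preserved; in particular the argument slot reached by descending $n$ times into the right-hand child of a $\lam y.y$ node and then into its first argument — the position of the $n$-th stream component — denotes the same location in every $\BT{P_i}$. Let $a_n(i)$ be the height of the subtree at that position in $\BT{P_i}$. A $\BTth$-link leaves $a_n$ unchanged, while a single $\eta$-link changes it by at most one: an expansion creates new leaves only one level below an existing node, so it raises the height of any fixed subtree by at most $1$, and symmetrically a reduction lowers it by at most $1$. Hence $|a_n(i{+}1)-a_n(i)|\le 1$ for every $i$, and therefore $|a_n(k)-a_n(0)|\le k$ \emph{uniformly in $n$}. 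But $a_n(0)$ is the height of $\BT x=x$, namely $0$, whereas $a_n(k)$ is the height of $\eta^{n+1}(x)=\BT{\One^{n+1}x}$, namely $n+1$. Thus $n+1\le k$ for every $n\in\nat$, which is absurd, so $\BTe\vdash\IDX\neq\ETAge$.

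The main obstacle, and the step deserving the most care, is the persistence of positions along the chain: one must ensure that $a_n$ really tracks ``the same'' subtree in every $\BT{P_i}$, even though intermediate terms may $\eta$-expand the spine nodes $\lam y.y$ themselves. This is exactly what the precise form of Lemma~\ref{lemma:Bareta} provides — new children are always appended at the end, so the indices $0$ and $1$ used to reach the $n$-th component are never disturbed — and it is the same feature that underlies the per-step height bound of $1$. I would therefore isolate, as the one technical lemma to verify carefully, the claim that a single application of the Lemma~\ref{lemma:Bareta} operation alters the height of each fixed subtree by at most $1$ while leaving the indexing of persistent subtrees intact.
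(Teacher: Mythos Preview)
Your argument is correct and captures exactly the intuition the paper relies on. At the place where the corollary is stated, the paper gives no detailed proof at all: it merely remarks that ``no finite amount of $\eta$-expansions in $\IDX$ can turn its B\"ohm tree into $\BT{\ETAge}$, which has infinitely many $\eta$-expansions of increasing size,'' invoking Lemma~\ref{lemma:Bareta}. Your height-tracking argument is a clean formalization of that remark. The paper's \emph{rigorous} treatment only comes much later, in Section~\ref{sec:BTeta}, where Corollary~\ref{cor:fromBaresLemma} shows that a length-$p$ $\eta$-reduction induces $\leeta[p]$ at the B\"ohm-tree level, and Theorem~\ref{thm:main2} characterizes $\BTe$ via the bounded relations $\leeta[p]$; from that, the inequality follows since no fixed bound $p$ can accommodate the components $\One^{n+1}x$ for all $n$. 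Your invariant (height of the $n$-th component) is a simpler shadow of the $\leeta[p]$ relation, sufficient for this particular pair of terms; the paper's framework is more general but heavier. The ``persistence of positions'' point you isolate is indeed the crux, and your justification --- that Lemma~\ref{lemma:Bareta} only appends (or removes) trailing leaves, never disturbing children at indices $0$ and $1$ --- is exactly right.
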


\subsection{$\Hpl$: Countably Many Finite $\eta$-Expansions}

By definition, $\Hpl$ is the \lam-theory corresponding to Morris's original observational equivalence where the observables are the $\beta$-normal forms~\cite{Morristh}. 
The \lam-theory $\Hpl$ and its inequational version have been studied both from a syntactic and from a semantic point of view.
We refer to~\cite{Hyland75,Levy78,CoppoDZ87} for some standard literature and to~\cite{ManzonettoR14,BreuvartMPR16} for more recent work.
The properties that we present here can be found in \cite[\S11.2]{RonchiP04}.
Two \lam-terms having the same B\"ohm tree cannot be distinguished by any context $C[]$, so we have $\BTth\subseteq\Hpl$.
Since $\Hpl$ is an extensional \lam-theory, we get $\BTe\subseteq\Hpl$.

The question naturally arising is whether there are \lam-terms different in $\BTe$ that become equal in $\Hpl$.
It turns out that $\Hpl\vdash M = N$ holds exactly when $\BT{M}$ and $\BT{N}$ are equal up to countably many $\eta$-expansions of finite size.
A typical example of this situation is given by the streams $\IDX $ and $\ETAge$ since the B\"ohm tree of the latter can be obtained from the B\"ohm tree of the former by performing infinitely many finite $\eta$-expansions.

\begin{defi}\label{def:leeta}
Given two B\"ohm-like trees $U$ and $V$, we define coinductively the relation  $U\leeta V$ expressing the fact that $V$ is a \emph{finitary $\eta$-expansion} of $U$.
We let $\leeta$ be the greatest relation between B\"ohm-like trees such that $U\leeta V$ entails that
\begin{itemize} 
\item either $U = V=\bot$, 
\item or, for some $k,m,n\ge 0$:
$$
U = \lam x_1\dots x_n.yU_1\cdots U_k\ \textrm{ and }\
V =\lam x_1\dots x_nz_1\dots z_m.yV_1\cdots V_k Q_1\cdots Q_{m}
$$
where $\seq z\notin\FV{yU_1\cdots U_kV_1\cdots V_k}$, $U_j \leeta V_j$ for all $j\le k$ and $Q_i \msto[\eta]z_i$ for all $i\le m$.
\end{itemize}
\end{defi}

It is easy to check that $\BT\IDX \leeta  \BT\ETAge$ holds.

\begin{nota}For $M,N\in\Lam$, we write $M\leeta N$ if and only if $\BT{M}\leeta \BT{N}$.
\end{nota}

Note that $M\leeta  N$ and $N\leeta  M$ entail $\BT{M} = \BT{N}$, therefore the equivalence corresponding to $\leeta $ and capturing $=_\Hpl$ needs to be defined in a more subtle way.

\begin{thm}[Hyland~\cite{Hyland75}, see also \cite{Levy78}]\label{thm:CharacterizationHpl}
For all $M,N\in\Lam$, we have 
	$\Hpl\vdash M = N$ if and only if there exists a B\"ohm-like tree $U\in\BTset$ such that $\BT{M}\leeta  U \geeta \BT{N}$.
\end{thm}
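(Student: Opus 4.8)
The plan is to prove the two inclusions between $=_\Hpl$ and the relation $R := \{(M,N) \mid \exists U\in\BTset.\ \BT M \leeta U \geeta \BT N\}$ separately. By Remark~\ref{rem:closed_terms} it is enough to treat closed $M,N$, and then by the Context Lemma (Lemma~\ref{lemma:ctx}) any inequivalence is witnessed by a closed applicative context $[\,]\seq P$ with $\seq P$ closed; hence $M =_\Hpl N$ reduces to requiring that $M\seq P$ has a $\beta$-normal form if and only if $N\seq P$ does, for every closed $\seq P$.

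For the inclusion $R \subseteq\ =_\Hpl$ I would first record the standard invariant that a \lam-term $L$ has a $\beta$-normal form exactly when $\BT L$ is finite and contains no occurrence of $\bot$. Both properties are preserved in each direction of $\leeta$: by Definition~\ref{def:leeta} a finitary $\eta$-expansion never creates a $\bot$, and since it only appends a bounded finite expansion at each node, a tree is finite precisely when its $\eta$-expansion is. Consequently $\BT M \leeta U \geeta \BT N$ forces $M$ and $N$ to agree on $\beta$-normalizability. To transport this through applicative contexts I would show that $\leeta$ — and hence $R$ — is stable under application of a common argument: if $\BT M \leeta U \geeta \BT N$ then for every closed $P$ there is $U'$ with $\BT{MP} \leeta U' \geeta \BT{NP}$. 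This is the technical core of the soundness direction, and it follows from the fact that $\leeta$ is a coinductive simulation that commutes with the substitution performed by head reduction. Iterating over $\seq P$ and invoking the invariant on $M\seq P \mathrel R N\seq P$ then yields $M =_\Hpl N$.

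For the converse inclusion $=_\Hpl\ \subseteq R$ I would argue contrapositively: assuming $\BT M$ and $\BT N$ admit no common finitary $\eta$-expansion, I construct a closed applicative context separating them with respect to $\beta$-normal forms. The absence of a common upper bound must be witnessed at some (possibly virtual) position $\sigma$ reached along the shared structure of the two trees, where the disagreement is genuinely non-$\eta$: either one tree carries $\bot$ while the other is solvable at $\sigma$, or the two head variables differ, or one side can be matched only by an unbounded (infinite) $\eta$-expansion of the other, as the finite tower $\One^n$ is matched only by $\bJ$. In each case I would use the B\"ohm-out technique to choose arguments $\seq P$ that drive head reduction down to $\sigma$ and turn the local discrepancy into an observable one, so that, say, $M\seq P$ reaches a $\beta$-normal form while $N\seq P$ has none.

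The main obstacle is this completeness direction, for two reasons. First, one has to prove that the failure of a common finitary $\eta$-expansion genuinely localizes to a single essential position; this rests on showing that any two finitary $\eta$-expansions of a common tree possess a further common expansion (a join property of $\leeta$), which also makes $R$ transitive. Second, the B\"ohm-out must simultaneously reach $\sigma$ and convert the local difference into the gap between \emph{having} and \emph{not having} a $\beta$-normal form; controlling the infinite-$\eta$ case — separating $\bJ$-like behaviour from finite towers — is the crux, and it is precisely here that Morris's weaker normal-form observability, rather than head-normal-form observability, is what makes the separation succeed.
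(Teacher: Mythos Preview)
The paper does not give a proof of Theorem~\ref{thm:CharacterizationHpl}; the result is quoted from Hyland~\cite{Hyland75} and L\'evy~\cite{Levy78} and used as a black box (for instance in Proposition~\ref{prop:Hplchar}). There is therefore no in-paper argument to compare your attempt against.

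As a sketch your outline is reasonable in structure, but both directions leave the substantive steps unargued. For $R\subseteq{=_\Hpl}$ everything rests on closure of $R$ under application of a common closed argument; your one-line justification (``a coinductive simulation that commutes with substitution'') hides the real difficulty, since the head variable being substituted may occur arbitrarily deep and substituting a closed $P$ for it reshuffles the B\"ohm tree globally --- one must show that the \emph{finitary} bound on the $\eta$-expansions survives this reshuffling, which is where Hyland's and L\'evy's actual work lies. For ${=_\Hpl}\subseteq R$, your contrapositive plan and case split are the right shape, but note that the infinite-$\eta$ case you isolate is not settled by the paper's later Morris separation (Theorem~\ref{thm:newsep}): that theorem \emph{assumes} $\Hpl\vdash M\not\sqle N$ and then extracts a specific context, whereas here you need the opposite implication --- from the mere absence of a finitary $\eta$-join, \emph{deduce} that some context separates $M$ from $N$ at the level of $\beta$-normal forms. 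Your localisation step (that failure of a common $\leeta$-upper bound is witnessed at a single position, via a join property of $\leeta$) is likewise asserted rather than proved.
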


This means that in general, when $M =_\Hpl N$, one may need to perform countably many $\eta$-expansions of finite size both in $\BT{M}$ and in $\BT{N}$ to find the common ``$\eta$-supremum''.

\begin{cor} $\Hpl\vdash\IDX  = \ETAge$, while $\Hpl\vdash\IDX\neq\JAY$.
\end{cor}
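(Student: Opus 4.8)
The plan is to read both statements off Hyland's characterization (Theorem~\ref{thm:CharacterizationHpl}): $\Hpl\vdash M=N$ holds exactly when $\BT M$ and $\BT N$ possess a common finitary $\eta$-expansion, i.e.\ a B\"ohm-like tree $U$ with $\BT M\leeta U\geeta\BT N$ (equivalently, $\BT M\leeta U$ and $\BT N\leeta U$).

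For the positive statement $\Hpl\vdash\IDX=\ETAge$ I would take $U=\BT\ETAge$. We have just recorded $\BT\IDX\leeta\BT\ETAge$, and $\leeta$ is reflexive (instantiate Definition~\ref{def:leeta} with $m=0$, using the coinductive hypothesis $U_j\leeta U_j$ on the children), so $\BT\ETAge\geeta\BT\ETAge$; Theorem~\ref{thm:CharacterizationHpl} then delivers the equation.

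For the negative statement $\Hpl\vdash\IDX\neq\JAY$ I would argue by contradiction, assuming a B\"ohm-like tree $U$ with $\BT\IDX\leeta U$ and $\BT\JAY\leeta U$. Since $\BT\IDX\neq\bot$ we have $U\neq\bot$; moreover both $\BT\IDX$ and $\BT\JAY$ have root $\lam xy.y$ with head $y$ and exactly two children, their leftmost children being the leaf $x$ and the infinite chain $\eta^\infty(x)$ respectively (Figure~\ref{fig:BTseq}). As $U$ is a single fixed tree, unfolding Definition~\ref{def:leeta} at the root against $\BT\IDX$ and against $\BT\JAY$ must yield the same decomposition $U=\lam xyz_1\dots z_m.yV_1V_2Q_1\cdots Q_m$; in particular its first aligned child $V_1$ satisfies both $x\leeta V_1$ and $\eta^\infty(x)\leeta V_1$. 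The first relation forces $V_1$ to be a \emph{finite} tree: applying Definition~\ref{def:leeta} to the leaf $x$ gives $V_1=\lam w_1\dots w_p.xR_1\cdots R_p$ with each $R_i\msto[\eta]w_i$, and since ordinary (finite) multistep $\eta$-reduction cannot collapse an infinite tree onto a variable, every $R_i$ — and hence $V_1$ — is finite. The second relation forces $V_1$ to be \emph{infinite}: a finitary $\eta$-expansion keeps the aligned children pairwise related by $\leeta$, so it transports the infinite branch of $\eta^\infty(x)$ into $V_1$. This contradiction shows that no common $U$ exists, whence $\Hpl\vdash\IDX\neq\JAY$.

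The only delicate point is the auxiliary fact exploited in the last paragraph: that $\leeta$ neither destroys nor creates infinite branches, so that $A\leeta B$ makes $B$ finite precisely when $A$ is. The crux is that the side condition $R\msto[\eta]w$ in Definition~\ref{def:leeta} is finitary $\eta$-reduction to a single variable, which confines each $R$ to a finite $\eta$-expansion of $w$ (compare the ``without $\eta$-expanding the new variable $z$'' clause of Lemma~\ref{lemma:Bareta}); I would isolate this preservation property as a one-line lemma — by following the $\leeta$-related children coinductively — before assembling the contradiction above.
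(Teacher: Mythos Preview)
Your argument is correct and is the natural elaboration of what the paper leaves implicit: the corollary is stated without proof immediately after Theorem~\ref{thm:CharacterizationHpl}, with the positive half already set up by the remark $\BT\IDX\leeta\BT\ETAge$ just before the theorem. Your treatment of the negative half via the characterization is sound; the two key observations---that $x\leeta V_1$ forces $V_1$ to be finite (because the side-condition $Q_i\msto[\eta]z_i$ in Definition~\ref{def:leeta} is ordinary finite $\eta$-reduction), while $\eta^\infty(x)\leeta V_1$ propagates infinite depth through the aligned-children clause---are exactly right, and the uniqueness of the root decomposition of $U$ is indeed forced by the fact that $\BT\IDX$ and $\BT\JAY$ share the same head $\lam xy.y$ with the same number of children.

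One remark: the negative half also admits a one-line proof directly from the observational definition, bypassing Theorem~\ref{thm:CharacterizationHpl}. The context $C[\,]=\pi_0([\,]\,\bI)$ gives $C[\IDX]=_\beta\bI\in\NF[\beta]$ whereas $C[\JAY]=_\beta\bJ\bI$, whose B\"ohm tree is infinite and hence has no $\beta$-normal form. This is shorter, but your route through the characterization is the one the placement of the corollary invites, and it has the virtue of pinpointing why the finiteness clause on the $Q_i$ in Definition~\ref{def:leeta} is precisely what separates $\Hpl$ from~$\Hst$.
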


\subsection{$\Hst$:  Countably Many Infinite $\eta$-Expansions}\label{subsec:Hstar}

The \lam-theory $\Hst$ and its inequational version are, by far, the most well studied theories of the untyped \lam-calculus~\cite{Bare,GouyTh,GianantonioFH99,RonchiP04,Manzonetto09,Breuvart14}.
Recall that $\Hst$ corresponds to the observational equivalence where the observables are the head normal forms. 
Two \lam-terms $M,N$ are equated in $\Hst$ if their B\"ohm trees are equal up to countably many $\eta$-expansions of possibly infinite depth.
The typical example is $\bI =_\Hst \bJ$.

\begin{defi}\label{def:leetainf}
Given two B\"ohm-like trees $U$ and $V$, we define coinductively the relation $U\leetainf V$ expressing the fact that $V$ is a \emph{possibly infinite $\eta$-expansion} of $U$.
We let $\leetainf$ be the greatest relation between B\"ohm-like trees such that $U\leetainf V$ entails that
\begin{itemize} 
\item either $U = V=\bot$, 
\item or (for some $k,m,n\ge 0$):
$$
U = \lam x_1\dots x_n.yU_1\cdots U_k\ \textrm{ and }\
V =\lam x_1\dots x_nz_1\dots z_m.yV_1\cdots V_k V'_1\cdots V'_{m}
$$
where $\seq z\notin\FV{yU_1\cdots U_kV_1\cdots V_k}$, $U_j \leetainf V_j$ for all $j\le k$ and $z_i\leetainf V'_i$ for all $i\le m$.
\end{itemize}
\end{defi}
For instance, we have $\BT{\IDX }\leetainf\BT{\JAY}$.
\begin{nota}For $M,N\in\Lam$, we write $M\leeta[\omega] N$ if and only if $\BT{M}\leeta[\omega] \BT{N}$.
\end{nota}
\begin{defi}
Given two B\"ohm-like trees $U$ and $V$ we write $U\BTle V$ whenever
$U$ results from $V$ by replacing some subtrees by $\bot$ (possibly infinitely many). 
\end{defi}

\begin{thm}[Hyland~\cite{Hyland76}/Wadsworth~\cite{Wadsworth76}]
\label{thm:char_Hst}
For all $M,N\in\Lam$, we have that:
\begin{enumerate}[label={(\roman*)}]
\item \label{thm:char_Hst1}
	$\Hst\vdash M \sqle N$ if and only if there exist two B\"ohm-like trees $U,V\in\BTset$ such that $\BT{M}\leetainf U\BTle V \geetainf \BT{N}$.
\item \label{thm:char_Hst2}
	$\Hst\vdash M = N$  if and only if there exists a B\"ohm-like tree $U\in\BTset$ such that $\BT{M}\leetainf U \geetainf \BT{N}$.
\end{enumerate}
\end{thm}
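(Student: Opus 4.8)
The plan is to prove each of the two biconditionals by splitting it into \emph{soundness} (the B\"ohm-tree condition implies the observational (in)equality) and \emph{completeness} (the converse). Throughout I would use Remark~\ref{rem:closed_terms} to reduce to closed \lam-terms and the Context Lemma (Lemma~\ref{lemma:ctx}) to reduce to closed applicative contexts $[]\vec P$, so that $M\sqle_\Hst N$ reads ``for all closed $\vec P$, if $M\vec P$ has a hnf then $N\vec P$ has a hnf'' (using that having a hnf coincides with solvability). I would prove the inequational statement~\ref{thm:char_Hst1} first and then read off the equational one~\ref{thm:char_Hst2}.

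\emph{Soundness.} Let $R_\sqle$ be the relation ``$\exists U,V\in\BTset$ with $\BT M\leetainf U\BTle V\geetainf \BT N$''. I would verify that $R_\sqle$ is a consistent sensible inequational theory: it contains $\beta$-conversion (take $U=V=\BT M$); it makes every unsolvable a least element (take $U=\bot$ and $V=\BT N$, using that $\bot\BTle\BT N$); and it is consistent since, \emph{e.g.}, $\bK$ and $\bF$ are unrelated, their heads $x$ and $y$ being preserved by $\leetainf$ and $\BTle$ and hence irreconcilable. Transitivity reduces to two amalgamation lemmas, namely that any two possibly-infinite $\eta$-expansions of a tree admit a common $\eta$-expansion, and that a $\BTle$-step can be commuted past a $\geetainf$-step. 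Once $R_\sqle$ is seen to be a consistent sensible compatible preorder, the maximality of $\sqle_\Hst$ among such preorders (recalled in Section~\ref{sec:prelim}) yields $R_\sqle\subseteq\sqle_\Hst$, which is exactly the ``if'' direction of~\ref{thm:char_Hst1}; the analogous relation $R_=$ for~\ref{thm:char_Hst2} is handled by the maximality of $\Hst$ among consistent sensible \lam-theories. The one genuinely delicate point here is \emph{compatibility with application}, i.e.\ that $M\,R_\sqle\,N$ entails $MP\,R_\sqle\,NP$, since $\BT{MP}$ is not a naive function of $\BT M$ and $\BT P$. I would discharge it either by a direct coinductive analysis of how possibly-infinite $\eta$-expansion interacts with substitution and head reduction, or more cheaply by invoking Scott's model: in $\cD_\infty$ the interpretation of a term depends only on its B\"ohm tree up to infinite $\eta$-expansion (because $\cD_\infty\models\eta$ and is continuous, so an infinite expansion is the supremum of finite ones, all of equal interpretation), $\BTle$ is interpreted by the domain order, and since the interpretation is automatically a congruence and $\Hst=\Th{\cD_\infty}$, this gives $R_\sqle\subseteq\sqle_\Hst$ at once.

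\emph{Completeness.} This is the main obstacle, and it is a separation theorem established by B\"ohm-out. Arguing contrapositively, suppose no $U,V$ witness $\BT M\leetainf U\BTle V\geetainf\BT N$. I would first show that the failure localizes: there is a finite position $\sigma$ at which, after matching $\BT M$ against $\BT N$ coinductively through their forced $\eta$-expansions, $\BT M$ exhibits a solvable head that $\BT N$ cannot match, either because $\BT N$ carries $\bot$ there while $\BT M$ does not, or because the two heads are genuinely distinct variables that no $\eta$-expansion can align. The hard part is precisely this localization: infinite $\eta$-expansions (as in $\bJ$) let the arities at each node float, so one must track the coinductive matching with care and prove that a real discrepancy necessarily surfaces at finite depth. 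Having isolated $\sigma$, I would build a closed applicative context $[]\vec P$ that navigates to $\sigma$ by feeding projection/selector arguments (combinators built from $\bK$, $\bF$, and tuples) to descend into the chosen subtrees and instantiate the free variables, arranged so that $M\vec P$ reduces to a head normal form while $N\vec P$ is driven to an unsolvable, padding with $\Omega$'s as in Lemma~\ref{lemma:Omegak} to annihilate the ``bad'' side. By the Context Lemma this contradicts $M\sqle_\Hst N$, completing~\ref{thm:char_Hst1}.

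Finally, I would obtain~\ref{thm:char_Hst2} from~\ref{thm:char_Hst1}. The ``if'' direction is the soundness argument for $R_=$ above. For ``only if'', $M=_\Hst N$ gives both $M\sqle_\Hst N$ and $N\sqle_\Hst M$, so by completeness we get two chains $\BT M\leetainf U\BTle V\geetainf\BT N$ and $\BT N\leetainf U'\BTle V'\geetainf\BT M$; an antisymmetry lemma then shows that the $\BTle$-steps introduce no genuine $\bot$ (otherwise one of the two preorders would fail by the separation just proved), so $\BT M$ and $\BT N$ differ only by $\eta$ and, by the amalgamation lemma, admit a common infinite $\eta$-expansion $U$ with $\BT M\leetainf U\geetainf\BT N$.
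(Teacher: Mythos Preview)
The paper does not prove this theorem: it is stated as a classical result due to Hyland and Wadsworth and simply cited (see the attribution in the theorem header and the surrounding discussion in Section~\ref{subsec:Hstar}). It is used throughout the paper as a black box, in particular in Proposition~\ref{prop:Morrissep} and Lemma~\ref{lemma:Bohmout}, but no proof is given or even sketched. So there is nothing to compare your proposal against on the paper's side.

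That said, your outline is broadly in the spirit of the original proofs. Soundness via the model $\cD_\infty$ is exactly Wadsworth's route, and it is the clean way to sidestep the delicate compatibility-with-application issue you correctly flag; the alternative ``maximality of $\sqle_\Hst$'' argument you sketch is circular unless you first establish compatibility, since you need $R_\sqle$ to be an inequational theory before you can invoke maximality. For completeness, the localization-plus-B\"ohm-out strategy is right, but the step you describe as ``the hard part'' (that a failure of $\BT M\leetainf U\BTle V\geetainf\BT N$ must surface at a finite position amenable to B\"ohm-out) is genuinely the content of the theorem and your sketch does not indicate how to carry it out; in the literature this is handled via the notion of \emph{equivalent} (or \emph{similar}) hnf's and a careful case analysis, not by a single coinductive matching. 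Your derivation of \ref{thm:char_Hst2} from \ref{thm:char_Hst1} also leans on an ``antisymmetry lemma'' and an amalgamation of infinite $\eta$-expansions that are themselves nontrivial and would need to be stated and proved.
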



Exercise~10.6.7 in Barendregt's book \cite{Bare} consists in showing
that the $\eta$-supremum $U$ in item~\ref{thm:char_Hst2} above can
always be chosen to be the B\"ohm tree of some \lam-term. 
As we will prove in Section~\ref{sec:Salleiswrong}, this property also holds for the B\"ohm-like tree $U$ of Theorem~\ref{thm:CharacterizationHpl}.

\begin{cor} $\IDX ,\ONE,\ETAge,\JAY$ are all equal~in~$\Hst$. 
However,
$\BTe\vdash \IDX  \neq \ETAge$ and $\Hpl\vdash \ETAge \neq  \JAY$, so we have  $\BTe\subsetneq\Hpl\subsetneq\Hst$.
\end{cor}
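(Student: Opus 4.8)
The plan is to establish the three assertions separately and then assemble the strict chain from the non-strict inclusions $\BTe\subseteq\Hpl\subseteq\Hst$ already recorded. For the equalities in $\Hst$, I would apply Theorem~\ref{thm:char_Hst}\ref{thm:char_Hst2}, which reduces $\Hst\vdash M=N$ to exhibiting a common upper bound $U$ with $\BT M\leetainf U\geetainf\BT N$. The key point is that $\BT\JAY$ is a single such bound for all four trees. Inspecting Figure~\ref{fig:BTseq}, at every slot the relevant subtree of $\BT\IDX$ is the bare variable $x$, which is expanded to $\eta^1(x)=\BT{\One x}$ in $\BT\ONE$, to $\eta^n(x)=\BT{\One^n x}$ in $\BT\ETAge$, and to $\eta^\infty(x)=\BT{\bJ x}$ in $\BT\JAY$. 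Since $\leetainf$ permits possibly infinite $\eta$-expansions, a routine coinductive argument gives $x\leetainf\eta^\infty(x)$ and $\eta^1(x),\eta^n(x)\leetainf\eta^\infty(x)$, whence $\BT X\leetainf\BT\JAY$ for each $X\in\{\IDX,\ONE,\ETAge\}$. Taking $U=\BT\JAY$ in Theorem~\ref{thm:char_Hst}\ref{thm:char_Hst2} (using reflexivity of $\leetainf$ for the right-hand side $\BT\JAY\geetainf\BT\JAY$) yields $X=_\Hst\JAY$ for each such $X$, and transitivity of $=_\Hst$ makes all four terms equal.

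For the two inequalities, the first, $\BTe\vdash\IDX\neq\ETAge$, is already recorded in Section~\ref{subsec:BTeta}: by Lemma~\ref{lemma:Bareta} any finite number of $\eta$-expansions of $\IDX$ enlarges each slot by a bounded amount, so it cannot produce the unboundedly growing expansions $\eta^n(x)$ occurring in $\BT\ETAge$. The second, $\Hpl\vdash\ETAge\neq\JAY$, which I expect to be the crux, I would prove by contradiction via Theorem~\ref{thm:CharacterizationHpl}: an equality $\ETAge=_\Hpl\JAY$ would furnish $U$ with $\BT\ETAge\leeta U\geeta\BT\JAY$, i.e.\ $U$ a \emph{finitary} $\eta$-expansion of both. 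Since the roots $\lam xy.y$ of $\BT\ETAge$ and $\BT\JAY$ agree (same leading abstractions, same head $y$, same $k=2$ children), the first argument $W$ of the head $y$ in $U$ is one well-defined subtree satisfying simultaneously $\eta^1(x)\leeta W$ and $\eta^\infty(x)\leeta W$. The contradiction comes from a finite/infinite dichotomy for $\leeta$: a finitary $\eta$-expansion of a \emph{finite} tree is finite, because at each of its finitely many nodes one appends only finitely many arguments, each $\eta$-reducing to a fresh variable and hence finite, so $\eta^1(x)\leeta W$ forces $W$ finite; on the other hand $\leeta$ never deletes nodes, so $\nak{\eta^\infty(x)}$ embeds into $\nak{W}$ and $\eta^\infty(x)\leeta W$ forces $W$ infinite. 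No tree is both finite and infinite, so no such $U$ exists and $\ETAge\neq_\Hpl\JAY$.

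Finally, to conclude the strict chain I would combine these witnesses with the inclusions $\BTe\subseteq\Hpl\subseteq\Hst$. The pair $\IDX,\ETAge$ is equal in $\Hpl$ (by Theorem~\ref{thm:CharacterizationHpl}, since $\BT\IDX\leeta\BT\ETAge$) but distinct in $\BTe$, giving $\BTe\subsetneq\Hpl$; the pair $\ETAge,\JAY$ is equal in $\Hst$ (first part) but distinct in $\Hpl$ (second part), giving $\Hpl\subsetneq\Hst$. The main obstacle throughout is the $\Hpl$-separation of $\ETAge$ and $\JAY$, since it requires pinning down precisely why finitary, bounded-per-node $\eta$-expansions cannot reach an infinite $\eta$-expansion at a single position; by contrast the two $\Hst$-coinductions and the $\BTe$-inequality are comparatively routine or already established.
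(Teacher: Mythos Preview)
Your proposal is correct and follows essentially the same approach as the paper, which leaves this corollary unproved as it is meant to be assembled from the preceding characterizations (Theorem~\ref{thm:char_Hst}, Theorem~\ref{thm:CharacterizationHpl}, and the corollary after Lemma~\ref{lemma:Bareta}). The only cosmetic difference is that the paper's prior corollary states $\Hpl\vdash\IDX\neq\JAY$, from which $\Hpl\vdash\ETAge\neq\JAY$ follows via $\IDX=_\Hpl\ETAge$; you instead argue $\ETAge\neq_\Hpl\JAY$ directly with the same finite/infinite dichotomy for $\leeta$, which is equally valid.
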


\section{The Omega Rule and Sall\'e's Conjecture}\label{sec:omega}
The \lam-calculus possesses a strong form of extensionality which is known as the \emph{$\omega$-rule} \cite[Def.~4.1.10]{Bare}.
The $\omega$-rule has been extensively investigated in the literature by many authors~\cite{IntrigilaS09,BarendregtTh,Plotkin74,BarendregtBKV78,IntrigilaS04}.
Intuitively the $\omega$-rule mimics the definition of functional equality, namely it states that two \lam-terms $M$ and $N$ are equal whenever they coincide on every closed argument $P$.
\begin{figure}[t!]
\begin{center}
\begin{tikzpicture}
\node at (0,4) {~};
\node (BTP) at (0,3.5) {$\BT{P}$};
\node[below of = BTP, node distance=10pt] (eq) {$\shortparallel$};
\node[below of = eq, node distance=10pt] (l1) {$\lam yx.x$};
\node at ($(l1)+(.9,-.7)$) (l2y) {$y$};
\draw ($(l1.south east)+(-0.2,0.1)$) -- ($(l2y.north west)-(-0.05,0.05)$);
\node at ($(l2y)+(.9,-.7)$) (l3x) {$x$};
\draw ($(l2y.south east)+(-0,0.1)$) -- ($(l3x.north west)-(-0.05,0.05)$);
\node[below of = l1, node distance=39pt] (l3b) {$\bot$};
\draw ($(l2y.south west)+(0.,0.1)$) -- ($(l3b.north east)-(0.05,0.1)$);
\node at ($(l3x)+(.9,-.7)$) (l4y) {$y$};
\draw ($(l3x.south east)+(-0,0.1)$) -- ($(l4y.north west)-(-0.05,0.05)$);
\node at ($(l4y)+(.9,-.7)$) (l5x) {$x$};
\draw ($(l4y.south east)+(-0,0.1)$) -- ($(l5x.north west)-(-0.05,0.05)$);
\node[below of = l3x, node distance=39pt] (l5b) {$\bot$};
\draw ($(l4y.south west)+(0.,0.1)$) -- ($(l5b.north east)-(0.05,0.1)$);
\node[below of = l4y, node distance=19pt] (l5bb) {$\bot$};
\draw ($(l4y.south)+(0.,0.)$) -- ($(l5bb.north)-(0.,0.05)$);
\node at ($(l5x)+(.3,-.15)$) {$\ddots$};  	  
\node (BTQ) at (6,3.5) {$\BT{Q}$};
\node[below of = BTQ, node distance=10pt] (eq) {$\shortparallel$};
\node[below of = eq, node distance=10pt] (l1) {$\lam yx.x$};
\node at ($(l1)+(.9,-.7)$) (l2y) {$y$};
\draw ($(l1.south east)+(-0.2,0.1)$) -- ($(l2y.north west)-(-0.05,0.05)$);
\node at ($(l2y)+(.95,-.7)$) (l3x) {$\eta^1(x)$};
\draw ($(l2y.south east)+(-0,0.1)$) -- ($(l3x.north west)-(-0.2,0.1)$);
\node[below of = l1, node distance=39pt] (l3b) {$\bot$};
\draw ($(l2y.south west)+(0.,0.1)$) -- ($(l3b.north east)-(0.05,0.1)$);
\node at ($(l3x)+(.8,-.7)$) (l4y) {$y$};
\draw ($(l3x.south)+(0.25,0.1)$) -- ($(l4y.north west)-(-0.05,0.1)$);
\node at ($(l4y)+(.9,-.7)$) (l5x) {$\eta^2(x)$};
\draw ($(l4y.south east)+(-0.1,0.1)$) -- ($(l5x.north west)-(-0.15,0.05)$);
\node[below of = l3x, node distance=39pt] (l5b) {$\bot$};
\draw ($(l4y.south west)+(0.,0.1)$) -- ($(l5b.north east)-(0.05,0.1)$);
\node[below of = l4y, node distance=19pt] (l5bb) {$\bot$};
\draw ($(l4y.south)+(0.,0.)$) -- ($(l5bb.north)-(0.,0.05)$);
\node at ($(l5x)+(.5,-.3)$) {$\ddots$};
\end{tikzpicture}
\end{center}
\caption{The B\"ohm trees of the \lam-terms $P,Q$ from~\cite[Lemma~16.4.4]{Bare}.}\label{fig:PQ}
\end{figure}  

\begin{defi} The \emph{$\omega$-rule} is given by:
$$
	(\omega)\qquad\left(\forall P\in\Lambda^o \ .\ MP = NP\right) \textrm{ entails } M = N.
$$
We write $(\omega^0)$ for the $\omega$-rule restricted to $M,N\in\Lamo$.
\end{defi}

\begin{defi}
We say that a \lam-theory $\cT$ \emph{satisfies the $\omega$-rule}, written $\cT\vdash\omega$, if for all $P\in\Lamo$ $\cT\vdash MP = NP$ entails $\cT\vdash M=N$.
\end{defi}

Given a \lam-theory $\cT$ we denote by $\cT\omega$ the closure of $\cT$ under the rule~$(\omega)$. 
By definition, $\cT\vdash \omega$ entails $\cT\omega =\cT$.
The lemma below collects some basic results from~\cite[\S4.1]{Bare}.
\begin{lem}\label{lemma:omega-props} 
For all \lam-theories $\cT$, we have:
\begin{enumerate}[label={(\roman*)}]
\item\label{lemma:omega-props1}  
	$\cT\eta\subseteq\cT\omega$,
\item\label{lemma:omega-props2}  
	$\cT\vdash\omega$ if and only if $\cT\vdash\omega^0$,
\item\label{lemma:omega-props3}  
	$\cT\subseteq\cT'$ entails $\cT\eta\subseteq\cT'\eta$ and $\cT\omega\subseteq\cT'\omega$.
\end{enumerate}
\end{lem}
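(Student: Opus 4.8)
The plan is to dispatch the three items separately, all by elementary manipulations, treating $\cT\eta$ and $\cT\omega$ as the least $\lambda$-theory above $\cT$ that is, respectively, extensional and closed under $(\omega)$. The latter is well defined because an arbitrary intersection of $\lambda$-theories satisfying the $\omega$-rule again satisfies it: if $MP =_\cS NP$ for all closed $P$, where $\cS=\bigcap_i\cS_i$, then $MP =_{\cS_i} NP$ for each $i$, whence $M =_{\cS_i} N$ by $\cS_i\vdash\omega$, and so $M =_\cS N$. I shall also use freely the standard fact that every $\lambda$-theory is closed under substitution, i.e.\ $M =_\cT N$ implies $M\subst{x}{P} =_\cT N\subst{x}{P}$, which follows by abstracting $x$, applying $P$, and contracting the resulting $\beta$-redexes.

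For~\ref{lemma:omega-props1}, the key observation is that a single $\eta$-contraction is already derivable from the $\omega$-rule. Indeed, for any $M$ with $x\notin\FV{M}$ and any closed $P$ we have $(\lam x.Mx)P =_\beta MP$, hence $(\lam x.Mx)P =_{\cT\omega} MP$ since $\cT\omega$ contains $\beta$-conversion; as $\cT\omega$ satisfies the $\omega$-rule, this yields $\lam x.Mx =_{\cT\omega} M$. Thus $\cT\omega$ validates every $\eta$-redex and, being a congruence, is extensional. Since $\cT\omega$ is then an extensional $\lambda$-theory containing $\cT$, minimality of $\cT\eta$ gives $\cT\eta\subseteq\cT\omega$.

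For~\ref{lemma:omega-props2}, the implication $\cT\vdash\omega \Rightarrow \cT\vdash\omega^0$ is immediate, as $\omega^0$ is just the restriction of $\omega$ to closed terms. The converse is the main point, and I would prove it by induction on the number $n$ of free variables, establishing: if $\FV{M}\cup\FV{N}\subseteq\{x_1,\dots,x_n\}$ and $MP =_\cT NP$ for all $P\in\Lamo$, then $M =_\cT N$. The base case $n=0$ is exactly $\omega^0$. For the step I would set $M' = \lam x_{n+1}.M$ and $N' = \lam x_{n+1}.N$, which have at most $n$ free variables. For each closed $Q$, substituting $Q$ for $x_{n+1}$ in the hypotheses (legitimate by substitution-closure, and harmless for the closed $P$) shows $M\subst{x_{n+1}}{Q}P =_\cT N\subst{x_{n+1}}{Q}P$ for every closed $P$; since $M\subst{x_{n+1}}{Q}$ and $N\subst{x_{n+1}}{Q}$ have at most $n$ free variables, the inductive hypothesis gives $M\subst{x_{n+1}}{Q} =_\cT N\subst{x_{n+1}}{Q}$, i.e.\ $M'Q =_\cT N'Q$ using $\beta$. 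As this holds for every closed $Q$ and $M',N'$ have at most $n$ free variables, a second appeal to the inductive hypothesis yields $M' =_\cT N'$; applying both sides to $x_{n+1}$ and contracting the $\beta$-redexes recovers $M =_\cT N$. Specialising $n$ to the actual number of free variables gives $\cT\vdash\omega$.

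For~\ref{lemma:omega-props3}, both inclusions are mere monotonicity of the closure operators. Since $\cT\subseteq\cT'$, the theory $\cT'\eta$ is an extensional $\lambda$-theory containing $\cT$, so minimality gives $\cT\eta\subseteq\cT'\eta$; likewise $\cT'\omega$ is a $\lambda$-theory satisfying $(\omega)$ and containing $\cT$, so $\cT\omega\subseteq\cT'\omega$. The only genuine work lies in~\ref{lemma:omega-props2}, and within it in the bookkeeping of free variables during the inductive step: the main obstacle is verifying that abstracting one variable and re-applying the closed-term hypothesis keeps the two sides agreeing on \emph{all} closed arguments, so that the inductive hypothesis can be invoked twice — once after fixing a closed $Q$ for $x_{n+1}$, and once on the abstracted pair $M',N'$.
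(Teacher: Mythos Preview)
Your proof is correct. The paper does not actually supply its own proof of this lemma: it merely states that these are ``basic results from~\cite[\S4.1]{Bare}'' and moves on, so there is nothing substantive to compare against. Your arguments are the standard ones; in particular, the induction on the number of free variables for~\ref{lemma:omega-props2}, with the double appeal to the inductive hypothesis (once after instantiating $x_{n+1}$ by a closed $Q$, once on the abstracted pair $M',N'$), is exactly the usual route.
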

In general, because of the quantification over all $P\in\Lamo$, it can be difficult to understand which \lam-terms different in $\cT$ become equal in $\cT\omega$, especially when $\cT$ is extensional.

Theorem~17.4.16 from \cite{Bare}, that we reproduce below, contains one of the main diagrams of the book and shows all the relationships among the different \lam-theories under consideration. 

\begin{thmC}[{\cite[Thm.~17.4.16]{Bare}}] The following diagram indicates all possible inclusion relations of the \lam-theories involved (if $\cT_1$ is above $\cT_2$, then $\cT_1\subsetneq \cT_2$):
\begin{center}
  \begin{tikzpicture}
  	\node (root) at (0,0) {};
	\node (blam) at (root) {$\blam$};
	\node (lameta) at ($(blam)+(-17pt,-17pt)$) {$\blam\eta$};			
	\node (H) at ($(blam)+(17pt,-17pt)$) {$\cH$};			
	\node (Heta) at ($(blam)+(0,-34pt)$) {$\cH\eta$};	
	\node (Homega) at ($(Heta)+(-17pt,-17pt)$) {$\cH\omega$};		
	\node (BTeta) at ($(Heta)+(18pt,-18pt)$) {$\BTth\eta$};			
	\node (lamomega) at ($(blam)+(-34pt,-34pt)$) {$\blam\omega$};			
	\node (BT) at ($(blam)+(34pt,-34pt)$) {$\BTth$};		
	\node (BTomega) at ($(Heta)+(0,-34pt)$) {$\cB\omega$};		
	\node (Hpl) at ($(BTomega)+(5.4pt,-25pt)$) {$?\bullet\Hpl$};
	\node (Hst) at ($(BTomega)+(0,-50pt)$) {$\Hst$};	
	\draw (blam) -- ($(lameta.north)+(4pt,-2pt)$);
	\draw (blam) -- ($(H.north)-(4pt,1pt)$);	
	\draw (lameta) -- ($(lamomega.north)+(4pt,-2pt)$);
	\draw ($(lameta.south east)+(-3pt,2pt)$) -- ($(Heta.north)-(5pt,1pt)$);	
	\draw (H) -- ($(Heta.north)+(4pt,-2pt)$);
	\draw (H) -- ($(BT.north)-(4pt,1pt)$);		
	\draw ($(lamomega.south)+(6pt,1pt)$) -- ($(Homega.north)-(5pt,1pt)$);
	\draw (BT) -- ($(BTeta.north)-(-4pt,2pt)$);	
	\draw ($(Heta.south west)+(3pt,2pt)$) -- ($(Homega.north)+(4pt,-2pt)$);
	\draw ($(Heta.south east)+(-3pt,2pt)$) -- ($(BTeta.north)-(5pt,1pt)$);		
	\draw (Homega) -- ($(BTomega.north)-(4pt,1pt)$);
	\draw ($(BTeta.south west)+(2pt,2pt)$) -- ($(BTomega.north)-(-4pt,2pt)$);	
	\draw (BTomega) -- (Hst);
  \end{tikzpicture}
\end{center}
\end{thmC}
The picture above is known as ``Barendregt's kite'' because of its kite shaped structure.
Since $\blam\subsetneq\BTth\subsetneq\cH\subsetneq \Hst$ and $\Hst$ is maximal sensible, we have both $\blam\eta\subseteq\cH\eta\subseteq\BTth\eta\subseteq\Hst$ and 
$\blam\omega\subseteq\cH\omega\subseteq\BTth\omega\subseteq\Hst$ by Lemma~\ref{lemma:omega-props}\ref{lemma:omega-props3}.
All these inclusions turn out to be strict.

The counterexample showing that $\blam\eta\nvdash\omega$ is based on complicated universal generators known as \emph{Plotkin terms} \cite[Def.~17.3.26]{Bare}.
However, since these terms are unsolvable, they become useless when considering sensible \lam-theories.
We refer to \cite[\S17.4]{Bare} for the proof of $\cH\eta\nvdash\omega$ and rather discuss the validity of the $\omega$-rule for \lam-theories containing $\BTth$.

Let us consider two \lam-terms $P$ and $Q$ whose B\"ohm trees are depicted in Figure~\ref{fig:PQ}.
The B\"ohm trees of $P$ and $Q$ differ because of countably many finite $\eta$-expansions of increasing size, therefore they are different in $\BTe$ but equal in $\Hpl$.
This situation is analogous to what happens with $\IDX$ and $\ETAge$; indeed, $P\leeta  Q$ holds.
Perhaps surprisingly, $P$ and $Q$ can also be used to prove that $\BTe\subsetneq\BTo$ since $\BTo\vdash P = Q$ holds.
Indeed, by Lemma~\ref{lemma:Omegak}, for every $M\in\Lamo$, there exists $k$ such that $M\Omega^{\sim k}$ becomes unsolvable.
By inspecting Figure~\ref{fig:PQ}, we notice that in $\BT{P}$ the variable $y$ 
is applied to an increasing number of $\Omega$'s (represented in the tree by~$\bot$). 
So, when substituting some $M\in\Lamo$ for $y$ in $\BT{Py}$, there 
will be a level $k$ of the tree where $M\Omega\cdots\Omega$ becomes $\bot$,
thus cutting $\BT{PM}$ at level~$k$.
The same reasoning can be done for $\BT{QM}$.
Therefore $\BT{PM}$ and $\BT{QM}$ only differ because of finitely many $\eta$-expansions.
Since $\BTth\eta\subseteq\BTth\omega$, we conclude that $PM =_{\BTth\omega}QM$ and therefore by the $
\omega$-rule $P =_{\BTth\omega}Q$.
This argument is due to Barendregt~\cite[Lemma~16.4.4]{Bare}.

The fact that $\Hst\vdash\omega$ is an easy consequence of its maximality.
However, there are several direct proofs: see \cite[\S17.2]{Bare} for a syntactic 
demonstration and \cite{Wadsworth76} for a semantic~one.

A natural question, raised by Barendregt in~\cite[Thm.~17.4.16]{Bare}, concerns the position of $\Hpl$ in the kite. In the proof of that theorem, it is mentioned that Sall\'e formulated the following conjecture (represented in the diagram with a question mark).

\begin{conj}[Sall\'e's Conjecture]\label{conj:Salle}
$\BTo\subsetneq \Hpl$.
\end{conj}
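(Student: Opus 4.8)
The plan is to establish the strict inclusion in two parts: first the inclusion $\BTo\subseteq\Hpl$, and then its strictness.

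The inclusion $\BTo\subseteq\Hpl$ reduces to showing that $\Hpl$ is closed under the $\omega$-rule, i.e.\ $\Hpl\vdash\omega$. Indeed, since two \lam-terms with the same B\"ohm tree cannot be separated by any context we have $\BTth\subseteq\Hpl$, so if moreover $\Hpl\vdash\omega$, then by Lemma~\ref{lemma:omega-props}\ref{lemma:omega-props3} we obtain $\BTo=\BTth\omega\subseteq\Hpl\omega=\Hpl$. To prove $\Hpl\vdash\omega$ I would argue contrapositively: suppose $M\neq_\Hpl N$ with $M,N\in\Lamo$ (it suffices to treat $\omega^0$, by Lemma~\ref{lemma:omega-props}\ref{lemma:omega-props2}). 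By the Context Lemma~\ref{lemma:ctx} there is a closed applicative context $[]\vec P$ separating them. The crux is to turn this context into a \emph{non-empty} one, mimicking the argument available for $\Hst$, where Wadsworth's solvability characterization lets one pad the context with copies of $\bI$ at the tail while preserving the separation. For $\Hpl$ the analogue is subtler because the observable is a full $\beta$-normal form rather than merely a head normal form, so I would need a separability result tailored to Morris's observability: given $M=_\Hst N$ but $M\neq_\Hpl N$, produce $\vec P$ sending one term to $\bI$ and the other to some infinite $\eta$-expansion of $\bI$, with the tail freely extendable by further $\bI$'s. This is exactly a refined B\"ohm-out argument.

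For the strictness I would hunt for witnesses among the running examples, the natural candidates being $\IDX$ and $\ETAge$: they satisfy $\IDX\leeta\ETAge$, hence $\IDX=_\Hpl\ETAge$, whereas $\IDX\neq_\BTe\ETAge$ since no finite amount of $\eta$-expansion turns one B\"ohm tree into the other ($\BT{\ETAge}$ carries $\eta$-expansions of unboundedly increasing size). To conclude $\IDX\neq_\BTo\ETAge$ I would try to exhibit a closed $P$ for which $\IDX P$ and $\ETAge P$ remain B\"ohm-tree-distinct even after closing under $(\omega)$, intuitively an argument showing that the growing finitary $\eta$-expansions in $\BT{\ETAge}$ cannot be collapsed by any $\omega$-step. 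The contrast to keep in mind is Barendregt's pair $P,Q$ of Figure~\ref{fig:PQ}, where the increasing number of $\bot$'s (applied $\Omega$'s) forces \emph{both} trees to be cut to a common finite depth after substituting any combinator, which is precisely what makes $(\omega)$ collapse them; for $\IDX$ and $\ETAge$ no such cutting mechanism is visibly present.

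The main obstacle, and the step I expect to fail, is precisely the strictness. The $\omega$-rule is impredicative: to show $\IDX\neq_\BTo\ETAge$ one must rule out that \emph{iterated} applications of $(\omega)$, possibly routed through auxiliary \lam-terms, eventually equate them. A naive attempt to produce a discriminating $P$ founders because a single $\omega$-step can already equate two streams of $\eta$-expansions of the identity, and one then has to control how such equalities propagate upward through the B\"ohm trees. I suspect this is exactly where the conjecture is fragile: if one could \lam-define the common $\eta$-supremum $U$ with $\IDX\leeta U\geeta\ETAge$ and then show that in $\BTo$ the stream listing \emph{all} finite $\eta$-expansions of $\bI$ is already equated to the stream of infinitely many copies of $\bI$, the $\eta$-supremum and the $\eta$-infimum would collapse and the strict inclusion would fail. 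Establishing the robustness of the separation against this collapse is the genuine difficulty, and I would not be confident that it can be pushed through.
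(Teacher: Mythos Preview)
Your proposal attempts to \emph{prove} the conjecture, but the paper \emph{refutes} it: Theorem~\ref{thm:main} establishes $\BTo=\Hpl$, so the strict inclusion you are trying to obtain is false. Your treatment of the inclusion $\BTo\subseteq\Hpl$ is essentially the paper's (Theorem~\ref{thm:Hplomega} via the Morris Separation Theorem~\ref{thm:newsep}), but the strictness part cannot succeed.

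What is striking is that you correctly identify, in your final paragraph, precisely the mechanism by which the conjecture fails. The paper proves exactly what you feared: the stream $\ETA$ listing all finite $\eta$-expansions of the identity and the stream $\ID$ of infinitely many copies of $\bI$ are equal in $\BTo$ (Corollary~\ref{cor:IDeqETA}). This is obtained by first showing $\IDo=_{\BTo}\ETAo$ via the same $\Omega$-cutting trick as Barendregt's $P,Q$ argument (Lemma~\ref{lem:BToIDoETAo}), and then extracting the clean streams from these auxiliary ones (Lemma~\ref{lem:IDoeqETAo}). Combined with the \lam-definability of the $\eta$-supremum (Proposition~\ref{prop:Hplchar}) through the combinator $\etamax$ that reconstructs $\BT{N}$ from $\ETA$ and $\BT{M}$ from $\ID$ when $M\leeta N$ (Lemmas~\ref{lem:Nwins} and~\ref{lem:Mwins}), this collapse forces $M=_{\BTo}N$ whenever $M=_{\Hpl}N$. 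In particular your candidate witnesses satisfy $\IDX=_{\BTo}\ETAge$, so they cannot separate the two theories. The genuine gap in your proposal is therefore not a missing technical step but the direction of the argument: the ``robustness of the separation'' you hoped to establish does not exist.
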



\section{$\Hpl$ Satisfies the $\omega$-Rule}\label{sec:Boehmout}
We start by proving that $\Hpl$ satisfies the $\omega$-rule (Theorem~\ref{thm:Hplomega}), a result from which it follows that $\BTo\subseteq \Hpl$. 

\subsection{$\eta$-Expansions of the Identity}\label{subsec:ETAid}

We need to analyze more thoroughly the possibly infinite $\eta$-expansions of the identity that will play a key role in the rest of the paper.
The structural properties of the finite $\eta$-expansions of the identity have been analyzed in~\cite{IntrigilaN03}.

\begin{defi}
We say that \emph{$Q\in\Lam$ is an $\eta$-expansion of the identity} whenever $\bI\leeta[\omega] Q$. 
We let $\ETAset[\omega]$ be the set of all $\eta$-expansions of the identity.
\end{defi}

As a matter of terminology, we are slightly abusing notation when
saying that $Q\in\ETAset[\omega]$ is an $\eta$-expansion of the
identity, since $Q\msto[\eta] \bI$ does not hold in general. 
We use this terminology because we are silently considering $\cB$-equivalence classes of \lam-terms.

\begin{defi}
We say that $Q\in\ETAset[\omega]$ is a \emph{finite} (resp.\ \emph{infinite}) $\eta$-expansion of the identity whenever $\BT{Q}$ is a finite (infinite) tree.
We let $\ETAset$ (resp.\ $\ETAset[\infty]$) be the set of finite (infinite) $\eta$-expansions of the identity.
\end{defi}

Clearly, $\ETAset[\omega] = \ETAset \cup \ETAset[\infty]$ and this decomposition actually gives a partition.
Concerning finite $\eta$-expansions of the identity, we have that $\One^n\in\ETAset$ for all $n\in\nat$, every $Q\in\ETAset$ is $\beta$-normalizing, $\NF[\beta](Q)$ is a closed \lam-term and $\BT{Q}$ is not only finite but also $\bot$-free.

\begin{lem}\label{lemma:eta-equiv-defs}
For $Q\in\Lam$, the following are equivalent:
\begin{enumerate}
\item[(i)]
	$Q\in\ETAset$,
\item[(ii)]
	$Q =_\beta \lam yz_1\dots z_m.yQ_1\cdots Q_m$ such that $\lam z_i.Q_i\in\ETAset$ for all $i\le m$,	
\item[(iii)]
	$Q =_\beta \lam y.Q'$ such that $Q'\msto[\beta\eta] y$,
\item[(iv)]	
	$Q\msto[\beta\eta]\bI$.
\end{enumerate}
\end{lem}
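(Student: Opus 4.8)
The plan is to prove the equivalences by a cycle $(i)\Rightarrow(ii)\Rightarrow(iii)\Rightarrow(iv)\Rightarrow(i)$, separating a purely coinductive part, $(i)\Leftrightarrow(ii)$, which unfolds the definition of $\leetainf$ exactly once, from a finitary part, $(ii)$--$(iv)$, which I would handle by induction on the number of nodes of the \emph{finite} tree $\BT Q$. Throughout I write $\bI=\lam y.y$, whose B\"ohm tree is the single node $\lam y.y$ with head variable $y$ and no arguments.

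For $(i)\Leftrightarrow(ii)$ I would unfold Definition~\ref{def:leetainf} at $U=\BT\bI$. Since $\bI\neq\bot$, the only applicable clause forces $\BT Q=\lam yz_1\dots z_m.y\,V'_1\cdots V'_m$ with $z_i\leetainf V'_i$ for each $i\le m$; equivalently $Q=_\beta\lam yz_1\dots z_m.y Q_1\cdots Q_m$ with $\BT{Q_i}=V'_i$. The key observation is an \emph{abstraction correspondence}: prefixing the binder $z_i$ turns $U=z_i$ into $\lam z_i.z_i=\bI$ and $V'_i$ into $\BT{\lam z_i.Q_i}$, while the side conditions of $\leetainf$ match verbatim, so $z_i\leetainf\BT{Q_i}$ holds iff $\bI\leetainf\BT{\lam z_i.Q_i}$ does. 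As $\BT Q$ is finite exactly when every $V'_i=\BT{Q_i}$ is, this shows $Q\in\ETAset$ iff each $\lam z_i.Q_i\in\ETAset$, which is precisely $(i)\Leftrightarrow(ii)$.

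The remaining implications I would prove by induction on the size of $\BT Q$, which is legitimate because $\BT{\lam z_i.Q_i}$ is a strictly smaller subtree. For $(ii)\Rightarrow(iii)$ (and $(ii)\Rightarrow(iv)$) the induction hypothesis yields $\lam z_i.Q_i\msto[\beta\eta]\bI$; from this I extract $Q_i\msto[\beta\eta]z_i$ via the sublemma that $\lam z.P\msto[\beta\eta]\lam z.z$ implies $P\msto[\beta\eta]z$ --- proved by applying both sides to $z$, contracting, and invoking confluence of $\beta\eta$ (since $z$ is a normal form). Then $\lam z_1\dots z_m.y Q_1\cdots Q_m\msto[\beta\eta]\lam z_1\dots z_m.y z_1\cdots z_m\msto[\eta] y$, giving $(iii)$ with $Q'=\lam z_1\dots z_m.y Q_1\cdots Q_m$; and $(iii)\Rightarrow(iv)$ follows because $Q=_{\beta\eta}\bI$, which confluence turns into a reduction $Q\msto[\beta\eta]\bI$ onto the normal form $\bI$.

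Finally, for $(iv)\Rightarrow(i)$, from $Q\msto[\beta\eta]\bI$ I note that $Q$ has $\beta\eta$-normal form $\bI$, hence a $\beta$-normal form $N$ (using that $\beta$- and $\beta\eta$-normalizability coincide), so $\BT Q=N$ is finite. Postponing $\eta$-steps after $\beta$-steps gives $Q\msto[\beta]N\msto[\eta]\bI$, and iterating Lemma~\ref{lemma:Bareta} (with transitivity of $\leeta$) along $N\msto[\eta]\bI$ yields $\BT\bI\leeta\BT N=\BT Q$; since $\leeta\ \subseteq\ \leetainf$, we obtain $\bI\leetainf\BT Q$ with $\BT Q$ finite, i.e.\ $Q\in\ETAset$. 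The main obstacle is exactly this passage between the coinductive characterization $(i)$ and the syntactic reduction statements: one must justify that finiteness of $\BT Q$ makes the induction well-founded, prove the extraction sublemma cleanly through confluence, and rule out the danger --- illustrated by $\IDX$ versus $\ONE$ and by Lemma~\ref{lemma:Bareta} --- that an $\eta$-relationship to $\bI$ silently produces an \emph{infinite} B\"ohm tree. This last point is precisely what the equivalence between $\beta$- and $\beta\eta$-normalizability excludes.
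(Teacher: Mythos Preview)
The paper does not actually prove this lemma: it is stated without proof, the surrounding discussion pointing to \cite{IntrigilaN03} for structural properties of $\ETAset$ and treating the equivalences as routine. So there is no paper proof to compare against; your argument fills in exactly the details the paper omits, and it is essentially correct.

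Two small comments. First, your $(i)\Leftrightarrow(ii)$ step is not really coinductive: it is one unfolding of Definition~\ref{def:leetainf} at the root, plus the elementary observation that $\BT Q$ is finite iff every $\BT{Q_i}$ is; your ``abstraction correspondence'' is right. Second, in $(iv)\Rightarrow(i)$ you invoke $\leeta\ \subseteq\ \leetainf$ as if it were given. It is true and easy, but it does require its own short argument (induction on the finite tree $Q_i$ with $Q_i\msto[\eta]z_i$ to obtain $z_i\leetainf Q_i$); since this is morally the same implication $(iv)\Rightarrow(i)$ one level down, a cleaner alternative is to avoid $\leeta$ altogether and argue directly by induction on the $\beta$-normal form $N$ of $Q$: from $N\msto[\eta]\bI$ one gets $N=\lam yz_1\dots z_m.yN_1\cdots N_m$ with $N_i\msto[\eta]z_i$, hence each $\lam z_i.N_i\msto[\eta]\bI$, and the induction hypothesis yields $(ii)$, whence $(i)$ by the equivalence you already established. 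Either way, the proof goes through.
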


In~\cite{IntrigilaN03}, it is shown that there exists a bijection between $\beta$-equivalence classes of \lam-terms in $\ETAset$ and finite (unlabelled) trees.
It is also proved that $(\ETAset,\circ,\bI)$ is an idempotent commutative monoid which is moreover closed under \lam-calculus application.
We generalize some of these properties to encompass the infinite $\eta$-expansions of the identity, like $\bJ$.

\begin{thm}\label{thm:one-one}
There is a one-to-one correspondence between $\Trees[]$ and $\ETAset[\omega]/_\BTth$.
\end{thm}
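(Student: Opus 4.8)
The plan is to realise the correspondence through the \emph{underlying naked tree} of the B\"ohm tree. I would define a map $\Phi\colon\ETAset[\omega]/_\BTth\to\Trees[]$ by $\Phi([Q]_\BTth)=\nak{\BT Q}$. To see this is well defined, note that $Q\in\ETAset[\omega]$ means $\bI\leetainf\BT Q$, and since $\bI=\lam x.x\neq\bot$ the first clause of Definition~\ref{def:leetainf} never fires along $\BT Q$; hence $\BT Q$ is $\bot$-free and its tree-function is everywhere defined on its domain. As the B\"ohm tree of any $\lambda$-term is partial computable, $\nak{\BT Q}$ is a recursive tree, so $\nak{\BT Q}\in\Trees[]$; and it plainly depends only on $\BT Q$, hence only on the class $[Q]_\BTth$.

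For injectivity, the key observation is that the labelling of $\BT Q$ is entirely forced, up to $\alpha$-conversion, by the shape $\nak{\BT Q}$. Unfolding $\bI\leetainf\BT Q$ through Definition~\ref{def:leetainf}, the root of $\BT Q$ has the form $\lam x z_1\dots z_m.\,x\,V_1\cdots V_m$ with each $V_i$ an $\eta$-expansion of $z_i$; recursively, every node carries exactly as many leading abstractions as it has children (one extra at the root, for the identity variable), and its head is the variable bound at its parent in the position matching the node. A coinductive argument then shows that $\nak{\BT{Q_1}}=\nak{\BT{Q_2}}$ forces $\BT{Q_1}=\BT{Q_2}$, whence $Q_1 =_\BTth Q_2$.

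For surjectivity I would, given $T\in\Trees[]$, build the canonical $\eta$-expansion of the identity of shape $T$. Define coinductively, for a variable $x$ and a tree $T'$, the B\"ohm-like tree
$$
\cE_{T'}(x)\;=\;\lam z_0\dots z_{k-1}.\,x\,\cE_{\subt{T'}{\seqof 0}}(z_0)\cdots\cE_{\subt{T'}{\seqof{k-1}}}(z_{k-1}),\qquad k=T'(\emptyseq),
$$
with $z_0,\dots,z_{k-1}$ fresh, and put $U_T=\lam x.\cE_T(x)$. By construction $\nak{U_T}=T$, and unfolding shows $\bI\leetainf U_T$ (each step instantiates the second clause of Definition~\ref{def:leetainf}, using $z_i\leetainf\cE_{\subt{T}{\seqof i}}(z_i)$). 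It remains to $\lambda$-define $U_T$: since $T$ is recursive, both the branching and the labelling of $U_T$ are computable, and $U_T$ is a closed $\bot$-free B\"ohm-like tree, so by the effective characterisation of $\lambda$-definable B\"ohm-like trees \cite[Thm.~10.1.23]{Bare} there is $Q$ with $\BT Q=U_T$. Then $\bI\leetainf\BT Q$ gives $Q\in\ETAset[\omega]$ and $\Phi([Q])=T$, so $\Phi$ is onto; together with injectivity this yields the bijection.

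I expect the surjectivity step, and precisely the $\lambda$-definability invocation, to be the main obstacle: one must check that passing from a recursive code of $T$ to the \emph{labelled} canonical expansion $U_T$ keeps the resulting B\"ohm-like tree partial computable (and closed), so that \cite[Thm.~10.1.23]{Bare} applies; the careful handling of fresh bound variables in the coinductive clause for $\cE_{T'}(x)$, so as to stay within a computable labelling and a finite free-variable budget, is the delicate bookkeeping. By contrast, the label-forcing coinduction behind injectivity is routine once the unfolding above is in place.
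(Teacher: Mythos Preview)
Your proposal is correct and follows the same overall strategy as the paper: send $[Q]_\BTth$ to the underlying naked tree $\nak{\BT Q}$, argue it lands in $\Trees[]$ because $\BT Q$ is $\bot$-free and partial computable, observe that the labels of an $\eta$-expansion of $\bI$ are forced by the shape (injectivity), and for surjectivity produce from a recursive $T$ a \lam-term whose B\"ohm tree is the canonical $\eta$-expansion of $\bI$ of shape~$T$.

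The one genuine difference is in how surjectivity is realised. You build the B\"ohm-like tree $U_T$ coinductively and then invoke \cite[Thm.~10.1.23]{Bare} to extract a \lam-term; the paper instead writes down the \lam-term directly, using $\bY$ to solve the recursive equation
\[
X\code{\sigma} =_\beta \lam x z_1\dots z_m.\,x\,(X\code{\sigma.0}\,z_1)\cdots(X\code{\sigma.(m{-}1)}\,z_m),\qquad m=T(\sigma),
\]
and then checks by coinduction that $\bJ_T = X\code{\emptyseq}$ has the right B\"ohm tree. Your route is cleaner conceptually and avoids the explicit fixed-point construction, at the cost of the bookkeeping you flag (verifying that $U_T$ is partial computable with finite free variables so that \cite[Thm.~10.1.23]{Bare} applies). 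The paper's explicit construction, on the other hand, yields a concrete combinator $\bJ_T$ parametrised by a position code, which is convenient later when these terms are manipulated syntactically (e.g.\ in the B\"ohm-out arguments of Section~\ref{sec:Boehmout}).
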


\begin{proof}
For the first direction, consider any $Q\in\ETAset[\omega]$.
By~\cite[Thm.~10.1.23]{Bare}, $\BT{Q}$ is computable and so is its underlying naked tree $\nak{\BT{Q}}$. 
Since $\bI\leeta[\infty] \BT{Q}$ entails that $\BT{Q}$ cannot have any occurrences of $\bot$, we conclude that $\dom(\nak{\BT{Q}})$ is recursive.
By construction, for all $Q,Q'\in\ETAset[\omega]$, we have that $\BTth\vdash Q\neq Q'$ entails $\nak{\BT{Q}}\neq\nak{\BT{Q'}}$.

For the second direction, consider $T\in\Trees[]$.
Recall from Section~\ref{ssec:sequences} that $\#\sigma$ denotes the code of $\sigma\in\nat^*$ and $\code{\sigma}$ the corresponding Church numeral $\Church{\#\sigma}$.
Using the combinator $\bY$, define a \lam-term $X\in\Lamo$ satisfying the following recursive equation (for all $\sigma\in\dom(T)$):
\begin{equation}\label{eq:receq}
	X \code{\sigma} =_\beta \lam xz_1\dots z_m.x(X\code{\sigma.0}z_1)\cdots(X\code{\sigma.m-1}z_m)\textrm{ where }m = T(\sigma).
\end{equation}
The existence of such a \lam-term follows from the fact that $T$ is recursive, the effectiveness of the encoding $\#$ and Church's Thesis.
We prove by coinduction that for all $\sigma\in\dom(T)$, $X\code{\sigma}$ is an infinite $\eta$-expansion of the identity having $\restr{T}{\sigma}$ as underlying naked tree. 
Indeed, $X \code{\sigma}$ is $\beta$-convertible to the \lam-term of Equation~\ref{eq:receq}. 
By coinductive hypothesis we get for all $i < T(\sigma)$ that $\bI \leeta[\omega] \BT{X\code{\sigma.i}}$ and $\nak{\BT{X\code{\sigma.i}}} = \restr{T}{\sigma.i}$.
From this, we conclude that $\bI\leeta[\infty] \BT{X\code{\sigma}}$ and $\nak{\BT{X\code{\sigma}}}= \restr{T}{\sigma}$.
So, the \lam-term associated with $T$ is $\bJ_T=X\code{\emptyseq}$.
By construction, for all $T,T'\in\Trees$, we have that $T\neq T'$ entails $\BTth\vdash \bJ_T\neq\bJ_{T'}$.
\end{proof}


\begin{nota} Given $T\in\Trees[]$, we denote by $\bJ_T$ the corresponding \lam-term in $\ETAset[\omega]$ whose B\"ohm tree is shown in Figure~\ref{fig:BTJT}.
We say that $\bJ_T$ is \emph{an $\eta$-expansion of the identity following~$T$}.
\begin{figure}[t!]
\begin{tikzpicture}
\node (myspot) at (-5.9,0) {~};
\node (lev1) at (0.5, .1) {\!\!\!\!\!\!\!\!\!\!\!\!\!\!\!\!\!\!\!\!\!\!\!\!\!\!\!\!\!\!\!\!\!$\lam xy_1\dots y_{T\emptyseq}.x$};
\node (lev2) at (0,-1) {$
	\lam z_1\dots z_{T\langle0\rangle}.y_1\qquad\qquad\ \, \cdots \quad \qquad
	\lam z_1\dots z_{T\langle T\emptyseq\textrm{-}1\rangle}.y_{T\emptyseq}$};
\node (lev3) at (.8,-2) {
	$\lam\vec w_{T\langle0,0\rangle}.z_1\quad \cdots \quad 
	  \lam\vec w_{T\langle0,T\langle0\rangle\textrm{-}1\rangle}.z_{T\langle0\rangle}\qquad
	  \lam \vec w_{T\langle T\emptyseq\textrm{-}1,0\rangle}.z_1\quad\cdots\quad 
	  \lam \vec w_{T\langle T\emptyseq\textrm{-}1,T\langle T\emptyseq\textrm{-}1\rangle\textrm{-}1\rangle}.z_{T\langle T\emptyseq\textrm{-}1\rangle}$};
	  \draw[-] ($(lev1.south east)+(-8pt,2pt)$) -- (-2.3,-.8);
	  \draw[-] ($(lev1.south east)+(-2pt,2pt)$) -- (3.8,-.8);	  	  
	  \draw[-] (-2.2,-1.2) -- (-1.2,-1.7);
	  \draw[-] (-2.4,-1.2) -- (-4.75,-1.7);	  
	  \draw[-] (4,-1.2) -- (2.7,-1.7);	  	  	  	  
	  \draw[-] (4.2,-1.2) -- (7.2,-1.7);	
	  
	  \draw[-] (-4.9,-2.15) -- (-3.75,-3);	 
	  \draw[-] (-5.1,-2.15) -- (-6.45,-3);	  	  	     	  
	  \node at (-5.1,-2.75) {$\cdots$}; 	  	  
	 
	  \draw[-] (-1.15,-2.15) -- (-2.45,-3);	  	  	  
  	  \draw[-] (-.95,-2.15) -- (.25,-3);	  	  	  	                      	  
	  \node at (-1.1,-2.75) {$\cdots$};	  
	
	  \draw[-] (2.35,-2.15) -- (3.6,-3);	  	  	                                           	  
	  \draw[-] (2.15,-2.15) -- (.9,-3);	  
	  \node at (2.3,-2.75) {$\cdots$};
	  
	    \draw[-] (7.2,-2.15) -- (8.5,-3);	  	  
	  \draw[-] (7,-2.15) -- (5.8,-3);	  	  	  	  	  
	  \node at (7.2,-2.75) {$\cdots$};	  	  	  
\end{tikzpicture}
\caption{The B\"ohm-like tree of an infinite $\eta$-expansion of $\bI$ following $T\in\Trees[]$. 
To lighten the notations we write $T\sigma$ for $T(\sigma)$
and $\lam \seq w_n$ for $\lam w_1\dots w_n$.}
\label{fig:BTJT}
\end{figure}
\end{nota}

\begin{lem}\label{lemma:Texas}\
\begin{enumerate}[label={(\roman*)}]
\item\label{lemma:Texas1} 
	The set $\ETAset[\omega]$ is closed under composition.\\
Moreover, $Q_1\circ Q_2\in\ETAset[\infty]$ whenever one of the $Q_i$'s belongs to $\ETAset[\infty]$.
\item\label{lemma:Texas2} 
	If $Q\in\ETAset[\infty]$ then $Q\,\bI\in\ETAset[\infty]$.
\end{enumerate}
\end{lem}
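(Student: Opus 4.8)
The plan is to handle the two membership claims uniformly and cheaply, and to isolate the genuinely delicate point — preservation of \emph{infiniteness} — as a separate argument. The conceptual engine is the characterization
$$
	\ETAset[\omega] = \{\,Q \st Q =_{\Hst} \bI \ \text{ and } \ \BT{Q}\text{ is }\bot\text{-free}\,\}.
$$
For the inclusion $\subseteq$, note that if $\bI \leetainf \BT{Q}$ then $\BT{Q}$ is $\bot$-free (the structural clause of $\leetainf$ never places $\bot$ at the root, and $\bI\neq\bot$, so an easy coinduction rules out $\bot$ everywhere), and $Q=_{\Hst}\bI$ by Theorem~\ref{thm:char_Hst}\ref{thm:char_Hst2} taking the $\eta$-supremum $U:=\BT{Q}$ and using reflexivity of $\leetainf$. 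For $\supseteq$ I would prove a \textbf{$\bot$-free reflection lemma}: \emph{if $V$ is $\bot$-free and $V\leetainf U$ with $\bI\leetainf U$, then $\bI\leetainf V$}. Granting this characterization, closure of membership is immediate from $\Hst$ being a congruence containing $=_\beta$: from $Q_1 =_{\Hst}\bI =_{\Hst} Q_2$ we get $Q_1\circ Q_2 = \bB Q_1 Q_2 =_{\Hst} \bB\bI\bI =_\beta \bI$, and $Q\,\bI =_{\Hst}\bI\bI=_\beta\bI$; moreover the required $\bot$-freeness is automatic, since any $M=_{\Hst}\bI$ has $\bot$-free Böhm tree (a $\bot$ in $\BT{M}$ would, via $\BT{M}\leetainf U$, force a $\bot$ in the supremum $U$, contradicting $\bI\leetainf U$). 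Hence $Q_1\circ Q_2, Q\,\bI\in\ETAset[\omega]$.

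The reflection lemma is proved by coinduction, stated simultaneously for $\bI$ and for a bare variable $x$ in place of $\bI$. From $\bI\leetainf U$ we have $U=\lam x w_1\dots w_s.\,x\,U_1\cdots U_s$ with $w_i\leetainf U_i$; matching this against $V\leetainf U$ (where $V$ is $\bot$-free, hence $V\neq\bot$) forces the head of $V$ to be its first binder, forces the number of arguments of $V$ to equal the number of its trailing binders, and yields for each argument both $V_j\leetainf U_j$ and $x_j\leetainf U_j$. The coinductive hypothesis (variable version) then gives $x_j\leetainf V_j$, and these are exactly the data witnessing $\bI\leetainf V$.

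For the finiteness clauses I would use two monotonicity facts about $\leetainf$ on $\bot$-free trees: \textbf{(B)} domination, namely $V\leetainf U$ implies $U$ has at least as many positions as $V$, so $V$ infinite $\Rightarrow U$ infinite; and \textbf{(A)} that application of an element of $\ETAset[\omega]$ expands its argument, i.e.\ $R\in\ETAset[\omega]$ implies $\BT{P}\leetainf\BT{RP}$, together with argument-monotonicity $\BT{A}\leetainf\BT{B}\Rightarrow\BT{RA}\leetainf\BT{RB}$. For the \emph{moreover} part of (i), write $Q_1\circ Q_2=\lam x.\,Q_1(Q_2 x)$. If $Q_2\in\ETAset[\infty]$ then $\BT{Q_2 x}$ is infinite and $\BT{Q_2 x}\leetainf\BT{Q_1(Q_2 x)}$ by (A), so (B) gives infiniteness; if $Q_1\in\ETAset[\infty]$ then $x\leetainf\BT{Q_2 x}$ and argument-monotonicity give $\BT{Q_1 x}\leetainf\BT{Q_1(Q_2 x)}$ with $\BT{Q_1 x}$ infinite. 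For (ii) the argument $\bI=\lam x.x$ is not $\leetainf$-comparable to a bare variable, so instead I trace an infinite branch of $\nak{\BT{Q}}$ through the reduction $Q\,\bI =_\beta \lam z_1\dots z_p.\,S_1 S_2\cdots S_p$ (where $Q=\lam a\vec z.\,a S_1\cdots S_p$ and $\bI S_1=S_1$): a branch through an unconsumed argument $S_i$ survives verbatim, while a branch entering the head-argument $S_1$ meets a substitution of some infinite $S_j$ for a bound head-variable of a leaf-expansion inside $S_1$ — which is again a composition of $\eta$-expansions of the identity and hence infinite by part (i).

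I expect the \textbf{main obstacle} to be precisely this infiniteness bookkeeping in (ii), together with the argument-monotonicity half of (A): applying $\bI$ consumes argument slots and substitutes entire $\eta$-expansions into one another, so the one delicate point is that an infinite branch concentrated in the consumed head-argument is not destroyed but converted into a composite expansion, whose infiniteness has to be fed back from part (i). By contrast, the membership (closure) claims are routine once the $\Hst$-characterization and the reflection lemma are in place.
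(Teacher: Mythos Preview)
Your route to the \emph{membership} claims is correct and genuinely different from the paper's. The characterization $\ETAset[\omega]=\{Q\mid Q=_{\Hst}\bI\}$ holds (the $\bot$-freeness clause is even redundant, since $\BT{Q}\leetainf U$ with $U$ $\bot$-free already forces $\BT{Q}$ to be $\bot$-free), and your reflection lemma is right. This makes $Q_1\circ Q_2\in\ETAset[\omega]$ and $Q\bI\in\ETAset[\omega]$ immediate consequences of $\Hst$ being a congruence, whereas the paper instead computes the composition explicitly and checks $\bI\leetainf Q\circ Q'$ by a direct coinduction. Your approach is cleaner for membership; the paper's explicit normal form is what carries the infiniteness bookkeeping.

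The genuine gap is in the infiniteness part of~(i). Your facts (A) and ``argument-monotonicity'' are plausible but unproved, and in the instances you actually need them they are not cheaper than the lemma itself. Concretely, (A) with $P=Q_2x$ amounts to $Q_2\leetainf Q_1\circ Q_2$, and argument-monotonicity with $A=x$, $B=Q_2x$ amounts to $Q_1\leetainf Q_1\circ Q_2$. Neither follows from the $\Hst$-characterization, which only produces a common $\leetainf$-upper bound for $\bI$ and $Q_1\circ Q_2$, not a direct comparison between $Q_i$ and $Q_1\circ Q_2$. Establishing either statement forces you to unfold
\[
Q\circ Q'\ =_\beta\ \lam x z_1\dots z_m.\,x\,((Q'_1\circ Q_1)z_1)\cdots((Q'_n\circ Q_n)z_n)(Q'_{n+1}z_{n+1})\cdots(Q'_m z_m)
\]
and argue coinductively on the components --- which is exactly the paper's proof. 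So you have relocated, not removed, the hard step. Your sketch for~(ii) then correctly feeds back into~(i) (the paper does the same, reducing $Q\bI$ to terms of the form $(Q'_i\circ Q_i)z_i$), but it inherits this gap.
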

\begin{proof}
\ref{lemma:Texas1} Let $Q,Q'\in\ETAset[\omega]$, we prove by coinduction that $x\leeta[\omega] (Q\circ Q')x$ which is equivalent to $\bI\leeta[\omega] Q\circ Q'$. 
By definition, we have $Q=_\beta\lam xz_1\dots z_n.x(Q_1z_1)\cdots (Q_nz_n)$ with $z_i\leeta[\omega] Q_iz_i$ for $i\le n$, and $Q'=_\beta\lam xz_1\dots z_m.x(Q'_1z_1)\cdots (Q'_mz_m)$ with $z_j\leeta[\omega] Q'_jz_j$ for $j\le m$.

Consider $n\le m$, the other case being analogous. 
Easy calculations give:
$$
	\begin{array}{lcl}
	Q\circ Q' &=_\beta& \lam xz_1\dots z_n.(Q'x)(Q_1z_1)\cdots (Q_nz_n)\\
	&=_\beta&\lam x z_1\dots z_m.x(Q'_1(Q_1z_1))\cdots (Q'_n(Q_nz_n))(Q'_{n+1}z_{n+1})\cdots(Q'_mz_{m})
	\end{array}
$$
Since $Q'_i(Q_iz_i) =_\beta (Q'_i\circ Q_i)z_i$ for every $i\le n$ we get from the coinductive hypothesis that $z_i\leeta[\omega](Q'_i\circ Q_i)z_i$.
Together with the hypotheses $z_j\leeta[\omega] Q'_jz_j$ for $j\le m$, this allows to conclude $x\leeta[\omega] (Q\circ Q')x$.
Note that when $Q$ or $Q'$ belongs to $\ETAset[\infty]$, then at least one among $Q_1,\dots,Q_n,Q'_1,\dots Q'_m$ must have an infinite B\"ohm tree. If $\BT{Q'_j}$ is infinite for some $j> n$ then it is immediate that $Q\circ Q'\in\ETAset[\infty]$, otherwise it follows from the coinductive hypothesis.

\ref{lemma:Texas2}
We prove that $\bI\leeta[\omega] Q\bI$ holds and in the meanwhile we check that $\BT{Q\bI }$  is infinite.
By definition, $Q =_\beta\lam xz_0\dots z_n.x(Q_0z_0)\cdots (Q_nz_n)$ with $z_i\leeta[\omega] Q_iz_i$, for all $i\le n$, and $\BT{Q_j}$ is infinite for some index $j$.
Easy calculations give $Q\bI =_\beta\lam z_0\dots z_n.(Q_0z_0)\cdots (Q_nz_n)$.
Now, if $Q_0 =_\beta\bI$ then $j>0$ and the result follows immediately.
Otherwise $Q_0 =_\beta \lam z_0 y_1\dots y_m.z_0(Q'_1y_1)\cdots (Q'_my_m)$ with, say, $m\ge n$ and $y_j\leeta[\omega]Q'_jy_j$ for all $j\le m$. 

Therefore, in this case we have:
$$
	\begin{array}{lcl}
	Q\bI &=_\beta&\lam z_0\dots z_n.(\lam y_1\dots y_m.z_0(Q'_1y_1)\cdots (Q'_my_m))(Q_1z_1)\cdots (Q_nz_n)\\
	&=_\beta&\lam z_0\dots z_n.z_0(Q'_1(Q_1z_1))\cdots (Q'_n(Q_nz_n))(Q'_{n+1}z_{n+1})\cdots (Q'_mz_m)\\
	\end{array}
$$
Notice that $Q'_i(Q_iz_i) =_\beta (Q'_i\circ Q_i)z_i$ for every $i\le n$ so by applying \ref{lemma:Texas1} we get $Q'_i\circ Q_i\in\ETAset[\omega]$ and even $Q'_i\circ Q_i\in\ETAset[\infty]$ if one among these $Q'_i,Q_i$ has an infinite B\"ohm tree. Otherwise $\BT{Q'_j}$ must be infinite for some $j>n$. In both cases we conclude $Q\, \bI\in\ETAset[\infty]$.
\end{proof}

We notice that the properties above generalize to the following original result, that however we do not prove since it is beyond the scope of the present paper.

\begin{prop} 
The triple $(\ETAset[\omega],\circ,\bI)$ is an idempotent commutative monoid closed under application.
Moreover, $\ETAset[\infty]$ is a two-sided ideal in this monoid.
\end{prop}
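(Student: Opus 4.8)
The plan is to lift the finite-case result of \cite{IntrigilaN03}, stating that $(\ETAset,\circ,\bI)$ is an idempotent commutative monoid closed under application, to the whole of $\ETAset[\omega]$ by coinduction, using Lemma~\ref{lemma:Texas} to control the infinite expansions. I split the claim into the monoid laws (associativity, identity, closure under $\circ$), commutativity, idempotency, closure under application, and the ideal property, working throughout modulo $\BTth$. Associativity is immediate and holds for all \lam-terms, since $(Q\circ Q')\circ Q''=_\beta\lam x.Q(Q'(Q''x))=_\beta Q\circ(Q'\circ Q'')$; closure of $\ETAset[\omega]$ under $\circ$ is exactly Lemma~\ref{lemma:Texas}\ref{lemma:Texas1}. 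The two-sided ideal property of $\ETAset[\infty]$ is likewise nothing but the second assertion of Lemma~\ref{lemma:Texas}\ref{lemma:Texas1}, namely that $Q\circ Q'\in\ETAset[\infty]$ as soon as one factor lies in $\ETAset[\infty]$, so this step is already granted.

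For the identity laws I would observe that every $Q\in\ETAset[\omega]$ has a principal hnf of the form $\lam wz_1\dots z_n.w(\dots)$, so that the extra abstraction produced by composing with $\bI$ cancels at the level of B\"ohm trees: $\bI\circ Q=_\beta\lam x.Qx$ and, since $Qx\msto[h]\lam z_1\dots z_n.x(\dots)$, we get $\BT{\lam x.Qx}=\BT Q$, whence $\bI\circ Q=_{\BTth}Q$; symmetrically $Q\circ\bI=_\beta\lam x.Q(\bI x)=_\beta\lam x.Qx=_{\BTth}Q$. Commutativity and idempotency are then proved by coinduction starting from the explicit computation performed in the proof of Lemma~\ref{lemma:Texas}\ref{lemma:Texas1}. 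Writing $Q=_\beta\lam xz_1\dots z_n.x(Q_1z_1)\cdots(Q_nz_n)$ and $Q'=_\beta\lam xz_1\dots z_m.x(Q'_1z_1)\cdots(Q'_mz_m)$ with, say, $n\le m$, that computation exhibits $Q\circ Q'$ and $Q'\circ Q$ with a common head $\lam xz_1\dots z_m.x$ (progress for the coinductive hypothesis) whose first $n$ subtrees are the B\"ohm trees of $Q'_i\circ Q_i$ and of $Q_i\circ Q'_i$ respectively, while the remaining $m-n$ subtrees coincide, being supplied in both orderings by the longer factor $Q'$. The coinductive hypothesis equates the overlapping subtrees, giving $Q\circ Q'=_{\BTth}Q'\circ Q$; idempotency is the special case $Q=Q'$, where the $i$-th subtree $Q_i\circ Q_i$ is equated to $Q_i$ by coinduction.

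The remaining and most delicate point is closure under application, since the composition laws do not transfer to it for free. I would prove $\bI\leeta[\omega]Q_1Q_2$ directly by coinduction. After the trivial case $Q_1=_\beta\bI$, I unfold $Q_1=_\beta\lam xz_1\dots z_n.x(R_1z_1)\cdots(R_nz_n)$ and $Q_2=_\beta\lam yu_1\dots u_p.y(S_1u_1)\cdots(S_pu_p)$, so that $Q_1Q_2=_\beta\lam z_1\dots z_n.Q_2(R_1z_1)\cdots(R_nz_n)$; reducing the head $Q_2(R_1z_1)$ and then the application of the $\eta$-expansion $R_1z_1$ of $z_1$ exposes $z_1$ as the head variable, with arguments again $\eta$-expansions of the corresponding variables, the non-trivial ones being compositions $T\circ S\in\ETAset[\omega]$ by Lemma~\ref{lemma:Texas}\ref{lemma:Texas1}. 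The hard part is precisely this bookkeeping: one must track which bound variables are consumed through a case analysis on the relative lengths ($n$ versus $p$, and then the arity of $R_1$ versus the number of arguments it receives), check that every argument produced is an $\eta$-expansion of the appropriate variable, and appeal to the coinductive hypothesis for the subtrees $R_iz_i$. A cleaner alternative, which I would attempt first, is to isolate a monotonicity lemma stating that $M\leeta[\omega]N$ implies $MP\leeta[\omega]NP$; granting this together with the transitivity of $\leeta[\omega]$, closure under application becomes immediate, since $\bI\leeta[\omega]Q_1$ yields $Q_2=\bI Q_2\leeta[\omega]Q_1Q_2$, and hence $\bI\leeta[\omega]Q_2\leeta[\omega]Q_1Q_2$.
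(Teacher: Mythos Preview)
The paper does not actually prove this proposition: it is stated immediately after Lemma~\ref{lemma:Texas} with the explicit disclaimer ``we do not prove [it] since it is beyond the scope of the present paper.'' So there is no reference proof to compare against, and your proposal must be assessed on its own merits.

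Your decomposition is sound and complete. Associativity, closure under $\circ$, and the two-sided ideal property are indeed exactly Lemma~\ref{lemma:Texas}\ref{lemma:Texas1}; the identity laws follow from the observation that every $Q\in\ETAset[\omega]$ is solvable with a leading abstraction, so $\lam x.Qx=_{\BTth}Q$. The coinductive arguments for commutativity and idempotency are correct: the explicit head $\lam xz_1\dots z_m.x$ extracted in the proof of Lemma~\ref{lemma:Texas}\ref{lemma:Texas1} provides the required progress, and the subtrees $(Q'_i\circ Q_i)z_i$ versus $(Q_i\circ Q'_i)z_i$ (respectively $(Q_i\circ Q_i)z_i$ versus $Q_iz_i$) are handled by the coinductive hypothesis, while the trailing $m-n$ subtrees already coincide.

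One caution on closure under application. Your Approach~B is elegant but the monotonicity lemma ``$M\leeta[\omega]N$ implies $MP\leeta[\omega]NP$'' is itself a nontrivial fact about B\"ohm trees (it is essentially the precongruence property underlying Theorem~\ref{thm:char_Hst}) and would need its own coinductive proof; likewise transitivity of $\leeta[\omega]$, though standard, is not entirely for free. Your Approach~A is self-contained and is the natural generalisation of the proof of Lemma~\ref{lemma:Texas}\ref{lemma:Texas2} from the argument $\bI$ to an arbitrary $Q_2\in\ETAset[\omega]$; the case analysis you describe (on $n$ versus $p$, then on the arity of $R_1$) is exactly what is needed, and the resulting subtrees are again compositions in $\ETAset[\omega]$ by Lemma~\ref{lemma:Texas}\ref{lemma:Texas1}, so the coinduction goes through. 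Either route works; Approach~A is the more direct completion of what the paper already sets up.
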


\subsection{A Morris-Style Separation Theorem}

The \emph{B\"ohm-out} technique~\cite{bohm68,Bare,RonchiP04} aims to build a context which extracts (an instance of) the subterm of a \lam-term $M$ at position $\sigma$. 
It is used for separating two \lam-terms $M,N$ provided that their structure is sufficiently different, depending on the notion of \emph{separation} under consideration. 
When $\Hst\vdash M = N$ the \lam-terms $M,N$ are not \emph{semi-separable}\footnote{Using the terminology of \cite{Bare}.} and no B\"ohm-out technique has been developed in that context. 
We show that also in this case, under the hypothesis $\Hpl\vdash M \not\sqle N$, this difference can be B\"ohmed-out via an appropriate head context thus providing a weak separation result.

\begin{defi}
Given $M\in\Lam$ and $\sigma\in\nat^*$, we define the \emph{subterm of $M$ at $\sigma$ (relative to its B\"ohm tree)}, written $M_\sigma$, as follows: 
\begin{enumerate}
\item $M_\emptyseq = M$,
\item\label{subterm2} $M_{\concat{\Tuple{i}\,}{\,\sigma}} = (M_{i+1})_\sigma$ whenever $\phnf{M} = \lambda x_1\dots x_n.yM_1\cdots M_k$ and $i< k$,
\item $M_\sigma$ is undefined, otherwise.
\end{enumerate}
\end{defi}

Notice that $M_{\Tuple{0}}$, when defined, corresponds to the first argument of the phnf of~$M$. 
This explains the apparent mismatch in \eqref{subterm2} between the indices of $M_{\concat{\Tuple{i}\,}{\,\sigma}}$ and $(M_{i+1})_\sigma$.
The following definition will always be used under the hypothesis that $\Hst\vdash M = N$.
\begin{figure}[t!]
\begin{tikzpicture}
\node (root) at (0,0) {~};
\node  at ($(root)+(3pt,20pt)$) {$\BT{M}$};
\node (BTM) at($(root)+(3pt,10pt)$) {$\shortparallel$};
\node (M) at  (root) {$\lam x.x$};
\node (M1) at ($(M)+(-10pt,-30pt)$) {$\lam z_0.y$};
\node (M2) at ($(M)+(30pt,-30pt)$) {$\lam z_0z'_0.x$};
\node (M11) at ($(M1)+(-3pt,-30pt)$) {$x$};
\node (M12) at ($(M1)+(20pt,-30pt)$) {$z_0$};
\node (M21) at ($(M2)+(5pt,-30pt)$) {$z_0$};
\node (M22) at ($(M2)+(27pt,-29pt)$) {$z'_0$};
\draw ($(M.south east)+(-4pt,2pt)$) -- (M2);
\draw ($(M.south east)+(-9pt,2pt)$) -- (M1);
\draw ($(M1.south east)+(-10pt,1pt)$) -- (M11.north);
\draw ($(M1.south east)+(-4pt,1pt)$) -- ($(M12)+(0pt,5pt)$);
\draw ($(M2.south east)+(-10pt,3pt)$) -- (M21.north);
\draw ($(M2.south east)+(-3pt,3pt)$) -- ($(M22)+(-2pt,5pt)$);
\node at ($(root)+(130pt,20pt)$) {$\BT{N}$};
\node (BTN) at ($(root)+(130pt,10pt)$) {$\shortparallel$};
\node (N) at ($(BTN)+(0pt,-10pt)$) {$\lam xw.x$};
\node (N1) at ($(N)+(-25pt,-30pt)$) {$y$};
\node (N2) at ($(N)+(12pt,-30pt)$) {$\lam z_0.x$};
\node (N3) at ($(N)+(110pt,-30pt)$) {$\lam z_0z'_0.w~$};
\node (N11) at ($(N1)+(0pt,-30pt)$) {$x$};
\node (N21) at ($(N2)+(0pt,-30pt)$) {$\lam z_1.z_0$};
\node (N31) at ($(N3)+(-40pt,-30pt)$) {$\lam z_1z'_1.z_0$};
\node (N32) at ($(N3)+(70pt,-30pt)$) {$\lam z_1z'_1.z'_0$};
\node (N211) at ($(N21)+(0pt,-30pt)$) {$\lam z_2.z_1$};
\node (N2111) at ($(N211)+(7pt,-8pt)$) {$\vdots$};
\node (N3111) at ($(N31)+(-20pt,-30pt)$) {$\lam z_2z'_2.z_1$};
\node (N3112) at ($(N31)+(40pt,-30pt)$) {$\lam z_2z'_2.z'_1$};
\node (N3121) at ($(N32)+(-20pt,-30pt)$) {$\lam z_2z'_2.z_1$};
\node (N3122) at ($(N32)+(40pt,-30pt)$) {$\lam z_2z'_2.z'_1$};
\node at ($(N3111)+(20pt,-8pt)$) {$\ddots$};
\node at ($(N3111)+(5pt,-8pt)$) {$\iddots$};
\node at ($(N3112)+(20pt,-8pt)$) {$\ddots$};
\node at ($(N3112)+(5pt,-8pt)$) {$\iddots$};
\node at ($(N3121)+(20pt,-8pt)$) {$\ddots$};
\node at ($(N3121)+(5pt,-8pt)$) {$\iddots$};
\node at ($(N3122)+(20pt,-8pt)$) {$\ddots$};
\node at ($(N3122)+(5pt,-8pt)$) {$\iddots$};
\draw ($(N.south east)+(-10pt,2pt)$) -- (N1);
\draw ($(N.south east)+(-7pt,2pt)$) -- (N2);
\draw ($(N.south east)+(-4pt,2pt)$) -- (N3.north);
\draw ($(N1.south)+(0,1pt)$) -- (N11.north);
\draw ($(N3.south east)+(-12pt,3pt)$) -- ($(N31.north east)+(-10pt,-2pt)$);
\draw ($(N3.south east)+(-8pt,3pt)$) -- ($(N32.north west)+(15pt,-2pt)$);
\draw ($(N2.south east)+(-7pt,3pt)$) -- ($(N21.north east)+(-9pt,-4pt)$);
\draw ($(N21.south east)+(-9pt,3pt)$) -- ($(N211.north east)+(-9pt,-5pt)$);
\draw ($(N31.south east)+(-7pt,3pt)$) -- ($(N3112.north west)+(20pt,-4pt)$);
\draw ($(N31.south east)+(-12pt,3pt)$) -- ($(N3111.north east)+(-12pt,-4pt)$);
\draw ($(N32.south east)+(-7pt,3pt)$) -- ($(N3122.north west)+(20pt,-4pt)$);
\draw ($(N32.south east)+(-12pt,3pt)$) -- ($(N3121.north east)+(-12pt,-4pt)$);
\node at (0,-4) {~};
\end{tikzpicture}
\caption{Two \lam-terms ${M},{N}$ such that ${M}\in\NF[\beta]$, $\Hst\vdash {M} = {N}$, but $\Hpl\vdash{M}\not\sqle {N}$.}\label{fig:trees}
\end{figure}

\begin{defi}\label{def:spine}
We say that $\sigma\in\nat^*$ is a \emph{Morris separator for $M,N$}, written $\sigma : M\nlem N$, if there exists $i>0$ such that, for some $p\ge i$, we have:
$$
	M_{\sigma} =_\beta \lam x_1\dots x_{n}.yM_1\cdots M_{m}\quad\textrm{ and } \quad
	N_{\sigma} =_\beta \lam x_1\dots x_{n+p}.yN_1\cdots N_{m+p}	
$$
where $\BTth\vdash N_{m+i} = \bJ_T x_{n+i}$ for some $T\in\Trees$. 
\end{defi}

\begin{rem}
$\sigma : M\nlem N$ and $\sigma = \Tuple{k}\star\tau$ entail $\tau : M_{k+1}\nlem N_{k+1}$.
\end{rem}

Intuitively, a Morris separator for two \lam-terms $M,N$ that are equal in $\Hst$ is a common position belonging to their B\"ohm trees and witnessing the fact that $\Hpl\vdash M\not\sqle N$.
We will start by considering the case where $M$ has a $\beta$-normal form, which entails that $\BT{M}$ is finite and $\bot$-free.
Since $\Hst\vdash M = N$, by Theorem~\ref{thm:char_Hst}(ii) also $\BT{N}$ is $\bot$-free; moreover, at every common position $\sigma$, $M_\sigma$ and $N_\sigma$ have \emph{similar hnf's}, which means that the number of lambda abstractions and applications can be matched via suitable $\eta$-expansions.
Note that $\BT{M}$ might have $\eta$-expansions that are not present in $\BT{N}$.
As $\Hpl\vdash M\not\sqle N$, the B\"ohm tree of $N$ must have infinite subtrees of the form $\BT{\bJ_T x}$ for some $x\in\Var,T\in\Trees$.

\begin{exa}
Consider the \lam-terms $M,N$ whose B\"ohm trees are depicted in Figure~\ref{fig:trees}.
This example admits two Morris separators: 
\begin{itemize}
\item
	The empty sequence $\emptyseq$ is a separator since $\BTth\vdash N_{\langle 2 \rangle} = \bJ_{T_2} w$ where $T_2$ is the complete binary tree.
\item 
The sequence $\langle 1,0\rangle$ is a separator because $\BTth\vdash N_{\langle1,0,0\rangle} = \bJ_{T_1} z_1$ where $T_1$ is the complete unary tree (i.e., $\BTth\vdash \bJ_{T_1} = \bJ$).
\end{itemize}
\end{exa}

\begin{prop}\label{prop:Morrissep}  
Let $M,N\in\Lam$ be such that $M\in\NF[\beta]$, $N$ does not have a $\beta$-normal form and $\Hst\vdash M = N$.
Then there exists a position $\sigma\in\nat^*$ such that  $\sigma : M\nlem N$.
\end{prop}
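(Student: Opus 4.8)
The plan is to prove the statement by well-founded induction on the number of nodes of the finite tree $\BT{M}$, performing at each step a structural analysis of the principal head normal forms of $M$ and $N$ guided by their common $\eta$-supremum. First I would record the shapes of the two trees. Since $M\in\NF[\beta]$, the tree $\BT{M}$ is finite and $\bot$-free. Because $\Hst\vdash M=N$, Theorem~\ref{thm:char_Hst} yields a B\"ohm-like tree $U$ with $\BT{M}\leetainf U\geetainf\BT{N}$; as $U$ is an $\eta$-expansion of the $\bot$-free tree $\BT{M}$ it is itself $\bot$-free, and then $\BT{N}\leetainf U$ forces $\BT{N}$ to be $\bot$-free as well. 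Since $N$ has no $\beta$-normal form, its $\bot$-free tree $\BT{N}$ must then be infinite.

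For the inductive step I would compare the roots. Write $\phnf{M}=\lam x_1\dots x_n.yM_1\cdots M_m$ and $\phnf{N}=\lam x_1\dots x_{n'}.y'N_1\cdots N_{m'}$ (the latter exists, $N$ being solvable at the root). Reading $U$ from the $M$-side of Definition~\ref{def:leetainf} as $U=\lam x_1\dots x_n z_1\dots z_a.y\,V_1\cdots V_m V'_1\cdots V'_a$ shows that the head of $U$ forces $y'=y$, and that for every $j<\min(m,m')$ the $(j{+}1)$-th arguments of $M$ and of $N$ are both expansions of the same child $V_{j+1}$ of $U$, so $\Hst\vdash M_{j+1}=N_{j+1}$. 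Reading $U$ also from the $N$-side and aligning the two binder sequences of the root of $U$ identifies the surplus binders of $N$ with the fresh variables introduced on the $M$-side: if $m'=m+p$ with $p>0$, then each surplus argument satisfies $x_{n+i}\leetainf V'_i\geetainf N_{m+i}$, whence $\BT{N_{m+i}}$ is an $\eta$-expansion of the variable $x_{n+i}$, i.e.\ $\BTth\vdash N_{m+i}=\bJ_{T_i}x_{n+i}$ for some $T_i\in\Trees[]$.

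I would then split into cases. If $p>0$ and $\BT{N_{m+i}}$ is infinite for some $i$, then $T_i\in\Trees$ and $\emptyseq:M\nlem N$ holds directly; this also settles the base case $m=0$, where a single-node $\BT{M}$ together with an infinite $\BT{N}$ forces some surplus argument to carry an infinite expansion, since finitely many finite surplus trees plus the root could not make $\BT{N}$ infinite. Otherwise no surplus argument of $N$ carries an infinite expansion, so the infinitude of $\BT{N}$ lies in a shared child: there is $j<\min(m,m')$ with $\BT{N_{j+1}}$ infinite. Then $M_{j+1}\in\NF[\beta]$, $N_{j+1}$ has no $\beta$-normal form, $\Hst\vdash M_{j+1}=N_{j+1}$, and $\BT{M_{j+1}}$ has strictly fewer nodes than $\BT{M}$, so the induction hypothesis gives $\tau:M_{j+1}\nlem N_{j+1}$. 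Finally, since at the root the heads coincide and both terms have more than $j$ arguments, the subterm clauses give $M_{\concat{\Tuple{j}}{\tau}}=(M_{j+1})_\tau$ and $N_{\concat{\Tuple{j}}{\tau}}=(N_{j+1})_\tau$, so the separator conditions transfer verbatim and $\concat{\Tuple{j}}{\tau}:M\nlem N$ (the converse of the Remark following Definition~\ref{def:spine}, valid because the position $\Tuple{j}$ is legal for both trees).

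The main obstacle is the root analysis of the second paragraph, and inside it the alignment of the two presentations of $U$: one must verify that the fresh binders appended on the $M$-side are precisely $N$'s surplus binders $x_{n+1},\dots,x_{n+p}$ (using that $\eta$-expansion appends fresh $\lambda$'s in order and that the root of $U$ carries a single fixed binder list), and that an $\leetainf$-predecessor of an $\eta$-expanded variable is again an $\eta$-expanded variable, so that $\BT{N_{m+i}}$ genuinely has the canonical shape $\BT{\bJ_{T_i}x_{n+i}}$ demanded by Definition~\ref{def:spine}. This last point is a short coinductive argument on the shape of $\eta$-expansions of a variable; the remainder is bookkeeping on the finitely many nodes of $\BT{M}$.
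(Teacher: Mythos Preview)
Your proposal is correct and follows essentially the same approach as the paper: structural induction on the finite tree $\BT{M}$, a root analysis of the two head normal forms via the common $\eta$-supremum $U$ from Theorem~\ref{thm:char_Hst}, and the same case split between a surplus argument of $N$ carrying an infinite $\eta$-expansion (giving $\emptyseq$ as separator) and a shared child being infinite (handled by the inductive hypothesis). The only cosmetic differences are that you induct on the number of nodes rather than on the term structure of $\nf_\beta(M)$, and that you make explicit the short coinductive argument showing that a $\bot$-free $\leetainf$-predecessor of an $\eta$-expanded variable is itself an $\eta$-expanded variable, which the paper leaves implicit.
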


\begin{proof} 
We proceed by structural induction on the $\beta$-normal form of $M$.

Base case: $\nf_\beta(M) = \lam x_1\dots x_n.y$ for $n\ge 0$. 
By Theorem~\ref{thm:char_Hst}\ref{thm:char_Hst2} there is $U\in\BTset$ such that $\BT{M}\leeta[\omega]U\geeta[\omega] \BT{N}$.
As one can easily check by induction, since $\nf_\beta(M)$ is $\eta$-normal, this entails $\nf_\beta(M)\leeta[\omega] N$.
Then $N=_\beta \lam x_1\dots x_{n+p}.yQ_1\cdots Q_p$ with $x_{n+j} \leeta[\omega] Q_j$ for all $j\le p$.
As $N$ does not have a $\beta$-normal form, we have that $p>0$ and there is an index $0<i\le p$ such that $\lam x_{n+i}.Q_i\in\ETAset[\infty]$.
By Theorem~\ref{thm:one-one}, there is $T\in\Trees$ such that $\BTth\vdash \lam x_{n+i}.Q_i = \bJ_T$, hence $\BTth\vdash Q_i = \bJ_Tx_{n+i}$.
Therefore $\emptyseq$ is a Morris separator for $M,N$.

Induction: $\nf_\beta(M) = \lam x_1\dots x_n.yM_1\cdots M_m$ for $m>0$, where each $M_i$ is $\beta$-normal. 
By Theorem~\ref{thm:char_Hst}(ii), there is a B\"ohm-like tree $U$ such that $\nf_\beta(M)\leeta[\omega] U \geeta[\omega] \BT{N}$, say:
$$
	U = \lam x_1\dots x_{n+p} z_1\dots z_k.yU_1\cdots U_{m+p}V_1\cdots V_k
$$
and $N =_\beta \lam x_1\dots x_{n+p}.yN_1\cdots N_{m+p}$ for some integer $p$ and $k\ge 0$.
Assume $p\ge 0$, the other case being analogous.
We also have $M_j\leeta[\omega] U_j\geeta[\omega] \BT{N_j}$ for $j \le m$, 
$x_{m+j}\leeta[\omega] U_{m+j}\geeta[\omega] \BT{N_{m+j}}$ for $j\le p$ and finally
$z_{\ell}\leeta[\omega] V_{\ell}$ for $\ell\le k$.
By Theorem~\ref{thm:char_Hst}(ii), this entails that $\Hst\vdash M_j = N_j$ for $j \le m$ and $\Hst\vdash x_{m+j} = N_{m+j}$ for $j\le p$.
By hypothesis, there is an index $i\le m+p$ such that $N_i$ does not have a $\beta$-normal form. There are two subcases:
\begin{enumerate}
\item $0 <i\le m$. By the inductive hypothesis there exists $\sigma : M_i\nlem N_i$, so the Morris separator we are looking for is the position $\concat{\Tuple{i-1}}\sigma$.
\item $i > m$. By Theorem~\ref{thm:one-one}, there exists $T\in\Trees$ such that $\BTth\vdash \lam x_{i}.N_i = \bJ_T$, hence $\BTth\vdash N_i = \bJ_Tx_{i}$ and the Morris separator is $\emptyseq$. \qedhere
\end{enumerate}

\end{proof}

\begin{defi}
The following two combinators constitute the main ingredients to build our B\"ohm-out context:
$$
        \bU^n_k = \lambda x_1\dots x_n. x_k,
        \qquad\qquad
        \Tupler_n =  \lambda x_1 \dots x_n.[x_1, \dots, x_n].
$$
where we recall that $[M_1,\dots,M_n] = \lam z.zM_1\cdots M_n$ for some fresh variable $z$.
\end{defi}
The combinator $\bU^n_k$ is called \emph{the projector} and the combinator $\Tupler_n$ \emph{the tupler} since they enjoy the following properties. 
\begin{lem}\label{lemma:bcprop} Let $k\ge n\ge 0$ and $X_1,\dots,X_n,Y_1,\dots,Y_{k-n}\in\Lamo$.
\begin{enumerate}
\item\label{lemma:bcprop1} 
	$(\Tupler_kX_1\cdots X_n)Y_1\cdots Y_{k-n} =_\beta [X_1,\dots, X_n,Y_1,\dots, Y_{k-n}]$,
\item\label{lemma:bcprop2} 
	$[X_1,\dots, X_n]\,\bU^n_i =_\beta X_i$.
\end{enumerate}
\end{lem}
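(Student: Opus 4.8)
The plan is to verify both identities by direct $\beta$-reduction, simply unfolding the definitions of $\Tupler_k$, $\bU^n_i$ and the tupling bracket $[\,\cdots]$. Both parts are routine normalisations that record exactly the defining behaviour of the tupler and the projector; the only point requiring a moment of care is that the substitutions involved are capture-free, and this is precisely where the hypothesis $X_1,\dots,X_n,Y_1,\dots,Y_{k-n}\in\Lamo$ is used.

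For part \ref{lemma:bcprop1}, I would first reduce the head by contracting the first $n$ redexes of $\Tupler_k=\lam x_1\dots x_k.[x_1,\dots,x_k]$. Since each $X_i$ is a combinator, neither the remaining bound variables $x_{n+1},\dots,x_k$ nor the fresh tupling variable $z$ hidden inside the bracket can be captured, so
\[
\Tupler_k X_1\cdots X_n \msto[\beta] \lam x_{n+1}\dots x_k.[X_1,\dots,X_n,x_{n+1},\dots,x_k].
\]
Applying this to $Y_1,\dots,Y_{k-n}$ and contracting the remaining $k-n$ redexes then substitutes $Y_1,\dots,Y_{k-n}$ for $x_{n+1},\dots,x_k$, yielding $[X_1,\dots,X_n,Y_1,\dots,Y_{k-n}]$, as required.

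For part \ref{lemma:bcprop2}, I would unfold the bracket as $[X_1,\dots,X_n]=\lam z.zX_1\cdots X_n$ and feed it the projector, so that $[X_1,\dots,X_n]\,\bU^n_i \redto[\beta] \bU^n_i X_1\cdots X_n$; since $\bU^n_i=\lam x_1\dots x_n.x_i$, contracting these $n$ redexes selects the $i$-th argument and produces $X_i$, again capture-freely because the $X_j$ are closed. There is no genuine obstacle in either case: the whole proof amounts to normalising two terms, and the only thing worth flagging explicitly is that in part \ref{lemma:bcprop1} one substitutes into a partially applied tuple, so the freshness of $z$ together with the closedness of the arguments must be invoked to justify that the intermediate term is indeed $[X_1,\dots,X_n,x_{n+1},\dots,x_k]$.
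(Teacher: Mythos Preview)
Your proof is correct and is exactly the routine verification one would expect; the paper in fact states this lemma without proof, treating both identities as immediate consequences of the definitions of $\Tupler_k$, $\bU^n_i$ and $[\,\cdots]$. Your explicit remark about capture-freeness (using $X_i,Y_j\in\Lamo$) is the only point worth spelling out, and you handle it appropriately.
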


When $\bU^n_k$ is substituted for $y$ in $yM_1\cdots M_{n}$, 
it extracts an instance of the subterm $M_{k}$. 
Let us consider the \lam-term $N$ whose B\"ohm tree is depicted in Figure~\ref{fig:trees}.
The applicative context $[]\bU^3_1$ extracts from $N$ the subterm $yx$ where $x$ is replaced by $\bU^3_1$.
The idea of the B\"ohm-out technique is to replace every variable along the path $\sigma$ with the correct projector. 

An issue arises when the same variable occurs several times in $\sigma$
and we must select different children in these occurrences. 
For example, to extract $N_{\langle1,0\rangle}$ from the $N$ in Figure~\ref{fig:trees}, the first occurrence of $x$ should be replaced by $\bU^3_2$, the second by $\bU^1_1$ (which is actually $\bI$).

The problem was originally solved by B\"ohm in~\cite{bohm68} by first replacing the occurrences of the same variables along 
the path by different variables using the tupler, and then replacing each variable by the suitable projector. 
In the example under consideration, the applicative context 
$[]\Tupler_3\Omega\bU^3_2\bU^1_1\Omega\Omega\bU^3_1$ extracts from $N$ the instance of $N_{\langle1,0\rangle}$ where $z_0$ is replaced by $\bI$. 

Obviously, finite $\eta$-differences can be destroyed during the process of B\"ohming out. 
In contrast, we show that infinite $\eta$-differences can always be preserved.
In the following lemma we simply write $\subst{\seq y\,}{\Tupler_k}$ for the sequence of substitutions $\subst{y_1}{\Tupler_k}\cdots\subst{y_n}{\Tupler_k}$.

\begin{lem}[B\"ohm-out]\label{lemma:Bohmout}
Let $M,N\in\Lam$ be such that $\Hst\vdash M = N$, let $\seq y = \FV{MN}$ and $\sigma : M\nlem N$.
Then for all $k\in\nat$ large enough, there are combinators $\vec X\in\Lamo$ such that 
$$
	M\subst{\seq y\,}{\Tupler_k}\vec X =_\beta \bI\textrm{ and }N\subst{\seq y\,}{\Tupler_k}\vec X =_{\BTth} \bJ_T
$$ for some $T\in\Trees$.
\end{lem}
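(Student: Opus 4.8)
The plan is to run B\"ohm's extraction technique along the spine leading to $\sigma$, but to tune it so that it destroys every finite $\eta$-difference yet preserves the single infinite one recorded by the separator. Concretely, I would proceed by induction on the length $\len\sigma$, using the remark following Definition~\ref{def:spine} (if $\sigma=\Tuple{j}\star\tau$ then $\tau : M_{j+1}\nlem N_{j+1}$) together with the fact, read off from Theorem~\ref{thm:char_Hst}\ref{thm:char_Hst2} exactly as in the proof of Proposition~\ref{prop:Morrissep}, that corresponding subterms of $\Hst$-equal terms are again $\Hst$-equal. At every visited node the head variable is forced to become the tupler $\Tupler_k$---through the global substitution $\subst{\seq y}{\Tupler_k}$ if it is free, or by feeding $\Tupler_k$ to the abstraction that binds it if it is bound---and by Lemma~\ref{lemma:bcprop}\ref{lemma:bcprop1} this tupler absorbs \emph{exactly} $k$ arguments before returning a $k$-tuple, from which a projector $\bU^k_{\cdot}$ selects the next node on the spine.

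In the base case $\sigma=\emptyseq$ we have $M=_\beta\lam x_1\dots x_n.yM_1\cdots M_m$ and $N=_\beta\lam x_1\dots x_{n+p}.yN_1\cdots N_{m+p}$ with $N_{m+i}=_\BTth\bJ_T x_{n+i}$ for some $0<i\le p$. Writing the applied sequence as $A_1A_2\cdots$, I would make $y$ into $\Tupler_k$, set $A_{n+i}=\bI$, take all other binder-arguments and fillers to be $\Tupler_k$, and place $\bU^k_{m+i}$ immediately after the head tupler fires. Since the tupler swallows $k$ arguments in both terms, the $M$-tuple is $[M_1[\seq A],\dots,M_m[\seq A],A_{n+1},\dots,A_{n+k-m}]$ while the $N$-tuple is $[N_1[\seq A],\dots,N_{m+p}[\seq A],A_{n+p+1},\dots,A_{n+k-m}]$, so the index $m+i$ lands on $A_{n+i}=\bI$ in the first and on $N_{m+i}[\seq A]=_\BTth\bJ_T\bI$ in the second. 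Thus $M\subst{\seq y}{\Tupler_k}\seq X=_\beta\bI$, while Lemma~\ref{lemma:Texas}\ref{lemma:Texas2} gives $\bJ_T\bI\in\ETAset[\infty]$, whence by Theorem~\ref{thm:one-one} $N\subst{\seq y}{\Tupler_k}\seq X=_\BTth\bJ_{T'}$ for some $T'\in\Trees$. This works for every $k\ge m+p$.

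For the step $\sigma=\Tuple{j}\star\tau$ I would feed $\Tupler_k$ to all the top abstractions (and to the head variable), pad with $\Tupler_k$'s until the head tupler fires, and then apply $\bU^k_{j+1}$ to descend into child $j$. Because the tupler once more consumes exactly $k$ arguments, this single prefix of $\seq X$ drives $M$ to $M_{j+1}$ and $N$ to $N_{j+1}$ while simultaneously replacing \emph{all} their free variables---the original $\seq y$ and the abstractions $x_1,\dots,x_{n+p}$ lying above $\sigma$---by $\Tupler_k$; substituting $\Tupler_k$ for an absent variable is harmless, so this is precisely the instance needed to invoke the induction hypothesis on $\tau : M_{j+1}\nlem N_{j+1}$. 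Concatenating this navigation prefix with the combinators returned by the induction hypothesis yields the required $\seq X$; choosing one $k$ above every arity occurring along the finite spine accounts for the ``large enough'' clause.

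The crux---and the reason the ordinary B\"ohm-out does not apply verbatim---is that, by Definition~\ref{def:spine}, $M$ and $N$ carry different numbers of abstractions and arguments at each node (the $\eta$-padding $p$), so no fixed applicative context could a priori steer both terms down the same spine. The tupler is exactly the device that repairs this: reacting only to the \emph{total} count of arguments received, it reaches its firing point after $k$ arguments in $M$ and in $N$ alike, making one projector position correct for both at once. The remaining point of care is that turning bound head variables into $\Tupler_k$ never disturbs the lone infinite $\eta$-expansion sitting at the separator---and this is guaranteed by the closure property $\bJ_T\bI\in\ETAset[\infty]$ of Lemma~\ref{lemma:Texas}\ref{lemma:Texas2}, which is what lets an infinite difference survive the extraction even though finite ones are wiped out.
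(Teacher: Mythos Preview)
Your proposal is correct and follows essentially the same approach as the paper: induction on the length of $\sigma$, turning every head variable along the spine into the tupler $\Tupler_k$, padding to a uniform arity $k$, projecting with $\bU^k_{\cdot}$, and in the base case concluding via Lemma~\ref{lemma:Texas}\ref{lemma:Texas2} and Theorem~\ref{thm:one-one} that $\bJ_T\bI =_\BTth \bJ_{T'}$. The only differences are cosmetic: the paper pads with $\bI^{\sim p}\Omega^{\sim k-m-p}$ in the base case and $\Tupler_k^{\sim p}\Omega^{\sim k-m-p}$ in the inductive case, whereas you use $\Tupler_k$ uniformly (with the single $\bI$ at slot $n+i$); since the projector never selects a filler position, either choice works.
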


\begin{proof} We proceed by induction on $\sigma$.
\smallskip

\noindent {Base case} $\sigma = \emptyseq$. 
Then there exists $i>0$ such that, for some $p\ge i$, we have:
$$
	M =_\beta \lam x_1\dots x_{n}.yM_1\cdots M_{m}	\textrm{ and }
    N =_\beta \lam x_1\dots x_{n+p}.yN_1\cdots N_{m+p}	
$$
where $\BTth\vdash N_{m+i} = \bJ_T x_{n+i}$.
For any $k\ge n+m+p$ let us set 
$$
	\seq X = \Tupler_k^{\sim n}\bI^{\sim p}\Omega^{\sim k-m-p}\, \bU^k_{m+i},
$$ where we recall that $M^{\sim n}$ denotes the sequence of \lam-terms containing $n$ copies of $M$. 

We split into cases depending on whether $y$ is free or $y = x_j$ for some $j\le n$. We consider the former case, as the latter is analogous.
On the one side we have:
$$
	\begin{array}{ll}
	(\lam x_1\dots x_{n}.yM_1\cdots M_{m})\subst{\seq y\,}{\Tupler_k}\seq X =\\
	(\lam x_1\dots x_{n}.\Tupler_kM'_1\cdots M'_{m})\seq X =_\beta 
	&\textrm{ where } M'_\ell = M_\ell\,\subst{\seq y\,}{\Tupler_k},\\	
	(\Tupler_kM''_1\cdots M''_{m})\bI^{\sim p}\Omega^{\sim k-m-p}\,\bU^k_{m+i} =_\beta 
	&\textrm{ where } M''_\ell = M'_\ell\,\subst{\seq x\,}{\Tupler_k},\\		
	(\lam z.zM''_1\cdots M''_{m}\bI^{\sim p}\Omega^{\sim k-m-p})\bU^k_{m+i} =_\beta \bI
	&\textrm{ by Lemmas~\ref{lemma:bcprop}\eqref{lemma:bcprop1} and \ref{lemma:bcprop}\eqref{lemma:bcprop2}}.\\
	\end{array}
$$
On the other side, we have:
$$
	\begin{array}{ll}
	(\lam x_1\dots x_{n+p}.yN_1\cdots N_{m+p})\subst{\seq y\,}{\Tupler_k}\seq X =\\
	(\lam x_1\dots x_{n+p}.\Tupler_kN'_1\cdots N'_{m+p})\seq X =_\beta
	&\textrm{ for }N'_\ell = N_\ell\,\subst{\seq y\,}{\Tupler_k},\\
	(\lam x_{n+1}\dots x_{n+p}.\Tupler_kN''_1\cdots N''_{m+p})\bI^{\sim p}\Omega^{\sim k-m-p}\, \bU^k_{m+i} =_\beta 
	&\textrm{ for }N''_\ell = N'_\ell\,\subst{x_1,\dots,x_n}{\Tupler_k},\\
	(\Tupler_kN'''_1\cdots N'''_{m+p})\Omega^{\sim k-m-p}\, \bU^k_{m+i}=_\beta
	&\textrm{ for }N'''_\ell = N''_\ell\,\subst{x_{n+1},\dots,x_{n+p}}{\bI},\\
	(\lam z.zN'''_1\cdots N'''_{m+p}\Omega^{\sim k-m-p})\, \bU^k_{m+i}=_\beta
	&\textrm{ by Lemma~\ref{lemma:bcprop}\eqref{lemma:bcprop1},}\\		
	N'''_{m+i} = (\bJ_T x_{n+i})\subst{x_{n+i}}{\bI} = \bJ_T\bI
	&\textrm{ by Lemma~\ref{lemma:bcprop}\eqref{lemma:bcprop2},}\\
 =_\BTth \bJ_{T'}\textrm{ for some $T'\in\Trees$}	&\textrm{ by Thm.~\ref{thm:one-one} and Lemma~\ref{lemma:Texas}\ref{lemma:Texas2}}.\\
	\end{array}
$$
\smallskip

\noindent{Induction case} $\sigma = \Tuple{i}\star{\sigma'}$. \\
Since $\Hst\vdash M = N$, we can apply Theorem~\ref{thm:char_Hst}\ref{thm:char_Hst2} as in the proof of Proposition~\ref{prop:Morrissep} to show that $M,N$ have similar hnf's, namely, for $n-m = n' - m'$ and $i+1\le\min\{m,m'\}$ we have:
$$
	M =_\beta \lam x_1\dots x_{n}.yM_1\cdots M_{m}
	\textrm{ and }
	N =_\beta \lam x_1\dots x_{n'}.yN_1\cdots N_{m'}
$$
where $\Hst \vdash M_j = N_j$ for all $j \le \min\{m,m'\}$ and either $y$ is free or $y = x_j$ for $j\le \min\{n,n'\}$. 
Suppose that, say, $n \le n'$. 
Then there is $p\ge 0$ such that $n'= n+p$ and $m'=m+p$.
Since $\Hst\vdash M_{i+1} = N_{i+1}$ and $\sigma' : M_{i+1} \nlem N_{i+1}$ we
apply the induction hypothesis and get, for any $k$ large enough but in particular for $k > n+m+p$, a sequence
$\seq Y\in\Lamo$ such that 
$$M_{i+1}\subst{\seq y,\seq x\,}{\Tupler_{k}}\seq Y =_\beta \bI\textrm{ and }
N_{i+1}\subst{\seq y,\seq x\,}{\Tupler_{k}}=_\BTth \bJ_T\textrm{ for some }T\in\Trees.
$$

For this index $k$, let us set 
$$
	\seq X = \Tupler_k^{\sim n+p}\Omega^{\sim k-m-p}\,\bU^k_{i+1}\seq Y.
$$

\noindent We suppose that $y$ is free, the other case being analogous.
On the one side we have:
$$
	\begin{array}{ll}
	(\lam x_1\dots x_{n}.yM_1\cdots M_{m})\subst{\seq y\,}{\Tupler_k}\,\seq X =\\
	(\lam x_1\dots x_{n}.\Tupler_kM'_1\cdots M'_{m})\Tupler_k^{\sim n+p}\Omega^{\sim k-m-p}\,\bU^k_{i+1}\seq Y =_\beta
	&\textrm{ where }M'_\ell = M_\ell\,\subst{\seq y\,}{\Tupler_k},\\	
	(\Tupler_kM''_1\cdots M''_{m})\Tupler_k^{\sim p}\Omega^{\sim k-m-p}\,\bU^k_{i+1}\seq Y =_\beta	
	&\textrm{ where }M''_\ell = M'_\ell\,\subst{\seq x\,}{\Tupler_k},\\
	(\lam z.zM''_1\cdots M''_{m}\Tupler_k^{\sim p}\Omega^{\sim k-m-p})\,\bU^k_{i+1}\seq Y =_\beta
	&\textrm{ by Lemma~\ref{lemma:bcprop}\eqref{lemma:bcprop1}},\\
	M''_{i+1}\seq Y 	= M_{i+1}\subst{\seq y,\seq x\,}{\Tupler_k}\seq Y = _\beta&\textrm{ by Lemma~\ref{lemma:bcprop}\eqref{lemma:bcprop2}},\\				
	\bI&\textrm{ by the induction hypothesis.}\\				
	\end{array}
$$
On the other side, we have:
\begin{equation*}
	\begin{array}[b]{ll}
	(\lam x_1\dots x_{n+p}.yN_1\cdots N_{m+p})\subst{\seq y\,}{\Tupler_k}\seq X =\qquad\qquad\qquad\!\\
	(\lam x_1\dots x_{n+p}.\Tupler_kN'_1\cdots N'_{m+p})\seq X=_\beta
	&\textrm{where }N'_\ell = N_\ell\,\subst{\seq y\,}{\Tupler_k},\\	
	(\Tupler_kN''_1\cdots N''_{m+p})\Omega^{\sim k-m-p}\,\bU^k_{i+1}\seq Y=_\beta
	&\textrm{where }N''_\ell = N'_\ell\,\subst{\seq x\,}{\Tupler_k},\\		
	(\lam z.zN''_1\cdots N''_{m+p}\Omega^{\sim k-m-p})\,\bU^k_{i+1}\seq Y=_\beta
	&\textrm{by Lemma~\ref{lemma:bcprop}\eqref{lemma:bcprop1}},\\
	N''_{i+1}\seq Y = N_{i+1}\subst{\seq y,\seq x\,}{\Tupler_k}\seq Y =_\BTth 
		&\textrm{by Lemma~\ref{lemma:bcprop}\eqref{lemma:bcprop2}},\\
	\bJ_T&\textrm{by the induction hypothesis.}\\		
	\end{array}
  \tag*{\!\!\!\qedhere}
\end{equation*}
\end{proof}

Thanks to the Context Lemma we obtain the Morris-separation result in its full generality.
The following theorem first appeared in~\cite{BreuvartMPR16}.


\begin{thm}[Morris Separation]\label{thm:newsep}~\\
Let $M,N\in\Lambda$ be such that $\Hst\vdash M = N$ while $\Hpl\vdash M \not\sqle N$.
There is a closed head context $C[]$ such that 
$$
	C\hole{M} =_{\beta\eta} \bI\qquad\textrm{ and }\qquad C\hole{N}=_{\BTth} \bJ_T\textrm{ for some $T\in\Trees$.}
$$ 
When $M,N\in\Lamo$ the context $C[]$ can be chosen applicative.
\end{thm}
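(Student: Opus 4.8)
The plan is to combine the Context Lemma with the two results just proved, Proposition~\ref{prop:Morrissep} and the B\"ohm-out Lemma~\ref{lemma:Bohmout}, and then repackage the resulting sequence of operations as a single head context. First I would invoke Lemma~\ref{lemma:ctx} with $\cO = \NF[\beta]$: since $\Hpl\vdash M\not\sqle N$ is exactly $M\not\obsle[\NF[\beta]]N$, there is a closed head context $C_0[] = (\lam\seq x.[])\seq P$, with the $\seq P$ closed (and applicative, i.e.\ $\seq x$ empty, when $M,N\in\Lamo$), such that $C_0[M]$ has a $\beta$-normal form while $C_0[N]$ has none. As $\Hst$ is a congruence containing $M=N$, we still have $\Hst\vdash C_0[M] = C_0[N]$. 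Writing $M' = \nf_\beta(C_0[M])$ and $N' = C_0[N]$ puts us precisely in the hypotheses of Proposition~\ref{prop:Morrissep}, so there is a position $\sigma$ with $\sigma : M'\nlem N'$; since a Morris separator depends only on B\"ohm trees, which are $\beta$-invariant, the very same $\sigma$ witnesses $\sigma : C_0[M]\nlem C_0[N]$.

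Next I would apply Lemma~\ref{lemma:Bohmout} to the pair $(C_0[M],C_0[N])$, whose free variables are $\seq y = \FV{C_0[M]\,C_0[N]} = (\FV M\cup\FV N)\setminus\{\seq x\}$ (the $\seq P$ being closed). For every $k$ large enough this produces combinators $\seq X\in\Lamo$ with
$$
	C_0[M]\subst{\seq y}{\Tupler_k}\seq X =_\beta \bI
	\qquad\textrm{ and }\qquad
	C_0[N]\subst{\seq y}{\Tupler_k}\seq X =_\BTth \bJ_T
$$
for some $T\in\Trees$. It then remains only to fuse the three successive operations --- plugging into $C_0$, substituting $\Tupler_k$ for the residual free variables $\seq y$, and applying $\seq X$ --- into one closed head context of the shape prescribed by the definition.

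This repackaging is the sole delicate point. The naive composition $(\lam\seq y.(\lam\seq x.[])\seq P)\,\Tupler_k^{\sim p}\,\seq X$ (with $p$ the length of $\seq y$) has the correct $\beta$-behaviour but is \emph{not} a head context, because its hole is not the immediate body of the leading abstraction. I would instead merge the two binders and reorder the arguments, setting
$$
	C[] = (\lam\seq y\,\seq x.[])\,\Tupler_k^{\sim p}\,\seq P\,\seq X,
$$
which genuinely is a closed head context $(\lam\seq z.[])\seq R$, and which collapses to $[]\seq P\seq X$ in the applicative case where $\seq x,\seq y$ are empty. Reducing the leading $\lam\seq y$ against the copies of $\Tupler_k$ is legitimate since $\seq y\cap\seq x=\emptyset$ and $\Tupler_k$ is closed, so no capture occurs, and it yields $C[M] =_\beta (\lam\seq x.M\subst{\seq y}{\Tupler_k})\seq P\seq X = C_0[M]\subst{\seq y}{\Tupler_k}\seq X$, using that $\seq P$ is closed so the substitution passes through the application to $\seq P$. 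The identical computation holds for $N$. Hence $C[M] =_\beta \bI$, so a fortiori $C[M] =_{\beta\eta}\bI$, and $C[N] =_\BTth \bJ_T$, as required.

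The genuinely hard work is already encapsulated in Proposition~\ref{prop:Morrissep} and Lemma~\ref{lemma:Bohmout}; for the present theorem the only real obstacle is the bookkeeping just outlined: verifying that $\Hst$-equality survives the Context-Lemma reduction, that the separator transfers between a term and its $\beta$-normal form, and, above all, that the composite of a head context, a free-variable substitution, and a block of applications can be flattened into a single context whose hole sits directly beneath the leading abstraction. The applicative case for closed $M,N$ then falls out immediately, since there $\seq x$ and $\seq y$ vanish and $C[]$ reduces to the applicative context $[]\seq P\seq X$.
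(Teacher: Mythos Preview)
Your argument is correct and follows the same route as the paper: apply the Context Lemma to reduce to the normalizing-versus-non-normalizing case, invoke Proposition~\ref{prop:Morrissep} to locate a Morris separator, and then use the B\"ohm-out Lemma~\ref{lemma:Bohmout}. The paper simply presents the final context as the composition $C_1\hole{C_2[\,]}$ with $C_1[\,] = (\lam\seq y.[\,])\seq\Tupler_k\seq X$, tacitly treating this as a head context; you are right that this composite is not literally of the shape $(\lam\seq z.[\,])\seq R$ required by the definition, and your explicit flattening $(\lam\seq y\,\seq x.[\,])\,\Tupler_k^{\sim p}\,\seq P\,\seq X$ is a cleaner way to obtain a genuine head context with the same $\beta$-behaviour.
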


\begin{proof} 
By Lemma~\ref{lemma:ctx} for $\Hpl$, we get that $\Hpl\vdash M\not\sqle N$ entails the existence of a closed head context $C_2[]$ such 
that $C_2\hole M\in_\beta\NF[\beta]$, while $C_2\hole N$ does not have a $\beta$-normal form. 
From $\Hst\vdash M = N$ we obtain $\Hst\vdash C_2\hole{M} = C_2\hole{N}$.
Therefore, by Proposition~\ref{prop:Morrissep} there exists a Morris separator $\sigma : C_2\hole{M} \nlem C_2\hole{N}$.
We can now apply Lemma~\ref{lemma:Bohmout}, and get a head context $C_1[] = (\lam\seq y.[])\seq\Tupler_k\seq X$, where $\seq y = \FV{C_2\hole{M}C_2\hole{N}}$, and such that $C_1\hole{C_2\hole{M}} =_{\beta} \bI$ and $C_1\hole{C_2\hole{N}} =_\BTth \bJ_T$ for some $T\in\Trees$.
Hence the closed head context $C[]$ we are looking for is actually $C_1\hole{C_2[]}$.
When $M,N\in\Lamo$, all the contexts can be chosen applicative.
\end{proof}

\subsection{The \lam-Theory $\BTth\omega$ is Included in $\Hpl$}

The fact that $\Hpl$ satisfies the $\omega$-rule is an easy consequence of the Morris Separation Theorem.
Without loss of generality we can focus on $(\omega^0)$, namely the restriction of $\omega$ to closed \lam-terms.


\begin{lem}\label{lemma:key2} 
Let $M, N \in\Lamo$ be such that $M\in_\beta\NF[\beta]$, while $N$ does not have a $\beta$-normal form.
Then, there exist $k\ge 1$ and combinators $Z_1,\dots,Z_k\in\Lambda^o$ such that 
\begin{itemize}
\item $MZ_1\cdots Z_k\in_\beta\NF[\beta]$ while 
\item $NZ_1\cdots Z_k$ does not have a $\beta$-normal form.
\end{itemize}
\end{lem}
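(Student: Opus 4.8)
The plan is to split on whether $M$ and $N$ are already separated by the theory $\Hst$, thereby concentrating the substantive content in the Morris Separation Theorem~\ref{thm:newsep} just established. First I would record the elementary facts used throughout. Since $M\in_\beta\NF[\beta]$, $M$ is solvable, so there are closed terms realising $M\vec R =_\beta\bI$; moreover a term with a $\beta$-normal form has a finite $\bot$-free B\"ohm tree, whereas a term whose B\"ohm tree is infinite (or contains $\bot$) has no $\beta$-normal form. The delicate point, reflected in the requirement $k\ge 1$, is to keep the separating sequence \emph{non-empty} without reintroducing a normal form on the $N$-side; in both cases this is arranged by appending a trailing copy of $\bI$, which is harmless on the $M$-side because $\bI\bI=_\beta\bI$, and harmless on the $N$-side by the closure properties recalled below.

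\emph{Case A}: $\Hst\vdash M=N$. Since $M\in_\beta\NF[\beta]$ while $N$ has no $\beta$-normal form, the empty context already witnesses $\Hpl\vdash M\not\sqle N$, so the hypotheses of Theorem~\ref{thm:newsep} are met. As $M,N\in\Lamo$, that theorem supplies a closed applicative context $[]\,P_1\cdots P_j$ with $MP_1\cdots P_j=_{\beta\eta}\bI$ and $NP_1\cdots P_j=_{\BTth}\bJ_T$ for some $T\in\Trees$. I would then take $Z_i=P_i$ for $i\le j$ and $Z_{j+1}=\bI$, so $k=j+1\ge 1$. On the one hand $MZ_1\cdots Z_k=_{\beta\eta}\bI\bI=_{\beta\eta}\bI$, which has a $\beta\eta$-normal form and hence a $\beta$-normal form, giving $MZ_1\cdots Z_k\in_\beta\NF[\beta]$. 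On the other hand $NZ_1\cdots Z_k=_{\BTth}\bJ_T\bI$, and since $\bJ_T\in\ETAset[\infty]$ by Theorem~\ref{thm:one-one}, Lemma~\ref{lemma:Texas}\ref{lemma:Texas2} gives $\bJ_T\bI\in\ETAset[\infty]$; its B\"ohm tree is infinite, so $NZ_1\cdots Z_k$ has no $\beta$-normal form.

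\emph{Case B}: $\Hst\vdash M\neq N$. Here I would first show that necessarily $\Hst\vdash M\not\sqle N$. Suppose instead $\Hst\vdash M\sqle N$; by Theorem~\ref{thm:char_Hst}\ref{thm:char_Hst1} there are $U,V$ with $\BT M\leetainf U\BTle V\geetainf\BT N$. Since $\BT M$ is $\bot$-free and a possibly-infinite $\eta$-expansion never introduces $\bot$, the tree $U$ is $\bot$-free; hence $U\BTle V$ forces $U=V$, and then $\BT M\leetainf U\geetainf\BT N$ yields $\Hst\vdash M=N$ by Theorem~\ref{thm:char_Hst}\ref{thm:char_Hst2}, a contradiction. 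With $\Hst\vdash M\not\sqle N$ in hand, the Context Lemma~\ref{lemma:ctx} for $\cO=\HNF$ provides a closed applicative context $[]\,\vec P$ with $M\vec P\in_\beta\HNF$ and $N\vec P\notin_\beta\HNF$, i.e.\ $M\vec P$ solvable and $N\vec P$ unsolvable. Solvability of the closed term $M\vec P$ gives closed $\vec R$ with $M\vec P\vec R=_\beta\bI$, while unsolvability of $N\vec P$ is preserved under further application, so $N\vec P\vec R$ is unsolvable. Taking the concatenation of $\vec P$ and $\vec R$ (appending a final $\bI$ if this sequence is empty, which is harmless since $\bI$ fixes $\bI$ and preserves unsolvability) yields $k\ge 1$, $MZ_1\cdots Z_k=_\beta\bI\in_\beta\NF[\beta]$, and $NZ_1\cdots Z_k$ unsolvable, hence without a $\beta$-normal form.

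The genuinely hard step is \emph{Case A}, which rests entirely on the refined B\"ohm-out encapsulated in Theorem~\ref{thm:newsep}; the point peculiar to this lemma is securing $k\ge 1$, and the crux there is the closure property $\ETAset[\infty]\,\bI\subseteq\ETAset[\infty]$ of Lemma~\ref{lemma:Texas}\ref{lemma:Texas2}, which is exactly what licenses padding the separating context with identities. \emph{Case B} is comparatively routine once one observes that the ``wrong'' direction $\Hst\vdash N\sqle M$ is incompatible with $M$ having a normal form, so that only the tractable inequality $\Hst\vdash M\not\sqle N$ survives, where classical solvability and preservation of unsolvability under application do the work.
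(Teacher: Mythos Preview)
Your proof is correct and follows essentially the same approach as the paper's: split on whether $M$ and $N$ are $\Hst$-related, invoke Theorem~\ref{thm:newsep} (and Lemma~\ref{lemma:Texas}\ref{lemma:Texas2} to pad with $\bI$) in the one case, and use semi-separability/solvability in the other. The only cosmetic differences are that the paper splits on $\Hst\vdash M\sqle N$ rather than $\Hst\vdash M=N$ (these coincide here, by the very argument you give in Case~B), appends $\bI$ only when the context is empty rather than always, and cites ``the usual semi-separability theorem'' where you unfold it via the Context Lemma plus solvability.
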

\begin{proof}
By hypothesis $\Hpl\vdash M\not\sqle N$. There are two possible cases.

Case $\Hst\vdash M\sqle N$. 
Since $M\in\NF[\beta]$, Theorem~\ref{thm:char_Hst}\ref{thm:char_Hst1} actually entails $\Hst\vdash M= N$. 
Therefore, by Theorem~\ref{thm:newsep} there are $Z_1,\dots,Z_k\in\Lambda^o$ and $T\in\Trees$
such that $MZ_1\cdots Z_k =_{\beta} \bI$ and $NZ_1\cdots Z_k =_{\BTth} \bJ_T$.
If $k = 0$ just take $Z_1 = \bI$ and conclude since $\bJ_T\bI  =_\BTth \bJ_{T'}$ for some $T'\in\Trees$ by Theorem~\ref{thm:one-one} and Lemma~\ref{lemma:Texas}\ref{lemma:Texas2}.

Case $\Hst\vdash M\not\sqle N$. 
By the usual semi-separability theorem, there are $Z_1,\dots,Z_k\in\Lambda^o$ such that $M\vec Z =_\beta \bI$ and $N\vec Z =_\cH \Omega$. 
When $k = 0$ we can take again $Z_1 = \bI$ since $\Omega\bI =_\cH \Omega$. 
\end{proof}

\begin{lem}\label{lemma:omega0}
 Let $M, N \in\Lambda^o$. 
If $\forall Z\in\Lambda^o, MZ =_{\Hpl} NZ$, then 
$$
	\forall \seq P\in\Lambda^o ( M\seq P 	\in_\beta\NF[\beta] \iff N\seq P \in_\beta\NF[\beta]).
$$
\end{lem}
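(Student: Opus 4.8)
The plan is to split on whether the sequence $\vec P$ is empty, observing that only the empty case carries genuine content. For the nonempty case I would argue directly from the hypothesis, while the empty case I would reduce to the nonempty one by invoking Lemma~\ref{lemma:key2}.

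First I would dispose of the case $\vec P = P_1\cdots P_n$ with $n\ge 1$. By hypothesis $MP_1 =_\Hpl NP_1$, and since $\Hpl$ is a \lam-theory, hence a congruence compatible with application, applying $P_2,\dots,P_n$ one at a time yields $M\vec P =_\Hpl N\vec P$. As $=_\Hpl$ is by definition the observational equivalence $\obseq[\NF[\beta]]$, instantiating its defining quantification over contexts with the trivial context $[]$ gives $M\vec P\in_\beta\NF[\beta]\iff N\vec P\in_\beta\NF[\beta]$, which is exactly what is wanted. The one mechanical point to make explicit here is that congruence is precisely what bridges the gap between the hypothesis, which only speaks of the single applications $MZ,NZ$, and the conclusion, which concerns arbitrary iterated applications.

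It then remains to treat $\vec P = \emptyseq$, i.e.\ to show $M\in_\beta\NF[\beta]\iff N\in_\beta\NF[\beta]$, and here I would argue by contradiction. Since both the hypothesis and the goal are symmetric in $M$ and $N$, I may assume without loss of generality that $M\in_\beta\NF[\beta]$ while $N\notin_\beta\NF[\beta]$. By confluence of $\beta$-reduction, $N\notin_\beta\NF[\beta]$ is equivalent to $N$ having no $\beta$-normal form, so the hypotheses of Lemma~\ref{lemma:key2} are met. That lemma supplies $k\ge 1$ and combinators $Z_1,\dots,Z_k\in\Lamo$ such that $MZ_1\cdots Z_k\in_\beta\NF[\beta]$ while $NZ_1\cdots Z_k$ has no $\beta$-normal form. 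But $Z_1\cdots Z_k$ is a \emph{nonempty} sequence, so the case already settled applies and forces $MZ_1\cdots Z_k\in_\beta\NF[\beta]\iff NZ_1\cdots Z_k\in_\beta\NF[\beta]$, a contradiction.

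The proof is short precisely because all the difficulty has been absorbed into Lemma~\ref{lemma:key2}, whose content is exactly that a distinction visible under the \emph{empty} applicative context can always be reproduced under a suitable \emph{nonempty} one; this is where the Morris Separation Theorem (Theorem~\ref{thm:newsep}) together with the completion argument adding copies of $\bI$ does the real work. Consequently the only genuine obstacle within this lemma is conceptual rather than technical: recognizing that the hypothesis $\forall Z.\,MZ=_\Hpl NZ$ deliberately constrains behaviour only under nonempty applications, so the empty-context case must be transferred into a nonempty one. The remaining care points are the symmetry reduction and the identification of ``$N\notin_\beta\NF[\beta]$'' with ``$N$ has no $\beta$-normal form''; beyond these the argument is a routine transfer of the nonempty case to the empty case via congruence of $\Hpl$.
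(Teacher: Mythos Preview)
Your proposal is correct and follows essentially the same approach as the paper: both handle the nonempty case directly from the hypothesis (you via congruence of $\Hpl$ and the trivial context, the paper via the restated hypothesis $\forall Z,\vec Q.\,MZ\vec Q\in_\beta\NF[\beta]\iff NZ\vec Q\in_\beta\NF[\beta]$), and both reduce the empty case to Lemma~\ref{lemma:key2} by contrapositive. Your exposition is in fact more explicit than the paper's, which phrases the argument as an induction on the length of $\vec P$ but never genuinely uses the inductive hypothesis.
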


\begin{proof} Suppose that for all $Z,\seq Q\in\Lambda^o$, $MZ\seq Q	\in_\beta\NF[\beta]$ if and only if $NZ\seq Q	\in_\beta\NF[\beta]$.
We show by induction on the length $k$ of $\seq P\in\Lambda^o$ that 
$M\seq P	\in_\beta\NF[\beta]$ if and only if $N\seq P\in_\beta\NF[\beta]$.

{Base: $k=0$}. Since the contrapositive holds by Lemma~\ref{lemma:key2}.

{Induction: $k>0$}. It follows trivially from the induction hypothesis.
\end{proof}

As a consequence, we get the main result of this section.

\begin{thm}\label{thm:Hplomega} 
$\Hpl$ satisfies the $\omega$-rule, therefore $\BTo\subseteq\Hpl$.
\end{thm}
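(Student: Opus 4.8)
The plan is to reduce to the closed case and then simply assemble the preceding lemmas, since the genuine difficulty has already been discharged. By Lemma~\ref{lemma:omega-props}\ref{lemma:omega-props2}, a \lam-theory satisfies the $\omega$-rule exactly when it satisfies $(\omega^0)$, so it suffices to prove: for all $M,N\in\Lamo$, if $MP =_{\Hpl} NP$ for every $P\in\Lamo$, then $M =_{\Hpl} N$.

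Assume this hypothesis. First I would feed it into Lemma~\ref{lemma:omega0}, whose premise is precisely that $MZ =_{\Hpl} NZ$ holds for all $Z\in\Lamo$; this yields that for every finite sequence $\seq P\in\Lamo$ (including the empty one) we have $M\seq P\in_\beta\NF[\beta]$ if and only if $N\seq P\in_\beta\NF[\beta]$. In other words, $M$ and $N$ cannot be told apart by any closed applicative context $[]\seq P$. Now I would invoke the Context Lemma for $\cO=\NF[\beta]$ (Lemma~\ref{lemma:ctx}): for closed terms, $M\not\obsle[\NF[\beta]]N$ forces a \emph{closed applicative} context separating the two, and symmetrically for the other direction. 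Since no such context exists either way, we conclude $M\obseq[\NF[\beta]]N$, i.e.\ $M=_{\Hpl}N$. This establishes $(\omega^0)$ and hence the $\omega$-rule for $\Hpl$.

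For the inclusion $\BTo\subseteq\Hpl$, recall that $\BTo=\BTth\omega$ and that $\BTth\subseteq\Hpl$ (terms with equal B\"ohm trees are contextually indistinguishable). Because $\Hpl$ now satisfies the $\omega$-rule we have $\Hpl\omega=\Hpl$, and monotonicity of the $\omega$-closure (Lemma~\ref{lemma:omega-props}\ref{lemma:omega-props3}) gives $\BTth\omega\subseteq\Hpl\omega=\Hpl$. Thus $\BTo\subseteq\Hpl$, as required.

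The conceptual weight of the whole argument sits in the two preceding lemmas rather than here: Lemma~\ref{lemma:key2}, which uses the Morris Separation Theorem (Theorem~\ref{thm:newsep}) to \emph{complete} any distinguishing applicative context into a non-empty one, and Lemma~\ref{lemma:omega0}, which propagates this to arbitrary lengths by induction. Given this infrastructure, the only things to watch are bookkeeping: that the hypothesis of $(\omega^0)$ matches the premise of Lemma~\ref{lemma:omega0} verbatim, and that the Context Lemma is applied in the closed setting so that applicative contexts suffice. Were the infrastructure not already in place, the main obstacle would be exactly the content of Lemma~\ref{lemma:key2} --- guaranteeing that the separating context can be taken non-empty (by appending a copy of $\bI$ when $k=0$, using that $\bJ_T\bI=_{\BTth}\bJ_{T'}$ via Theorem~\ref{thm:one-one} and Lemma~\ref{lemma:Texas}\ref{lemma:Texas2}) --- since this is precisely what makes the reduction to $(\omega^0)$ go through.
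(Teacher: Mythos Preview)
Your proposal is correct and follows essentially the same approach as the paper. The only difference is that you make explicit the appeal to the Context Lemma (Lemma~\ref{lemma:ctx}) needed to pass from the conclusion of Lemma~\ref{lemma:omega0} to $M=_{\Hpl}N$, whereas the paper's proof compresses this into the single assertion ``Lemma~\ref{lemma:omega0} shows that $\Hpl\vdash\omega^0$''.
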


\begin{proof}
Lemma~\ref{lemma:omega0} shows that $\Hpl\vdash \omega^0$, which is equivalent to $\Hpl\vdash \omega$ by Lemma~\ref{lemma:omega-props}\ref{lemma:omega-props2}.
By Lemma~\ref{lemma:omega-props}\ref{lemma:omega-props3}, $\BTth\subseteq \Hpl$ entails $\BTo\subseteq \Hpl\omega$ and from $\Hpl\vdash \omega$ we get $\Hpl\omega = \Hpl$.
\end{proof}


\section{Building B\"ohm Trees by Codes and Streams}\label{sec:streams}

Now that we have shown $\BTo\subseteq\Hpl$, we focus our efforts to prove the converse inclusion.
The key step in our proof is to show that when $\Hpl\vdash M = N$ holds the B\"ohm-like tree $U$ of Theorem~\ref{thm:CharacterizationHpl} giving the ``$\eta$-supremum'' of $\BT{M}$ and $\BT{N}$ can always be chosen to be the B\"ohm tree of a \lam-term $P$ (Proposition~\ref{prop:Hplchar}).
Intuitively, the \lam-term $P$ that we construct is going to inspect the structure of $M$ and $N$ looking at their codes and retrieve the correct $\eta$-expansion to apply from a suitable stream.

\subsection{Building B\"ohm Trees by Codes}
We start by showing that the B\"ohm tree of a \lam-term can be reconstructed from its code.

\begin{defi}\ 
\begin{itemize}
\item Let us fix an effective one-to-one encoding  $\# : \Lam\to \nat$ associating with every \lam-term $M$ its \emph{code} $\#M$ (the G\"odel number of $M$).
\item
	The \emph{quote} $\code{M}$ of $M$ is the corresponding numeral $ \Church{\# M}$. 
\end{itemize}
\end{defi}
We recall some well established facts from \cite[\S8.1]{Bare}.
By \cite[Thm.~8.1.6]{Bare}, there is a \lam-term $\bE\in\Lamo$ such that $\bE\code{M} = M$ for all $M\in\Lamo$. This is false in general for open \lam-terms $M$.

\begin{rem}\label{rem:effectiveness} The following operations are effective:
\begin{itemize}
\item from $\# M$ compute $\#M'$ where $M\redto[h] M'$ (since the head-reduction is an effective reduction strategy),
\item from $\num{\lam\seq x.yM_1\cdots M_k}$ compute $\#M_i$ for $i\le k$,
\item from $\# M$ compute $\num{\lam x_1\dots x_n.M}$ for $x_1,\dots,x_n\in\Var$.
\end{itemize}
\end{rem}
\noindent From Remark~\ref{rem:effectiveness}, it follows that the combinator~$\Phi$ below exists and can be defined using the fixed point combinator~$\bY$.

\begin{defi}\label{def:Phi}
Let $\Phi\in\Lamo$ be such that for all $M\in\Lamo$:
\begin{itemize}
\item if $M\msto[h] \lam  x_1\dots x_n.x_jM_1\cdots M_k$, then
$$\Phi \code{M} =_\beta \lam \fresh x_1\dots\fresh x_n.\fresh{x_j}(\Phi\code{\lam\seq x.M_1}\fresh x_1\cdots\fresh x_n)\cdots (\Phi\code{\lam\seq x.M_k}\fresh x_1\cdots\fresh x_n)
$$
where the $\fresh{x}_i$ are underlined to stress the fact that they are fresh variables.
\item $\Phi \code{M}$ is unsolvable whenever $M$ is unsolvable.
\end{itemize}
\end{defi}
The term $\Phi$ builds the B\"ohm tree of $M$ from its code $\#M$.
Notice that the closure $\lam\seq x.M_i$ on the recursive calls is needed to obtain a closed term (since $M\in\Lamo$ entails $\FV{M_i}\subseteq \seq x$).
In the definition above we use the fact that $\BT{\lam\seq x.M} = \lam\seq x.\BT{M}$, under the hypotheses that $\lam x.\bot = \bot$ and $\bot M =\bot$, thus the free variables $x_1,\dots,x_n$ can be reapplied externally. 

\begin{lem} For all $M\in\Lamo$, we have $\BTth\vdash \Phi\code{M} = M$.
\end{lem}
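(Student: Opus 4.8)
The statement $\BTth\vdash\Phi\code M = M$ unfolds, by definition of $\BTth$, to the equality of B\"ohm trees $\BT{\Phi\code M} = \BT M$ for every combinator $M$. The plan is to establish this by coinduction on the structure of B\"ohm trees, using the two defining clauses of $\Phi$ (Definition~\ref{def:Phi}) together with the effectiveness of head reduction recorded in Remark~\ref{rem:effectiveness}. The argument runs in parallel to the coinductive proof already carried out for the term $X$ in Theorem~\ref{thm:one-one}.

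First I would split on whether $M$ is solvable. If $M$ is unsolvable, then $\BT M = \bot$, and by the second clause of Definition~\ref{def:Phi} the term $\Phi\code M$ is unsolvable as well, so $\BT{\Phi\code M} = \bot$; the two trees agree and this node is a leaf. If $M$ is solvable, let its principal hnf be $\phnf M = \lam x_1\dots x_n.x_jM_1\cdots M_k$, where the head variable is one of the $x_i$ because $M$ is closed. Then $\BT M = \lam\seq x.x_j\BT{M_1}\cdots\BT{M_k}$, while the first clause of Definition~\ref{def:Phi} gives
$$
\Phi\code M =_\beta \lam\fresh{x}_1\dots\fresh{x}_n.\fresh{x}_j(\Phi\code{\lam\seq x.M_1}\fresh{x}_1\cdots\fresh{x}_n)\cdots(\Phi\code{\lam\seq x.M_k}\fresh{x}_1\cdots\fresh{x}_n).
$$
Up to the renaming $\fresh{x}_i\mapsto x_i$ the two roots carry the same label $\lam\seq x.x_j$ with the same number $k$ of children, so it remains to match, for each $i\le k$, the subtree $\BT{\Phi\code{\lam\seq x.M_i}\fresh{x}_1\cdots\fresh{x}_n}$ against $\BT{M_i}$.

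This is where the coinductive hypothesis enters, and it is the only delicate point. Each closure $\lam\seq x.M_i$ is again a combinator (its free variables are among $\seq x$), so the coinductive hypothesis applies to it and yields $\BT{\Phi\code{\lam\seq x.M_i}} = \BT{\lam\seq x.M_i} = \lam\seq x.\BT{M_i}$; reapplying the fresh variables $\fresh x_1,\dots,\fresh x_n$ then peels off the $n$ leading abstractions and recovers $\BT{M_i}$ up to renaming. This is exactly the remark following Definition~\ref{def:Phi} that the variables $x_1,\dots,x_n$ can be reapplied externally, using $\lam x.\bot=\bot$ and $\bot M=\bot$ in the unsolvable case.

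The main obstacle is making this reapplication step formally legitimate inside the coinduction: the child produced on the $\Phi$-side is $\Phi\code{\lam\seq x.M_i}$ \emph{applied to variables}, which is not literally of the form $\Phi\code{(-)}$, so the naive invariant $\setof{(\BT{\Phi\code P},\BT P)\st P\in\Lamo}$ is not closed under taking children. I would resolve this by strengthening the coinductive invariant to
$$
	R = \setof{(\BT{\Phi\code P\,y_1\cdots y_r},\ \BT{P\,y_1\cdots y_r})\st P\in\Lamo,\ r\ge 0,\ y_1,\dots,y_r\text{ distinct variables}},
$$
which contains the desired pair (take $r=0$) and is closed under the two cases above: in the solvable case the $\beta$-reduction of $P\,y_1\cdots y_r$ against its principal hnf produces children that are again subterms applied to distinct variables, matching the children on the $\Phi$-side clause by clause. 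Verifying this closure amounts to the routine but slightly tedious bookkeeping of how the applied variables interleave with the abstractions and the head variable of $\phnf P$, distinguishing the cases $r\le n$ and $r>n$; the unsolvable case is immediate since an unsolvable term stays unsolvable under application. Once $R$ is shown to be a B\"ohm-tree bisimulation, the claim for $M$ follows by instantiating $r=0$.
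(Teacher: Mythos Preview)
Your proposal is correct and takes essentially the same approach as the paper: both recognize that the naive coinductive invariant is not closed under taking children and strengthen it to track the fresh variables applied to $\Phi\code{\lam\seq x.M_i}$. The paper's strengthened statement is $\BT{\Phi\code{\lam\seq y.M}y_1\cdots y_m}=\BT{M}$ for open $M$ with $\FV{M}\subseteq\{y_1,\dots,y_m\}$, which is your relation $R$ specialized so that the number of applied variables always matches the leading abstractions of the closure, thereby sidestepping the $r\le n$ versus $r>n$ case split you mention.
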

\begin{proof} We prove by coinduction the generalization of the statement to open \lam-terms, namely: for all $M\in\Lam$ and $\{y_1,\dots,y_m\}\supseteq\FV{M}$, we have $\BT{\Phi\code{\lam\seq y.M}y_1\cdots y_m} = \BT{M}$.

If $M$ is unsolvable then $\BT{\Phi\code{\lam\seq y.M}y_1\cdots y_m} = \BT{M} = \bot$.

Otherwise $M$ is solvable, so we have $M\msto[h] \lam x_1\dots x_n.yM_1\cdots M_k$ and $\Phi\code{\lam\seq y.M}\seq y =_\beta \lam x_1\dots x_n.y(\Phi\code{\lam\seq y\seq x.M_1}y_1\cdots y_mx_1\cdots x_n)\cdots (\Phi\code{\lam\seq y\seq x.M_k}y_1\cdots y_mx_1\cdots x_n)$.
We  conclude\linebreak since by coinductive hypothesis we have $\BT{\Phi\code{\lam\seq y\seq x.M_i}\seq y\seq x} = \BT{M_i}$ for every $i\le k$.
\end{proof}
The generalization to open terms above is formally needed to apply the coinductive hypothesis, but complicates the statements and the corresponding demonstrations.
Hereafter, it will be omitted and our proofs will be slightly more informal for the sake of readability.
%
%

\subsection{$\eta$-Expanding B\"ohm Trees from Streams}\label{subsec:building_etamax}

The existence of $\Phi$ should not be particularly surprising, Barendregt's enumerator $\bE$ being an example of such a \lam-term.
The recursive equation of Definition~\ref{def:Phi} has the advantage of exposing the structure of the B\"ohm tree and, doing so, opens the way for altering the tree.
For instance, it is possible to modify Definition~\ref{def:Phi} in order to obtain a \lam-term $\Psi$ which builds a finitary $\eta$-expansion of $\BT M$ starting from the code of $M$ and a stream of finite $\eta$-expansions of the identity.

\begin{defi}\label{def:streamETA}\ 
\begin{enumerate}
\item 
	Let $\ID$ be the stream containing infinitely many copies of $\bI$:
	$$
		\ID = [\bI,[\bI,[\bI,\dots]]]
	$$
\item
	Let $\seq\eta =(\eta_0,\eta_1,\eta_2,\dots)$ be an effective enumeration of all (closed) finite $\eta$-expansions of the identity, i.e.\ $\eta_i\in\ETAset\cap\Lamo$ for all $i\in\nat$. Define the corresponding stream 
	$$
		\ETA = [\eta_0,[\eta_1,[\eta_2,\dots]]].
	$$
\end{enumerate}
\end{defi}

Notice that, from now on, we consider the enumeration $\vec \eta$ and the stream $\ETA$ fixed.

In order to decide what $\eta$-expansion is applied at a certain position $\sigma$ in $\BT{M}$, we use a function $f(\sigma) = k$ and extract the $\eta$-expansion of index $k$ from $\ETA$ using the projection~$\pi_k$.
Since $f$ needs to be \lam-definable we consider $f$ computable ``after coding''. 
We recall that an effective encoding $\# : \nat^*\to \nat$ of all finite sequences of natural numbers has been fixed in Section~\ref{ssec:sequences} and that we denote by $\code{\sigma}$ the corresponding numeral $\Church{\#\sigma}$.

In the following definition $s$ is an arbitrary variable, but in practice we will always apply $\Psi_f \code{M}\code{\sigma}$ to some stream of combinators.

\begin{defi}\label{def:Psi}
Let $f: \nat^*\to\nat$ be a computable function, and $\Psi_f\in\Lamo$ be such that for all $M\in\Lamo$ and $\sigma\in\nat^*$:
\begin{itemize}
\item if $M\msto[h] \lam x_1\dots x_n.x_jM_1\cdots M_k$ then
	$$
		\Psi_f \code{M}\code{\sigma} s =_\beta \lam\seq{\fresh{x}}.\pi_{f(\sigma)}s\,\big(\fresh{x_j}(L_1\,\fresh{x}_1\cdots \fresh{x}_n)\cdots (L_k\,\fresh{x}_1\cdots \fresh{x}_n)\big)
	$$ where $L_i = \Psi_f\code{\lam\seq x.M_i}\code{\sigma.i}\,s$ for all $i\le k$ and $\pi_n$ denotes the $n$-th projection for streams,
\item 
	$\Psi_f \code{M} $ is unsolvable whenever $M$ is unsolvable.
\end{itemize}
\end{defi}

The actual existence of such a $\Psi_f$ follows from Remark~\ref{rem:effectiveness}, the effectiveness of the encodings and the fact that $f$ is computable.
We now verify that the \lam-term $\Psi_f \code{M}\code{\sigma}$ when applied to the stream $\ETA$ actually computes a finitary $\eta$-expansion of $\BT{M}$ in the sense of Definition~\ref{def:leeta}.

\begin{lem}\label{lemma:etaexpo}
 Let $f:\nat^*\to\nat$ be a computable function. 
For all $M\in\Lamo$ and $\sigma\in\nat^*$:
$$
	M\leeta  \Psi_f\code{M}\code{\sigma}\ETA.
$$
\end{lem}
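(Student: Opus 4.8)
The plan is to prove the statement by coinduction on the definition of $\leeta$ (Definition~\ref{def:leeta}), following the exact pattern already used for the combinator $\Phi$. As in that proof, in order to have the coinductive hypothesis available at the immediate subtrees I would first generalize to open terms, proving for all $M\in\Lam$, all $\sigma\in\nat^*$ and all $\{y_1,\dots,y_m\}\supseteq\FV M$ that
$$
	\BT M\leeta \BT{\Psi_f\code{\lam\seq y.M}\code\sigma\ETA\, y_1\cdots y_m};
$$
the claimed statement is the instance $m=0$. The variables $\seq y$ are carried along as pure bookkeeping (closing the subterms on the recursive calls and re-applying the corresponding variables externally), exactly as in the $\Phi$ lemma, and I would keep this part informal as the paper permits. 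The unsolvable case is immediate: if $M$ is unsolvable then $\BT M=\bot$, while $\Psi_f\code{\lam\seq y.M}$ is unsolvable by Definition~\ref{def:Psi}; since unsolvability is preserved under application, $\Psi_f\code{\lam\seq y.M}\code\sigma\ETA\,\seq y$ is unsolvable as well, so both B\"ohm trees are $\bot$ and $\bot\leeta\bot$ holds.

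For the solvable case, write $\phnf M=\lam x_1\dots x_n.x_jM_1\cdots M_k$, so that $\BT M=\lam x_1\dots x_n.x_j\BT{M_1}\cdots\BT{M_k}$, and unfold the recursive equation of $\Psi_f$ with $s=\ETA$. Since $\pi_{f(\sigma)}\ETA =_\beta \eta_{f(\sigma)}$ and $\eta_{f(\sigma)}\in\ETAset\cap\Lamo$, the head expression $W:=\fresh{x_j}(L_1\,\fresh x_1\cdots\fresh x_n)\cdots(L_k\,\fresh x_1\cdots\fresh x_n)$ is wrapped by a \emph{finite} $\eta$-expansion of the identity. Here I would invoke the structural description of $\ETAset$ (Lemma~\ref{lemma:eta-equiv-defs}, item (ii)) to write $\eta_{f(\sigma)}=_\beta \lam y z_1\dots z_q.yR_1\cdots R_q$ with $\lam z_\ell.R_\ell\in\ETAset$ for every $\ell\le q$; since each $R_\ell$ depends only on $z_\ell$ and $y$ does not occur in them, applying $\eta_{f(\sigma)}$ to $W$ yields
$$
	\Psi_f\code{\lam\seq y.M}\code\sigma\ETA\,\seq y =_\beta \lam \fresh x_1\dots\fresh x_n z_1\dots z_q.\fresh x_j(L_1\,\fresh x_1\cdots\fresh x_n)\cdots(L_k\,\fresh x_1\cdots\fresh x_n)R_1\cdots R_q .
$$
This is exactly the shape allowed by Definition~\ref{def:leeta} at the root: the same head variable $x_j$, the same first $k$ leading abstractions, plus $q$ fresh trailing abstractions $z_1,\dots,z_q$ and $q$ trailing arguments $R_1,\dots,R_q$. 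The side conditions of $\leeta$ at this node are met because the $z_\ell$ are fresh (hence do not occur free in the $\fresh x_i$ or the $L_i$), and because $\lam z_\ell.R_\ell\in\ETAset$ forces $\BT{R_\ell}\msto[\eta] z_\ell$ (its $\bot$-free finite B\"ohm tree $\eta$-reduces to the identity, via item (iii) of Lemma~\ref{lemma:eta-equiv-defs}). For the first $k$ arguments, the coinductive hypothesis — legitimate since the root node has just been produced — applied to the closed term $\lam\seq x.M_i$ (with $\seq y$ carried along) gives $\BT{M_i}\leeta \BT{L_i\,\fresh x_1\cdots\fresh x_n}$ for each $i\le k$. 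Collecting these three facts yields $\BT M\leeta \BT{\Psi_f\code{\lam\seq y.M}\code\sigma\ETA\,\seq y}$ by the coinductive hypothesis.

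The only genuine work is at the head node: one must check that wrapping $W$ with an \emph{arbitrary} element of $\ETAset$ produces precisely the pattern permitted by $\leeta$ — fresh trailing variables $z_\ell$, trailing arguments $R_\ell$ that depend only on $z_\ell$, and with $\BT{R_\ell}\msto[\eta] z_\ell$ — and that the $R_\ell$ neither capture nor mention the $\fresh x_i$ or the subterms $L_i$. This is exactly where Lemma~\ref{lemma:eta-equiv-defs} is indispensable, and I expect it to be the main obstacle; the unsolvable case and the open-term bookkeeping are routine and mirror the $\Phi$ lemma verbatim.
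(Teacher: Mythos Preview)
Your proposal is correct and follows essentially the same coinductive argument as the paper's proof: handle the unsolvable case trivially, and in the solvable case unfold $\Psi_f$, use $\pi_{f(\sigma)}\ETA =_\beta \eta_{f(\sigma)}$, decompose $\eta_{f(\sigma)}$ via Lemma~\ref{lemma:eta-equiv-defs}(ii), and apply the coinductive hypothesis to the $L_i$. You are simply more explicit than the paper about the open-term bookkeeping and about checking the freshness and $Q_i\msto[\eta]z_i$ side conditions of Definition~\ref{def:leeta}, which the paper leaves implicit.
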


\begin{proof} 
We prove by coinduction that $\BT{M}\leeta\BT{\Psi_f\code{M}\code{\sigma}\ETA}$.

If $M$ is unsolvable, then $\BT{M}=\BT{\Psi_f\code{M}\code{\sigma}\ETA}=\bot$. 

Otherwise $M\msto[h] \lam\seq x.x_jM_1\cdots M_k$.
Thus, for $f(\sigma) = q$ and $\eta_q =_\beta \lam yz_1\cdots z_m.yQ_1\cdots Q_m$ where $\lam z_i.Q_i\in\ETAset$ for all $i\le m$ we have:
$$
	\begin{array}{rl}
	\Psi_f\code{M}\code{\sigma}\ETA =_\beta& \lam\seq x.\pi_q\ETA(x_j(L_1\seq x)\cdots (L_k\seq x))\\
	=_\beta&\lam\seq x.\eta_q(x_j(L_1\seq x)\cdots (L_k\seq x))\\
	=_\beta&\lam\seq x z_1\dots z_m.x_j(L_1\seq x)\cdots (L_k\seq x)Q_1\cdots Q_m\\
	\end{array}
$$ 
where $L_i = \Psi_f\code{\lam\seq x.M_i}\code{\sigma.i}\ETA$ for all $i\le k$.
We  conclude because by coinductive hypothesis we get for each $i\le k$ that $\BT{\lam \seq x.M_i}\leeta  \BT{L_i}$ holds.
\end{proof}

Since $\Psi_f$ picks the $\eta$-expansion to apply from the input stream, we can retrieve the behaviour of $\Phi$ by applying $\ID$.

\begin{lem}\label{lemma:barbatrucco}
Let $f:\nat^*\to\nat$ be computable. 
For all $M\in\Lamo$ and $\sigma\in\nat^*$, we have 
$$
	\BTth\vdash M = \Psi_f\code{M}\code{\sigma}\ID.
$$
\end{lem}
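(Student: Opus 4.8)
The plan is to observe that feeding the stream $\ID$ into $\Psi_f$ makes every $\eta$-expansion step in Definition~\ref{def:Psi} collapse to the identity, so that $\Psi_f\code{M}\code{\sigma}\ID$ behaves exactly like the term $\Phi\code{M}$, whose B\"ohm tree was already shown to coincide with $\BT{M}$ in the preceding lemma. First I would record the only arithmetical ingredient: since $\ID = [\bI,[\bI,[\bI,\dots]]]$ is the stream $[\bI]_{m\in\nat}$, the projection $\pi_n$ satisfies $\pi_n\ID =_\beta \bI$ for every $n\in\nat$, by the defining property $\pi_i[Q_m]_{m\in\nat} =_\beta Q_i$ of stream projections. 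In particular $\pi_{f(\sigma)}\ID =_\beta \bI$, whatever value the function $f(\sigma)$ happens to take.

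Then I would prove $\BT{M} = \BT{\Psi_f\code{M}\code{\sigma}\ID}$ by coinduction, following exactly the structure of the proof that $\BTth\vdash \Phi\code{M} = M$ (and, as there, silently working with the open-term generalization needed to fire the coinductive hypothesis). If $M$ is unsolvable, both sides reduce to $\bot$: the left-hand side by definition of the B\"ohm tree, the right-hand side because $\Psi_f\code{M}$ is unsolvable by Definition~\ref{def:Psi}. Otherwise $M\msto[h]\lam\seq x.x_jM_1\cdots M_k$, and Definition~\ref{def:Psi} gives
$$
	\Psi_f\code{M}\code{\sigma}\ID =_\beta \lam\seq x.\pi_{f(\sigma)}\ID\big(x_j(L_1\seq x)\cdots(L_k\seq x)\big),
$$
where $L_i = \Psi_f\code{\lam\seq x.M_i}\code{\sigma.i}\ID$ for $i\le k$. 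Using $\pi_{f(\sigma)}\ID =_\beta \bI$, the leading projection vanishes and this $\beta$-converts to $\lam\seq x.x_j(L_1\seq x)\cdots(L_k\seq x)$. Hence the root node $\lam\seq x.x_j$ matches $\phnf{M}$, and invoking the coinductive hypothesis on each child, namely $\BT{\lam\seq x.M_i} = \BT{L_i}$ (so that $\BT{M_i} = \BT{L_i\seq x}$), lets me conclude that the two B\"ohm trees agree.

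I do not expect any genuine obstacle here: the argument is essentially the $\Phi$-computation of the earlier lemma, repeated verbatim. The single point worth flagging is precisely the collapse $\pi_{f(\sigma)}\ID =_\beta \bI$, which is what distinguishes this statement from Lemma~\ref{lemma:etaexpo}: there the stream $\ETA$ supplied a genuine finite $\eta$-expansion $\eta_{f(\sigma)}$ at each node, producing a nontrivial finitary $\eta$-expansion of $\BT{M}$, whereas here the supplied expansion is always the identity and so the tree is reproduced unchanged. Beyond this, the only care needed is the handling of free variables through the open-term generalization, exactly as in the proof that $\BTth\vdash \Phi\code{M} = M$.
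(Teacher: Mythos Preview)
Your proposal is correct and follows essentially the same approach as the paper: a coinductive argument on B\"ohm trees, using the key observation that $\pi_{f(\sigma)}\ID =_\beta \bI$ collapses the projection at every node, then invoking the coinductive hypothesis on each child $L_i$. The paper's proof is nearly identical in structure and detail, including the implicit open-term generalization you rightly flag.
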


\begin{proof}
We prove by coinduction that $\BT{M}=\BT{\Psi_f\code{M}\code{\sigma}\ID}$.

If $M$ is unsolvable, then $\BT{M}=\BT{\Psi_f\code{M}\code{\sigma}\ID}=\bot$. 

Otherwise $M\msto[h] \lam\seq x.x_jM_1\cdots M_k$.
Since $\pi_q\ID =_\beta \bI$ for all $q\in\nat$ we have:
$$
	\begin{array}{rl}
	\Psi_f\code{M}\code{\sigma}\ID =_\beta& 
	\lam\seq x.\pi_{f(\sigma)}\ID(x_j(L_1\seq x)\cdots (L_k\seq x))\\
	=_\beta&\lam\seq x.\bI(x_j(L_1\seq x)\cdots (L_k\seq x))\\
	=_\beta&\lam\seq x.x_j(L_1\seq x)\cdots (L_k\seq x)\\
	\end{array}
$$
where $L_i = \Psi_f\code{\lam\seq x.M_i}\code{\sigma.i}\ID$ for all $i\le k$.
We  conclude because by coinductive hypothesis we get for each $i\le k$ that $\BT{\lam \seq x.M_i} = \BT{L_i}$ holds.
\end{proof}

We would like to draw the attention to the fact that the \lam-term $\Psi_f$ does not inspect the structure of the input stream to select one of its elements, it blindly relies on the function~$f$.
This is the key property that opens the way to modify the behaviour of $\Psi_f$ by merely changing the input stream as in Lemma~\ref{lemma:barbatrucco}.
We mention this fact since the \lam-term $\etamax[]$ that we define in the next section satisfies a similar property and we exploit it in a crucial way.

\subsection{Building the $\eta$-Supremum}\label{subsec:building_etasup}

Using similar techniques, we now define a \lam-term $\etamax[]$ that builds from the codes of $M,N$ and the stream $\ETA$ the (smallest) $\eta$-supremum satisfying 
$$
	M\leeta \etamax[]\code{M}\code{N}\ETA\geeta  N
$$
when such an $\eta$-supremum exists, that is whenever $M$ and $N$ are compatible (Proposition~\ref{prop:Hplchar}).
Intuitively, at every common position $\sigma$, the \lam-term $\etamax[]$ needs to compare the structure of the subterms of $M,N$ at $\sigma$ and apply the correct expansion $\eta_i$ taken from~$\ETA$.

Rather than proving that there exists a computable function $f : \nat^*\to\nat$ associating to every $\sigma$ the corresponding $\eta_i$ (which can be tedious) we use the following property of $\vec\eta = (\eta_0,\eta_1,\dots,\eta_i,\dots)$: since every closed $\eta$-expansion $Q\in \ETAset$ is $\beta$-normalizable and the enumeration $\vec \eta$ is effective, it is possible to decide starting from the code $\#Q$ the index $i$ of $Q$ in $\vec \eta$.
Moreover, it is possible to choose such an index $i$ minimal.

\begin{lem}\label{lemma:iotacomp}
There exists a computable function $\iota : \nat\to\nat$ such that, for all $M\in\Lamo$, 
if $M =_{\beta} \eta_i$ and $M\neq_\beta \eta_k$ for all $k < i$ then $\iota(\# M) = i$ .
\end{lem}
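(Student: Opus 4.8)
The plan is to reduce the semantic condition ``$M =_\beta \eta_i$'' to a purely syntactic comparison of $\beta$-normal forms, and then search for the least index realizing it. The crucial observation is that on exactly the inputs the statement constrains---codes $\#M$ of closed terms $M$ that are $\beta$-convertible to some $\eta_i$---the term $M$ is $\beta$-normalizing. Indeed $\eta_i \in \ETAset \cap \Lamo$, and we recalled in Section~\ref{subsec:ETAid} that every $Q \in \ETAset$ is $\beta$-normalizing; since $M =_\beta \eta_i$, confluence (Church--Rosser) forces $M$ to share its $\beta$-normal form with $\eta_i$, so $\nf_\beta(M)$ exists. Moreover, again by confluence, for normalizing terms $\beta$-convertibility coincides with $\alpha$-equality of normal forms: $M =_\beta \eta_i$ if and only if $\nf_\beta(M) \equiv_\alpha \nf_\beta(\eta_i)$. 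This turns a merely semi-decidable relation into a decidable test on the relevant inputs.

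With this in hand, I would define $\iota(\#M)$ by the following procedure. First, normalize $M$, i.e.\ run leftmost $\beta$-reduction from $M$ until a normal form $v = \nf_\beta(M)$ is reached; under the hypothesis this halts. Next, search $i = 0, 1, 2, \dots$: for each $i$ compute $\nf_\beta(\eta_i)$ and test whether it is $\alpha$-equal to $v$, returning the first $i$ for which the test succeeds. Computing $\nf_\beta(\eta_i)$ is effective because the enumeration $\vec\eta$ is effective (Definition~\ref{def:streamETA}): from $i$ one obtains a term $\beta$-convertible to $\eta_i$, which is itself $\beta$-normalizing, so its normal form---equal to $\nf_\beta(\eta_i)$---is computable. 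The search terminates: letting $i^\ast$ be the true least index with $M =_\beta \eta_{i^\ast}$, we have $\nf_\beta(\eta_{i^\ast}) \equiv_\alpha v$, so some index is found. By the normal-form criterion above, the first matching index is precisely the least $i$ with $M =_\beta \eta_i$, namely $i^\ast$; hence $\iota(\#M) = i^\ast$, as required.

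The only real subtlety---and the step I would flag---concerns the status of $\iota$ as a \emph{total} function. If $M$ is not a finite $\eta$-expansion of the identity it may fail to normalize, so the first step can diverge; and since normalization (equivalently, membership in $\ETAset$ via Lemma~\ref{lemma:eta-equiv-defs}) is undecidable, there is no computable way to detect this and bail out cleanly. I would therefore take $\iota$ to be a partial recursive function, defined and correct on every code of a closed finite $\eta$-expansion of the identity---which is exactly the class of inputs the statement quantifies over, and exactly the inputs to which $\iota$ will be fed when building $\etamax[]$. By Church's thesis this partial recursive $\iota$ is $\lambda$-definable, the associated combinator returning the correct numeral on the relevant codes (and being free to diverge elsewhere), which is all the later constructions need. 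Everything else---confluence, the effectiveness of $\vec\eta$, and $\beta$-normalizability of $\ETAset$---has already been established, so no further machinery is required.
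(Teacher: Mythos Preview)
Your proposal is correct and follows essentially the same approach as the paper: reduce $\beta$-convertibility to syntactic comparison of $\beta$-normal forms (using that every $\eta_i$ is normalizing), then perform an unbounded search for the least matching index. The paper packages this as $\iota(n) := \mu k.\,\delta(\#\eta_k,n)=0$ with $\delta$ a partial decider of equality of normal forms, and likewise acknowledges that $\iota$ is only partial (writing ``otherwise $\iota(n)=\ \uparrow$''); your explicit discussion of the partiality point is thus in agreement with the paper.
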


\begin{proof} Let $\delta(m,n)$ be the partial computable map satisfying for all normalizing $M,N\in\Lamo$: 
$$
\delta(\#M,\#N) = 
	\begin{cases}
	0&\textrm{if $M$ and $N$ have the same $\beta$-normal form,}\\
	1&\textrm{otherwise.}\\
	\end{cases}
$$
Then $\iota$ can be defined as $\iota(n) := \mu k.\delta\big(\#(\pi_k \circ \vec\eta\,),n\big) = 0$. 
I.e., $\iota(n) = k$ if $k$ is the smallest index such that $\delta(\#\eta_k,n) = 0$, equivalently $\nf(\eta_k) = \nf(\bE\, \Church n)$; otherwise $\iota(n) =\  \uparrow$.
\end{proof}

From now on we consider fixed such a function $\iota$, which depends on the enumeration $\vec\eta$ generating the stream $\ETA$.

\begin{defi}\label{def:le_h} For $M,N\in\Lam$, we define:
\begin{enumerate}[label={(\roman*)}]
\item\label{def:le_h1}
	$M\le_h N$ whenever $M\msto[h] \lam\seq x.y M_1\cdots M_k$ and $N\msto[h]\lam\seq xz_1\dots z_m.y N_1\cdots N_kQ_1\cdots Q_m$ with $\lam z_i.Q_i\in\ETAset$ for all $i\le m$,
\item\label{def:le_h2} 
	$M\sim_h N$ if both $M\le_h N$ and $N\le_h M$ hold,
\item\label{def:le_h3} 
	$M<_h N$ if $M\le_h N$ holds but $M\not\sim_h N$.
\end{enumerate}
\end{defi}

Whenever $M \le_h N$ holds, we say that $N$ \emph{looks like an $\eta$-expansion of $M$}.
This does not necessarily mean that it actually is, as shown by the following examples. 

\begin{exas}
We have:
\begin{itemize}
\item
	 $\lam z.x\bF z \le_h \lam z.x\bK z$ since we do not require that $\bF\le_h \bK$ holds;
\item $z \le_h \lam z. z z$ since in Definition~\ref{def:le_h}\ref{def:le_h1} we do not check that $z_i\notin\FV{\BT{y\seq M \seq N}}$.
\end{itemize}
\end{exas}
Therefore, compared with the relation $\leeta $ of Definition~\ref{def:leeta}, the relation $\le_h$ is weaker since it lacks the coinductive calls and the occurrence check on the $z_i$'s.
This is necessary to ensure the following semi-decidability property.

\begin{rem}\label{rem:effectiveness2}
The property $M\le_h N$ can be semi-decided by the following procedure: 
\begin{itemize}
\item first head-reduce in parallel $M,N$ until they reach a head normal form,
\item if both reductions terminate, compare the two hnf's and check whether they have the shape of Definition~\ref{def:le_h}\ref{def:le_h1},
\item then semi-decide whether $Q_i\msto[\beta\eta] z_i$ for all $i\le m$.
\end{itemize}
This procedure might fail to terminate when $M\not\le_h N$.
\end{rem}

By Remarks~\ref{rem:effectiveness} and \ref{rem:effectiveness2} and the fact that $\iota$ is computable (Lemma~\ref{lemma:iotacomp}), the \lam-term $\etamax$ below exists.
\begin{defi}\label{def:etamax}
Let $\iota : \nat\to\nat$ be the computable function from Lemma~\ref{lemma:iotacomp}. 
We define a combinator $\etamax\in\Lamo$ such that for all $M,N\in\Lamo$:
\begin{enumerate}
\item 
	if $M\le_h N$ with 
	$$
		M \msto[h] \lam  x_1\dots x_n. x_j M_1 \cdots M_k\textrm{ and }N\msto[h]  \lam x_1\dots x_n  z_1\dots z_m. x_j N_1 \cdots N_k Q_1\cdots Q_m
	$$ 
	then 
	$$
		\etamax \code{N} \code{M} s = \lam \fresh{x_1}\dots \fresh{x_n}. \pi_q s \big(\fresh{x_j} (\Upsilon_1 \fresh{x_1}\cdots\fresh{x_n}) \cdots (\Upsilon_k \fresh{x_1}\cdots\fresh{x_n})\big)
	$$
	where $q =\iota\big(\#\NF[\beta](\lam \fresh{y} z_1 \dots z_m.\fresh{y} Q_1 \cdots Q_m)\big)$ and $\Upsilon_i = \etamax \code{\lam x_1\dots x_n.M_i} \code{\lam x_1\dots x_n.N_i} s$.\smallskip
\item
	if $N <_h M$ then $\etamax \code{N} \code{M} s = \etamax \code{M} \code{N} s$,\smallskip
\item
	otherwise, $\etamax \code{N} \code{M} s$ is unsolvable.
\end{enumerate}
\end{defi}

In the definition of~$\etamax$ there are some subtleties that deserve to be discussed. 
The fact that $Q_i\msto[\beta\eta] z_i$ for all $i \le m$, although not explicitly written, is a consequence of $M\le_h N$.
\emph{A priori} $\lam z_i.Q_i\in\ETAset$ might be open (consider for instance $\lam z_i.\bK z_i y\redto[\beta]\bI$) but its $\beta$-normal form is always a closed \lam-term. 
This is the reason why we compute $\NF[\beta](\lam y\seq z.y\seq Q)$ before applying $\iota$ to its code.
In particular, $\iota$ is defined on all the codes $\num{\NF[\beta](\lam y\seq z.y\seq Q)}$.

The following commutativity property follows from the second condition of Definition~\ref{def:etamax} and should be natural considering the fact that $\etamax\code{M}\code{N}\ETA$ is supposed to compute the $\eta$-join of $\BT{M}$ and $\BT{N}$, which is a commutative operation.

\begin{lem}\label{lemma:etamax_comm}
For all $M, N\in \Lambda^o$, we have:
$$
	\BTth\vdash\etamax \code{M} \code{N} = \etamax \code{N} \code{M}.
$$
\end{lem}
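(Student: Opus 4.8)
The plan is to prove the stronger statement $\BT{\etamax\code M\code N s}=\BT{\etamax\code N\code M s}$ for a fresh variable $s$ and \emph{all} $M,N\in\Lamo$, by coinduction. The lemma then follows: in each case $\etamax\code M\code N$ either is unsolvable (so its B\"ohm tree is $\bot=\lam s.\bot$) or head-reduces to a principal hnf beginning with $\lam s$, whence $\BT{\etamax\code M\code N}=\lam s.\BT{\etamax\code M\code N s}$, and symmetrically for $\etamax\code N\code M$; since $\BTth$ is a congruence, equality of the bodies transfers to the $\lam s$-abstractions. First I would split on the $\le_h$-relationship between $M$ and $N$, which by Definition~\ref{def:le_h} partitions all pairs into three mutually exclusive cases: $M,N$ are $\le_h$-incomparable; exactly one of $M<_h N$, $N<_h M$ holds; or $M\sim_h N$.

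The two peripheral cases are immediate and need no coinductive call. If $M,N$ are incomparable (in particular if either is unsolvable, since $\le_h$ presupposes a hnf), then both $\etamax\code M\code N s$ and $\etamax\code N\code M s$ fall under clause~(3) of Definition~\ref{def:etamax} and are unsolvable, so both trees are $\bot$. If $M<_h N$ (resp.\ $N<_h M$), then clause~(2) fires for $\etamax\code M\code N s$ (resp.\ $\etamax\code N\code M s$) and yields $\etamax\code M\code N s=_\beta\etamax\code N\code M s$ outright.

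The core of the argument is the case $M\sim_h N$, where both calls trigger clause~(1). First I would observe that $M\sim_h N$ forces the two principal hnf's to share a prefix: since the principal hnf is unique, comparing the witnesses of $M\le_h N$ and $N\le_h M$ shows that the number $m$ of extra abstractions is $0$ in both directions, so $M\msto[h]\lam\seq x.x_j A_1\cdots A_k$ and $N\msto[h]\lam\seq x.x_j B_1\cdots B_k$ with the same $\seq x$, head $x_j$ and arity $k$. Consequently the index $q=\iota(\#\NF[\beta](\lam\fresh y.\fresh y))$ computed by clause~(1) is the same fixed index (that of $\bI$) on both sides. Unfolding clause~(1) and expanding $\pi_q s= s\bF^{\sim q}\bK$, both $\etamax\code M\code N s$ and $\etamax\code N\code M s$ $\beta$-reduce to a hnf $\lam\seq{\fresh x}.s\,\bF^{\sim q}\bK\big(\fresh{x_j}(\Upsilon_1\seq{\fresh x})\cdots(\Upsilon_k\seq{\fresh x})\big)$ with the \emph{same} head $s$, the same literal children $\bF,\dots,\bF,\bK$, and the same head variable $\fresh{x_j}$ of the last child; the only difference lies in the recursive arguments, which are $\Upsilon_i=\etamax\code{\lam\seq x.B_i}\code{\lam\seq x.A_i}s$ on one side and $\Upsilon'_i=\etamax\code{\lam\seq x.A_i}\code{\lam\seq x.B_i}s$ on the other --- precisely each other with the two codes swapped.

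At this point I would invoke the coinductive hypothesis on the pair $(\lam\seq x.A_i,\lam\seq x.B_i)\in\Lamo\times\Lamo$ to obtain $\BT{\Upsilon_i}=\BT{\Upsilon'_i}$, i.e.\ $\Upsilon_i=_{\BTth}\Upsilon'_i$; since $\BTth$ is a congruence this survives application to the fresh $\seq{\fresh x}$, giving $\BT{\Upsilon_i\seq{\fresh x}}=\BT{\Upsilon'_i\seq{\fresh x}}$ for every $i\le k$. As the two trees already agree on the produced prefix (root $\lam\seq{\fresh x}.s$, the children $\bF,\dots,\bF,\bK$, and the node $\fresh{x_j}$), this closes the coinductive step. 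The main obstacle is bookkeeping rather than conceptual: tracking which argument plays the ``larger'' role through the swap performed by the recursive call in clause~(1), and verifying that in the $\sim_h$ case the extra-abstraction count $m$ vanishes so that the two $\pi_q$-indices coincide. Once these are pinned down, the equality of the remaining subtrees is exactly the coinductive hypothesis modulo a routine congruence step.
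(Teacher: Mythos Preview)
Your proposal is correct and follows essentially the same route as the paper's proof: a coinductive case split on the $\le_h$-relation between $M$ and $N$, with the incomparable and strict cases being immediate and the $\sim_h$ case reducing both sides to hnf's with identical prefix and swapped recursive calls, which are then matched by the coinductive hypothesis. Your presentation is in fact slightly more careful than the paper's---you make explicit that $m=0$ forces the two $q$-indices to coincide (both equal $\iota(\#\bI)$), and you spell out the passage from equality with $s$ applied to equality of the abstractions---but these are refinements of the same argument rather than a different approach.
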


\begin{proof} We prove by coinduction that $\BT{\etamax \code{M} \code{N}} = \BT{\etamax \code{N} \code{M}}$.

If $M,N$ are unsolvable or neither $M \le_h N$ nor ${N \le_h M}$ holds, then 
$\BT{\etamax \code{M} \code{N}}= \BT{\etamax \code{N} \code{M}} = \bot$.

The cases $M<_h N$ and $N <_h M$ follow directly from the definition.

If $M \sim_h N$, then we have $M\msto[h]\lam x_1\dots x_n.x_jM_1\cdots M_k$ and $N\msto[h]\lam x_1\dots x_n.x_jN_1\cdots N_k$.
Since $\pi_q s \msto s\,\bF^{\sim q}\,\bK$ we have
$$
	\begin{array}{l}
\etamax \code{M} \code{N}s =_\beta
	\lam\seq x.s\,\bF^{\sim q}\,\bK(x_j(\Upsilon_1\seq x) \cdots (\Upsilon_k \seq x))\\
\etamax \code{N} \code{M}s =_\beta
	\lam\seq x.s\,\bF^{\sim q}\,\bK(x_j(\Upsilon'_1 \seq x) \cdots (\Upsilon'_k \seq x))\\
	\end{array}
$$
where, for all $i\le k$, we have  $\Upsilon_i = \etamax\code{\lam\seq  x.M_i}\code{\lam\seq  x.N_i}s$ and $\Upsilon'_i = \etamax\code{\lam\seq  x.N_i}\code{\lam\seq x.M_i}s$.
We conclude since, by coinductive hypothesis, we get $\BT{\Upsilon_i\seq x} = \BT{\Upsilon'_i\seq x}$ for all $i\le k$.
\end{proof}

Another property that we expect is that whenever $M\leeta  N$ the \lam-term $\etamax\code{M}\code{N}\ETA$ computes the B\"ohm tree of $N$.

\begin{lem}\label{lem:Nwins}
For all $M,N\in\Lamo$, if $M \leeta  N$ then 
$$
	\BTth\vdash \etamax \code{M}\code{N}\ETA = N.
$$
\end{lem}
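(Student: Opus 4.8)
The plan is to prove, by coinduction, the stronger pointwise statement $\BT{\etamax\code{M}\code{N}\ETA}=\BT{N}$ for every pair $M\leeta N$, mirroring the coinductive proofs of Lemmas~\ref{lemma:etaexpo} and~\ref{lemma:barbatrucco} and (as there) silently generalizing to open terms so that the coinductive hypothesis may be applied to the closures $\lam\seq x.M_i$ and $\lam\seq x.N_i$. If $M$ is unsolvable, then $M\leeta N$ forces $\BT{M}=\BT{N}=\bot$ by the first clause of Definition~\ref{def:leeta}, and since neither $M\le_h N$ nor $N\le_h M$ can hold, $\etamax\code{M}\code{N}\ETA$ is unsolvable by clause~(3) of Definition~\ref{def:etamax}; hence both sides are $\bot$.

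Assume now $M,N$ are solvable. Unfolding $M\leeta N$ gives $\phnf{M}=\lam x_1\dots x_n.x_jM_1\cdots M_k$ and $\phnf{N}=\lam x_1\dots x_nz_1\dots z_m.x_jN_1\cdots N_kQ_1\cdots Q_m$ with $\BT{M_i}\leeta\BT{N_i}$ for $i\le k$ and $Q_i\msto[\beta\eta]z_i$ for $i\le m$; by Lemma~\ref{lemma:eta-equiv-defs} the latter means $\lam z_i.Q_i\in\ETAset$, so in particular $M\le_h N$. Using Lemma~\ref{lemma:etamax_comm} (together with clause~(2) of Definition~\ref{def:etamax}) I reduce to computing $\etamax\code{N}\code{M}\ETA$, which matches clause~(1) with $N$ in the role of the larger term. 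The crucial point is the value of $q=\iota\big(\#\NF[\beta](\lam\fresh y z_1\dots z_m.\fresh y Q_1\cdots Q_m)\big)$: since each $\lam z_i.Q_i\in\ETAset$, the term $\lam\fresh y z_1\dots z_m.\fresh y Q_1\cdots Q_m$ itself lies in $\ETAset$ by Lemma~\ref{lemma:eta-equiv-defs}, hence occurs in the fixed enumeration $\seq\eta$, and by the specification of $\iota$ (Lemma~\ref{lemma:iotacomp}) we obtain $\pi_q\ETA=_\beta\eta_q=_\beta\lam\fresh y z_1\dots z_m.\fresh y Q_1\cdots Q_m$.

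Feeding the head application $H=\fresh{x_j}(\Upsilon_1\fresh x_1\cdots\fresh x_n)\cdots(\Upsilon_k\fresh x_1\cdots\fresh x_n)$ to this finite $\eta$-expansion yields
$$
	\etamax\code{N}\code{M}\ETA=_\beta\lam \fresh x_1\dots\fresh x_n z_1\dots z_m.\fresh{x_j}(\Upsilon_1\fresh x_1\cdots\fresh x_n)\cdots(\Upsilon_k\fresh x_1\cdots\fresh x_n)Q_1\cdots Q_m,
$$
whose root node $\lam x_1\dots x_nz_1\dots z_m.x_j$ is exactly that of $\BT{N}$, with children $\BT{\Upsilon_i\fresh x_1\cdots\fresh x_n}$ for $i\le k$ followed by the literal $\BT{Q_i}$ for $i\le m$. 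Since $\Upsilon_i=\etamax\code{\lam\seq x.M_i}\code{\lam\seq x.N_i}\ETA$ with $\lam\seq x.M_i\leeta\lam\seq x.N_i$, the coinductive hypothesis gives $\BT{\Upsilon_i\fresh x_1\cdots\fresh x_n}=\BT{N_i}$, while the trailing $\BT{Q_i}$ coincide with those of $\BT{N}$ by construction. Hence the two Böhm trees agree, which closes the coinductive step.

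The step I expect to be the real obstacle is the bookkeeping around clause~(1): verifying that $\etamax$ actually enters clause~(1) on these inputs (which needs $M\le_h N$, extracted from $\leeta$ via Lemma~\ref{lemma:eta-equiv-defs}), that the semi-decidable $\le_h$-decomposition employed by $\etamax$ coincides with the $\leeta$-decomposition (forced by determinism of head reduction, the split of $N$'s arguments being dictated by the arity $k$ read off from $M$), and above all that the index $q$ returned by $\iota$ reproduces the \emph{exact} finite $\eta$-expansion $\lam\fresh y\seq z.\fresh y Q_1\cdots Q_m$ needed to rebuild the node of $N$ — this is precisely where Lemmas~\ref{lemma:eta-equiv-defs} and~\ref{lemma:iotacomp} do the work. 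The apparent mismatch in argument order between the top-level call and the recursive calls $\Upsilon_i$ is harmless, being absorbed by the commutativity Lemma~\ref{lemma:etamax_comm}.
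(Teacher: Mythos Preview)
Your proof is correct and follows essentially the same route as the paper's: a coinduction on B\"ohm trees, unfolding $M\leeta N$ to obtain $M\le_h N$, identifying the index $q$ via Lemma~\ref{lemma:iotacomp} so that $\pi_q\ETA=_\beta\lam y\seq z.y\seq Q$, and concluding by the coinductive hypothesis on the $\Upsilon_i$. The only difference is cosmetic: you explicitly invoke Lemma~\ref{lemma:etamax_comm} to align the argument order with clause~(1) of Definition~\ref{def:etamax}, whereas the paper applies that clause directly to $\etamax\code{M}\code{N}$ and leaves the swap implicit.
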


\begin{proof}  We prove by coinduction that $\BT{\etamax \code{M}\code{N}\ETA} = \BT{N}$.

If $M,N$ are both unsolvable, then also $\etamax \code{M}\code{N}$ must be and their B\"ohm trees are $\bot$.

Otherwise, we have $M \msto[h] \lam\seq x.x_jM_1\cdots M_k$ and $N \msto[h]\lam\seq xz_1\dots z_m.x_jN_1\cdots N_kQ_1\cdots Q_m$ where each $z_\ell\notin\FV{\BT{x_j\seq M\seq N}}$, $\lam z_\ell.Q_\ell\in\ETAset$ for all $\ell\le m$ and $M_i\leeta  N_i$ for all $i\le k$.
In particular $M\le_h N$ holds, so the first condition of Definition~\ref{def:etamax} applies.

From $\lam z_\ell.Q_\ell\in\ETAset$ it follows that $\lam y\seq z.y\seq Q\in\ETAset$, therefore $\iota(\#\NF[\beta](\lam y\seq z.y\seq Q)) = q$ for some index $q$. 
Setting $\Upsilon_i = \etamax\code{\lam\seq x.M}\code{\lam\seq x.N}\ETA$, easy calculations give:
$$
	\begin{array}{lcl}
	\etamax\code{M}\code{N} \ETA &=_\beta&
	\lam\seq x.\pi_{q}\ETA(x_j(\Upsilon_1\seq x)\cdots (\Upsilon_k\seq x))\\
	&=_\beta&\lam\seq x.(\lam y\seq z.y\seq Q)(x_j(\Upsilon_1\seq x)\cdots (\Upsilon_k\seq x))\\	
	&=_\beta&\lam\seq x\seq z.x_j(\Upsilon_1\seq x)\cdots (\Upsilon_k\seq x)Q_1\cdots Q_m
	\end{array}
$$
We conclude since, by coinductive hypothesis, we have $\BT{\Upsilon_i} = \BT{\lam\seq x.N_i}$ for all $i\le k$.
\end{proof}

Under the assumption $M \leeta  N$ we can also use $\etamax$ to retrieve the B\"ohm tree of $M$ by applying the stream $\ID$.

\begin{lem}\label{lem:Mwins}
For all $M,N\in\Lamo$, if $M \leeta  N$ then 
$$
	\BTth\vdash \etamax \code{M}\code{N}\ID = M.
$$
\end{lem}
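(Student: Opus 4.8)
The plan is to mirror, almost verbatim, the coinductive argument of Lemma~\ref{lem:Nwins}, the single difference being that feeding the stream $\ID$ instead of $\ETA$ causes every $\eta$-expansion selected by $\etamax$ to collapse, so that the term reconstructs the \emph{less} expanded tree $\BT{M}$ rather than $\BT{N}$. Concretely, I would prove by coinduction the equality $\BT{\etamax\code{M}\code{N}\ID} = \BT{M}$.

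First I would dispatch the degenerate case: if $M$ is unsolvable then so is $N$ (since $M\leeta N$ forces $\bot$ at the same positions), hence $\etamax\code{M}\code{N}$ is unsolvable by the third clause of Definition~\ref{def:etamax} and both Böhm trees equal $\bot$. For the main case, unfolding $M\leeta N$ via Definition~\ref{def:leeta} yields the head normal forms $M\msto[h]\lam\seq x.x_jM_1\cdots M_k$ and $N\msto[h]\lam\seq x z_1\dots z_m.x_jN_1\cdots N_kQ_1\cdots Q_m$, with $\lam z_\ell.Q_\ell\in\ETAset$ for $\ell\le m$ and $M_i\leeta N_i$ for $i\le k$; in particular $M\le_h N$, so the first clause of Definition~\ref{def:etamax} applies and fixes the index $q=\iota(\#\NF[\beta](\lam y\seq z.y\seq Q))$. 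The crucial step, borrowed from the proof of Lemma~\ref{lemma:barbatrucco}, is that $\pi_q\ID =_\beta\bI$ regardless of $q$, so that the $\eta$-expansion indexed by $q$ is simply discarded:
$$
\begin{array}{lcl}
\etamax\code{M}\code{N}\ID &=_\beta& \lam\seq x.\pi_q\ID(x_j(\Upsilon_1\seq x)\cdots(\Upsilon_k\seq x))\\
&=_\beta& \lam\seq x.\bI(x_j(\Upsilon_1\seq x)\cdots(\Upsilon_k\seq x))\\
&=_\beta& \lam\seq x.x_j(\Upsilon_1\seq x)\cdots(\Upsilon_k\seq x)
\end{array}
$$
where $\Upsilon_i = \etamax\code{\lam\seq x.M_i}\code{\lam\seq x.N_i}\ID$. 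Since $M_i\leeta N_i$ entails $\lam\seq x.M_i\leeta\lam\seq x.N_i$, the coinductive hypothesis gives $\BT{\Upsilon_i}=\BT{\lam\seq x.M_i}$ for each $i\le k$, and the head node displayed above together with these subtrees reconstructs exactly $\BT{M}=\lam\seq x.x_j\BT{M_1}\cdots\BT{M_k}$.

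I do not expect a genuine obstacle here: the argument is structurally identical to Lemma~\ref{lem:Nwins}, and the one new ingredient — the absorption $\pi_q\ID =_\beta\bI$ — is precisely the device already exploited in Lemma~\ref{lemma:barbatrucco} to make $\Psi_f$ revert to $\Phi$. The only point requiring a touch of care is the passage from $M_i\leeta N_i$ to $\lam\seq x.M_i\leeta\lam\seq x.N_i$ legitimising the recursive appeal to the coinductive hypothesis; this is immediate from Definition~\ref{def:leeta}, whose clause permits matching a common prefix of abstractions (taking $m=0$ there).
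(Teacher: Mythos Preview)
Your proposal is correct and matches the paper's approach: the paper's proof is a one-line remark that the argument is analogous to Lemma~\ref{lemma:barbatrucco}, the key observation being precisely your $\pi_q\ID =_\beta \bI$; you have simply spelled this out in full, following the coinductive template of Lemma~\ref{lem:Nwins}. One tiny quibble: your parenthetical justification ``taking $m=0$ there'' for why $M_i\leeta N_i$ entails $\lam\seq x.M_i\leeta \lam\seq x.N_i$ is not quite the right way to phrase it (the point is rather that prepending a common abstraction prefix to both sides preserves the shape required by Definition~\ref{def:leeta}), but the claim itself is correct and the paper treats this step at the same informal level.
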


\begin{proof}  
Analogous to the proof of Lemma~\ref{lemma:barbatrucco}.
Since $\iota$ has been defined in terms of the enumeration~$\vec\eta$, $\iota(\#Q)$ still provides an index $q$ such that $\pi_q\ETA =_\beta Q$ but when applied to $\ID$ it necessarily gives $\pi_q\ID = \bI$.
%
\end{proof}


\section{$\BTth\omega = \Hpl$ and Sall\'e's Conjecture is False}\label{sec:Salleiswrong}

This section is devoted to proving that $\BTo = \Hpl$  holds (Theorem~\ref{thm:main}). 
As mentioned earlier, the first step is to show that the term $\etamax$ defined in the previous section, when applied to $\Hpl$-equivalent \lam-terms, computes a common $\eta$-upper bound.

\begin{prop}\label{prop:Hplchar}
For all $M,N\in\Lamo$, 
$$
 	\Hpl\vdash M=N \iff M \leeta  \etamax\code{M}\code{N}\ETA\geeta  N
$$
\end{prop}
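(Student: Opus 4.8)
The plan is to prove the two implications separately; the direction ($\Leftarrow$) is immediate, while ($\Rightarrow$) carries the technical weight.

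For ($\Leftarrow$), set $U := \BT{\etamax[]\code{M}\code{N}\ETA}$, which is a B\"ohm-like tree. The hypothesis $M\leeta\etamax[]\code{M}\code{N}\ETA\geeta N$ unfolds, by the convention $M\leeta N\iff\BT{M}\leeta\BT{N}$, into $\BT{M}\leeta U\geeta\BT{N}$. Thus $U$ witnesses the criterion of Theorem~\ref{thm:CharacterizationHpl}, and we conclude $\Hpl\vdash M=N$.

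For ($\Rightarrow$), I would first apply Theorem~\ref{thm:CharacterizationHpl} to obtain a B\"ohm-like tree $U$ with $\BT{M}\leeta U\geeta\BT{N}$, and then prove, by coinduction, the following strengthening: \emph{whenever $M,N\in\Lamo$ admit a common $\eta$-supremum $U$ (that is, $\BT{M}\leeta U\geeta\BT{N}$), both $\BT{M}\leeta\BT{\etamax[]\code{M}\code{N}\ETA}$ and $\BT{N}\leeta\BT{\etamax[]\code{M}\code{N}\ETA}$ hold.} The coinduction follows the template of Lemma~\ref{lem:Nwins}, except that the recursion is now driven by the compatibility datum $U$ rather than by a one-sided $\leeta$-hypothesis. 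If $U=\bot$, then $\BT{M}=\BT{N}=\bot$ by Definition~\ref{def:leeta}, so $M,N$ are unsolvable; neither $M\le_h N$ nor $N\le_h M$ can hold, the third clause of Definition~\ref{def:etamax} makes $\etamax[]\code{M}\code{N}$ unsolvable, and all three trees are $\bot$. If $U\neq\bot$, then $M,N$ are solvable, and the crux is to read off matching head structure from the two decompositions $\BT{M}\leeta U$ and $\BT{N}\leeta U$. Since $\leeta$ preserves the head variable and the difference between the number of leading abstractions and the number of arguments, $M$ and $N$ share the same head $x_j$; after possibly swapping them via the commutativity Lemma~\ref{lemma:etamax_comm}, I may assume $\phnf{M}=\lam x_1\dots x_n.x_jM_1\cdots M_k$ and $\phnf{N}=\lam x_1\dots x_nz_1\dots z_m.x_jN_1\cdots N_kQ_1\cdots Q_m$. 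Comparing both decompositions of the \emph{single} tree $U$ forces $\BT{Q_i}\msto[\eta]z_i$, whence $\lam z_i.Q_i\in\ETAset$ by Lemma~\ref{lemma:eta-equiv-defs}; therefore $M\le_h N$ and the first clause of Definition~\ref{def:etamax} applies. The same comparison exhibits the subtrees $U_j$ of $U$ as common $\eta$-suprema of the pairs $(\lam\seq x.M_j,\lam\seq x.N_j)$, making the coinductive hypothesis available on each recursive call $\Upsilon_j=\etamax[]\code{\lam\seq x.M_j}\code{\lam\seq x.N_j}\ETA$.

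It then remains to compute. Using $\pi_q\ETA=_\beta\eta_q$ together with $q=\iota(\#\NF[\beta](\lam yz_1\dots z_m.yQ_1\cdots Q_m))$, which gives $\eta_q=_\beta\lam yz_1\dots z_m.yQ_1\cdots Q_m$, one obtains
$$
	\etamax[]\code{M}\code{N}\ETA =_\beta \lam x_1\dots x_nz_1\dots z_m.x_j(\Upsilon_1\seq x)\cdots(\Upsilon_k\seq x)Q_1\cdots Q_m.
$$
Matching this tree against $\BT{M}$ — the extra abstractions $\seq z$, the extra arguments $Q_i$ with $\BT{Q_i}\msto[\eta]z_i$, and $\BT{M_j}\leeta\BT{\Upsilon_j\seq x}$ from the coinductive hypothesis — yields $\BT{M}\leeta\BT{\etamax[]\code{M}\code{N}\ETA}$; matching it against $\BT{N}$ — identical abstractions, identical trailing arguments $Q_i$, and $\BT{N_j}\leeta\BT{\Upsilon_j\seq x}$ again from the coinductive hypothesis — yields $\BT{N}\leeta\BT{\etamax[]\code{M}\code{N}\ETA}$. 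I expect the main obstacle to be exactly the bookkeeping of the previous paragraph: extracting from the two $\leeta$-decompositions of the same tree $U$ that the surplus arguments on the longer side are genuine finite $\eta$-expansions of the relevant bound variables (so that both $M\le_h N$ and the $\ETAset$-membership hold), and that the residual subproblems remain compatible, so that the appeal to the coinductive hypothesis is legitimate.
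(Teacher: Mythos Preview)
Your proposal is correct and follows essentially the same route as the paper: both use Hyland's characterization to obtain $U$, proceed by coinduction on the B\"ohm trees, extract $M\le_h N$ (or swap via Lemma~\ref{lemma:etamax_comm}) from the two $\leeta$-decompositions of the single tree $U$, and conclude by computing the head structure of $\etamax\code{M}\code{N}\ETA$. The bookkeeping you flag as the main obstacle---deducing $\lam z_i.Q_i\in\ETAset$ from $Q_i\leeta Q'_i$ together with $\lam z_i.Q'_i\in\ETAset$ on the $U$-side---is exactly the step the paper spells out via Lemma~\ref{lemma:eta-equiv-defs}.
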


\begin{proof}
$(\Leftarrow)$ It follows directly from Theorem~\ref{thm:CharacterizationHpl}.

$(\Rightarrow)$ By Theorem~\ref{thm:CharacterizationHpl}, we known that there exists a B\"ohm-like tree $U\in\BTset$ such that $\BT{M}\leeta  U \geeta  \BT{N}$. 
As usual, we proceed by coinduction on the B\"ohm(-like) trees.

If $M$ or $N$ is unsolvable then $\BT{M} = \BT{N}  = U = \bot$.

Otherwise, from $\BT M \leeta  U\geeta \BT  N$ we have, say:
$$
	\begin{array}{c}
	M \msto[h] \lam\seq x.x_jM_1\cdots M_k,\qquad\qquad
	N \msto[h] \lam\seq xz_1\dots z_m.x_jN_1\cdots N_kQ_1\cdots Q_m,\\
	U =  \lam\seq xz_1\dots z_m\dots z_{m'}.x_jU_1\cdots U_kQ'_1\cdots Q'_m\cdots Q'_{m'},\\
	\end{array}
$$
such that 
\begin{itemize}
\item
	$z_1,\dots,z_{m'}\notin\FV{x_j\BT{M_1}\cdots\BT{M_k}\,U_1\cdots U_k}$, 
\item 
	$z_{m+1},\dots,z_{m'}\notin\FV{x_j\BT{N_1}\cdots\BT{N_k}}$, 
\item 
	$\BT{M_i}\leeta  U_i$ and $\BT{N_i}\leeta  U_i$ for all $i\le k$, 
\item
	$Q_\ell\leeta  Q'_\ell$ for all $\ell\le m$, 
\item
	and $\lam z_{\ell'}.Q'_{\ell'}\in\ETAset$ for all $\ell' > m$. 
\end{itemize}
By Lemma~\ref{lemma:eta-equiv-defs}, we have $ Q'_\ell\msto[\beta\eta]
z_\ell$ so $Q_\ell\leeta Q'_\ell$ entails $\lam
z_\ell.Q_\ell\in\ETAset$, hence $N$ looks like an $\eta$-expansion of
$M$. (That is, $M\le_h N$ holds.)

Put $q = \iota\big(\num{\nf(\lam yz_1\dots z_m.yQ_1\cdots Q_m)}\big)$,
$\Upsilon_i = \etamax\code{\lam\seq x.M}\code{\lam\seq x.N}\ETA$.  We have
$$
	\begin{array}{rl}
	\etamax\code{M}\code{N}\ETA =_\beta&
	\lam\seq x.\pi_q\ETA(x_j(\Upsilon_1\seq x)\cdots (\Upsilon_k\seq x))\\
	=_\beta&
	\lam\seq x.(\lam y\seq z.yQ_1\cdots Q_m)(x_j(\Upsilon_1\seq x)\cdots (\Upsilon_k\seq x))\\
	=_\beta&	
	\lam\seq x\seq z.x_j(\Upsilon_1\seq x)\cdots (\Upsilon_k\seq x)Q_1\cdots Q_m. \\
	\end{array}
$$
This case follows by coinductive hypotheses since, for all ${i\le k}$,  $\lam\seq x.M_i\leeta  \Upsilon_i\geeta  \lam\seq x.N_i$.

The symmetric case $N <_h M$ is treated analogously, using Lemma~\ref{lemma:etamax_comm} to apply the coinductive hypotheses.
\end{proof}

In the proof above, it is easy to check that $T=\BT{\etamax\code{M}\code{N}\ETA}$ is minimally $\eta$-expanded so that it satisfies $M\leeta T \geeta  N$, thus $T$ is actually the $\eta$-supremum of $\BT{M}$ and $\BT N$.
The second step towards the proof of Theorem~\ref{thm:main} is to show that the streams $\ID$ and $\ETA$ are equated in $\BTo$.
To prove this result, we are going to use the following auxiliary streams.
\begin{defi}\ Using the fixed point operator $\bY$, define the streams:
\begin{itemize}
\item $\IDo\, yx  = [yx,[y\Omega x,[y\Omega^{\sim 2}x,[y\Omega^{\sim 3}x,\dots]]]]$,
\item $\ETAo\, yx =  [y(\eta_0x),[y\Omega(\eta_1x),[y\Omega^{\sim 2}(\eta_2x),[y\Omega^{\sim 3}(\eta_3x),\dots]]]]$.
\end{itemize}
\end{defi}

The streams $\IDo$ and $\ETAo$ are equal in $\BTo$, for the same reason the \lam-terms $P,Q$ of Figure~\ref{fig:PQ} are. 
The formal reasoning is the following.

\begin{lem}\label{lem:BToIDoETAo}
 $\BTo\vdash\IDo = \ETAo$.
\end{lem}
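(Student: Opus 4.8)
The plan is to reuse, at the level of the two auxiliary streams, exactly the $\Omega$-tower trick that Barendregt applies to the terms $P,Q$ of Figure~\ref{fig:PQ}. Since $\BTo = \BTth\omega$ is by construction the least $\lam$-theory containing $\BTth$ and closed under the $\omega$-rule, it satisfies $(\omega)$; so, applying the rule twice (once for each of the two leading abstractions of the streams, i.e.\ the variables $y$ and $x$), it suffices to prove
\[
	\IDo Z W =_{\BTo} \ETAo Z W \qquad\text{for all } Z,W\in\Lamo .
\]
Indeed, from this statement $(\omega)$ first yields $\IDo Z =_{\BTo} \ETAo Z$ for every closed $Z$ (these being closed combinators), and a second application yields $\IDo =_{\BTo}\ETAo$.

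First I would fix $Z,W\in\Lamo$ and unfold the streams, so that $\IDo Z W =_\beta [\,Z\Omega^{\sim i}W\,]_{i\in\nat}$ and $\ETAo Z W =_\beta [\,Z\Omega^{\sim i}(\eta_i W)\,]_{i\in\nat}$. By Lemma~\ref{lemma:Omegak} there is a $k\in\nat$ with $Z\Omega^{\sim k}$ unsolvable; hence $Z\Omega^{\sim i}$ is unsolvable for every $i\ge k$, and so are all the entries $Z\Omega^{\sim i}W$ and $Z\Omega^{\sim i}(\eta_i W)$ at levels $i\ge k$. I would then unfold each stream $\beta$-equally $k$ times and present both as the \emph{same} finite nested-pairing context $\cC[\cdot]$ (a $(k{+}1)$-fold nesting of $\lam z.z(-)(-)$) filled with $k$ explicit entries and a tail:
\[
	\IDo Z W =_\beta \cC[A_0,\dots,A_{k-1},T],
	\qquad
	\ETAo Z W =_\beta \cC[A'_0,\dots,A'_{k-1},T'],
\]
where $A_i = Z\Omega^{\sim i}W$, $A'_i = Z\Omega^{\sim i}(\eta_i W)$, $T = [\,Z\Omega^{\sim(k+j)}W\,]_{j\in\nat}$ and $T' = [\,Z\Omega^{\sim(k+j)}(\eta_{k+j} W)\,]_{j\in\nat}$.

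It then remains to compare the pieces. For $i<k$, since $\eta_i\in\ETAset\cap\Lamo$ we have $\eta_i \msto[\beta\eta] \bI$ by Lemma~\ref{lemma:eta-equiv-defs}, whence $\eta_i W =_{\beta\eta} W$ and therefore $A'_i =_{\beta\eta} A_i$; as $\beta\eta$-conversion is contained in the extensional theory $\BTe=\BTth\eta$, this gives $A_i =_{\BTe} A'_i$. For the tails, $T$ and $T'$ are effective enumerations all of whose entries are unsolvable, hence have B\"ohm tree $\bot$, so by Lemma~\ref{lem:eqimpBTeq} they have the same B\"ohm tree; thus $T =_{\BTth} T'$ and a fortiori $T =_{\BTe} T'$. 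Since $\cC[\cdot]$ is a genuine finite $\lam$-term context and $\BTe$ is a congruence, filling $\BTe$-equal arguments into $\cC$ yields $\IDo Z W =_{\BTe} \ETAo Z W$. Finally $\BTe=\BTth\eta\subseteq\BTth\omega=\BTo$ by Lemma~\ref{lemma:omega-props}\ref{lemma:omega-props1}, so this is in particular a $\BTo$-equality, completing the reduction.

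The main obstacle, which dictates the shape of the argument, is the passage from a pointwise comparison of the stream entries to an equality of the whole streams. The tempting shortcut — claiming that $\BT{\IDo Z W}$ and $\BT{\ETAo Z W}$ differ by only finitely many finite $\eta$-expansions — is \emph{false} in general, because a single entry $Z\Omega^{\sim i}(\eta_i W)$ may differ from $Z\Omega^{\sim i}W$ by $\eta$-expansions spread across infinitely many positions of its (possibly infinite) B\"ohm tree, exactly as $\ONE$ differs from $\IDX$. The remedy is to stay at the level of terms: the $\Omega$-tower forces all but the first $k$ entries to be unsolvable, each of those finitely many entries differs from its counterpart by an honest $\beta\eta$-conversion, and the common finite pairing context lets congruence of $\BTe$ do the lifting. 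The only other points to verify, both routine, are that $\BTo$ genuinely satisfies $(\omega)$ (so that the two reductions are legitimate) and that the stream unfoldings above are indeed $\beta$-equalities.
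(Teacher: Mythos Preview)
Your proposal is correct and follows essentially the same strategy as the paper: invoke Lemma~\ref{lemma:Omegak} to make all stream entries beyond some index $k$ unsolvable, equate the finite prefix using $\beta\eta$-conversion (since each $\eta_i =_{\beta\eta}\bI$), equate the all-$\bot$ tails in $\BTth$, and conclude via the $\omega$-rule. The only cosmetic difference is that the paper applies $(\omega)$ once---substituting a closed $M$ for $y$ and keeping the $\lambda x$ in place---whereas you apply it twice, substituting closed $Z,W$ for both $y$ and $x$; either route works.
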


\begin{proof}
Let $M\in\Lamo$, by Lemma~\ref{lemma:Omegak} there exists $k\in\nat$ such that $M\Omega^{\sim k} =_\BTth\Omega$. So we have:
$$
	\begin{array}{rll}
	\IDo M &=_\BTth& \lam x.[Mx,[M\Omega x,[\dots,[M\Omega^{\sim k-1}x,[\Omega,\dots]]]]]\\
	&=_\BTth& \lam x.[M(\bI x),[M\Omega(\bI x),[\dots,[M\Omega^{\sim k-1}(\bI x),[\Omega,\dots]]]]]\\	
	&=_{\beta\eta}& \lam x.[M(\eta_0x),[M\Omega(\eta_1x),[\dots,[M\Omega^{\sim k-1}(\eta_{k-1}x),[\Omega,\dots ]]]]]\\
	&=_\BTth& \ETAo M,
	\end{array}
$$
where the third equality follows from $\bI =_{\beta\eta} \eta_i$ for all $i\in\nat$.
Since $M$ is an arbitrary closed \lam-term, we can apply the $\omega$-rule and conclude that $\BTo\vdash \IDo =\ETAo$.
\end{proof}

As the variable $y$ occurs in head-position in the \lam-terms occurring in the stream $\ETAo yx$ (resp. $\IDo yx$), we can substitute for it a suitably modified projection that erases the $\Omega$'s and returns the $n$-th occurrence of $x$ in $\IDo$ (resp.\ $(\eta_n x)$ in $\ETAo$).

\begin{lem}\label{lem:IDoeqETAo}
 There exists a closed \lam-term $\Eq$ such that, for all $n\in\nat$:
$$
	\Eq\, \Church{n}\IDo =_\BTth \bI,\qquad\qquad	
	\Eq\, \Church{n}\ETAo =_\BTth \eta_n.
$$
\end{lem}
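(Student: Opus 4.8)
The plan is to build $\Eq$ explicitly, exploiting the fact that the head variable $y$ governs every entry of both streams. The $i$-th entry of $\IDo\,y\,x$ is $y\,\Omega^{\sim i}\,x$ and the $i$-th entry of $\ETAo\,y\,x$ is $y\,\Omega^{\sim i}\,(\eta_i x)$, so if I instantiate $y$ by a projector that consumes exactly $i$ arguments and returns the following one, the $\Omega$'s disappear. The key observation is that $\Church n\,\bK\,\bI =_\beta \bK^n(\bI) = \bU^{n+1}_{n+1} = \lambda x_1\dots x_{n+1}.x_{n+1}$, whence $\bU^{n+1}_{n+1}\,\Omega^{\sim n}\,Z =_\beta Z$ for every term $Z$. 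Thus I feed $\bU^{n+1}_{n+1}$ for $y$ and then recover the $n$-th entry by an ordinary stream projection. Writing $\mathrm{tl} := \lambda w.w\bF$ and $\mathrm{hd} := \lambda w.w\bK$, one has $\mathrm{hd}\bigl(\Church n\,\mathrm{tl}\,[M_i]_{i\in\nat}\bigr) =_\beta M_n$ for every stream, so I set
$$
  \Eq := \lambda n\,s.\,\lambda x.\,\mathrm{hd}\bigl(n\,\mathrm{tl}\,(s\,(n\,\bK\,\bI)\,x)\bigr).
$$

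Then I would simply compute both instances. For $s := \IDo$ the stream $s\,(n\,\bK\,\bI)\,x = \IDo\,(\bU^{n+1}_{n+1})\,x$ has $i$-th entry $\bU^{n+1}_{n+1}\,\Omega^{\sim i}\,x$, whose $n$-th entry is $\bU^{n+1}_{n+1}\,\Omega^{\sim n}\,x =_\beta x$; abstracting over $x$ gives $\Eq\,\Church n\,\IDo =_\beta \lambda x.x = \bI$. For $s := \ETAo$ the $i$-th entry is $\bU^{n+1}_{n+1}\,\Omega^{\sim i}\,(\eta_i x)$, whose $n$-th entry is $\bU^{n+1}_{n+1}\,\Omega^{\sim n}\,(\eta_n x) =_\beta \eta_n x$, so $\Eq\,\Church n\,\ETAo =_\beta \lambda x.\eta_n x$. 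The entries at indices $i\neq n$ are irrelevant: for $i<n$ they are dropped by the iterated $\mathrm{tl}$ without being forced, and those at $i>n$ sit in the discarded tail after $\mathrm{hd}$, so their possible under- or over-application of $\bU^{n+1}_{n+1}$ causes no harm.

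It remains to observe that $\lambda x.\eta_n x =_\BTth \eta_n$. By Lemma~\ref{lemma:eta-equiv-defs}, $\eta_n =_\beta \lambda y z_1\dots z_m.y Q_1\cdots Q_m$, so $\lambda x.\eta_n x$ $\beta$-reduces to $\lambda x z_1\dots z_m.x\,Q_1\subst{y}{x}\cdots Q_m\subst{y}{x}$, which is $\alpha$-equal to $\eta_n$; hence $\lambda x.\eta_n x =_\beta \eta_n$, and since $\beta$-conversion is contained in $\BTth$ both required equalities follow (indeed they already hold up to $=_\beta$). I do not expect a genuine obstacle here: everything reduces to the standard \lam-definability of stream projection and of $\bU^{n+1}_{n+1}$ from $\Church n$. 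The only points demanding care are that the \emph{same} numeral $\Church n$ must drive both the selector $\bU^{n+1}_{n+1}$ and the projection depth, and the verification that $\lambda x.\eta_n x$ has the same B\"ohm tree as $\eta_n$, which uses that every $Q\in\ETAset$ is $\beta$-convertible to a head normal form starting with a \lam-abstraction.
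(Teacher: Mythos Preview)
Your proof is correct and takes a genuinely different route from the paper's.

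The paper defines $\Eq$ \emph{recursively} via the fixed-point combinator, satisfying
\[
\Eq\,n\,s = \ifz\bigl(n,\ \lam z.s\,\bI\,z\,\bK,\ \Eq\,(\pred\,n)\,(\lam zw.s(\bK z)w\,\bF)\bigr),
\]
and then proves by induction on $n$ the strengthened statement $\Eq\,\Church n\,(\lam yx.[y\Omega^{\sim i}(\eta_{i+k}x)]_{i\in\nat}) =_\BTth \eta_{n+k}$, where the extra parameter $k$ is needed to make the induction go through. Each recursive call peels off one $\Omega$ by substituting $\bK z$ for $y$ and shifts the stream by one position with $\bF$.

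Your construction is non-recursive: you compute the projector $\bU^{n+1}_{n+1} = \Church n\,\bK\,\bI$ in one shot, substitute it for $y$, and then extract the $n$-th component by the ordinary stream projection $\mathrm{hd}\circ\Church n\,\mathrm{tl}$. This avoids both the fixed-point combinator in the definition of $\Eq$ and the generalized induction hypothesis; the argument reduces to two direct $\beta$-computations. What the paper's approach buys is perhaps a closer match to the informal ``peel one $\Omega$ at a time'' intuition, but your approach is shorter and shows that the result already holds at the level of $=_\beta$ without any appeal to B\"ohm-tree equality beyond the trivial containment $=_\beta\,\subseteq\,=_\BTth$.
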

\begin{proof} Let $\Eq$ be a \lam-term satisfying the recursive equation
$$
\Eq\, n\, s = \ifz(n,\lam z.s\bI z\bK,\Eq\, (\pred\, n)\, (\lam
zw.s(\bK z)w\bF)).
$$
By induction on $n$, we show 
$$
	\Eq\, \Church{n}(\lam
yx.[y\Omega^{\sim i}(\eta_{i+k}x)]_{i\in\nat}) =_\BTth \eta_{n+k}
$$ for all $n,k\in\nat$.
Note that $\eta_i\in\ETAset$ entails $\lam z.\eta_iz =_\beta \eta_i$.

If $n =0 $ then 
\begin{align*}
&\hspace{1.1cm}\Eq\, \Church{0}(\lam yx.[y\Omega^{\sim i}(\eta_{i+k}x)]_{i\in\nat})\hspace{5cm}\\
&=_\beta\quad\lam z.(\lam yx.[y\Omega^{\sim i}(\eta_{i+k}x)]_{i\in\nat})\bI z\bK\\
&=_\beta\quad \lam z.[\Omega^{\sim i}(\eta_{i+k}z)]_{i\in\nat}\bK\\
&=_\beta\quad \lam z.\bK\Omega^{\sim 0} (\eta_{0+k}z) [\Omega^{\sim i+1} (\eta_{i+1+k}z)]_{i\in\nat}\\
&=_\beta\quad \lam z.\eta_kz \quad =_\beta\quad \eta_k.
\end{align*}

If $n=n'+1$, then IH yields $\Eq\,\Church{n'}(\lam y x. [y\Omega^{\sim
  i}(\eta_{i+k}x)]_{i \in \nat})=\eta_{n'+k}$ for all $k$.  Thus
\begin{align*}
&\hspace{1.1cm} \Eq\, \Church{n'+1}(\lam y x. [y \Omega^{\sim i}(\eta_{i+k}x)]_{i \in
  \nat}) \\
&=_\beta\quad
 \Eq\,\Church{n'}(\lam zw.\left(\lam y x. [y \Omega^{\sim i}(\eta_{i+k}x)]_{i \in
  \nat}\right) (\bK z) w \bF) \\
&=_\beta\quad \Eq\, \Church{n'}(\lam zw. \left[(\bK z) \Omega^{\sim
  0}(\eta_{0+k}w), [(\bK z) \Omega^{\sim i+1}(\eta_{i+1+k}w)]_{i \in
  \nat}\right]\bF)\\
&=_\beta\quad \Eq\, \Church{n'}(\lam zw.[z
\Omega^{\sim i}(\eta_{i+1+k}w)_{i \in \nat}])\\
&=_\alpha\quad
\Eq\, \Church{n'}(\lam y x. [y \Omega^{\sim i}(\eta_{i+(k+1)}x)]_{i \in \nat})
\quad =_{\text{IH}}\quad \eta_{k+1}.
\end{align*}
Analogous calculations show $\Eq\, \Church{n}\IDo =_\BTth \bI$.
\end{proof}
%
%

\begin{cor}\label{cor:IDeqETA}
 $\BTo \vdash \ID = \ETA$.
\end{cor}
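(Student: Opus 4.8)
The plan is to transport the single equality $\BTo\vdash\IDo = \ETAo$ of Lemma~\ref{lem:BToIDoETAo} across one $\lam$-definable operation that simultaneously turns $\IDo$ into (a term with the B\"ohm tree of) $\ID$ and $\ETAo$ into (a term with the B\"ohm tree of) $\ETA$. The pointwise work is already done by the term $\Eq$ of Lemma~\ref{lem:IDoeqETAo}, which satisfies $\Eq\,\Church{n}\IDo =_\BTth\bI$ and $\Eq\,\Church{n}\ETAo =_\BTth\eta_n$ for every $n$. So the idea is to package all these applications into a single stream-former and then appeal to congruence.

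First I would define, using the fixed point combinator $\bY$, a closed combinator $G\in\Lamo$ with $G\,s =_\beta [\Eq\,\Church{n}\,s]_{n\in\nat}$. Concretely, take an auxiliary $H =_\beta \lam h n s.[\Eq\,n\,s, h\,(\succc\,n)\,s]$ obtained through $\bY$, and set $G\,s = H\,\Church{0}\,s$, so that $H\,\Church{n}\,s =_\beta [\Eq\,\Church{n}\,s, H\,\Church{n+1}\,s]$ unfolds to the required stream. The essential feature is that $G$ never inspects or destructures its argument $s$: it merely feeds it to each $\Eq\,\Church{n}$. This uniformity is exactly what will let the \emph{same} combinator realize both transitions.

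Next I would compute the two B\"ohm trees. Since $G\,\IDo =_\beta [\Eq\,\Church{n}\IDo]_{n\in\nat}$ and $\Eq\,\Church{n}\IDo =_\BTth\bI$ for all $n$ by Lemma~\ref{lem:IDoeqETAo}, the effective enumeration $(\Eq\,\Church{n}\IDo)_{n\in\nat}$ (with generator $\lam n.\Eq\,n\,\IDo\in\Lamo$) agrees pointwise on B\"ohm trees with the constant enumeration $(\bI)_{n\in\nat}$. Hence Lemma~\ref{lem:eqimpBTeq} gives $\BT{G\,\IDo} = \BT{[\bI]_{n\in\nat}} = \BT{\ID}$, that is $G\,\IDo =_\BTth\ID$. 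The symmetric computation with $\Eq\,\Church{n}\ETAo =_\BTth\eta_n$ yields $\BT{G\,\ETAo} = \BT{[\eta_n]_{n\in\nat}} = \BT{\ETA}$, again via Lemma~\ref{lem:eqimpBTeq}, so $G\,\ETAo =_\BTth\ETA$.

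Finally I would close the loop by congruence. Because $\BTo$ is a $\lam$-theory it is closed under application, so Lemma~\ref{lem:BToIDoETAo} ($\IDo =_\BTo\ETAo$) yields $G\,\IDo =_\BTo G\,\ETAo$; combining this with the two $\BTth$-equalities just established and the inclusion $\BTth\subseteq\BTo$ gives
$$
\ID =_\BTth G\,\IDo =_\BTo G\,\ETAo =_\BTth \ETA .
$$
I do not expect a genuine obstacle: the $\omega$-rule has already been spent inside Lemma~\ref{lem:BToIDoETAo}, and all that remains is bookkeeping. The one point that must be handled with care — and which is the real content of the argument — is that the passage from $\IDo,\ETAo$ to $\ID,\ETA$ be realized by a single $\lam$-term $G$ uniform in its stream argument, rather than by a metalevel family of reductions; only then does the lone $\BTo$-equality between $\IDo$ and $\ETAo$ propagate, through congruence, to $\ID =_\BTo \ETA$.
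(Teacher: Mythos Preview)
Your proof is correct and follows essentially the same approach as the paper: the paper's one-line argument is the chain $\ID =_\BTth [\Eq\,\Church{n}\IDo]_{n\in\nat} =_{\BTo} [\Eq\,\Church{n}\ETAo]_{n\in\nat} =_\BTth \ETA$, invoking Lemmas~\ref{lem:IDoeqETAo}, \ref{lem:eqimpBTeq}, and~\ref{lem:BToIDoETAo}. Your combinator $G$ just makes explicit the uniform stream-former that the paper leaves implicit in the notation $[\Eq\,\Church{n}\,s]_{n\in\nat}$, and your emphasis on uniformity in $s$ correctly identifies the point on which the congruence step depends.
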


\begin{proof} From Lemmas~\ref{lem:IDoeqETAo}, \ref{lem:eqimpBTeq} and \ref{lem:BToIDoETAo} we get:
\begin{equation*}
	\ID  =_\BTth [\Eq\, \Church{n}\IDo]_{n\in\nat} =_{\BTo} [\Eq\,
        \Church{n}\ETAo]_{n\in\nat} =_\BTth \ETA.
        \tag*{\qedhere}
\end{equation*}
\end{proof}

In Section~\ref{subsec:building_etamax} we have seen that, when $M\leeta  N$ holds, the \lam-term $\etamax\code{M}\code{N}$ computes the B\"ohm tree of $N$ from~$\ETA$ (Lemma~\ref{lem:Nwins}) and the B\"ohm tree of $M$ from $\ID$ (Lemma~\ref{lem:Mwins}), but now we have proved that $\ETA =_\BTo\ID$.
As a consequence, we get that $M$ and $N$ are equal in $\BTo$.

\begin{thm}\label{thm:main}
$\BTo = \Hpl$.
\end{thm}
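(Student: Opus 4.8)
The plan is to supply the one inclusion still missing, $\Hpl \subseteq \BTo$, since $\BTo \subseteq \Hpl$ is already provided by Theorem~\ref{thm:Hplomega}. By Remark~\ref{rem:closed_terms} two $\lam$-theories coinciding on closed terms are equal, so I would fix $M, N \in \Lamo$ with $\Hpl \vdash M = N$ and work towards $\BTo \vdash M = N$; this clearly suffices.

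First I would extract a single $\lam$-definable common upper bound. By Proposition~\ref{prop:Hplchar}, the hypothesis $\Hpl \vdash M = N$ gives $M \leeta \etamax\code M\code N\ETA \geeta N$. I would then set $P := \etamax\code M\code N\ETA$, observing that this is a genuine \emph{closed} $\lam$-term: it is an application of the combinator $\etamax \in \Lamo$ to the quotes $\code M, \code N$ and to the (closed) stream $\ETA$. Hence $P \in \Lamo$, its quote $\code P$ is well-defined, and I have a single term $P$ with both $M \leeta P$ and $N \leeta P$.

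Next I would feed this upper bound back into $\etamax$ and exploit the fact --- stressed in the discussion following Lemma~\ref{lemma:barbatrucco} --- that $\etamax$ does not inspect the structure of the stream argument. From $M \leeta P$, Lemma~\ref{lem:Nwins} yields $\BTth \vdash \etamax\code M\code P\ETA = P$ and Lemma~\ref{lem:Mwins} yields $\BTth \vdash \etamax\code M\code P\ID = M$; symmetrically, from $N \leeta P$ I obtain $\BTth \vdash \etamax\code N\code P\ETA = P$ and $\BTth \vdash \etamax\code N\code P\ID = N$. The crucial point is that the two evaluations of each fixed closed term $\etamax\code M\code P$ (resp.\ $\etamax\code N\code P$) differ only in the stream argument, which I can now collapse: by Corollary~\ref{cor:IDeqETA} we have $\ID =_\BTo \ETA$, and since $\BTo$ is a congruence (hence compatible with application) and $\BTth \subseteq \BTo$, these equalities assemble into the chain
$$ M =_\BTth \etamax\code M\code P\ID =_\BTo \etamax\code M\code P\ETA =_\BTth P =_\BTth \etamax\code N\code P\ETA =_\BTo \etamax\code N\code P\ID =_\BTth N. $$
Thus $\BTo \vdash M = N$, which establishes $\Hpl \subseteq \BTo$ on closed terms and hence everywhere by Remark~\ref{rem:closed_terms}; combined with Theorem~\ref{thm:Hplomega} this gives $\BTo = \Hpl$.

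Essentially all of the technical difficulty has already been discharged by the preceding lemmas, so the theorem itself is an assembly rather than a fresh argument. The single conceptual step I would emphasize --- and the only place where care is needed --- is the use of $P$: the whole argument turns on the fact that \emph{one} closed term $\etamax\code M\code P$ computes $M$ on the input $\ID$ and $P$ on the input $\ETA$, so that the $\BTo$-collapse of these two streams transports directly across the congruence. I would therefore make explicit that $\etamax$ blindly relies on its codes and never branches on the stream's shape, as this is exactly what legitimizes substituting $\ID$ for the $\BTo$-equal $\ETA$ without disturbing the rest of the computation.
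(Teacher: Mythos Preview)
Your proposal is correct and follows essentially the same route as the paper's proof: define $P = \etamax\code M\code N\ETA$ via Proposition~\ref{prop:Hplchar}, then chain Lemmas~\ref{lem:Mwins} and~\ref{lem:Nwins} with Corollary~\ref{cor:IDeqETA} exactly as you do. The additional observations you make (that $P$ is closed so $\code P$ is well-defined, and that $\etamax$ does not branch on the stream's shape) are implicit in the paper but worth stating explicitly.
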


\begin{proof}
$(\subseteq)$ This inclusion was shown in Theorem~\ref{thm:Hplomega}.

$(\supseteq)$ 
By Remark~\ref{rem:closed_terms}, it is enough to consider $M,N\in\Lamo$.
If $\Hpl\vdash M = N$, then by Proposition~\ref{prop:Hplchar} we have $M\leeta  P\geeta  N$ for $P= \etamax \code M \code N\ETA$.
Then we have:
$$
\begin{array}{rll}
	M\  =_\BTth& \etamax\code{M}\code{P}\ID&\textrm{by Lemma~\ref{lem:Mwins}}\\
	=_{\BTo}&\etamax\code{M}\code{P}\ETA&\textrm{by Corollary~\ref{cor:IDeqETA}}\\
	=_\BTth&P&\textrm{by Lemma~\ref{lem:Nwins}}\\	
	=_\BTth&\etamax\code{N}\code{P}\ETA&\textrm{by Lemma~\ref{lem:Nwins}}\\
	=_{\BTo}&\etamax\code{N}\code{P}\ID&\textrm{by Corollary~\ref{cor:IDeqETA}}\\
	=_\BTth& N&\textrm{by Lemma~\ref{lem:Mwins}}\\
	\end{array}
$$
We conclude that $\BTo\vdash M=N$.
\end{proof}

This theorem constitutes a refutation of Sall\'e's conjecture and settles one of the few open problems left in Barendregt's book~\cite{Bare}.


\section{A Characterization of $\BTe$}\label{sec:BTeta}
As discussed in Section~\ref{subsec:BTeta}, $\BTe$ equates strictly less than $\BTo$, and by Theorem~\ref{thm:main} than $\Hpl$.
However, to the best of our knowledge, in the literature there is no formal characterization of $\BTe$ in terms of some extensional equality between B\"ohm trees.
The only property shown in \cite[\S16.4]{Bare} concerning the interaction between $\BT-$ and  $\redto[\eta]$ is~Lemma~\ref{lemma:Bareta}.

In this section we show that the approach described in
Section~\ref{subsec:building_etasup} is general enough to prove that
$\BTe\vdash M = N$ holds exactly when $\BT{M}$ and $\BT{N}$ are equal
up to countably many $\eta$-expansions having \emph{uniformly bounded size} (Theorem~\ref{thm:main2}).

\subsection{Bounded $\eta$-Expansions of B\"ohm Trees}

We start by defining the \emph{size} of a finite $\eta$-expansion $Q$ of the identity.
Intuitively, the size of $Q$ is the maximum between its height and its width (namely, the maximal number of branching in its B\"ohm tree).

\begin{defi}
The \emph{size} of $Q\in\ETAset$ is defined inductively as follows:
$$
	\mes{Q} = \begin{cases}
	0&\textrm{if }Q\msto\bI,\\
	\max\{m,\max_{i \le m}\{\mes{\lam z_i.Q_i}\}\!+1\}&\textrm{if }Q\msto\lam yz_1\dots z_m.yQ_1\cdots Q_m.\\
	\end{cases}
$$
\end{defi}
For $Q,Q'\in\ETAset$, we have that $Q =_\beta Q'$ entails $\mes{Q} = \mes{Q'}$ since the size $\mes-$ is determined by their $\beta$-normal forms.
It is easy to verify that for each $n\in\nat$ the size of $\One^n$ is $\mes{\One^n} = n$.

\begin{defi}
For $p\in\nat$, let $\ETAset[p] = \{Q\in\ETAset\st \mes{Q} < p\}$ be the set of $\eta$-expansions of $\bI$ whose size is bounded by~$p$.
\end{defi}
We have $\ETAset[0]=\emptyset$, $\ETAset[1]=\{\bI\}$, $\ETAset[2]=\{\bI,\lam yz.yz\}$, and so on. 
Therefore $\ETAset[p]\subseteq\ETAset[p+1]$ and the sequence $\big(\ETAset[p]\big)_{p\in\nat}$ is increasing.

The following lemma will be useful for studying the behaviour of $\etamax$, when applied to the codes of \lam-terms whose B\"ohm trees differ because of bounded $\eta$-expansions.

\begin{lem}\label{lemma:Q_p}
Let $m,p\in\nat$.
If $m\le p$ and $\lam z_i.Q_i\in\ETAset[p]$ holds for all $i\le m$, then we have that $\lam yz_1\dots z_m.yQ_1\cdots Q_m\in\ETAset[p+1]$.
\end{lem}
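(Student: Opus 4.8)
The plan is to unpack membership in $\ETAset[p+1]$ into its two constituent requirements and verify each in turn: first that $Q := \lam yz_1\dots z_m.yQ_1\cdots Q_m$ is genuinely an $\eta$-expansion of the identity, i.e.\ $Q\in\ETAset$, and second that its size meets the bound $\mes{Q}<p+1$.

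For the first requirement I would invoke Lemma~\ref{lemma:eta-equiv-defs}. Since $Q$ already has exactly the shape $\lam yz_1\dots z_m.yQ_1\cdots Q_m$, and each $\lam z_i.Q_i$ lies in $\ETAset[p]\subseteq\ETAset$ by hypothesis, condition~(ii) of that lemma is satisfied verbatim, whence $Q\in\ETAset$.

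For the second requirement I would simply unfold the inductive definition of $\mes-$, splitting on whether $m=0$. If $m=0$ then $Q=\lam y.y\msto\bI$, so the first clause gives $\mes{Q}=0<p+1$ (using $p\ge 0$). If $m\ge 1$, the second clause applies and yields $\mes{Q}=\max\{m,\ \max_{i\le m}\{\mes{\lam z_i.Q_i}\}+1\}$. The two hypotheses now do all the work: $m\le p$ bounds the first argument of the outer maximum, while $\lam z_i.Q_i\in\ETAset[p]$ unfolds to $\mes{\lam z_i.Q_i}<p$, i.e.\ $\mes{\lam z_i.Q_i}\le p-1$ for every $i$, so that $\max_{i\le m}\{\mes{\lam z_i.Q_i}\}+1\le p$. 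Hence both arguments of the outer maximum are $\le p$, giving $\mes{Q}\le p<p+1$ as required.

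There is no genuine obstacle here; the statement is essentially a bookkeeping exercise on the definition of size. The only points deserving a moment's care are routing the degenerate case $m=0$ through the first clause of the definition (thereby avoiding an empty inner maximum), and tracking the off-by-one between the strict bound $\mes{\cdot}<p$ defining $\ETAset[p]$ and the ``$+1$'' incurred by descending one level of the B\"ohm tree — which is precisely what accounts for the increment from $p$ to $p+1$ in the conclusion.
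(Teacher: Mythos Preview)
Your proposal is correct and follows essentially the same approach as the paper: unfold the definition of $\mes{-}$ and bound both arguments of the outer maximum by $p$. You are in fact slightly more thorough than the paper, which silently takes $Q\in\ETAset$ for granted and does not single out the $m=0$ case; your explicit appeal to Lemma~\ref{lemma:eta-equiv-defs}(ii) and your case split are harmless refinements rather than a different route.
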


\begin{proof}
Since $m \le p$ and $\mes{\lam z_i.Q_i} < p$ for each $i\le m$, it follows that $\mes{\lam yz_1\dots z_m.yQ_1\cdots Q_m}=\max\{m,\max\{\mes{\lam z_1.Q_1},\dots,\mes{\lam z_m.Q_m}\}+1\} \le p < p+1$.
\end{proof}

Now that we have formalized when $Q\in\ETAset$ has size bounded by a certain~$p$, we specify when two B\"ohm-like trees $U,V$ are such that $V$ is an $\eta$-expansion of $U$ bounded by $p$.


\begin{defi}
For all $p\in\nat$, we define the greatest relation $\leeta[p]$ between B\"ohm-like trees such that $U\leeta[p] V$ entails that:
\begin{itemize}
\item either $U = V= \bot$, 
\item or for some $m\le p$
$$
	U = \lam\seq x.y U_1\cdots U_k
	\textrm{ and }
	V = \lam\seq xz_1\dots z_m.y V_1\cdots V_kQ_1\cdots Q_m,
$$
where $z_\ell\notin\FV{yU_1\cdots U_kV_1\cdots V_k}$, $\lam z_i.Q_i\in\ETAset[p]$ for all $i\le m$ and $U_j\leeta[p] V_j$ for all $j\le k$.
\end{itemize}
\end{defi}

Remark that in the definition above we verify not only that the size of each $\lam z_i.Q_i$ is bounded by $p$, but also that their number~$m$ is.
Notice the asymmetry between the strict bound $\mes{\lam z_i.Q_i} < p$ and the bound $m\le p$, which arises naturally from Lemma~\ref{lemma:Q_p}.
\begin{nota}
For $M,N\in\Lam$ and $p\in\nat$, we write $M\leeta[p]N$ whenever $\BT{M}\leeta[p]\BT{N}$.
\end{nota}

The next lemma follows straightforwardly from the definition.

\begin{lem}\label{lemma:selfevident}
For $M,N\in\Lam$ and $p\in\nat$. If $M\leeta[p] N$ then:
\begin{enumerate}[label={(\roman*)}]
\item\label{lemma:selfevident1}
	$M\leeta  N$,
\item\label{lemma:selfevident2}
	$\BTth\vdash M = M'$ entails $M'\leeta[p] N$,
\item\label{lemma:selfevident3}
	for all $p' \ge p$, we have $M\leeta[p'] N$.
\end{enumerate}
\end{lem}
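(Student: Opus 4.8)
The plan is to prove all three items by directly comparing the defining clauses of $\leeta[p]$ with those of the target relations, invoking the coinductive hypothesis (i.e.\ appealing to the fact that $\leeta$, resp.\ $\leeta[p']$, is the \emph{greatest} relation closed under the respective clause). Item~\ref{lemma:selfevident2} is immediate and carries no coinductive content: by definition $M\leeta[p]N$ abbreviates $\BT M\leeta[p]\BT N$, while $\BTth\vdash M=M'$ means exactly $\BT{M'}=\BT M$, so substituting equals for equals gives $\BT{M'}\leeta[p]\BT N$, that is $M'\leeta[p]N$.

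For item~\ref{lemma:selfevident3} I would show that the relation $\leeta[p]$ is itself closed under the clause defining $\leeta[p']$, whence it is contained in the greatest such relation. So suppose $U\leeta[p]V$ with $U,V\neq\bot$, and take the witnessing decomposition with $m\le p$, $\lam z_i.Q_i\in\ETAset[p]$ for $i\le m$, and $U_j\leeta[p]V_j$ for $j\le k$. Since $p\le p'$ we have $m\le p'$, and since the sequence $\big(\ETAset[q]\big)_{q\in\nat}$ is increasing we get $\lam z_i.Q_i\in\ETAset[p']$; the coinductive hypothesis turns each $U_j\leeta[p]V_j$ into $U_j\leeta[p']V_j$. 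These are exactly the premises needed to conclude $U\leeta[p']V$.

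Item~\ref{lemma:selfevident1} is the one requiring a little care, since the side conditions of $\leeta[p]$ and of $\leeta$ (Definition~\ref{def:leeta}) are phrased differently. Again I would argue by coinduction, showing that $\leeta[p]$ is closed under the clause of $\leeta$. Given the decomposition of $U\leeta[p]V$ as above, the bound $m\le p$ is simply discarded (it is not required by $\leeta$), and the coinductive hypothesis upgrades $U_j\leeta[p]V_j$ to $U_j\leeta V_j$. The only genuine point is to derive the condition ``$Q_i\msto[\eta]z_i$'' of Definition~\ref{def:leeta} from the condition ``$\lam z_i.Q_i\in\ETAset[p]$''. The main obstacle, such as it is, lies precisely here: one must pass from membership in $\ETAset[p]\subseteq\ETAset$ to a \emph{purely $\eta$} reduction of $Q_i$ to $z_i$.

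To settle this, I would invoke Lemma~\ref{lemma:eta-equiv-defs}: from $\lam z_i.Q_i\in\ETAset$ we get $\lam z_i.Q_i\msto[\beta\eta]\bI$. Since $Q_i$ is a B\"ohm-like tree it is $\beta$-normal, hence so is $\lam z_i.Q_i$; as $\eta$-reduction preserves $\beta$-normal forms, no $\beta$-step can ever occur along this reduction, so in fact $\lam z_i.Q_i\msto[\eta]\bI=\lam z_i.z_i$. A top-level $\eta$-contraction is impossible here, because it would strip the leading $\lam z_i$ and leave a term with a variable head, which can no longer $\eta$-reduce back to the abstraction $\lam z_i.z_i$ (while a leading abstraction under $\lam z_i$ would contradict $\beta$-normality). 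Consequently every $\eta$-step takes place inside $Q_i$, yielding $Q_i\msto[\eta]z_i$ as required. With this the clause of $\leeta$ is verified and item~\ref{lemma:selfevident1} follows.
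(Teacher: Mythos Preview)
Your proof is correct and in fact considerably more detailed than the paper's, which simply asserts that the lemma ``follows straightforwardly from the definition'' and gives no argument at all. Items~\ref{lemma:selfevident2} and~\ref{lemma:selfevident3} are handled exactly as one would expect. For item~\ref{lemma:selfevident1} you rightly identify that the only non-trivial point is passing from $\lam z_i.Q_i\in\ETAset[p]$ to $Q_i\msto[\eta]z_i$; your argument via Lemma~\ref{lemma:eta-equiv-defs}, preservation of $\beta$-normal forms under $\eta$-reduction, and the impossibility of a top-level $\eta$-contraction (since the resulting term would have a variable head and could never $\eta$-reduce to the abstraction $\lam z_i.z_i$) is sound. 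This is precisely the kind of detail the paper sweeps under ``straightforward'', so there is nothing to compare beyond noting that you have made explicit what the authors left implicit.
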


We need a couple of technical lemmas.
The first one exhibits the interaction between $\leeta[p]$ and the size~${\mes-}$, the intuition being that $\leeta[p]$ can increase the size of $Q\in\ETAset$ by at most~$p$.

\begin{lem}\label{lemma:sclero}\ 
Given $p,p'\in\nat$, we have that $\lam y.Q\in\ETAset[p]$ and $Q\leeta[p'] Q'$ imply $\lam y.Q'\in\ETAset[p+p']$.
\end{lem}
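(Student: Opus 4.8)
The plan is to prove the statement by well-founded induction on the size $\mes{\lam y.Q}$; this terminates precisely because $\lam y.Q\in\ETAset[p]\subseteq\ETAset$ is a finite $\bot$-free tree, and each recursive call will strictly decrease the measure. First I would unpack the two hypotheses into explicit tree shapes. From $\lam y.Q\in\ETAset[p]$ together with Lemma~\ref{lemma:eta-equiv-defs}, the body must satisfy $\BT{Q} = \lam z_1\dots z_m.y\,Q_1\cdots Q_m$ with $\lam z_j.Q_j\in\ETAset$ for all $j\le m$, so the number of leading lambdas matches the number of arguments; reading off the size recursion gives $\mes{\lam y.Q}=\max\{m,\max_{j\le m}\{\mes{\lam z_j.Q_j}\}+1\}<p$, whence $m<p$ and $\lam z_j.Q_j\in\ETAset[p-1]$ for every $j$. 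Feeding this shape of $\BT{Q}$ into the definition of $\leeta[p']$, the hypothesis $Q\leeta[p'] Q'$ forces $\BT{Q'}=\lam z_1\dots z_m w_1\dots w_r.y\,Q'_1\cdots Q'_m R_1\cdots R_r$ for some $r\le p'$, with $\lam w_i.R_i\in\ETAset[p']$ for all $i\le r$ and $Q_j\leeta[p'] Q'_j$ for all $j\le m$. I will also record that $\ETAset[0]=\emptyset$ forces $p\ge 1$, a fact used to absorb the various ``$+1$''s later.

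For the inductive step I would apply the induction hypothesis to each pair $\lam z_j.Q_j\in\ETAset[p-1]$ and $Q_j\leeta[p'] Q'_j$ — legitimate since $\mes{\lam z_j.Q_j}<\mes{\lam y.Q}$ — obtaining $\lam z_j.Q'_j\in\ETAset[(p-1)+p']$ for each $j\le m$. Combined with $\lam w_i.R_i\in\ETAset[p']\subseteq\ETAset$, every one of the $m+r$ arguments of $\lam y.Q' = \lam y z_1\dots z_m w_1\dots w_r.y\,Q'_1\cdots Q'_m R_1\cdots R_r$ is a binder-prefixed $\eta$-expansion of the identity, and the count of lambdas after $y$ equals the count of arguments; hence $\lam y.Q'\in\ETAset$ by Lemma~\ref{lemma:eta-equiv-defs}. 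The base case is simply $m=0$ (i.e.\ $\lam y.Q\msto\bI$), where there are no $Q_j$ and no recursive call is required.

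It remains to bound the size, which is where the genuine care lies. From the size recursion,
$$
\mes{\lam y.Q'} = \max\Big\{\,m+r,\ \max_{j\le m}\{\mes{\lam z_j.Q'_j}\}+1,\ \max_{i\le r}\{\mes{\lam w_i.R_i}\}+1\,\Big\},
$$
and I would verify each of the three quantities is $<p+p'$: the prefix length satisfies $m+r\le(p-1)+p'$; for the expanded subterms the induction hypothesis gives $\mes{\lam z_j.Q'_j}<(p-1)+p'$, so $\mes{\lam z_j.Q'_j}+1\le p+p'-1$; and for the fresh arguments $\mes{\lam w_i.R_i}<p'$, so $\mes{\lam w_i.R_i}+1\le p'<p+p'$ using $p\ge 1$. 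Therefore $\mes{\lam y.Q'}<p+p'$, i.e.\ $\lam y.Q'\in\ETAset[p+p']$. The main obstacle is not conceptual but purely bookkeeping: one must thread the shifted parameter $p-1$ (rather than $p$) through the induction hypothesis for the inner subterms, and then check that the single ``$+1$'' incurred by each new layer never pushes any branch of the $\max$ beyond $p+p'-1$, which is exactly where the hypothesis $p\ge 1$ is needed.
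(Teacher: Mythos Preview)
Your proof is correct and follows essentially the same approach as the paper's: structural induction on (the normal form of) $Q$, unfolding the definitions of $\ETAset[p]$ and $\leeta[p']$ at the root, applying the induction hypothesis to the subterms $Q_j$, and then verifying each branch of the $\max$ in the size bound. Your bookkeeping is in fact a touch sharper than the paper's --- you observe $\lam z_j.Q_j\in\ETAset[p-1]$ rather than merely $\ETAset[p]$, and record $p\ge 1$, which is exactly what is needed to make the final ``$+1$'' in $\mes{\lam y.Q'}$ go through cleanly; the paper's proof glosses over this detail.
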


\begin{proof} We proceed by induction on the structure of $\nf_\beta(Q)$. 

From $\lam y.Q\in\ETAset[p]$ we get $\nf_\beta(Q)= \lam z_1\dots z_m.yQ_1\cdots Q_m $ for some $m < p$ and $Q_i$'s such that $\mes{\lam z_i.Q_i} < p$.
Since $Q\leeta[p'] Q'$ holds we have $\nf_\beta(Q') = \lam z_1\dots z_mw_1\dots w_{m'}.yQ'_1\cdots Q'_{m+m'}$ where $m'\le p'$, $Q_i\leeta[p'] Q'_i$ for all $i\le m$ and $\lam w_j.Q'_{j}\in\ETAset[p']$ for all $j > m$.
Therefore $m + m' < p+p'$, $\mes{\lam w_j.Q'_{j}} < p'$ for all $j > m$ and, by inductive hypothesis, $\mes{\lam z_i.Q'_i} < p+p'$ for all $i\le m$.
We conclude that $\mes{\lam y.Q'} < p+p'$, which entails $\lam y.Q'\in\ETAset[p+p']$.
\end{proof}

The next lemma is devoted to showing that the relation $\leeta[p]$ enjoys the following ``weighted'' transitivity property.

\begin{lem}\label{lemma:etasum} For all $M,N\in\Lam$, $p_1,p_2\in\nat$ we have that $M \leeta[p_1] P$ and $P\leeta[p_2] N$ entail $M\leeta[p_1+p_2] N$.
\end{lem}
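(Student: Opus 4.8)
The plan is to prove the stronger \emph{relational} statement $\leeta[p_1]\circ\leeta[p_2]\ \subseteq\ \leeta[p_1+p_2]$ by coinduction, keeping $p_1,p_2$ fixed. The candidate relation is
$$
	R=\big\{(U,W)\st \exists V.\ U\leeta[p_1]V \textrm{ and } V\leeta[p_2]W\big\},
$$
and I would show that every pair in $R$ satisfies the one-step unfolding of $\leeta[p_1+p_2]$, appealing to the coinductive hypothesis on the arguments inherited from $U$ and to Lemma~\ref{lemma:sclero} on the freshly created $\eta$-expansions. Applying this to $\BT M\leeta[p_1]\BT P\leeta[p_2]\BT N$ then gives the lemma.

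First I would dispatch the degenerate case: if $U=\bot$ then $U\leeta[p_1]V$ forces $V=\bot$, and then $V\leeta[p_2]W$ forces $W=\bot$, so $U\leeta[p_1+p_2]W$. Otherwise $U=\lam\seq x.yU_1\cdots U_k$ and, unfolding $U\leeta[p_1]V$, I obtain
$$
	V=\lam\seq x z_1\dots z_{m_1}.yV_1\cdots V_kQ_1\cdots Q_{m_1}
$$
with $m_1\le p_1$, $\lam z_i.Q_i\in\ETAset[p_1]$ for $i\le m_1$, and $U_j\leeta[p_1]V_j$ for $j\le k$. Since this $V$ is not $\bot$, I can unfold $V\leeta[p_2]W$ \emph{against this very shape}: the head variable is $y$, and the abstraction and argument blocks of $V$ are determined, so the matching forces
$$
	W=\lam\seq x z_1\dots z_{m_1}w_1\dots w_{m_2}.yX_1\cdots X_kX_{k+1}\cdots X_{k+m_1}R_1\cdots R_{m_2},
$$
where $m_2\le p_2$, $\lam w_\ell.R_\ell\in\ETAset[p_2]$ for $\ell\le m_2$, and (reading off the argument-by-argument conditions of $\leeta[p_2]$) $V_j\leeta[p_2]X_j$ for $j\le k$ together with $Q_i\leeta[p_2]X_{k+i}$ for $i\le m_1$.

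It then remains to verify that $U$ and $W$ are in the one-step relation with parameter $m_1+m_2$, which indeed satisfies $m_1+m_2\le p_1+p_2$ (and bounds equally the number of new abstractions and new arguments). The $k$ inherited argument pairs satisfy $U_j\leeta[p_1]V_j\leeta[p_2]X_j$, so $(U_j,X_j)\in R$ and the coinductive hypothesis gives $U_j\leeta[p_1+p_2]X_j$. Of the $m_1+m_2$ new arguments, the $R_\ell$ are handled by monotonicity, $\lam w_\ell.R_\ell\in\ETAset[p_2]\subseteq\ETAset[p_1+p_2]$, whereas each $X_{k+i}$ is an expansion of an $\eta$-expansion: from $\lam z_i.Q_i\in\ETAset[p_1]$ and $Q_i\leeta[p_2]X_{k+i}$, Lemma~\ref{lemma:sclero} yields precisely $\lam z_i.X_{k+i}\in\ETAset[p_1+p_2]$. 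Finally the freshness side-conditions transfer because $\leeta[p]$ preserves free variables (the newly introduced variables are bound), so $\FV{U_j}=\FV{V_j}$ and $\FV{X_j}=\FV{V_j}$; hence the variables $z_\ell$ (fresh in the first step) and $w_\ell$ (fresh in the second step) avoid $\FV{yU_1\cdots U_kX_1\cdots X_k}$ as required, up to $\alpha$-renaming.

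I expect the main obstacle to be purely the bookkeeping of the alignment: one must keep clearly separated the $k$ arguments inherited from $U$ (which need the coinductive call), the $m_1$ expansions created in the first step (which are \emph{themselves} further expanded in the second step, and therefore require the weighted Lemma~\ref{lemma:sclero} rather than mere inclusion of the $\ETAset[p]$), and the $m_2$ expansions created only in the second step (for which monotonicity suffices). Getting the index ranges right and checking that $m_1+m_2\le p_1+p_2$ is where the care lies; once the decomposition of $W$ above is in place, the coinductive step is routine.
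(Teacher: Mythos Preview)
Your proposal is correct and follows essentially the same approach as the paper's own proof: coinduction on the B\"ohm-like trees, unfolding $U\leeta[p_1]V$ and then $V\leeta[p_2]W$ against the obtained shape, applying the coinductive hypothesis to the $k$ inherited arguments, Lemma~\ref{lemma:sclero} to the $m_1$ ``middle'' expansions, and the inclusion $\ETAset[p_2]\subseteq\ETAset[p_1+p_2]$ to the last $m_2$. Your explicit packaging as a post-fixed point $R=\leeta[p_1]\circ\leeta[p_2]$ is slightly more formal than the paper's presentation, but the content is identical.
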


\begin{proof} We proceed by coinduction on their B\"ohm trees.

If one among $M,N,P$ is unsolvable, then all their B\"ohm trees are $\bot$ and we are done.

Otherwise from $M \leeta[p_1] P$ we get 
$$
	M\msto[h]\lam\seq x.yM_1\cdots M_k, 
	\textrm{ and }
	P\msto[h]\lam\seq x z_1\dots z_{m_1}.yP_1\cdots P_kQ_1\cdots Q_{m_1}
$$ 
where $\seq z\notin\FV{\BT{y\seq M\seq P}}$, $m_1\le p_1$, $\lam z_i.Q_i\in\ETAset[p_1]$ for $i\le m_1$ and ${M_j\leeta[p_1] P_j}$ for $j\le k$.

From $P \leeta[p_2] N$ we obtain (for $\seq w\notin\FV{\BT{y\seq P\seq Q\seq N\seq Q'}}$):
$$
	N\msto[h]\lam\seq x z_1\dots z_{m_1} w_1\dots w_{m_2}.yN_1\cdots N_kQ'_1\cdots Q'_{m_1}Q''_1\cdots Q''_{m_2}
$$ 
with $m_2\le p_2$, $P_j\leeta[p_2] N_j$ for $j\le k$, $Q_i\leeta[p_2] Q'_i$ for $i\le m_1$ and $\lam w_\ell.Q''_\ell\in\ETAset[p_2]$ for $\ell\le m_2$.\linebreak
We notice that $m_1\le p_1$ and $m_2\le p_2$ imply $m_1+m_2 \le p_1+p_2$, and that $\ETAset[p_1]\cup\,\ETAset[p_2]\subseteq \ETAset[p_1+p_2]$.\linebreak
From $\lam z_i.Q_i\in\ETAset[p_1]$ and $Q_i \leeta[p_2] Q'_i$ we obtain by Lemma~\ref{lemma:sclero} that $\lam z_i.Q'_i\in\ETAset[p_1+p_2]$ for $i\le m_1$.
Since, for all $j\le k$, $M_j\leeta[p_1] P_j$ and $P_j\leeta[p_2] N_j$ we conclude that $M_j\leeta[p_1+p_2] N_j$ holds by applying the coinductive hypotheses.
\end{proof}

The following is an easy corollary of Lemma~\ref{lemma:Bareta}.

\begin{cor}\label{cor:fromBaresLemma}
If $M\msto[\eta] N$ then there exists a bound $p\in\nat$ such that $N\leeta[p] M$.
\end{cor}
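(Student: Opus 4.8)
The plan is to reduce the multistep case to the single-step case supplied by Lemma~\ref{lemma:Bareta}, and then to accumulate the per-step bounds using the weighted transitivity of Lemma~\ref{lemma:etasum}. Concretely, I would first record the \emph{single-step claim}: whenever $P\redto[\eta] Q$, then $Q\leeta[1] P$. This is obtained by reading off Lemma~\ref{lemma:Bareta}, which tells us that $\BT{P}$ is obtained from $\BT{Q}$ by replacing, at some (possibly infinitely many) positions simultaneously, a node $\lam\seq x.yU_1\cdots U_k$ with $\lam\seq xz.yU_1\cdots U_kz$, the variable $z$ being fresh and never itself $\eta$-expanded. Matching this against the definition of $\leeta[p]$ with $p=1$: at each position we append at most $m=1$ abstraction, and the single inserted $\eta$-expansion is $\lam z.z=\bI$, which lies in $\ETAset[1]=\setof{\bI}$ since $\mes{\bI}=0<1$; the freshness of $z$ gives precisely the side condition $z\notin\FV{\cdots}$ required by the definition. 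The children are either left untouched or expanded at deeper positions in the same manner, so the coinductive clause $U_j\leeta[1] V_j$ is discharged by the coinductive hypothesis. Hence $Q\leeta[1] P$.

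Next I would proceed by induction on the length $\ell$ of a reduction sequence $M = M_0\redto[\eta] M_1\redto[\eta]\cdots\redto[\eta] M_\ell = N$ witnessing $M\msto[\eta] N$. For $\ell=0$ we have $\BT{M}=\BT{N}$, hence $N\leeta[0] M$ and we may take $p=0$. For the inductive step, the single-step claim gives $M_\ell\leeta[1] M_{\ell-1}$, while the induction hypothesis (applied to $M_0\msto[\eta] M_{\ell-1}$) yields some $p'$ with $M_{\ell-1}\leeta[p'] M_0$. Lemma~\ref{lemma:etasum} then combines these into $M_\ell\leeta[p'+1] M_0$, i.e.\ $N\leeta[p'+1] M$. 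Unwinding the recursion, one sees that the bound $p$ may be taken to be the number of $\eta$-steps performed.

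I do not expect a serious obstacle here; the delicate point is purely the bookkeeping in the single-step claim. Specifically, one must be careful that a single \emph{term-level} $\eta$-reduction may trigger the tree-level replacement at infinitely many positions at once (so the argument cannot be phrased as finitely many local edits), and that the definition of $\leeta[p]$ demands both the freshness condition $z\notin\FV{\cdots}$ and that the inserted expansion genuinely be the identity $\lam z.z$ rather than some nontrivial $\eta$-expansion of $z$ --- both of which are guaranteed by the ``without $\eta$-expanding the new variable $z$'' clause of Lemma~\ref{lemma:Bareta}. One should also confirm the orientation: since $\msto[\eta]$ contracts redexes, the reduct $N$ is the \emph{less} $\eta$-expanded term, so it is indeed $N$ that appears on the left of $\leeta[p]$. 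Everything else is a routine induction driven by Lemma~\ref{lemma:etasum}.
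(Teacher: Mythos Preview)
Your proposal is correct and follows essentially the same route as the paper: induction on the length of the $\eta$-reduction sequence, using Lemma~\ref{lemma:Bareta} to get $N\leeta[1]N'$ for a single step and Lemma~\ref{lemma:etasum} to accumulate the bounds, so that $p$ may be taken to be the number of steps. Your treatment of the single-step claim is in fact more explicit than the paper's, which simply notes that the abstraction lengths differ by at most one and that $\lam z.z\in\ETAset[1]$.
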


\begin{proof} We perform an induction loading and prove that if the reduction $M\msto[\eta] N$ has length $p$, then $N\leeta[p] M$ holds.

If $p = 0$ then $\BT{M} = \BT{N}$ so we have $N \leeta[0] M$.

If $p >0$ then $M\msto[\eta] N'$ in $p-1$ steps and $N'\redto[\eta] N$.
From Lemma~\ref{lemma:Bareta}, $N\leeta[1] N'$ since at every position $\sigma$ of their B\"ohm trees the lengths of $x_1,\dots,x_n$ and $x_1,\dots,x_nz$ differ at most by 1, and $\lam z.z\in\ETAset[1]$.
By induction hypothesis we have $N'\leeta[p-1] M$, so we conclude by applying Lemma~\ref{lemma:etasum}.
\end{proof}

\subsection{The Behaviour of $\etamax$ on Bounded $\eta$-Expansions}

Recall that in Definition~\ref{def:streamETA} we have fixed an effective enumeration $\vec\eta = (\eta_0,\eta_1,\dots)$ of the set $\ETAset$, together with the corresponding stream $\ETA = [\eta_0,[\eta_1,[\eta_2,\dots]]]$.
Moreover, we consider fixed a map $\iota$ satisfying the properties of Lemma~\ref{lemma:iotacomp}.
We start proving the following technical lemma concerning bounded $\eta$-expansions of the identity.

\begin{lem}\label{lem:old_claim}
Let $Q,Q'\in\Lam$, $\seq z \supseteq \FV{QQ'}$. 
If $\lam z_\ell.Q,\lam z_\ell.Q'\in\ETAset[p]$ for some index $\ell$ then 
\[
	Q\leeta[p]\etamax\code{\lam\seq z.Q}\code{\lam\seq z.Q'}\ETA\,\seq z\geeta[p]Q'.
\]
\end{lem}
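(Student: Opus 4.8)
The plan is to argue by coinduction on B\"ohm trees, exactly as in the proofs of Proposition~\ref{prop:Hplchar} and Lemma~\ref{lem:Nwins}, but now carrying the bound $p$ through every node. Since $\leeta[p]$ depends only on B\"ohm trees, Lemma~\ref{lemma:etamax_comm} lets me reduce the two displayed inequalities to a single one: it suffices to prove, for all $Q,Q'$ with $\lam z_\ell.Q,\lam z_\ell.Q'\in\ETAset[p]$, that $Q\leeta[p]\etamax\code{\lam\seq z.Q}\code{\lam\seq z.Q'}\ETA\,\seq z$; instantiating this with $Q$ and $Q'$ interchanged and using $\etamax\code{\lam\seq z.Q'}\code{\lam\seq z.Q}=_\BTth\etamax\code{\lam\seq z.Q}\code{\lam\seq z.Q'}$ then delivers the right-hand inequality via Lemma~\ref{lemma:selfevident}\ref{lemma:selfevident2}. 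I would also record at the outset that $\lam z_\ell.Q\in\ETAset$ forces $\BT{Q}$ to be the tree of a finite $\eta$-expansion of the single variable $z_\ell$ (any other free variables of $\seq z$ are erased in $\BT{Q}$ and serve only to close the arguments handed to $\etamax$), so the whole argument can be run modulo $=_\BTth$.

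For the coinductive step I would unfold the hypotheses through the definition of $\mes{-}$ and Lemma~\ref{lemma:eta-equiv-defs}: from $\lam z_\ell.Q\in\ETAset[p]$ one gets $Q\msto[h]\lam w_1\dots w_a.z_\ell Q_1\cdots Q_a$ with $a<p$ and $\lam w_i.Q_i\in\ETAset[p-1]\subseteq\ETAset[p]$, and symmetrically $Q'\msto[h]\lam w_1\dots w_{a'}.z_\ell Q'_1\cdots Q'_{a'}$ with $a'<p$ and $\lam w_i.Q'_i\in\ETAset[p]$. Up to the commutativity swap I may assume $a\le a'$, so $\lam\seq z.Q\le_h\lam\seq z.Q'$ and clause~(1) of Definition~\ref{def:etamax} fires. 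Applying the resulting closed term to $\seq z$ peels off the common outer block $\lam\seq z$ and, for the index $q$ with $\pi_q\ETA=_\beta\eta_q$ the $\eta$-expansion matching the surplus $Q'_{a+1}\cdots Q'_{a'}$ of the larger term, yields a B\"ohm tree of the form
\[
	\lam w_1\dots w_{a'}.\,z_\ell\,R_1\cdots R_a\,Q'_{a+1}\cdots Q'_{a'},
\]
where each $R_i$ is (an instance applied to the relevant variables of) $\etamax\code{\lam\seq z\seq w.Q_i}\code{\lam\seq z\seq w.Q'_i}\ETA$.

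It then remains to match this against the clauses defining $\leeta[p]$. The number of new binders is $a'-a<p$, hence $\le p$ as required; the surplus components $\lam w_j.Q'_j$ with $a<j\le a'$ come from the more-expanded term and so lie in $\ETAset[p-1]\subseteq\ETAset[p]$; and for each common slot $i\le a$ both $\lam w_i.Q_i$ and $\lam w_i.Q'_i$ are $\eta$-expansions of $w_i$ lying in $\ETAset[p]$, so the coinductive hypothesis applies and yields $Q_i\leeta[p] R_i$. This establishes $Q\leeta[p]\etamax\code{\lam\seq z.Q}\code{\lam\seq z.Q'}\ETA\,\seq z$ and closes the coinduction.

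The principal difficulty is the bound bookkeeping, precisely the asymmetry isolated by Lemma~\ref{lemma:Q_p}: sizes are bounded strictly ($\mes{\lam w_i.Q_i}<p$) whereas the number of newly created binders is bounded non-strictly ($a'-a\le p$). I must verify that sizes decrease strictly along the recursion, so that the subcomponents genuinely remain in $\ETAset[p-1]\subseteq\ETAset[p]$ and the \emph{same} bound $p$ is available for the coinductive hypothesis, while the arity gaps, which may be as large as $p-1$, never push past $p$. The remaining points are routine: the argument-order and $\le_h$ case split in Definition~\ref{def:etamax} is absorbed by Lemma~\ref{lemma:etamax_comm}, and the spurious free variables allowed by $\seq z\supseteq\FV{QQ'}$ are absorbed by reasoning throughout at the level of $\BT{-}$ and $=_\BTth$.
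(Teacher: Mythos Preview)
Your proposal is correct and follows essentially the same argument as the paper. The only differences are presentational: since $Q,Q'\in\ETAset[p]$ have finite $\beta$-normal forms, the paper proceeds by structural induction rather than coinduction, proves both inequalities simultaneously instead of reducing to one via Lemma~\ref{lemma:etamax_comm}, and isolates the case $m=0$ as a base case handled directly by Lemma~\ref{lem:Nwins} rather than folding it into the uniform step.
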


\begin{proof}
Since $Q,Q'$ are normalizing, we assume they are in $\beta$-normal form and proceed by structural induction. 
We have, say:
$$
	Q = \lam\seq y.z_\ell Q_1\cdots Q_m\quad
	Q' = \lam\seq y\seq w.z_\ell Q'_1\cdots Q'_{m+m'}
$$
where $m+m'\le p$, $\lam y_i.Q_i,\lam y_i.Q'_i\in\ETAset[p]$ for $i\le m$ and $\lam w_j.Q'_j\in\ETAset[p]$ for $j >m$.

We split into two subcases.
\begin{itemize}
\item
	If $m = 0$ then $Q = z_\ell$ and $\lam \seq z.z_\ell\leeta[p] \lam\seq z.Q'$, so by Lemma~\ref{lem:Nwins} we get that $\BTth\vdash \lam\seq z.Q' = \etamax\code{\lam\seq z.Q}\code{\lam\seq z.Q'}\ETA$ and the case follows by applying the variables $\seq z$ to both sides.
\item
Otherwise $m > 0$, so for $q =\iota(\num{\lam x\seq w.x Q'_{m+1}\cdots Q'_{m+m'}})$ we have:
$$
	\begin{array}{rcl}
	\etamax\code{\lam \seq z.Q}\code{\lam \seq z.Q'}\ETA\seq z 
	&=_\beta& 
	\lam \seq y.\pi_q\ETA(z_\ell(\Upsilon_1\seq z\seq y)\cdots(\Upsilon_m\seq z\seq y))\\
	&=_\beta&
	\lam \seq y.(\lam x\seq w.x Q'_{m+1}\cdots Q'_{m+m'})(z_\ell(\Upsilon_1\seq z\seq y)\cdots(\Upsilon_m\seq z\seq y))\\
	&=_\beta&
	\lam \seq y w.z_\ell(\Upsilon_1\seq z\seq y)\cdots(\Upsilon_m\seq z\seq y)Q'_{m+1}\cdots Q'_{m+m'}
	\end{array}
$$
where  $\Upsilon_i = \etamax\code{\lam\seq z\seq y.Q_i}\code{\lam\seq z\seq y.Q'_i}\ETA$ and the case follows by induction hypothesis, thus concluding the proof of the lemma.\qedhere
\end{itemize}
\end{proof}

As shown in Section~\ref{sec:Salleiswrong}, given two \lam-terms $M$ and $N$ whose B\"ohm trees differ because of countably many $\eta$-expansions, the \lam-term $\etamax\code{M}\code{N}$ of Definition~\ref{def:etamax} builds their $\eta$-supremum from the stream $\ETA$ (Proposition~\ref{prop:Hplchar}).
We now prove that when the size of such $\eta$-expansions is bounded by $p$, then the B\"ohm trees of $M,N$ also differ from $\BT{\etamax\code{M}\code{N}\ETA}$ because of $\eta$-expansions bounded by $p$.

\begin{lem}\label{lemma:supM1M2p}
Let $M,N,P\in\Lamo$. If $M\geeta[p]P\leeta[p] N$, then:
$$
	M\leeta[p] \etamax \code{M}\code{N}\ETA\geeta[p] N.
$$
\end{lem}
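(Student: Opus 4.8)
The plan is to reprove Proposition~\ref{prop:Hplchar} while carrying the bound $p$ through every step, by coinduction on B\"ohm trees. Along the coinduction I maintain as an invariant the existence of a common $\eta$-infimum: I work with triples of (B\"ohm trees of) terms satisfying $M\geeta[p]P\leeta[p]N$ and prove simultaneously $M\leeta[p]\etamax\code{M}\code{N}\ETA$ and $N\leeta[p]\etamax\code{M}\code{N}\ETA$, which together give the statement. As usual the coinduction is really carried out on open terms, the subterms being closed off by a common abstraction prefix. Since by Lemma~\ref{lemma:etamax_comm} the tree $\BT{\etamax\code{M}\code{N}\ETA}$ is symmetric in $M,N$ up to $\BTth$, I may assume without loss of generality that $N$ carries at least as many trailing $\eta$-expansions as $M$.

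First I dispose of the degenerate case: if one of $M,N,P$ is unsolvable then all three B\"ohm trees equal $\bot$ and $\etamax\code{M}\code{N}\ETA$ is unsolvable, so the claim holds. Otherwise $P\msto[h]\lam\seq x.yP_1\cdots P_k$, and unfolding $P\leeta[p]M$ and $P\leeta[p]N$ according to the definition of $\leeta[p]$ yields $M\msto[h]\lam\seq x z_1\dots z_a.yM_1\cdots M_kR_1\cdots R_a$ and $N\msto[h]\lam\seq x w_1\dots w_b.yN_1\cdots N_kS_1\cdots S_b$ with $a,b\le p$, with $\lam z_i.R_i,\lam w_j.S_j\in\ETAset[p]$, and with $P_\ell\leeta[p]M_\ell$, $P_\ell\leeta[p]N_\ell$ for $\ell\le k$; after $\alpha$-renaming I take $z_i=w_i$ for $i\le\min\{a,b\}$. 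Assuming $a\le b$, the membership $\lam w_j.S_j\in\ETAset[p]\subseteq\ETAset$ shows that $M\le_h N$, so condition~(1) of Definition~\ref{def:etamax} applies (after the swap of Definition~\ref{def:etamax}(2) if needed), and a $\beta$-computation as in Lemma~\ref{lem:Nwins} gives
$$\etamax\code{M}\code{N}\ETA =_\beta \lam\seq x z_1\dots z_a w_{a+1}\dots w_b.\,y(\Upsilon_1\seq x\seq z)\cdots(\Upsilon_{k+a}\seq x\seq z)S_{a+1}\cdots S_b,$$
where, writing $\seq z=z_1\dots z_a$, we have $\Upsilon_\ell=\etamax\code{\lam\seq x\seq z.M_\ell}\code{\lam\seq x\seq z.N_\ell}\ETA$ for $\ell\le k$ and $\Upsilon_{k+i}=\etamax\code{\lam\seq x\seq z.R_i}\code{\lam\seq x\seq z.S_i}\ETA$ for $i\le a$.

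It then remains to read off $\BT{\etamax\code{M}\code{N}\ETA}$ and compare it with $\BT{M}$ and $\BT{N}$. The number of additional abstractions is $b-a\le b\le p$, as the definition of $\leeta[p]$ requires. The $k$ core positions are governed by the coinductive hypothesis applied to the triples $(M_\ell,P_\ell,N_\ell)$, giving $M_\ell\leeta[p]\Upsilon_\ell\seq x\seq z\geeta[p]N_\ell$. The $a$ overlapping trailing positions compare $R_i$ with $S_i$, and since these are both $\ETAset[p]$-expansions of the same variable $z_i$, I invoke Lemma~\ref{lem:old_claim} as a black box to obtain $R_i\leeta[p]\Upsilon_{k+i}\seq x\seq z\geeta[p]S_i$. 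Finally the last $b-a$ trailing arguments $S_{a+1},\dots,S_b$ occur unchanged in the result; the hypothesis $\lam w_j.S_j\in\ETAset[p]$ is exactly what is needed for $M\leeta[p]\etamax\code{M}\code{N}\ETA$, while for $N\leeta[p]\etamax\code{M}\code{N}\ETA$ these positions match literally and contribute no new expansion. Assembling the three groups of comparisons through the coinductive hypothesis produces both desired inequalities.

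I expect the main obstacle to be twofold. The first point is guaranteeing that $M$ and $N$ are always $\le_h$-comparable at the head: this is precisely where the common infimum $P$ is indispensable, since it forces $M$ and $N$ to share the head variable $y$ and the core arity $k$ and to differ only by trailing $\ETAset[p]$-expansions, so that the shorter trailing list embeds into the longer one in the sense of Definition~\ref{def:le_h}. The second, more bureaucratic point is keeping the bound $p$ uniform: the counts $a,b\le p$, the memberships $\lam z_i.R_i\in\ETAset[p]$, and the off-by-one behaviour recorded in Lemmas~\ref{lemma:Q_p} and~\ref{lemma:sclero} must be made to align so that $p$ is raised neither along the core recursion nor along the trailing comparisons. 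Isolating the pure-$\eta$-expansion subproblems and discharging them via Lemma~\ref{lem:old_claim} is what keeps this accounting honest.
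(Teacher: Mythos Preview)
Your proposal is correct and follows essentially the same route as the paper: coinduction on B\"ohm trees, using the common $\eta$-infimum $P$ to force $M\le_h N$ (up to the swap of Lemma~\ref{lemma:etamax_comm}), then discharging the $k$ core positions by the coinductive hypothesis, the $a$ overlapping trailing positions via Lemma~\ref{lem:old_claim}, and the remaining $b-a$ positions directly from the hypothesis $\lam w_j.S_j\in\ETAset[p]$. The only difference is cosmetic: the paper writes the two trailing lengths as $m$ and $m+m'$ rather than your $a$ and $b$, and is terser about the bookkeeping you spell out.
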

\begin{proof} We proceed by coinduction on their B\"ohm trees.

If $P$ is unsolvable, then also $M,N$ and $\etamax \code{M}\code{N}$ are unsolvable and we are done.

Otherwise $P\msto[h] \lam\seq x.x_jP_1\cdots P_k$ and from $P\leeta[p] M$ we obtain for some $m\le p$:
$$
	M\msto[h] \lam\seq x z_1\dots z_m.x_jM_1\cdots M_kQ_1\cdots Q_m
$$ 
where $P_i\leeta[p] M_i$ for $i\le k$ and $\lam z_\ell.Q_\ell\in\ETAset[p]$ for $\ell\le m$.
Similarly, from $P\leeta[p] N$ we get, say, 
$$
	N\msto[h] \lam\seq xz_1\dots z_mw_1\dots w_{m'}.x_jN_1\cdots N_kQ'_1\cdots Q'_{m+m'}
$$ where $m+m'\le p$, $P_i\leeta[p] N_i$ for $i\le k$, $\lam z_\ell.Q'_\ell\in\ETAset[p]$ for $\ell\le m$ and $\lam w_{\ell}.Q'_{m+\ell}\in\ETAset[p]$ for $\ell\le m'$, so we have $M\le_h N$.
(Notice that the case $N<_h M$ is symmetrical, one just needs to apply Lemma~\ref{lemma:etamax_comm}.)
For $q = \iota(\num{\NF[\beta](\lam y\seq w.yQ'_{m+1}\cdots Q'_{m+m'})})$, we have that:
$$
\begin{array}{rcl}
\etamax\code{M}\code{N}\ETA &=_\beta&
\lam\seq x\seq z.\pi_q\ETA (x_j(\Upsilon_1\,\seq x\seq z\,)\cdots(\Upsilon_k\,\seq x\seq z\,)(\Upsilon'_1\,\seq x\seq z\,)\cdots(\Upsilon'_m\,\seq x\seq z\,))\\
 &=_\beta&
\lam\seq x\seq z\seq w.x_j(\Upsilon_1\seq x\seq z\,)\cdots(\Upsilon_k\seq x\seq z\,)(\Upsilon'_1\seq x\seq z\,)\cdots(\Upsilon'_m\seq x\seq z\,)Q'_{m+1}\cdots Q'_{m+m'}
\end{array}
$$
where $\Upsilon_i = \etamax\code{\lam\seq x\seq z.M_i}\code{\lam\seq x\seq z.N_i}\ETA$ for $i\le k$, and~$\Upsilon'_\ell = \etamax\code{\lam\seq x\seq z.Q_\ell}\code{\lam\seq x\seq z.Q'_\ell}\ETA$ for $\ell \le m$.
By applying Lemma~\ref{lem:old_claim}, we get $Q_\ell\leeta[p]
\Upsilon'_\ell\seq x\seq z\geeta[p] Q'_\ell$.  By coinductive
hypotheses, we get $M_i\leeta[p]\Upsilon_i\seq x\seq z \geeta[p]N_i$.
Since $m\le m+m'\le p$ and $Q'_{m+\ell} \in\ETAset[p]$ we conclude that $M\leeta[p] \etamax \code{M}\code{N}\ETA\geeta[p] N$ holds.
\end{proof}

\subsection{A New Characterization of the Equality in $\BTe$}

We are now ready to provide a characterization of the equality $M =_\BTe N$ in terms of equality between $\BT{M}$ and $\BT{N}$ up to bounded $\eta$-expansions.
The key idea we exploit is the fact that, under these hypotheses, $\etamax\code{M}\code{N}$ only depends on a finite restriction of its input stream $S$ and that all finite restrictions of $\ETA$ and $\ID$ of the same length are $\beta\eta$-convertible with each other (since they are finite). 

\begin{defi} Let $S = [S_i]_{i\in\nat}$ be a stream of \lam-terms.
For $n\in\nat$, define the \emph{$n$-truncation of $S$} as the following sequence:
$$
	\restr{S\,}n\ =\ [S_0,[S_1,[\,\cdots,[S_n,\Omega]\cdots]]]
$$
\end{defi}
It is easy to check that the $i$-th projection $\pi_i$ defined for $S$ also works on $\restr{S\,}n$ for all $i\le n$.
Notice that the $\Omega$ at the end of the sequence $\restr{S}n$ does not have a profound meaning: we just need to have an $n+1$-component since $\pi_n = \lam y.y\bF^{\sim n}\bK$ needs something to erase in that position.

We have seen in Lemma~\ref{lemma:iotacomp} that $\iota(\#Q)$ corresponds to the \emph{smallest} index $i$ such that $Q$ occurs in $\ETA$.
The following property is a consequence of such a  minimality condition.

\begin{lem}\label{lemma:iotabounded}
Let $p\in\nat$. There exists an index $n$ such that for every closed \lam-term $Q\in\ETAset[p]$ we have $\iota(\#Q) \le n$.
\end{lem}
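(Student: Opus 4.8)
The plan is to reduce the statement to a finiteness argument. First I would observe that, by Lemma~\ref{lemma:iotacomp}, the value $\iota(\#Q)$ depends only on the $\beta$-conversion class of $Q$: it is by definition the least index $i$ with $Q =_\beta \eta_i$, and the predicate $Q =_\beta \eta_i$ is manifestly invariant under $\beta$-conversion of $Q$. Hence it suffices to bound $\iota$ on a set of representatives for the $\beta$-classes met by $\ETAset[p]$.

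Second, I would argue that $\ETAset[p]$ meets only finitely many $\beta$-conversion classes. Every closed $Q\in\ETAset[p]$ is $\beta$-normalizing with a closed, $\bot$-free $\beta$-normal form, so by Lemma~\ref{lemma:eta-equiv-defs} this normal form is determined, up to $\alpha$-conversion, by its underlying finite unlabelled tree --- this is precisely the bijection between $\beta$-classes of $\ETAset$ and finite trees recalled from~\cite{IntrigilaN03} (cf.\ Theorem~\ref{thm:one-one}). The constraint $\mes{Q} < p$ bounds this tree both in height and in branching: unfolding the recursion for $\mes-$, if $Q\msto \lam yz_1\dots z_m.yQ_1\cdots Q_m$ then $m < p$ and $\mes{\lam z_i.Q_i} < p-1$ for every $i\le m$, so by induction on the structure every such tree has height at most $p$ and every node has fewer than $p$ children. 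There are only finitely many finite trees subject to these two bounds, whence $\ETAset[p]$ contains only finitely many $\beta$-classes.

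Finally, each closed $Q\in\ETAset[p]$ lies in $\ETAset\cap\Lamo$ and is therefore $\beta$-convertible to some $\eta_i$, so $\iota(\#Q)$ is defined throughout. Since $\iota(\#Q)$ depends only on the $\beta$-class of $Q$ and $\ETAset[p]$ meets only finitely many such classes, $\iota$ takes only finitely many values on $\{\,\#Q\st Q\in\ETAset[p]\text{ closed}\,\}$. Taking $n$ to be the maximum of these finitely many values yields the desired uniform bound $\iota(\#Q)\le n$.

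The only genuine work is the finiteness claim of the second step; the rest is bookkeeping around the $\beta$-invariance and well-definedness of $\iota$. The delicate point there is reading off the correct height and width bounds from the recursion for $\mes-$ --- in particular respecting the off-by-one between the strict bound $\mes{\lam z_i.Q_i} < p$ on the subexpansions and the bound $m < p$ on the number of children --- after which counting the admissible trees is a routine combinatorial estimate.
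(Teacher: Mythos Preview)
Your proposal is correct and follows essentially the same approach as the paper: both argue that $\ETAset[p]$ contains only finitely many $\beta$-normal forms (equivalently, $\beta$-classes), that $\iota(\#Q)$ depends only on the $\beta$-class of $Q$, and then take $n$ to be the maximum of the resulting finite set of indices. You simply unpack the finiteness claim in more detail---via the tree bijection and the height/width bounds extracted from the recursion for $\mes{-}$---where the paper asserts it in one line.
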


\begin{proof} Since the size of each $Q\in\ETAset[p]$ is bounded by $p$, the set $\nf_\beta(\ETAset[p]) = \{\nf_\beta(Q)\st Q\in\ETAset[p]\}$ is finite.
By Lemma~\ref{lemma:iotacomp}, $\iota(\#Q)$ gives the smallest index $i$ such that $Q =_\beta\eta_i$, therefore the set $\iota[\ETAset[p]] =  \{ \iota(\#Q) \st Q \in\ETAset[p]\cap\Lamo\}$ is finite. 
We conclude by considering $n = \max(\iota[\ETAset[p]])$.
\end{proof}

As a corollary we get that if $M\leeta[p] N$ holds, then $\etamax\code{M}\code{N}$ only uses a finite portion of its input stream.

\begin{cor}\label{cor:restricting_stream}
For all $M,N\in\Lamo$, if $M\leeta[p] N$ then there exists an index $n\in\nat$ such that for every stream $S = [S_i]_{i\in\nat}$ we have:
$$
	\BTth\vdash\etamax\code{M}\code{N} S = \etamax\code{M}\code{N} (\restr{S} n).
$$
\end{cor}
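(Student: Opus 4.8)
The plan is to reduce the statement to a bound that is uniform over all positions of the two B\"ohm trees, and then to proceed by coinduction. Since $M\leeta[p] N$, I would first apply Lemma~\ref{lemma:iotabounded} with $p' = p+1$ to obtain an index $n\in\nat$ such that $\iota(\#Q)\le n$ for every closed $Q\in\ETAset[p+1]$. Crucially, this $n$ depends only on $p$ and not on the stream $S$, which is exactly what the statement requires. With $n$ fixed, I would prove the slightly loaded claim that $M'\leeta[p]N'$ implies $\BT{\etamax\code{M'}\code{N'}S} = \BT{\etamax\code{M'}\code{N'}(\restr{S}n)}$ for every stream $S$; since $\BTth$ equates terms with the same B\"ohm tree, this yields the desired $\BTth$-equality for the original $M,N$.

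In the coinductive step, if $M'$ (equivalently $N'$) is unsolvable then both sides reduce to $\bot$. Otherwise, $M'\leeta[p]N'$ entails $M'\leeta N'$ by Lemma~\ref{lemma:selfevident}\ref{lemma:selfevident1}, hence $M'\le_h N'$, so the first clause of Definition~\ref{def:etamax} applies: $M'\msto[h]\lam\seq x.x_jM_1\cdots M_k$ and $N'\msto[h]\lam\seq xz_1\dots z_m.x_jN_1\cdots N_kQ_1\cdots Q_m$, and at this position $\etamax$ applies the projection $\pi_q$ with $q = \iota(\#\NF[\beta](\lam y z_1\dots z_m.yQ_1\cdots Q_m))$.

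The crux is showing $q\le n$. From $M'\leeta[p]N'$ I read off $m\le p$ and $\lam z_i.Q_i\in\ETAset[p]$ for all $i\le m$; Lemma~\ref{lemma:Q_p} then gives $\lam y z_1\dots z_m.yQ_1\cdots Q_m\in\ETAset[p+1]$, and since the size $\mes{\cdot}$ is invariant under $\beta$-conversion its $\beta$-normal form is a closed element of $\ETAset[p+1]$. By the choice of $n$ we get $q\le n$, whence $\pi_q S =_\beta \pi_q(\restr{S}n)$, using that the $i$-th projection agrees on $S$ and $\restr{S}n$ for all $i\le n$. Consequently both $\etamax$-terms produce the same head $\lam\seq x\seq z.x_j(\cdots)Q_1\cdots Q_m$, and their subtrees are governed by the recursive calls $\etamax\code{\lam\seq x.M_i}\code{\lam\seq x.N_i}(-)$.

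The main obstacle — though a mild one — is making the coinduction close with a \emph{single} bound $n$ at every depth. This succeeds precisely because $n$ was extracted from $p$ alone, and because the recursive arguments still satisfy the driving hypothesis: from $\BT{M_i}\leeta[p]\BT{N_i}$ one obtains $\lam\seq x.M_i\leeta[p]\lam\seq x.N_i$, as prepending the common binders $\seq x$ introduces no new $\eta$-expansion and therefore preserves $\leeta[p]$. Thus the coinductive hypothesis applies to each pair $(\lam\seq x.M_i,\lam\seq x.N_i)$ with the very same $n$, yielding $\BT{\Upsilon_iS} = \BT{\Upsilon_i(\restr{S}n)}$ for the corresponding subterms $\Upsilon_i$; the symmetric case $N'<_h M'$ is dispatched as usual via Lemma~\ref{lemma:etamax_comm}. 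Assembling the head and the subtrees, the two B\"ohm trees coincide, which is the claim.
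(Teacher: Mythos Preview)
Your argument follows the same route as the paper's: invoke Lemma~\ref{lemma:iotabounded} at level $p+1$ to fix $n$, then proceed by coinduction, showing that at every node the index $q$ computed by $\etamax$ satisfies $q\le n$ (via Lemma~\ref{lemma:Q_p}), whence $\pi_q S =_\beta S_q =_\beta \pi_q(\restr{S}{n})$. One small inaccuracy worth noting: for an \emph{arbitrary} stream $S$ the term you obtain is $\lam\seq x.S_q\big(x_j(\Upsilon_1\seq x)\cdots(\Upsilon_k\seq x)\big)$, not the fully expanded head $\lam\seq x\seq z.x_j(\cdots)Q_1\cdots Q_m$ that you wrote---the latter shape only appears when $S_q =_\beta \eta_q$, i.e.\ when $S=\ETA$. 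Your key step is unaffected, since both sides share the same $S_q$ and differ only in the recursive subterms $\Upsilon_i$ versus $\Upsilon'_i$, to which the coinductive hypothesis applies (and $=_\BTth$ is a congruence). Finally, under the loaded hypothesis $M'\leeta[p]N'$ the case $N'<_h M'$ cannot occur, so the appeal to Lemma~\ref{lemma:etamax_comm} is superfluous, though harmless.
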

\begin{proof} By Lemma~\ref{lemma:iotabounded}, there exists $n$ such that for every $Q\in\ETAset[p+1]\cap\Lamo$ we have $\iota(\#Q) \le n$.
We prove the statement by coinduction for that particular $n$.

If $M$ or $N$ are unsolvable, than so is $\etamax\code{M}\code{N}$ and we are done.

Otherwise, we have $M\msto[h] \lam\seq x.x_jM_1\cdots M_k$ and $N\msto[h] \lam\seq x\seq z.x_jN_1\cdots N_kQ_1\cdots Q_m$ for $m\le p$ and $\lam z_\ell.Q_\ell\in\ETAset[p]$ for all $\ell\le m$. 
By Lemma~\ref{lemma:Q_p}, the \lam-term defined as $Q = \nf_\beta(\lam y\seq z.yQ_1\cdots Q_m)$ belongs to the set $\ETAset[p+1]$.
Hence, for some index $ q = \iota(\#Q) \le n$, we have on the one side:
$$
	\begin{array}{rl}
	\etamax\code{M}\code{N}S =_\beta&
	\lam \seq x.\pi_qS(x_j(\Upsilon_1\seq x)\cdots(\Upsilon_k\seq x))\\
	=_\beta&
	\lam \seq x.S_q(x_j(\Upsilon_1\seq x)\cdots(\Upsilon_k\seq x))\\
	\end{array}	
$$
where $\Upsilon_i = \etamax\code{M_i}\code{N_i}{S}$. On the other side, we have:
$$
	\begin{array}{rl}
	\etamax\code{M}\code{N}(\restr{S\,} n) =_\beta&
	\lam \seq x.\pi_q(\restr{S\,} n)(x_j(\Upsilon'_1\seq x)\cdots(\Upsilon'_k\seq x))\\
	=_\beta&
	\lam \seq x.S_q(x_j(\Upsilon'_1\seq x)\cdots(\Upsilon'_k\seq x))\\
	\end{array}
$$
where $\Upsilon'_i = \etamax\code{M_i}\code{N_i}(\restr{S\,}{n})$.
We can conclude since, by coinductive hypothesis, $\BT{\Upsilon_i} = \BT{\Upsilon'_i}$ holds for all $i\le k$.
\end{proof}
Since $M =_\BTe N$ holds exactly when there is an alternating sequence of shape $M=_\BTth M_0 =_\eta M_1 =_\BTth\cdots =_\eta M_k =_\BTth~N$, ``zig-zag'' sequences like the one in the following lemma naturally arise (for instance, in the proof of Theorem~\ref{thm:main2}).

\begin{lem}\label{lemma:zigozago}
Let $M_1,\dots, M_{k+1}\in\Lamo$ and let $p\in\nat$. 
If, for some $N_i\in\Lamo$, there is a zig-zag sequence such that:
\begin{center}
\begin{tikzpicture}
\node at (-4,0) {};
\node at (0,0) {$~M_1\qquad\qquad M_2 \quad \cdots\quad M_{k-1}\qquad\qquad M_k\qquad\qquad M_{k+1}$};
\node at (0pt,-25pt) {$N_1\qquad\qquad N_2 \qquad \cdots\quad N_{k-1}\qquad \quad\  N_k\ \ $};
\node[rotate=-40] at (-112pt,-13pt) {$\geeta[p]$};
\node[rotate=40] at (-82pt,-13pt) {$\leeta[p]$};
\node[rotate=-40] at (-52pt,-13pt) {$\geeta[p]$};
\node[rotate=40] at (-24pt,-13pt) {$\leeta[p]$};
\node[rotate=-40] at (13pt,-13pt) {$\geeta[p]$};
\node[rotate=40] at (43pt,-13pt) {$\leeta[p]$};
\node[rotate=-40] at (73pt,-13pt) {$\geeta[p]$};
\node[rotate=40] at (103pt,-13pt) {$\leeta[p]$};
\end{tikzpicture}
\end{center}
then there exists $P\in\Lamo$ such that $M_1\leeta[kp]P\geeta[kp] M_k$.
\end{lem}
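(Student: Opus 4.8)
The plan is to collapse the $2k$ individual steps of the zig-zag into a single common upper bound by iterating the supremum construction of Lemma~\ref{lemma:supM1M2p}, while controlling the growth of the bound with the weighted transitivity of Lemma~\ref{lemma:etasum}. The starting observation is that each $N_i$ is a common lower bound of the two terms flanking it, i.e. $N_i\leeta[p]M_i$ and $N_i\leeta[p]M_{i+1}$; hence Lemma~\ref{lemma:supM1M2p} produces $U_i:=\etamax\code{M_i}\code{M_{i+1}}\ETA\in\Lamo$ with $M_i\leeta[p]U_i\geeta[p]M_{i+1}$. This turns the $(k+1)$-term zig-zag into a shorter chain $U_1,\dots,U_k$ of suprema, and the whole argument amounts to repeating this collapse $k$ times.

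To keep the per-level cost equal to exactly $p$, I would isolate the right invariant. Call a sequence $(A_1,\dots,A_r)\in(\Lamo)^r$ \emph{$p$-joinable} if for every $i<r$ there is $B_i\in\Lamo$ with $B_i\leeta[p]A_i$ and $B_i\leeta[p]A_{i+1}$. The crucial point is that consecutive suprema $U_i,U_{i+1}$ still admit a common lower bound at distance $p$, namely the shared term $M_{i+1}$, since $M_{i+1}\leeta[p]U_i$ and $M_{i+1}\leeta[p]U_{i+1}$ both hold by the construction above; thus $(U_1,\dots,U_k)$ is again $p$-joinable. I would therefore prove, by induction on $r$, the statement: if $(A_1,\dots,A_r)$ is $p$-joinable then there is $P\in\Lamo$ with $A_1\leeta[(r-1)p]P\geeta[(r-1)p]A_r$. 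The base case $r=1$ is trivial ($P=A_1$, bound $0$). For $r\ge 2$, pass to the $p$-joinable sequence $(A'_1,\dots,A'_{r-1})$ with $A'_i=\etamax\code{A_i}\code{A_{i+1}}\ETA$ (built from the lower bounds $B_i$ via Lemma~\ref{lemma:supM1M2p}, so that $A_i\leeta[p]A'_i\geeta[p]A_{i+1}$ and the new common lower bounds are the $A_{i+1}$), apply the induction hypothesis to get $P$ with $A'_1\leeta[(r-2)p]P\geeta[(r-2)p]A'_{r-1}$, and finish with $A_1\leeta[p]A'_1$ and $A_r\leeta[p]A'_{r-1}$ together with Lemma~\ref{lemma:etasum}, which gives $A_1\leeta[(r-1)p]P$ and $A_r\leeta[(r-1)p]P$. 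Applying this to $(M_1,\dots,M_{k+1})$, which is $p$-joinable via $(N_1,\dots,N_k)$, yields $r=k+1$ and hence $M_1\leeta[kp]P\geeta[kp]M_{k+1}$ for the two ends of the displayed zig-zag, with $P$ obtained by $k$-fold iteration of $\etamax$.

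The main obstacle is exactly the bound bookkeeping. A naive left-to-right merging of the $U_i$ forces repeated applications of Lemma~\ref{lemma:etasum} through ever-deeper intermediate suprema and makes the bound grow super-linearly (quadratically) in $k$, overshooting the claimed $kp$. The $p$-joinable invariant is what rules this out: at each collapse the new terms sit only $p$ above the previous level \emph{and} remain $p$-joinable, so the distance from an endpoint is increased by precisely $p$ per level, and a single use of Lemma~\ref{lemma:etasum} along the path $M_1\leeta[p]U_1\leeta[p]\cdots\leeta[p]P$ accumulates to exactly $kp$. The remaining verifications are routine: every term produced is a combinator (since $\etamax\code{\cdot}\code{\cdot}\ETA$ applied to combinators is closed), so Lemmas~\ref{lemma:supM1M2p} and~\ref{lemma:etasum} are applicable at each stage, and the $N_i$ furnish the required closed lower bounds at the base.
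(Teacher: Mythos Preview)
Your proposal is correct and follows essentially the same approach as the paper: form the suprema $U_i=\etamax\code{M_i}\code{M_{i+1}}\ETA$ via Lemma~\ref{lemma:supM1M2p}, observe that the shorter sequence $(U_1,\dots,U_k)$ satisfies the same hypotheses with the $M_{i+1}$ serving as the new lower bounds, apply the inductive hypothesis, and close with Lemma~\ref{lemma:etasum}. Your ``$p$-joinable'' abstraction merely names the invariant that the paper leaves implicit when it says the subsequence ``satisfies the hypotheses of the lemma''; the induction structure and bound arithmetic are identical. (Your endpoint $M_{k+1}$ is what the argument actually yields; the $M_k$ in the displayed conclusion is a typo, as confirmed by the paper's own inductive step and by the use of the lemma in Theorem~\ref{thm:main2}.)
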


\begin{proof} We proceed by induction on $k$.

Case $k =0$. Trivial, just take $P = M_1$.


Case $k>0$. By applying Lemma~\ref{lemma:supM1M2p} to each pair $M_i$ and $M_{i+1}$ we get, setting $N'_i = \etamax\code{M_i}\code{M_{i+1}}\ETA$, the sequence:
$$
	M_1\leeta[p] N'_1\geeta[p]M_2\cdots \leeta[p]N'_k\geeta[p]M_{k+1}.
$$
As the subsequence $N'_1\geeta[p]M_2\cdots \leeta[p]N'_k$ is shorter and satisfies the hypotheses of the lemma, we get from the inductive hypothesis a \lam-term $P\in\Lamo$ such that $N'_1\leeta[(k-1)p] P\geeta[(k-1)p]N'_k$.
We conclude by Lemma~\ref{lemma:etasum} since $(k-1)p + p = kp$.
\end{proof}

We are now able to provide the following characterization of $\BTe$,
which constitutes the second main result of the paper.

\begin{thm}\label{thm:main2}
For all $M,N\in\Lamo$, the following are equivalent:
\begin{enumerate}
\item $\BTe\vdash M = N$,
\item there exist $P\in\Lamo$ and $p\in\nat$ such that $M\leeta[p] P \geeta[p] N$.
\end{enumerate}
\end{thm}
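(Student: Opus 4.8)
The plan is to prove the two implications separately. The implication $(2)\Rightarrow(1)$ is a \emph{bounded} refinement of the argument used for $\BTo=\Hpl$ in Theorem~\ref{thm:main}, where boundedness replaces the $\omega$-rule; the implication $(1)\Rightarrow(2)$ reduces an abstract $\BTe$-derivation to a finite zig-zag of bounded $\eta$-expansions and then invokes Lemma~\ref{lemma:zigozago}.

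For $(2)\Rightarrow(1)$, suppose $M\leeta[p]P\geeta[p]N$. Since $\leeta[p]$ implies $\leeta$ (Lemma~\ref{lemma:selfevident}\ref{lemma:selfevident1}), Lemmas~\ref{lem:Nwins} and~\ref{lem:Mwins} give $\etamax\code{M}\code{P}\ETA =_\BTth P$ and $\etamax\code{M}\code{P}\ID =_\BTth M$. The crucial new ingredient is Corollary~\ref{cor:restricting_stream}: because $M\leeta[p]P$, there is a \emph{single} index $n$ such that $\etamax\code{M}\code{P}S =_\BTth \etamax\code{M}\code{P}(\restr{S}{n})$ for every stream $S$. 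Applying this to $S=\ETA$ and $S=\ID$ with the same $n$, and observing that the finite truncations $\restr{\ETA}{n}$ and $\restr{\ID}{n}$ are $\beta\eta$-convertible (each $\eta_i\msto[\beta\eta]\bI$ by Lemma~\ref{lemma:eta-equiv-defs}, and pairing is compatible), hence equal in the extensional \lam-theory $\BTe$, we obtain
$$
	P =_\BTth \etamax\code{M}\code{P}(\restr{\ETA}{n}) =_{\BTe} \etamax\code{M}\code{P}(\restr{\ID}{n}) =_\BTth M.
$$
Thus $M =_{\BTe} P$, and symmetrically (using $N\leeta[p]P$) $N =_{\BTe} P$, whence $M =_{\BTe} N$. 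This is exactly the device that stands in for the $\omega$-rule: boundedness forces $\etamax$ to consult only a finite prefix of the input stream, and on finite prefixes $\ETA$ and $\ID$ already collapse inside $\BTe$.

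For $(1)\Rightarrow(2)$, I would first note that since $=_\BTth$ and $=_{\beta\eta}$ are both congruences and $\BTe$ is the least \lam-theory containing them, $M=_{\BTe}N$ is witnessed by a finite chain $M=R_0,\dots,R_{r}=N$ in which consecutive terms are related by $=_\BTth$ or $=_{\beta\eta}$. Substituting closed terms for all free variables throughout makes every $R_i$ a combinator while leaving the closed endpoints $M,N$ unchanged and preserving each step (both relations are substitutive). Each step then yields a common \emph{lower} bound of bounded size: a $=_\BTth$ step gives one with bound $0$ (the term itself, using Lemma~\ref{lemma:selfevident}\ref{lemma:selfevident3}), while a $=_{\beta\eta}$ step gives one via a common reduct supplied by confluence, together with the auxiliary fact that $A\msto[\beta\eta]B$ implies $B\leeta[p]A$ for $p$ the number of $\eta$-steps. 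I would prove this auxiliary fact by induction on the reduction, treating $\beta$-steps as B\"ohm-tree equalities (Lemma~\ref{lemma:selfevident}\ref{lemma:selfevident2}), single $\eta$-steps via Corollary~\ref{cor:fromBaresLemma}, and accumulating the bound with the weighted transitivity of Lemma~\ref{lemma:etasum}. Taking the maximum bound $p$ over all steps (Lemma~\ref{lemma:selfevident}\ref{lemma:selfevident3}) produces a zig-zag $M=M_1\geeta[p]N_1\leeta[p]M_2\cdots\leeta[p]M_{r+1}=N$ of exactly the shape required by Lemma~\ref{lemma:zigozago}, which finally delivers $P\in\Lamo$ with $M\leeta[rp]P\geeta[rp]N$.

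The main obstacle I expect is the $(1)\Rightarrow(2)$ direction: unlike $(2)\Rightarrow(1)$, where the machinery of Section~\ref{subsec:building_etamax} applies almost directly, here one must reduce an abstract $\BTe$-derivation to a concrete, closed, finite zig-zag and, above all, keep every size bound finite. The delicate point is that a single $\eta$-expansion of a \lam-term can create infinitely many expansions in its B\"ohm tree (as in Lemma~\ref{lemma:Bareta}), so it is essential that each $=_{\beta\eta}$ step contributes only a \emph{bounded} increment and that the \emph{number} of steps is finite; it is precisely the finiteness of the chain, combined with the additivity of bounds in Lemma~\ref{lemma:etasum}, that lets Lemma~\ref{lemma:zigozago} conclude with a single finite bound $rp$.
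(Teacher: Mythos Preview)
Your proposal is correct and follows essentially the same approach as the paper. For $(2)\Rightarrow(1)$ the argument is identical; for $(1)\Rightarrow(2)$ the paper decomposes the $\BTe$-chain using $=_\eta$ rather than $=_{\beta\eta}$ (so pure $\eta$-confluence and Corollary~\ref{cor:fromBaresLemma} apply directly, without your auxiliary fact about mixed $\beta\eta$-reductions), but this is a cosmetic difference since $\beta$-steps are absorbed by $=_\BTth$ anyway, and your explicit closure of the intermediate terms is a detail the paper leaves implicit.
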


\begin{proof}
$(1\Rightarrow 2)$ Since $\BTe$ is the join of two congruences, namely $=_\BTth$ and $=_\eta$, we have that $\BTe\vdash M = N$ holds if and only if there are $M_0,\dots,M_k\in\Lamo$ such that
$M=_\BTth M_0 =_\eta  M_1 =_\BTth \cdots =_\eta M_{k}  =_\BTth N$ (cf.~\cite[Thm.~4.6]{BurrisS81}).
Since $\eta$-reduction is Church-Rosser and the $M_j$'s are closed, for even indices $i$ (as odd indices correspond to $=_\BTth$ steps), we have $M_i =_\eta M_{i+1}$ if and only if $M_{i}\msto[\eta] N_i\invmsto[\eta]M_{i+1}$ for some $N_i\in\Lamo$. By Corollary~\ref{cor:fromBaresLemma} there are $p_{i},q_{i}$ such that $M_i\geeta[p_i]N_i\leeta[q_i] M_{i+1}$. 
By Lemma~\ref{lemma:selfevident}\ref{lemma:selfevident3}, setting $p' = \max_{i}\{p_i,q_i\}$ we get $M_i\geeta[p']N_i\leeta[p'] M_{i+1}$.
By applying Lemma~\ref{lemma:selfevident}\ref{lemma:selfevident2} we can get rid of the equality $=_\BTth$ and obtain the sequence:
$$
	M\geeta[p'] N_1\leeta[p']M_2\geeta[p']N_2\cdots N_{k-1}\leeta[p'] N.
$$
Therefore this implication follows from Lemma~\ref{lemma:zigozago}.

$(2\Rightarrow 1)$ We assume that $M\leeta[p] P \geeta[p] N$ holds.
From Lemma~\ref{lemma:selfevident}(i), we obtain $M\leeta P\geeta N$ as well. 
Therefore (for some $k,k'\in\nat$):
$$
\begin{array}{rll}
\quad\	M\  =_\BTth& \etamax\code{M}\code{P}\ID&\textrm{by Lemma~\ref{lem:Mwins}}\\
	=_{\BTth}&\etamax\code{M}\code{P}(\restr{(\ID)}{k})&\textrm{by Corollary~\ref{cor:restricting_stream}}\\
	=_{\beta\eta}&\etamax\code{M}\code{P}(\restr{(\ETA)}{k})&\textrm{by $\beta\eta$-conversion}\\	
	=_{\BTth}&\etamax\code{M}\code{P}\ETA&\textrm{by Corollary~\ref{cor:restricting_stream}}\\
	=_\BTth&P&\textrm{by Lemma~\ref{lem:Nwins}}\\	
	=_\BTth&\etamax\code{N}\code{P}\ETA&\textrm{by Lemma~\ref{lem:Nwins}}\\
	=_{\BTth}&\etamax\code{N}\code{P}(\restr{(\ETA)}{k'})&\textrm{by Corollary~\ref{cor:restricting_stream}}\\	
	=_{\beta\eta}&\etamax\code{N}\code{P}(\restr{(\ID)}{k'})&\textrm{by $\beta\eta$-conversion}\\		
	=_{\BTth}&\etamax\code{N}\code{P}\ID&\textrm{by Corollary~\ref{cor:restricting_stream}}\\
	=_\BTth& N&\textrm{by Lemma~\ref{lem:Mwins}}\qquad\\
	\end{array}
$$
We conclude that $\BTe\vdash M = N$.
\end{proof}

This result confirms informal intuition about $\BTe$ discussed in \cite[\S16.4]{Bare}.


\section{Conclusion} \label{sec:Con}
Refutation of Sall\'e's Conjecture provides the final picture
of relationships between the classical lambda theories.
The next theorem should substitute Theorem~17.4.16 in \cite{Bare}.

\begin{thm} The following diagram indicates all possible inclusion relations of the \lam-theories involved (if $\cT_1$ is above $\cT_2$, then $\cT_1\subsetneq \cT_2$):
\begin{center}
  \begin{tikzpicture}
  	\node (root) at (0,0) {};
	\node (blam) at (root) {$\blam$};
	\node (lameta) at ($(blam)+(-17pt,-17pt)$) {$\blam\eta$};			
	\node (H) at ($(blam)+(17pt,-17pt)$) {$\cH$};			
	\node (Heta) at ($(blam)+(0,-34pt)$) {$\cH\eta$};	
	\node (Homega) at ($(Heta)+(-17pt,-17pt)$) {$\cH\omega$};		
	\node (BTeta) at ($(Heta)+(18pt,-18pt)$) {$\BTth\eta$};			
	\node (lamomega) at ($(blam)+(-34pt,-34pt)$) {$\blam\omega$};			
	\node (BT) at ($(blam)+(34pt,-34pt)$) {$\BTth$};		
	\node (BTomega) at ($(Heta)+(0,-34pt)$) {$\qquad\quad\cB\omega=\Hpl$};		
	\node (Hst) at ($(BTomega)+(0,-25pt)$) {$\Hst$};	
	\draw (blam) -- ($(lameta.north)+(4pt,-2pt)$);
	\draw (blam) -- ($(H.north)-(4pt,1pt)$);	
	\draw (lameta) -- ($(lamomega.north)+(4pt,-2pt)$);
	\draw ($(lameta.south east)+(-3pt,2pt)$) -- ($(Heta.north)-(5pt,1pt)$);	
	\draw (H) -- ($(Heta.north)+(4pt,-2pt)$);
	\draw (H) -- ($(BT.north)-(4pt,1pt)$);		
	\draw ($(lamomega.south)+(6pt,1pt)$) -- ($(Homega.north)-(5pt,1pt)$);
	\draw (BT) -- ($(BTeta.north)-(-4pt,2pt)$);	
	\draw ($(Heta.south west)+(3pt,2pt)$) -- ($(Homega.north)+(4pt,-2pt)$);
	\draw ($(Heta.south east)+(-3pt,2pt)$) -- ($(BTeta.north)-(5pt,1pt)$);		
	\draw (Homega) -- ($(BTomega.north)-(4pt,1pt)$);
	\draw ($(BTeta.south west)+(2pt,2pt)$) -- ($(BTomega.north)-(-4pt,2pt)$);	
	\draw (BTomega) -- (Hst);
  \end{tikzpicture}
\end{center}
\end{thm}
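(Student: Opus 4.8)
The plan is to observe that the displayed diagram is nothing but Barendregt's kite, reproduced earlier from \cite[Thm.~17.4.16]{Bare}, in which the two formerly distinct nodes $\cB\omega$ and $\Hpl$ (the latter carrying the question mark) have been collapsed into the single node $\cB\omega=\Hpl$. Hence almost all of the work is already done: for the theories $\blam,\blam\eta,\cH,\cH\eta,\cH\omega,\blam\omega,\BTth,\BTe$ and $\Hst$, every inclusion drawn in the diagram, together with its strictness, is inherited verbatim from Barendregt's theorem, where it is asserted that all the depicted inclusions are strict. The only new ingredient is the identity $\cB\omega=\Hpl$, which is precisely Theorem~\ref{thm:main}; this is exactly what licenses the merge and erases the question mark of the original picture.

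First I would record the merge itself as an immediate consequence of Theorem~\ref{thm:main}. Then I would verify the three edges incident to the merged node. The inclusion $\BTe\subseteq\cB\omega$ is given by Lemma~\ref{lemma:omega-props}\ref{lemma:omega-props1} (as $\BTe=\BTth\eta\subseteq\BTth\omega=\cB\omega$), and its strictness is witnessed by the pair $P,Q$ of Figure~\ref{fig:PQ}, which satisfy $\BTo\vdash P=Q$ while $\BTe\vdash P\neq Q$ (Barendregt~\cite[Lemma~16.4.4]{Bare}). The inclusion $\cH\omega\subseteq\cB\omega$ follows from $\cH\subseteq\BTth$ by monotonicity of the $\omega$-closure, i.e.\ Lemma~\ref{lemma:omega-props}\ref{lemma:omega-props3}, and its strictness is again inherited from Barendregt's kite. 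Finally $\cB\omega=\Hpl\subsetneq\Hst$ is just the known strict inclusion $\Hpl\subsetneq\Hst$, witnessed by $\ETAge$ and $\JAY$, which are equated in $\Hst$ but separated in $\Hpl$.

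The last point is the completeness of the diagram, namely that two of the listed theories are comparable if and only if they are joined by an upward path. Barendregt's original kite already enjoys this property, and identifying two provably equal theories neither produces new comparabilities nor destroys existing incomparabilities among the remaining nodes; thus completeness transfers unchanged to the new diagram. The only conceptual difficulty---locating $\Hpl$ relative to $\cB\omega$, left open by Sall\'e---has been entirely absorbed into Theorem~\ref{thm:main}, so that what remains here is essentially bookkeeping, carefully matching the edges of the two pictures; I expect no genuine obstacle beyond this.
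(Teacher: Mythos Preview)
Your proposal is correct and matches the paper's approach: the paper does not supply an explicit proof of this theorem, presenting it instead as the evident update of Barendregt's kite \cite[Thm.~17.4.16]{Bare} once Theorem~\ref{thm:main} has established $\BTo=\Hpl$. Your bookkeeping---inheriting all edges and strictnesses from the original kite and verifying the three edges at the merged node via Lemma~\ref{lemma:omega-props} and the standard witnesses---is exactly the implicit argument the paper relies on.
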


Fortuitously, the technique of constructing the $\eta$-supremum of two terms
effectively from their codes also yields a characterization of equality in
the theory $\BTe$.
Together these results illuminate a ``spectrum'' of degrees of
extensionality in the theory of B\"ohm trees:\\

\renewcommand{\arraystretch}{1.4}
{\centering \begin{tabular}{@{}l l l@{}} \toprule
Theory & Syntactic characterization &Sample equality\\ \midrule
$\BTth$ & $\BT{M}=\BT{N}$ & $\IDX = \Phi \code{\IDX}$\\
$\BTe$ & $\BT{M} \leeta[p] U \geeta[p] \BT{N}$ & $\IDX = \ONE$\\ 
$\BTth\omega$ & $\BT{M} \leeta U \geeta \BT{N}$ & $\IDo = \ETAo$\\
$\Hpl$& $\BT{M} \leeta U \geeta \BT{N}$ &  $\IDX = \ETAge$ \\
$\Hst$& $\BT{M} \leetainf U \geetainf \BT{N}$ & $\IDX = \JAY$ \\ \bottomrule
\end{tabular} \par}
\[ \BTth \quad \subsetneq\quad \BTe\quad \subsetneq\quad
\BTth\omega\quad =\quad \Hpl\quad \subsetneq\quad \Hst \]


\section*{Acknowledgment}
The authors would like to thank Henk Barendregt, Flavien Breuvart, Mariangiola Dezani-Ciancaglini, Jean-Jacques L\'evy, Simona Ronchi della Rocca and Domenico Ruoppolo for interesting discussions on $\Hpl$.

\bibliographystyle{plain}
\bibliography{include/biblio}

\newcommand{\online}[1]{Available at \url{#1}}
\begin{thebibliography}{10}

\bibitem{BarendregtTh}
Henk~P. Barendregt.
\newblock {\em Some extensional term models for combinatory logics and
  $\lambda$-calculi}.
\newblock {Ph.D.} thesis, Utrecht Universiteit, the Netherlands, 1971.

\bibitem{Barendregt77}
Henk~P. Barendregt.
\newblock The type free lambda calculus.
\newblock In J.~Barwise, editor, {\em Handbook of Mathematical Logic},
  volume~90 of {\em Studies in Logic and the Foundations of Mathematics}, pages
  1091--1132. North-Holland, Amsterdam, 1977.

\bibitem{Bare}
Henk~P. Barendregt.
\newblock {\em The lambda-calculus, its syntax and semantics}.
\newblock Number 103 in Studies in Logic and the Foundations of Mathematics.
  North-Holland, second edition, 1984.

\bibitem{BarendregtBKV78}
Henk~P. Barendregt, Jan~A. Bergstra, Jan~Willem Klop, and Henri Volken.
\newblock Degrees of sensible lambda theories.
\newblock {\em Journal of Symbolic Logic}, 43(1):45--55, 1978.

\bibitem{bohm68}
Corrado B\"ohm.
\newblock Alcune propriet\`a delle forme $\beta$-$\eta$-normali nel
  $\lambda$-{$K$}-calcolo.
\newblock {\em INAC}, 696:1--19, 1968.

\bibitem{Breuvart14}
Flavien Breuvart.
\newblock On the characterization of models of $\mathcal{H}^{*}$.
\newblock In T.~Henzinger and D.~Miller, editors, {\em {CSL-LICS}'14}, pages
  24:1--24:10. ACM, 2014.

\bibitem{BreuvartMPR16}
Flavien Breuvart, Giulio Manzonetto, Andrew Polonsky, and Domenico Ruoppolo.
\newblock New results on {M}orris's observational theory.
\newblock In Delia Kesner and Brigitte Pientka, editors, {\em Formal Structures
  for Computation and Deduction}, volume~52 of {\em LIPIcs}, pages 15:1--15:18.
  Schloss Dagstuhl, 2016.

\bibitem{BreuvartMR18}
Flavien Breuvart, Giulio Manzonetto, and Domenico Ruoppolo.
\newblock Relational graph models at work.
\newblock {\em Logical Methods in Computer Science}, 14(3), 2018.

\bibitem{BrownP16}
Matt Brown and Jens Palsberg.
\newblock Breaking through the normalization barrier: a self-interpreter for
  {F}-omega.
\newblock In Rastislav Bod{\'{\i}}k and Rupak Majumdar, editors, {\em
  Proceedings of the 43rd Annual {ACM} {SIGPLAN-SIGACT} Symposium on Principles
  of Programming Languages, {POPL} 2016}, pages 5--17. {ACM}, 2016.

\bibitem{BurrisS81}
Stanley~N. Burris and Hanamantagouda~P. Sankappanavar.
\newblock {\em A course in universal algebra}.
\newblock Springer-Verlag, Berlin, 1981.

\bibitem{CoppoDR78}
Mario Coppo, Mariangiola Dezani{-}Ciancaglini, and Simona Ronchi~Della Rocca.
\newblock ({S}emi)-separability of finite sets of terms in {S}cott's
  $\mathcal{D}_\infty$-models of the lambda-calculus.
\newblock In Giorgio Ausiello and Corrado B{\"{o}}hm, editors, {\em Automata,
  Languages and Programming, Fifth Colloquium, Udine, Italy}, volume~62 of {\em
  Lecture Notes in Computer Science}, pages 142--164. Springer, 1978.

\bibitem{CoppoDS79}
Mario Coppo, Mariangiola Dezani{-}Ciancaglini, and Patrick Sall{\'{e}}.
\newblock Functional characterization of some semantic equalities inside
  lambda-calculus.
\newblock In Hermann~A. Maurer, editor, {\em Automata, Languages and
  Programming}, volume~71 of {\em Lecture Notes in Computer Science}, pages
  133--146. Springer, 1979.

\bibitem{CoppoDZ87}
Mario Coppo, Mariangiola Dezani{-}Ciancaglini, and Maddalena Zacchi.
\newblock Type theories, normal forms and $\mathcal{D}_\infty$-lambda-models.
\newblock {\em Information and Computation}, 72(2):85--116, 1987.

\bibitem{DezaniG01}
M.~Dezani{-}Ciancaglini and E.~Giovannetti.
\newblock From {B}\"ohm's theorem to observational equivalences: an informal
  account.
\newblock {\em Electr. Notes Theor. Comput. Sci.}, 50(2):83--116, 2001.

\bibitem{GianantonioFH99}
Pietro~Di Gianantonio, Gianluca Franco, and Furio Honsell.
\newblock Game semantics for untyped $\lambda\beta\eta$-calculus.
\newblock In {\em Typed Lambda Calculi and Applications}, volume 1581 of {\em
  Lecture Notes in Computer Science}, pages 114--128. Springer, 1999.

\bibitem{Given-WilsonJ11}
Thomas Given{-}Wilson and Barry Jay.
\newblock A combinatory account of internal structure.
\newblock {\em J. Symb. Log.}, 76(3):807--826, 2011.

\bibitem{GouyTh}
Xavier Gouy.
\newblock {\em {\'E}tude des th\'eories \'equationnelles et des propri\'et\'es
  alg\'ebriques des mod\`eles stables du $\lambda$-calcul}.
\newblock Th\`ese de doctorat, Universit\'e de Paris~7, 1995.

\bibitem{Hyland75}
Martin Hyland.
\newblock A survey of some useful partial order relations on terms of the
  $\lambda$-calculus.
\newblock In {\em Lambda-Calculus and Computer Science Theory}, volume~37 of
  {\em Lecture Notes in Computer Science}, pages 83--95. Springer, 1975.

\bibitem{Hyland76}
Martin Hyland.
\newblock A syntactic characterization of the equality in some models for the
  $\lambda$-calculus.
\newblock {\em Journal London Mathematical Society (2)}, 12(3):361--370,
  1975/76.

\bibitem{IntrigilaMP17}
Benedetto Intrigila, Giulio Manzonetto, and Andrew Polonsky.
\newblock Refutation of {S}all{\'{e}}'s longstanding conjecture.
\newblock In Dale Miller, editor, {\em 2nd International Conference on Formal
  Structures for Computation and Deduction, {FSCD} 2017, September 3-9, 2017,
  Oxford, {UK}}, volume~84 of {\em LIPIcs}, pages 20:1--20:18. Schloss Dagstuhl
  - Leibniz-Zentrum fuer Informatik, 2017.

\bibitem{IntrigilaN03}
Benedetto Intrigila and Monica Nesi.
\newblock On structural properties of $\eta$-expansions of identity.
\newblock {\em Inf. Proc. Lett.}, 87(6):327--333, 2003.

\bibitem{IntrigilaS04}
Benedetto Intrigila and Richard Statman.
\newblock The omega rule is {$\Pi_2^0$}-hard in the $\lambda\beta$-calculus.
\newblock In {\em Symposium on Logic in Computer Science {(LICS} 2004)}, pages
  202--210. {IEEE} Computer Society, 2004.

\bibitem{IntrigilaS09}
Benedetto Intrigila and Richard Statman.
\newblock The omega rule is {$\Pi^1_1$}-complete in the
  $\lambda\beta$-calculus.
\newblock {\em Logical Methods in Computer Science}, 5(2), 2009.

\bibitem{JacobsR97}
Bart Jacobs and Jan Rutten.
\newblock A tutorial on (co)algebras and (co)induction.
\newblock {\em EATCS Bulletin}, 62:62--222, 1997.

\bibitem{KozenS17}
Dexter Kozen and Alexandra Silva.
\newblock Practical coinduction.
\newblock {\em Mathematical Structures in Computer Science}, 27(7):1132--1152,
  2017.

\bibitem{Lassen99}
S{\o}ren~B. Lassen.
\newblock Bisimulation in untyped lambda calculus: B{\"{o}}hm trees and
  bisimulation up to context.
\newblock {\em Electr. Notes Theor. Comput. Sci.}, 20:346--374, 1999.

\bibitem{Levy78}
Jean-Jacques L{\'{e}}vy.
\newblock Approximations et arbres de {B}\"ohm dans le lambda-calcul.
\newblock In Bernard Robinet, editor, {\em Lambda Calcul et S\'emantique
  formelle des langages de programmation, Actes de la 6\`eme \'Ecole de
  printemps d'Informatique th\'eorique, La Ch\^atre}, LITP-ENSTA, pages
  239--257, 1978.
\newblock (In French).

\bibitem{SalibraL04}
Stefania Lusin and Antonino Salibra.
\newblock The lattice of $\lambda$-theories.
\newblock {\em Journal of Logic and Computation}, 14(3):373--394, 2004.

\bibitem{Manzonetto09}
Giulio Manzonetto.
\newblock A general class of models of $\mathcal{H}^*$.
\newblock In Rastislav Kr{\'{a}}lovic and Damian Niwinski, editors, {\em
  Mathematical Foundations of Computer Science 2009}, volume 5734 of {\em
  LNCS}, pages 574--586. Springer, 2009.

\bibitem{ManzonettoR14}
Giulio Manzonetto and Domenico Ruoppolo.
\newblock Relational graph models, {T}aylor expansion and extensionality.
\newblock {\em ENTCS}, 308:245--272, 2014.

\bibitem{Mogensen92}
Torben~{\AE}. Mogensen.
\newblock Efficient self-interpretations in lambda calculus.
\newblock {\em J. Funct. Program.}, 2(3):345--363, 1992.

\bibitem{Morristh}
James~H. Morris.
\newblock {\em Lambda calculus models of programming languages}.
\newblock PhD thesis, Massachusetts Institute of Technology, 1968.

\bibitem{Paolini08}
Luca Paolini.
\newblock Parametric $\lambda$-theories.
\newblock {\em Theoretical Computer Science}, 398(1):51 -- 62, 2008.

\bibitem{Plotkin74}
Gordon~D. Plotkin.
\newblock The lambda-calculus is $\omega$-incomplete.
\newblock {\em Journal of Symbolic Logic}, 39(2):313--317, 1974.

\bibitem{Polonsky11}
Andrew Polonsky.
\newblock {Axiomatizing the Quote}.
\newblock In Marc Bezem, editor, {\em Computer Science Logic (CSL'11) - 25th
  International Workshop/20th Annual Conference of the EACSL}, volume~12 of
  {\em Leibniz International Proceedings in Informatics (LIPIcs)}, pages
  458--469, Dagstuhl, Germany, 2011. Schloss Dagstuhl--Leibniz-Zentrum fuer
  Informatik.

\bibitem{RonchiP04}
Simona Ronchi Della~Rocca and Luca Paolini.
\newblock {\em The Parametric $\lambda$-Calculus: a Metamodel for Computation}.
\newblock EATCS Series. Springer, Berlin, 2004.

\bibitem{Salle1978}
Patrick Sall\'e.
\newblock Une extension de la th\'eorie des types en $\lambda$-calcul.
\newblock In Giorgio Ausiello and Corrado B{\"o}hm, editors, {\em Automata,
  Languages and Programming: Fifth Colloquium, Udine, Italy, July 17--21,
  1978}, pages 398--410. Springer Berlin Heidelberg, 1978.

\bibitem{Scott72}
Dana~S. Scott.
\newblock Continuous lattices.
\newblock In Lawvere, editor, {\em Toposes, Algebraic Geometry and Logic},
  volume 274 of {\em Lecture Notes in Mathematics}, pages 97--136. Springer,
  1972.

\bibitem{Severi2002}
Paula Severi and Fer-Jan de~Vries.
\newblock An extensional {B\"o}hm model.
\newblock In Sophie Tison, editor, {\em Rewriting Techniques and Applications:
  13th International Conference, RTA 2002}, pages 159--173. Springer Berlin
  Heidelberg, 2002.

\bibitem{SeveridV17}
Paula Severi and Fer-Jan de~Vries.
\newblock The infinitary lambda calculus of the infinite $\eta$-{B\"{o}}hm
  trees.
\newblock {\em Mathematical Structures in Computer Science}, 27(5):681--733,
  2017.

\bibitem{Wadsworth76}
Christopher~P. Wadsworth.
\newblock The relation between computational and denotational properties for
  {S}cott's $\mathcal{D}_{\infty}$-models of the lambda-calculus.
\newblock {\em {SIAM} Journal of Computing}, 5(3):488--521, 1976.

\end{thebibliography}

\end{document}